\documentclass[11pt]{article}

\usepackage[margin=1in]{geometry}
\usepackage{amsmath,amssymb,amsthm}
\usepackage{graphicx}
\usepackage{tikz}
\usetikzlibrary{decorations.pathreplacing}
\usepackage{booktabs}
\usepackage{longtable}
\usepackage{makecell}
\usepackage[section]{placeins}
\usepackage{flafter}  
\usepackage{needspace}          
\usepackage{etoolbox}
\newif\ifsubsecbarrier \subsecbarriertrue
\pretocmd{\subsection}{\ifsubsecbarrier\FloatBarrier\fi}{}{}
\pretocmd{\paragraph}{\FloatBarrier}{}{}
\usepackage{hyperref}
\usepackage[round]{natbib}
\usepackage{multirow}
\usepackage{array}
\usepackage{xcolor}
\usepackage{enumitem}
\usepackage{multicol}
\usepackage{caption}
\usepackage{textcomp}
\usepackage{microtype}

\hypersetup{
  colorlinks=true,
  linkcolor=blue!70!black,
  citecolor=blue!70!black,
  urlcolor=blue!70!black
}

\newtheorem{proposition}{Proposition}
\newtheorem{assumption}{Assumption}
\newtheorem{remark}{Remark}

\newcommand{\E}{\mathbb{E}}
\newcommand{\Var}{\mathrm{Var}}
\newcommand{\Cov}{\mathrm{Cov}}
\newcommand{\R}{\mathbb{R}}
\newcommand{\calL}{\mathcal{L}}
\newcommand{\calU}{\mathcal{U}}
\newcommand{\iid}{\stackrel{\text{iid}}{\sim}}
\newcommand{\plim}{\text{plim}}
\DeclareMathOperator{\clip}{clip}

\title{\textbf{When Do Surrogate Metrics Work?\\A Finite-Sample Comparison Under Realistic Failure Modes}}

\author{
  Zihao Chen \\
  University of Washington \\
  \texttt{zchen05@uw.edu}
}

\date{September 2026}

\begin{document}

\maketitle

\begin{abstract}
Online experiments must often be evaluated before long-term outcomes mature. Under rolling enrollment, these outcomes are observed only for early enrollees, while short-term surrogates are available for everyone. We compare seven estimators across eleven data-generating processes, spanning partial mediation, drift, outcome sparsity, and enrollment-time labeling, with up to $R = 2{,}000$ replications over more than 500 method-by-scenario cells. We find a sharp robustness--efficiency tradeoff: the surrogate index delivers large efficiency gains when surrogacy holds but its coverage collapses under violations, while PPI-family methods stay asymptotically valid under random labeling at smaller gains. We give the finite-sample variance of PPI++ in the all-units parameterization for a fixed predictor, a joint asymptotic distribution for the two estimators under cross-fitting, and a Hausman-type estimator-disagreement diagnostic, then quantify the detection-damage gap: in the partial-mediation design, where we locate both edges, a band of violations destroys surrogate-index coverage yet is too small to detect on most datasets. On the 64{,}000-customer Hillstrom experiment the diagnostic rarely flags a violation that biases the surrogate index, and a Cauchy-kernel hybrid of the two estimators inherits 10.7\% relative bias; on the 14-million-user Criteo experiment the violation is detected, and subsampling traces detection turning on with scale as damage persists. PPI++ has limits: at a rare-conversion $n = 30{,}000$ Criteo subsample its empirical coverage is 85.0\%. We recommend prespecifying PPI++ with the exact variance as the primary analysis under random labeling and adequate labeled outcome counts, reading the diagnostic as a warning, not a certificate, and treating the surrogate index as a sensitivity analysis.

\end{abstract}

\medskip
\noindent\textbf{Keywords:} surrogate metrics, prediction-powered inference, causal inference, online experiments, A/B testing, variance reduction, treatment effect estimation, adaptive estimation

\newpage
\setcounter{tocdepth}{2}

\section{Introduction}
\label{sec:introduction}

Online controlled experiments (A/B tests) are the primary tool for causal inference at technology companies, and each launch decision rests on an estimated effect on a ``north-star'' outcome such as long-term retention, cumulative revenue, or lifetime value. These outcomes may need 30, 60, or 90 days to mature, while decisions are due sooner. Units usually enroll on a rolling basis, so at any candidate decision date the outcome has matured for the earliest enrollees and is still censored for everyone else. Choosing when to decide is therefore choosing what fraction of outcomes to observe: waiting is costly, and deciding early means acting on the matured fraction plus short-term signals observed for every unit. This paper asks when that second option yields valid inference and how much precision it gains.

Consider a retailer running an email-marketing experiment. Customers are randomized into campaign and control on a rolling basis over a four-week enrollment window, and the primary outcome is conversion, a purchase within two weeks of exposure. Funnel signals (a website visit, an add-to-cart event, a first-day click) arrive within a day or two. At a review in week 2.8, conversion windows have closed for roughly the first week of enrollees, two sevenths of the customers enrolled by that date, while every enrolled customer has funnel signals. The team can wait until week six, when every conversion window has closed, or call the experiment now from the matured conversions and everyone's funnel signals (Figure~\ref{fig:timeline}); the Hillstrom analysis of Section~\ref{sec:empirical} instantiates this structure on real data. A streaming service testing a recommendation algorithm against 90-day retention, with users onboarded over twelve weeks, faces the same choice at a month-four review: retention has matured for about a third of the sample, while first-week engagement is available for everyone.

The analysis at a review conditions on the units enrolled by the review date, and the review date fixes the matured share $\pi_L$ among them (Equation~\eqref{eq:piL-calendar}); we feature $\pi_L = 0.20$, a review at week 2.5 in the e-commerce example. Every coverage statement refers to a single pre-committed review date: calling the experiment at the first significant look of a continuously monitored test adds optional-stopping distortions, which anytime-valid variants of prediction-powered inference address \citep{kilian2025anytime}. The labeled fraction can also be a budget choice, when the outcome is expensive rather than slow (a survey, human labeling, or purchases matched from offline records), and it can then be large. We vary $\pi_L$ from 0.05 to 1.00: small values are the calendar regime, and moderate-to-large values are where a plug-in variance paired with a coefficient below its optimum undercovers and the overlap correction matters most. In both readings the outcome must be defined for every randomized unit (a conversion indicator, or revenue per user including zeros); metrics defined only for converters raise a selection problem rather than a missing-label problem (Section~\ref{sec:setup-notation}).

Two regimes look similar but rest on different assumptions. In the \emph{same-experiment delayed-label} regime (Regime~A), all units belong to one randomized experiment and the outcome has matured for a subset, as in both examples. In the \emph{historical-cohort} regime (Regime~B), the decision is due before any outcome from the current experiment has matured, so the prediction model must be trained on a prior cohort and transported, which requires comparable $(X, S)$ distributions, a stable conditional outcome model $\E[Y \mid S, X]$, and similar treatment policies and treatment effects on the surrogate. Our formal results are stated for Regime~A. DGP~4 (Section~\ref{sec:dgps}) probes one Regime~B failure, a historical index that no longer transports. Regime~B introduces additional transport assumptions, so the Regime~A guarantees do not automatically extend to it, and the prediction model should be retrained on current labels whenever possible.

\begin{figure}[!ht]
\centering
\begin{tikzpicture}[x=1.75cm, y=0.42cm]
  \draw[->] (0,0) -- (6.4,0);
  \node[below] at (3.2,-0.85) {\footnotesize calendar week};
  \foreach \w in {0,1,2,3,4,5,6} { \draw (\w,0) -- (\w,-0.18) node[below] {\footnotesize \w}; }
  \foreach [count=\i] \e in {0.2, 0.6, 1.0, 1.4, 1.8, 2.2, 2.6} {
    \fill[blue!12] (\e,\i-0.28) rectangle (\e+2,\i+0.28);
    \fill[blue!60] (\e,\i-0.28) rectangle (\e+0.2,\i+0.28);
    \fill (\e,\i) circle (0.045);
  }
  \foreach [count=\i from 8] \e in {3.0, 3.4, 3.8} {
    \fill[gray!8] (\e,\i-0.28) rectangle (\e+2,\i+0.28);
    \fill[gray!30] (\e,\i-0.28) rectangle (\e+0.2,\i+0.28);
    \fill[gray!55] (\e,\i) circle (0.045);
  }
  \draw[dashed, thick, red!70!black] (2.8,-0.6) -- (2.8,10.9);
  \node[above, red!70!black] at (2.8,10.9) {\footnotesize decision review (week 2.8)};
  \draw[decorate,decoration={brace,mirror,amplitude=4pt}] (-0.12,2.4) -- (-0.12,0.6) node[midway,left=7pt,align=right] {\footnotesize labeled $\calL$:\\[-2pt] \footnotesize outcome matured};
  \draw[decorate,decoration={brace,mirror,amplitude=4pt}] (-0.12,7.4) -- (-0.12,2.6) node[midway,left=7pt,align=right] {\footnotesize unlabeled $\calU$:\\[-2pt] \footnotesize surrogate only};
  \draw[decorate,decoration={brace,mirror,amplitude=4pt},gray] (-0.12,10.4) -- (-0.12,7.6) node[midway,left=7pt,align=right,text=gray!80!black] {\footnotesize not yet enrolled:\\[-2pt] \footnotesize outside the sample};
  \fill[blue!60] (4.5,1.8) rectangle (4.75,2.2); \node[right] at (4.75,2.0) {\footnotesize surrogate window};
  \fill[blue!12] (4.5,1.0) rectangle (4.75,1.4); \node[right] at (4.75,1.2) {\footnotesize outcome window};
\end{tikzpicture}
\caption{The timing problem in the e-commerce example. Each row is a cohort of customers, ordered by rolling enrollment over four weeks (dots mark exposure). Funnel surrogates (dark) arrive within days; the conversion window (light) takes two weeks. At the week-2.8 review seven of the ten cohorts have enrolled, every enrolled cohort's surrogate window has closed, and the earliest two have matured; those two, two sevenths of the enrolled cohorts, form the labeled set $\calL$. Enrolled but unmatured cohorts contribute surrogates only, and the three later cohorts (gray) have not enrolled and are outside the review-time sample. Regime~A fits the prediction model on $\calL$; Regime~B arises when the review precedes every outcome window, so the model must come from a prior cohort.}
\label{fig:timeline}
\end{figure}

This timing problem has motivated a substantial and increasingly interconnected literature on \textit{surrogate metrics}: short-term proxy outcomes that correlate with the long-term target. Three traditions bear on this question, though recent work has shown they are converging.

The \textit{statistical surrogacy literature} includes the clinical-trial framework of \citet{prentice1989surrogate}; subsequent econometric work develops surrogate-index methods for estimating long-term treatment effects. \citet{athey2019surrogate} introduced the surrogate index for imputing long-term outcomes from short-term proxies, and \citet{chen2023semiparametric} develop the corresponding semiparametric efficiency theory. Under the conditional-mean invariance $\E[Y |T, S, X] = \E[Y |S, X]$ of Assumption~\ref{asm:surrogacy}, the index uses the full sample rather than the labeled subset and gains efficiency accordingly. When treatment and long-term outcomes are not jointly observed, the conditional-mean restriction cannot be checked directly in the available data. Transporting the outcome model to a new experiment requires additional assumptions. The invariance fails whenever treatment affects the outcome through channels that bypass the surrogate, which in practice is common.

\textit{Prediction-powered inference} (PPI, \citealp{angelopoulos2023prediction}; PPI++, \citealp{angelopoulos2023ppipp}; recalibrated PPI, \citealp{ji2025predictions}) takes a different approach: rather than trusting the prediction model, it uses labeled data to correct for whatever systematic error the model introduces. The model-agnostic guarantee is both the strength and limitation of PPI: at the population-optimal tuning parameter its asymptotic variance is no larger than that of the labeled-only baseline, but the gain can be negligible in finite samples, and efficiency is bounded by prediction quality and conservative tuning. \citet{mozer2026ppi} shows that PPI++ is structurally equivalent to the classical difference estimator \citep{cassel1976results}, connecting prediction-powered inference to decades of survey sampling theory. \citet{ji2025predictions} formally unify the surrogate index and PPI as endpoints of a bias--variance spectrum, establishing the theoretical foundation that our simulations characterize empirically.

A third strand from the \textit{tech-industry experimentation literature} has produced practical tools for surrogate selection and combination. \citet{tripuraneni2024choosing} construct composite proxies from historical experiments, calibrating weights across multiple surrogate metrics. \citet{richardson2025pareto} develop Pareto-optimal proxy metrics. \citet{duan2021online} provide practical guidelines for surrogate metric use at scale. Industry methods construct proxies using historical experiments and explicit optimization objectives. Their targets and assumptions differ from the same-experiment treatment-effect estimand studied here.

These literatures are converging theoretically, but each developed its tools in relative isolation: the surrogacy literature offers efficiency under strong assumptions, PPI robustness with limited efficiency, and the industry literature proxy-construction tools aimed at different targets. A systematic finite-sample comparison of the estimators under the failure modes that arise in practice (partial mediation, heterogeneous surrogate quality, temporal drift, outcome sparsity, nonlinear relationships, and antagonistic surrogates) has not, to our knowledge, been undertaken.

Pre-experiment variance reduction such as CUPED \citep{deng2013improving} and its machine-learning extensions \citep{poyarkov2016boosted} uses \textit{pre-treatment} covariates and is complementary to the \textit{post-treatment} surrogate methods studied here; surrogate methods add value beyond a CUPED-adjusted baseline (Section~\ref{sec:results-valid}). The contributions, in decreasing order of what we believe is new, are a benchmark, a quantification of when violations are statistically invisible, and practitioner tools that apply established machinery to this setting:
\begin{enumerate}[leftmargin=*,itemsep=2pt]
  \item \textit{A systematic finite-sample benchmark.} Across seven estimators, eleven data-generating processes (DGPs), and 514 method-by-scenario cells in 93 simulation scenarios (primary configurations of Methods 0 to 5; the ablations of Appendix~\ref{app:ablations} are not counted), with up to $R = 2{,}000$ replications per cell, the robustness--efficiency tradeoff is sharp. On the valid-surrogate DGPs~1 and~5, surrogate-based methods deliver large efficiency gains at small labeled fractions but lose coverage under moderate violations, while PPI-family methods stay valid under MCAR labeling at modest gains that shrink as the labeled fraction grows (Section~\ref{sec:results-valid}). The tradeoff persists across the prediction model (OLS and gradient-boosted trees, GBT), the number of surrogates, and mild nonlinearity. The PPI side needs random labeling and adequate labeled outcome counts in each arm: it fails when the treatment effect drifts over the enrollment window (DGP~11, Section~\ref{sec:results-enrollment}) and when labeled conversions are rare (Section~\ref{sec:criteo}).
  \item \textit{The detection-damage gap, located.} Specification pretests have low power against exactly the local violations that break inference \citep{leeb2005model, guggenberger2010hausman}; practice needs the location of that blind spot. At $\pi_L = 0.20$ we calibrate the band of violations that destroys surrogate-index coverage yet escapes an estimator-disagreement diagnostic on most datasets: $\rho \in [0.09, 0.37]$ at $n = 10{,}000$ and $[0.03, 0.16]$ at $n = 100{,}000$ (Figure~\ref{fig:detection-damage}). Measured in the direct effect $\delta$, both edges shrink between the two sample sizes by ratios consistent with root-$n$ scaling, the damage edge less precisely located, so scale relocates the band rather than closing it; we claim the band for DGP~2, where we locate both edges, not universally (Section~\ref{sec:detection-damage}). Two industrial funnels show its two sides by analogy, not as points on the DGP~2 boundary. On Hillstrom (64{,}000 customers) the surrogate index carries 24.9\% bias and covers 7.2\% while the diagnostic rejects on 11\% of splits; on Criteo (13.98 million users) the same within-funnel violation is detected, and subsampling traces detection turning on across two orders of magnitude of $n$ while the damage persists. A Criteo-calibrated semi-synthetic testbed, a multi-surrogate extension, and the LaLonde negative control complete the evidence (Section~\ref{sec:empirical}).
  \item \textit{Practitioner tooling from established machinery.} (a)~The exact finite-sample variance of PPI++ in the all-units parameterization $\hat{\tau}_Y^{\calL} + \lambda(\hat{\tau}_{\hat{Y}}^{\mathrm{all}} - \hat{\tau}_{\hat{Y}}^{\calL})$ at a fixed predictor and coefficient: the difference-estimator variance in that notation, which keeps the overlap covariance an independent-samples plug-in drops. This is bookkeeping for a reparameterization, not a defect in \citet{angelopoulos2023ppipp}, whose labeled-unlabeled form assigns the same variance (Section~\ref{sec:corrected-variance}). (b)~A joint asymptotic distribution for the surrogate index and PPI++ when the index is the learned linear one and every unit is scored out of fold (Proposition~\ref{prop:joint-si-ppi}). It establishes first-order validity of the implemented OLS procedure, carries the first-stage uncertainty a fixed-predictor calculation omits, and supplies the standard error of the SI--PPI++ estimator-disagreement diagnostic, a \citet{hausman1978specification}-type test strictly weaker than a test of surrogacy (Sections~\ref{sec:properties} and~\ref{sec:surrogacy-test}). (c)~A Cauchy-kernel hybrid, $w = c/(c + T_n^2)$, in the frequentist model-averaging family of \citet{hjort2003frequentist}, whose local-asymptotic risk (Proposition~\ref{prop:limit-risk}) disciplines the tuning constant (Table~\ref{tab:limit-minimax-c}). No adaptive weighting we study escapes the gap, and on Hillstrom the hybrid inherits 10.7\% relative bias, so it is exploratory rather than a replacement for PPI++ (Appendix~\ref{app:hybrid}).
  \item \textit{A prespecified primary analysis.} Under MCAR labeling with adequate labeled outcome counts in each arm, the evidence supports PPI++ with the exact variance as the default primary analysis when surrogacy is uncertain, with the diagnostic as a warning rather than a certification of the surrogate-index interval (Section~\ref{sec:recommendations}).
\end{enumerate}

\paragraph{Organization.} Section~\ref{sec:methods} sets out the estimators and the formal results, Section~\ref{sec:simulation} the eleven DGPs (Table~\ref{tab:dgps}) and the implementation of each design family (Table~\ref{tab:implementation}), Section~\ref{sec:results} the simulation evidence, and Section~\ref{sec:empirical} the public-data illustrations. Section~\ref{sec:recommendations} gives the workflow the results support and Section~\ref{sec:discussion} the limitations. The appendices hold the hybrid estimator, the proofs, the diagnostic's power grid, the Cross-PPI comparison, and the sensitivity analyses, ablations, and per-DGP results behind the main-text claims.

\section{Setup and Methods}
\label{sec:methods}

\subsection{Notation, Estimands, and Sampling Regime}
\label{sec:setup-notation}

\paragraph{Sampling regime (Regime~A).} All $n$ units belong to a single randomized experiment with $T_i \iid \text{Bernoulli}(p)$ for a known $p \in (0,1)$; the surrogate $S_i$ is observed for all units, and the primary outcome $Y_i$ for a subset chosen by a missing-completely-at-random (MCAR) labeling mechanism (Assumption~\ref{asm:mcar}). In this regime PPI++'s correction term has expectation zero at a fixed predictor, and the surrogate index, PPI++, and the hybrid share a common probability space. Missing-at-random (MAR) labeling does not by itself deliver that centering for the unweighted correction; the MAR variant of DGP~5 stresses it. The simulations set $p = 0.5$; the public-data analyses do not, and no formal result requires it.

\paragraph{Units, treatment, and outcomes.} For each unit $i = 1, \ldots, n$ we observe a pre-treatment covariate vector $X_i \in \R^p$, a surrogate $S_i \in \R$ (for all units), and a primary outcome $Y_i \in \R$ (for a labeled subset): in the e-commerce example a website visit and conversion, in the streaming example first-week engagement and 90-day retention. The main analysis takes $S_i$ scalar; DGP~7 and the multi-surrogate testbed of Section~\ref{sec:semisynthetic} study several surrogates.

The $n$ units are partitioned into a \textit{labeled set} $\calL$ with $|\calL| = n_L$ units where both $(S_i, Y_i)$ are observed, and an \textit{unlabeled set} $\calU$ with $|\calU| = n_U = n - n_L$ units where only $S_i$ is observed. We write $n_{L,t} = |\{i \in \calL : T_i = t\}|$ for the number of labeled units in treatment arm $t \in \{0, 1\}$. The labeled fraction $\pi_L = n_L / n$ is a key design parameter, where $n$ is the number of units enrolled by the review date. Under the calendar interpretation of Section~\ref{sec:introduction}, $\pi_L$ is pinned down by the review date: in the e-commerce example, with uniform enrollment over weeks 0 to 4 and a two-week outcome window, a review at week $t$ yields the matured share among units enrolled by the review,
\begin{equation}
  \pi_L(t) = \frac{t-2}{t} \;\; (2 \leq t \leq 4), \qquad \pi_L(t) = \frac{t-2}{4} \;\; (4 \leq t \leq 6),
  \label{eq:piL-calendar}
\end{equation}
the second branch applying once enrollment has ended at week 4, so $\pi_L = 0.20$ corresponds to a review at week 2.5 and $\pi_L = 1$ to week 6. Under the label-budget interpretation, $\pi_L$ is a cost choice.

Under the Neyman--Rubin potential outcomes framework, each unit has potential outcomes $(S_i(0), S_i(1), Y_i(0), Y_i(1))$, and we assume SUTVA (stable unit treatment value assumption: no interference between units and no hidden versions of treatment). The primary estimand is the average treatment effect on the primary outcome:
\begin{equation}
  \tau = \E[Y_i(1) - Y_i(0)].
  \label{eq:ate}
\end{equation}
We also define the surrogate ATE $\tau_S = \E[S_i(1) - S_i(0)]$. A prediction model $\hat{f}: \R^{1+p} \to \R$, estimated from the labeled set or auxiliary data, maps surrogate values and covariates to predicted primary outcomes: $\hat{Y}_i = \hat{f}(S_i, X_i)$.

We make two formal assumptions. The first governs the labeling mechanism; the second defines the surrogacy condition that some methods require.

\begin{assumption}[MCAR labeling]
\label{asm:mcar}
Membership in $\calL$ is independent of $(T_i, S_i, X_i, Y_i)$, so outcomes are missing completely at random. The labeled set is drawn either by independent $\mathrm{Bernoulli}(\pi_L)$ indicators or as a uniformly random subset of fixed size $\lfloor \pi_L n \rfloor$, which is how our simulations draw it; every result below holds under either scheme. The covariates enter the statement because the prediction model uses them. We relax this in DGP~5 (Section~\ref{sec:dgps}).
\end{assumption}

Under the calendar interpretation, where the labeled units are the earliest enrollees, Assumption~\ref{asm:mcar} needs scrutiny. Randomization makes enrollment time independent of $T_i$, but MCAR also requires enrollment time to be unrelated to $(X_i, S_i, Y_i)$: a stable user mix and a stable outcome process across the enrollment window. Seasonality, novelty effects, and drift in the enrolling population break this. DGP~11 labels the earliest-enrolled cohort under dialed drift (Section~\ref{sec:results-enrollment}), DGP~5's MAR variant stresses labeling through the surrogate, and DGP~4 stresses a surrogate--outcome relationship learned on a historical cohort. The public-data analyses of Section~\ref{sec:empirical} mask labels at random, so MCAR holds there by construction.

A second scope boundary separates three kinds of missingness. A well-defined outcome may be unobserved because its window has not closed or its measurement was not funded; that is the setting of Assumption~\ref{asm:mcar}. A well-defined outcome may be observed with probability depending on post-treatment observables, as in DGP~5's MAR variant, where labeled-only and PPI++ lose their design-based guarantee while AIPW with a correct observation model survives: across $q$, relative bias runs from $-4.7\%$ to $-8.5\%$ for labeled-only, $-0.1\%$ to $-2.1\%$ for PPI++, and $-1.6\%$ to $+0.7\%$ for AIPW. Finally, some outcomes exist only for a treatment-dependent subpopulation: average order value or post-purchase satisfaction exist only for customers who convert, and conversion is what the campaign moves. That is not a missing-label problem. The all-users contrast is undefined when $Y_i(t)$ does not exist for non-converters, comparing converters across arms conditions on a post-treatment variable, and reweighting on pre-treatment covariates would need an assumption at least as strong as principal ignorability. Principal stratification \citep{frangakis2002principal}, trimming bounds \citep{lee2009training}, or, most practically, an outcome defined for every randomized unit (a conversion indicator; revenue per user including zeros) are the remedies. Every outcome in this paper is of that per-randomized-unit form.

\begin{assumption}[Statistical surrogacy: conditional-mean invariance across treatment states]
\label{asm:surrogacy}
The conditional mean of the primary outcome given the surrogate and covariates is invariant across treatment states. Formally, there is a function $m(\cdot, \cdot)$ such that
\begin{equation}
  \E[Y_i(1) |S_i(1) = s, X_i = x] = \E[Y_i(0) |S_i(0) = s, X_i = x] = m(s, x)
  \label{eq:surrogacy-pot}
\end{equation}
for all $(s, x)$ in the support of the relevant conditional distributions, together with the overlap condition that this support is shared across treatment arms. Equivalently, in the observed data, $\E[Y_i |T_i, S_i, X_i] = \E[Y_i |S_i, X_i]$.
\end{assumption}

Assumption~\ref{asm:surrogacy} is a statistical condition on observed conditional means, and causal full mediation does not by itself deliver it. Full mediation, $Y_i(t,s) = Y_i(s)$, implies Assumption~\ref{asm:surrogacy} only without surrogate--outcome confounding. With a latent $U$ independent of $T$, let $S = T + U$ and $Y = S + U$: there is no direct effect of $T$ on $Y$ holding $S$ fixed, yet $U = S - T$ is recoverable from the conditioning set, so $\E[Y \mid S, T] = 2S - T$ and invariance fails. Identification from surrogate data therefore needs either full mediation with no unmeasured mediator--outcome confounding \citep{frangakis2002principal, vanderweele2013surrogate} or the statistical condition imposed directly. All our DGPs draw $\varepsilon_{S,i}$ and $\varepsilon_{Y,i}$ independently, so the two coincide in the simulations; in the field they can come apart, and surrogacy means the statistical condition throughout.

Assumption~\ref{asm:surrogacy} identifies the surrogate index (Method~2) when $\hat{f}$ is consistent for $m(s, x)$. When it fails (DGPs 2 and 3), the surrogate index incurs the bias of Proposition~\ref{prop:si-bias}; in DGP~4 it holds within the experiment, and the surrogate index is biased because its historical fit does not transport. PPI++ (Method~3) and AIPW (Method~5) rest only on Assumption~\ref{asm:mcar} plus regularity.

For any variable $V_i$ and index set $\mathcal{A}$, we write $\hat{\tau}_V^{\mathcal{A}} = \bar{V}_1^{\mathcal{A}} - \bar{V}_0^{\mathcal{A}}$ for the difference-in-means estimator, where $\bar{V}_t^{\mathcal{A}}$ averages over units in $\mathcal{A}$ with $T_i = t$. When $\mathcal{A}$ is the full sample, we drop the superscript.

\subsection{Seven Estimators}
\label{sec:eightmethods}

We compare seven estimators, numbered 0 to 6 in Table~\ref{tab:methods-summary}. Methods~0 to~5 produce a point estimate $\hat{\tau}$, a variance estimate $\hat{V}$, and a 95\% Wald interval $\hat{\tau} \pm 1.96\sqrt{\hat{V}}$; the hybrid (Method~6) has a simulation-calibrated interval (Appendix~\ref{app:hybrid}). Tuning and variance variants are ablation configurations, not further methods (Appendix~\ref{app:ablations}).

\paragraph{Prediction protocol: all-units cross-fitting.} Every method that uses unit-level predictions draws them from one protocol, so those estimators differ only in how they combine the same $\hat{Y}$ vector. Partition the $n$ units at random into $K = 5$ folds, write $k(i)$ for unit $i$'s fold and $\mathcal{I}_k$ for the units in fold $k$, and let $g(s,x) = (1, s, s^2, x)'$ be the design vector of the prediction model, $g_i = g(S_i, X_i)$. For each fold, fit the model by OLS on the labeled units outside that fold,
\begin{equation}
  \hat\beta_{(-k)} = \Bigl(\textstyle\sum_{j \in \calL \setminus \mathcal{I}_k} g_j g_j'\Bigr)^{-1} \textstyle\sum_{j \in \calL \setminus \mathcal{I}_k} g_j Y_j,
  \label{eq:fold-beta}
\end{equation}
and score every unit of that fold, labeled or not, from that fit: $\hat{Y}_i = g_i' \hat\beta_{(-k(i))}$. The surrogate index, PPI++, and AIPW consume this vector. The exceptions are the DGP~4 surrogate index, fit once on external historical units, and the composite proxy (Method~4), which uses no $\hat{Y}$ and rescales the current surrogate effect by weights fit on past experiments (Table~\ref{tab:implementation}). When the surrogate is binary, $s^2 = s$ and the design is exactly singular, as for the funnel surrogates of Section~\ref{sec:empirical}. The fold fit then solves $(\sum g_j g_j' + 10^{-10} I)^{-1} \sum g_j Y_j$, which leaves the fitted values unchanged to $10^{-10}$, and the sandwich of Proposition~\ref{prop:joint-si-ppi} inverts $Q$ on its column space, discarding eigenvalues below a relative floor of $10^{-12}$. Fitted values, estimators, and influence functions do not depend on which spanning subset of the design is retained; Section~\ref{sec:results-gbt} reports the floor's effect where the first-stage term is largest.
A fold whose training set holds fewer than $q + 1$ labeled units, with $q$ the number of design columns, raises an error rather than falling back to a full-sample fit that would put a unit's own outcome into its own prediction; the smallest training set here, LaLonde at $\pi_L = 0.10$, has about 45 labeled units per fold against $q \approx 11$. Two properties follow. No unit's own outcome enters its own prediction, so conditional on a fold's training fit its predictions are those of a fixed predictor (Proposition~\ref{prop:ppi-validity}). And labeled and unlabeled units within a fold share one coefficient, so the surrogate index is exactly the foldwise sum of Equation~\eqref{eq:si-fold-average}, whose first-order representation is $\hat{d}'\hat\beta$ with $\hat\beta$ the full labeled fit, and the pair $(\hat\tau_{\mathrm{SI}}, \hat\tau_{\mathrm{PPI++}})$ has the first-order expansion of one stacked system of estimating equations (Proposition~\ref{prop:joint-si-ppi}). The GBT comparison of Section~\ref{sec:results-gbt} replaces the OLS step inside the same protocol. The partition is unstratified; a stratified variant is implemented but used by no reported result.

\begin{table}[!ht]
\centering
\caption{Summary of estimation methods.}
\label{tab:methods-summary}
\footnotesize
\setlength{\tabcolsep}{4pt}
\begin{tabular}{@{}l >{\raggedright\arraybackslash}p{2.2cm} >{\raggedright\arraybackslash}p{2.2cm} c >{\raggedright\arraybackslash}p{2.8cm}@{}}
\toprule
\textbf{Method} & \textbf{Assumption} & \textbf{Guarantee} & \textbf{Best RE} & \textbf{When to Use} \\
\midrule
0: Labeled-Only & MCAR & Unbiased, valid CI & $1.0\times$ & Always valid baseline \\
1: Naive Surrogate & Surrogacy + linear $Y$--$S$ link & Correct sign & High (biased) & Decision-making only \\
2: Surrogate Index & Surrogacy + correct $f$ & Consistent if valid & $2.9$--$5.5\times$ & Well-validated surrogate \\
3: PPI++ & MCAR & Unbiased at fixed $f$, $\lambda$; asymptotically valid CI & $1.3$--$1.5\times$ & Uncertain surrogate \\
4: Composite Proxy & Surrogacy + $K_{\text{hist}}$ past experiments & Calibrated weights & $3.2$--$6.2\times$ & Multiple surrogates, with a library of past experiments \\
5: AIPW & MCAR (or correct propensity) & Doubly robust & $1.4$--$1.5\times$ & MAR; robustness \\
6: Hybrid (Appendix~\ref{app:hybrid}) & MCAR & Adaptive heuristic & $2.0$--$2.2\times$ & Exploratory shrinkage; use only with diagnostic + domain caveats \\
\bottomrule
\end{tabular}
\smallskip

{\scriptsize \noindent Notes: RE (relative efficiency) is the RMSE of labeled-only over the RMSE of the method (Section~\ref{sec:eval-metrics}); RE $> 1$ improves on the baseline. ``Best RE'' covers the valid-surrogate DGPs~1 and~5 from $\pi_L = 0.20$ to $\pi_L = 0.05$; gains decline as $\pi_L$ grows, to $1.8\times$ (surrogate index) and $1.19\times$ (PPI++) at $\pi_L = 0.50$ (Table~\ref{tab:dgp1}). The guarantee column is asymptotic for every method that learns its prediction function and relies on the normal approximation for the labeled arm means, which fails at the rare-event rates of the smallest Criteo subsamples (Section~\ref{sec:criteo}). The composite proxy's past-experiment library is described under Method~4 and in Table~\ref{tab:implementation}.}
\end{table}

\paragraph{Method 0: Labeled-Only (Baseline).}
The standard Neyman estimator applied to the labeled subset:
\begin{equation}
  \hat{\tau}^{(0)} = \hat{\tau}_Y^{\calL} = \bar{Y}_1^{\calL} - \bar{Y}_0^{\calL}.
  \label{eq:labeled-only}
\end{equation}
This is unbiased for $\tau$ under Assumption~\ref{asm:mcar} with variance $O(1/n_L)$. It ignores the surrogate entirely and is the benchmark: a method is useful only if it improves on Method~0.

\paragraph{Method 1: Naive Surrogate.}
Estimate the treatment effect on the surrogate from all $n$ units, $\hat{\tau}_S = \bar{S}_1 - \bar{S}_0$, and put it on the outcome scale with the simple (not partial) slope $\hat{\beta}_{YS}$ of $Y$ on $S$ in the labeled data:
\begin{equation}
  \hat{\tau}^{(1)} = \hat{\beta}_{YS}\,\hat{\tau}_S .
  \label{eq:naive}
\end{equation}
This is the reported naive-surrogate estimate, and every NS bias, RMSE, coverage, and decision entry is measured against $\tau$. Its variance is $\hat{\beta}_{YS}^2\,\hat{V}(\hat{\tau}_S)$, with $\hat{V}(\hat{\tau}_S)$ the Neyman variance of the surrogate difference in means; the interval treats $\hat{\beta}_{YS}$ as fixed and inherits the omitted-variable bias of the simple slope, and its below-nominal coverage reflects both. Under surrogacy with a linear link and $\beta_{YS} > 0$, $\text{sign}(\tau_S) = \text{sign}(\tau)$, so we evaluate Method~1 mainly on the correct decision rate (CDR) and report its coverage to show what the common shortcut costs.

\paragraph{Method 2: Surrogate Index \citep{athey2019surrogate}.}
Estimate $m(s, x) = \E[Y \mid S = s, X = x]$, the function appearing in Assumption~\ref{asm:surrogacy}, by the cross-fitted OLS of the prediction protocol above; we write $\hat{f}$ for that estimate and take $f \equiv m$ throughout. Impute $\hat{Y}_i = \hat{f}(S_i, X_i)$ for all units, and estimate the ATE on imputed outcomes:
\begin{equation}
  \hat{\tau}^{(2)} = \bar{\hat{Y}}_1 - \bar{\hat{Y}}_0,
  \label{eq:surrogate-index}
\end{equation}
the difference of arm means of $\hat{Y}$ over all $n$ units. Because every unit of a fold is scored by that fold's coefficient, Equation~\eqref{eq:surrogate-index} is exactly a fold average of linear indices. Write $n_k = |\mathcal{I}_k|$, let $n_{tk}$ be the number of units of fold $k$ in arm $t$ and $\bar g_{tk}$ their mean design vector, and set
\begin{equation}
  \hat{d}_k = \frac{n}{n_k}\left(\frac{n_{1k}}{n_1}\,\bar g_{1k} - \frac{n_{0k}}{n_0}\,\bar g_{0k}\right),
  \label{eq:fold-d}
\end{equation}
the between-arm difference of mean design vectors in fold $k$ carrying the arm weights the estimator itself uses. Then
\begin{equation}
  \hat{\tau}^{(2)} \;=\; \hat{\tau}_{\mathrm{SI}} \;=\; \sum_{k=1}^{K} \frac{n_k}{n}\; \hat{d}_k'\, \hat\beta_{(-k)} ,
  \label{eq:si-fold-average}
\end{equation}
and each $\hat{d}_k$ equals $\bar g_{1k} - \bar g_{0k}$ up to $O_p(n^{-1/2})$ and converges to $d_0 = \E[g \mid T = 1] - \E[g \mid T = 0]$. The foldwise sum~\eqref{eq:si-fold-average} is the exact definition of the estimator; the single index $\hat{d}'\hat\beta$, with $\hat{d} = \bar g_1 - \bar g_0$ and $\hat\beta$ the full labeled fit, differs from it by $o_p(n^{-1/2})$ and is its first-order representation.

Under Assumption~\ref{asm:surrogacy} and correct specification of $f$, $\hat{\tau}_{\mathrm{SI}}$ is consistent for $\tau$. For a \emph{known} index $f$ the sampling variance is $\sum_t \Var\{f(S,X) \mid T = t\}/n_t$; under balanced labeling and arm-invariant variances its ratio to the labeled-only variance $\sum_t \Var(Y \mid T = t)/n_{L,t}$ is
\begin{equation}
  \pi_L\,\frac{\Var\{f(S,X) \mid T\}}{\Var(Y \mid T)},
  \label{eq:si-known-index-ratio}
\end{equation}
so the advantage over the labeled-only baseline shrinks in proportion to the labeled fraction. A learned index adds the first-stage term of Proposition~\ref{prop:joint-si-ppi}, and every surrogate-index interval on the cross-fitted OLS index comes from that proposition; Table~\ref{tab:implementation} lists the designs that use the known-index expression instead.

\paragraph{The surrogate index interval.} Because the index is learned from the same experiment, a complete interval must carry the estimation error of $\hat\beta_{(-k)}$ as well as the sampling variation of the arm means at a fixed coefficient. The intervals we report do: they use the sandwich estimator of Proposition~\ref{prop:joint-si-ppi}, whose influence function contains both terms. Two narrower constructions appear as ablations. ``SI (plug-in variance)'' is the Neyman variance of a difference in means of $\hat{Y}$ with $\hat\beta$ held at its estimate, which is what the surrogate-index literature usually reports. ``SI (delta-method first stage)'' adds to that the term $d' V_\beta d$,
\begin{equation}
  \Var_{\mathrm{delta}}(\hat{\tau}^{(2)}) = \Var_{\mathrm{plug\text{-}in}}(\hat{\tau}^{(2)}) + d'\,V_\beta\, d,
  \qquad d = \bar{g}_1 - \bar{g}_0,
  \label{eq:si-first-stage}
\end{equation}
where $\bar{g}_t$ is the mean design vector over all units in arm $t$ and $V_\beta$ the HC1 sandwich covariance of the OLS fit on the full labeled set. That term dominates the SI variance whenever the index leans on a predictor that treatment moves. Equation~\eqref{eq:si-first-stage} omits the cross term $2\beta'\Cov(\bar{g}_1 - \bar{g}_0,\, \hat{\beta})\, d$, of the same $O(1/n)$ order, which the sandwich includes. Section~\ref{sec:surrogacy-test} checks the sandwich's calibration, and Section~\ref{sec:results-gbt} with Table~\ref{tab:si-variance-ablation} reports what each narrower construction costs.

The surrogate index fully trusts the prediction model: under Regime~A, $\hat{f}$ is fit on the labeled subset of the same experiment (the early enrollees whose outcome has matured) and applied to the entire cohort, so the estimate is maximally efficient if the model is right and inherits its bias if not.

\paragraph{Method 3: PPI++ \citep{angelopoulos2023prediction, angelopoulos2023ppipp}.}
Use the cross-fitted predictions for all units, then correct for systematic prediction error using the labeled set:
\begin{equation}
  \hat{\tau}^{(3)} = \hat{\tau}_Y^{\calL} + \hat{\lambda}\left(\hat{\tau}_{\hat{Y}}^{\text{all}} - \hat{\tau}_{\hat{Y}}^{\calL}\right),
  \label{eq:ppi}
\end{equation}
where $\hat{\tau}_{\hat{Y}}^{\text{all}}$ and $\hat{\tau}_{\hat{Y}}^{\calL}$ are the prediction-based ATEs over all $n$ units and over the labeled set. The coefficient $\hat{\lambda} \in \R$ is common to the two arms and is not clipped. At $\hat{\lambda} = 0$ the estimator is labeled-only; at $\hat{\lambda} = 1$ it is the unit-weight PPI estimator $\hat{\tau}_{\hat{Y}}^{\text{all}} + (\hat{\tau}_Y^{\calL} - \hat{\tau}_{\hat{Y}}^{\calL})$, which still carries the labeled residual correction, so no value of $\lambda$ reduces it to the surrogate index. The coefficient minimizes the exact variance of Proposition~\ref{prop:corrected-variance}, a quadratic in $\lambda$ with leading coefficient $\sum_t a_t\,\sigma^2_{\hat{Y},t}$:
\begin{equation}
  \hat{\lambda} = \frac{\sum_t a_t\,\hat{\gamma}_t}{\sum_t a_t\,\hat{\sigma}_{\hat{Y},t}^2},
  \qquad a_t = \frac{1}{n_{L,t}} - \frac{1}{n_t},
  \label{eq:lambda-exact}
\end{equation}
where $\hat{\gamma}_t = \widehat{\Cov}(Y_i, \hat{Y}_i | T_i = t)$ is estimated from labeled data and $\hat{\sigma}_{\hat{Y},t}^2 = \widehat{\Var}(\hat{Y}_i | T_i = t)$ from all data. The quadratic is strictly convex when some arm has $\sigma^2_{\hat{Y},t} > 0$ and $n_{L,t} < n_t$. At $\pi_L = 1$ every $a_t$ is zero, Equation~\eqref{eq:lambda-exact} is $0/0$, and the rule is defined as $\hat{\lambda} = 0$, the labeled-only estimator. The implementation returns labeled-only whenever the unlabeled set holds at most the larger of ten units and 5\% of the sample; that threshold covers every $\pi_L \geq 0.95$, where it departs from standard PPI++, but no reported row is affected because the labeled fractions we run jump from $0.80$ to $1.00$. In the balanced, arm-homogeneous case Equation~\eqref{eq:lambda-exact} reduces to $\hat{\gamma}/\hat{\sigma}_{\hat{Y}}^2$, which does not depend on $\pi_L$. Intervals use the exact variance, so the tuning rule and the interval come from one objective. Two guarantees are kept apart. At a fixed predictor and a fixed $\lambda$ the correction term has expectation zero under MCAR, so the estimator is exactly unbiased whatever the prediction model's quality (Proposition~\ref{prop:ppi-validity}). With the learned index and the estimated $\hat\lambda$ the guarantee is first-order: the estimator is asymptotically normal and centered at $\tau$ (Proposition~\ref{prop:joint-si-ppi}).

The unclipped coefficient makes Equation~\eqref{eq:ppi} the linear-recalibration form of the recalibrated prediction-powered estimator of \citet{ji2025predictions}, whose general version, not implemented here, learns a recalibration function of the predictions. The fitted GREG estimator of survey sampling \citep{sarndal1992model} has the same form with a different coefficient, the labeled-sample slope of $Y$ on $\hat{Y}$. Neither is a separate method here.

\paragraph{Relation to published PPI++, and what is ours.} Equation~\eqref{eq:ppi} is the \emph{all-units} parameterization. \citet{angelopoulos2023ppipp} write PPI++ for mean estimation in the labeled-unlabeled form $\hat{\tau}_Y^{\calL} + \lambda'(\hat{\tau}_{\hat{Y}}^{\calU} - \hat{\tau}_{\hat{Y}}^{\calL})$. The identity $\bar{\hat{Y}}_t - \bar{\hat{Y}}_{L,t} = (n_{U,t}/n_t)(\bar{\hat{Y}}_{U,t} - \bar{\hat{Y}}_{L,t})$ holds arm by arm, so the change of variables is
\begin{equation}
  \lambda'_t = \lambda\,\frac{n_{U,t}}{n_t},
  \label{eq:lambda-reparam}
\end{equation}
one coefficient per arm, and it collapses to a single $\lambda' = \lambda\, n_U/n$ only when the two arms share the same unlabeled share. Tuning constant and variance must be transformed together. With a common, unclipped $\lambda$ chosen by Equation~\eqref{eq:lambda-exact}, Equation~\eqref{eq:ppi} is standard PPI++ applied to the scalar ATE contrast, and the exact variance of Proposition~\ref{prop:corrected-variance} is the variance the labeled-unlabeled form assigns after the reparameterization~\eqref{eq:lambda-reparam}. We claim no equivalence beyond that case. Three variants depart from it and are ours rather than theirs. Clipping $\lambda$ to $[0,1]$ is an additional restriction, since published PPI++ mean estimation leaves the coefficient unrestricted, and under the reparameterization the clip must be transformed too. Tuning one coefficient per arm rather than one for the contrast is a different estimator. Minimizing the plug-in rather than the exact variance objective is a third,
\begin{equation}
  \hat{\lambda}_{\mathrm{plug\text{-}in}} = \frac{\sum_{t} \hat{\gamma}_t / n_{L,t}}{\sum_{t} \hat{\sigma}_{\hat{Y},t}^2 (1/n_t + 1/n_{L,t})},
  \label{eq:lambda-opt}
\end{equation}
which in the balanced case is Equation~\eqref{eq:lambda-exact} shrunk by $1/(1 + \pi_L)$ and costs the fraction $(\pi_L/(1+\pi_L))^2$ of the variance gain PPI++ achieves at its own optimum, about 3\% of that gain at $\pi_L = 0.20$ and 20\% at $\pi_L = 0.80$; the shrinkage is toward the labeled-only endpoint, so it is conservative. All three, together with the plug-in variance estimator, are reported as ablations under the names ``PPI++ (clipped)'', ``PPI++ (per-arm $\lambda$)'', ``PPI++ (plug-in $\lambda$)'', and ``PPI++ (plug-in variance)''.

Unqualified ``PPI++'' means the common, unclipped, exactly tuned rule of Equation~\eqref{eq:lambda-exact} with the exact variance of Proposition~\ref{prop:corrected-variance}: the estimator in every headline table and figure and the one Section~\ref{sec:recommendations} recommends. The qualified names refer to the ablations only.

\paragraph{Method 4: Composite Proxy \citep{tripuraneni2024choosing}.}
Rescale the current surrogate effects by weights fit on $K_{\text{hist}}$ past experiments:
\begin{equation}
  \hat{\tau}^{(4)} =\hat{\mathbf{w}}' \hat{\boldsymbol{\tau}}_S, \quad \text{where} \quad \hat{\mathbf{w}} = \left(\sum_{k=1}^{K_{\text{hist}}} \hat{\boldsymbol{\tau}}_S^{(k)} \hat{\boldsymbol{\tau}}_S^{(k)\prime}\right)^{-1} \sum_{k=1}^{K_{\text{hist}}} \hat{\boldsymbol{\tau}}_S^{(k)} \hat{\tau}_Y^{(k)},
  \label{eq:composite}
\end{equation}
where $\hat{\boldsymbol{\tau}}_S$ is the current all-units difference in surrogate means and $\hat{\boldsymbol{\tau}}_S^{(k)}$ and $\hat{\tau}_Y^{(k)}$ are the surrogate and outcome differences in means estimated in past experiment $k$. The weights are a no-intercept least-squares fit of the estimated past outcome effects on the estimated past surrogate effects, and the interval uses $\hat{\mathbf{w}}'\widehat{\Var}(\hat{\boldsymbol{\tau}}_S)\hat{\mathbf{w}}$ with $\hat{\mathbf{w}}$ held fixed. The method uses neither $\hat{Y}$ nor the current labels, and it needs data the other methods do not: a library of past experiments that measured both the surrogate and the outcome. In the simulations one library of $K_{\text{hist}} = 30$ past experiments ($50$ in the best-case DGP~7 run of Table~\ref{tab:dgp7}) is drawn once from a fixed seed and held fixed across the replications and configurations of a run. Each past experiment has $n_k$ uniform on $\{2{,}000, \ldots, 10{,}000\}$, a true effect $\tau_k \sim \mathcal{N}(0, 0.1^2)$, and surrogate effect $\tau_k/\beta_{YS}$ at the historical slope $\beta_{YS} = 0.5$ (the pre-drift slope in DGP~4); in DGP~7 a common random factor scales the three surrogate effects. The weights use each past experiment's estimated effects, not the drawn $\tau_k$, so the composite proxy is not oracle-assisted; but every past experiment follows the valid-surrogate DGP~1 design, which makes its calibration favorable relative to practice. Its coverage is therefore conditional on one realized library per run and does not average over the estimation uncertainty of the library. With no past experiments the implementation falls back to the surrogate index, as in the DGP~6 portfolio and the Hillstrom analysis, where the rows are labeled ``CP (SI fallback)'' and carry no independent evidence about the composite proxy. The method is most valuable with multiple surrogates ($J > 1$), which DGP~7 tests.

\paragraph{Method 5: AIPW (Augmented Inverse Probability Weighted).}
The doubly-robust estimator combines the prediction model with an observation-probability model. For each unit, define the augmented outcome
\begin{equation}
  \phi_i = \frac{M_i}{\pi_i}(Y_i - \hat{Y}_i) + \hat{Y}_i,
  \label{eq:aipw}
\end{equation}
where $M_i = \mathbf{1}\{i \in \calL\}$ indicates whether $Y_i$ is observed and $\pi_i = P(M_i = 1)$ is the observation propensity. The AIPW estimator is $\hat{\tau}^{(5)} = \bar{\phi}_1 - \bar{\phi}_0$ over all units in each arm. It is consistent if either the outcome model or the propensity model is correctly specified \citep{robins1994estimation, tsiatis2006semiparametric}, with one qualification here: the outcome branch requires the arm-specific regression $\E[Y \mid S, X, T = t]$, whereas our pooled $\hat{f}(S, X)$ equals it only under Assumption~\ref{asm:surrogacy}. Under a surrogacy violation AIPW's consistency in our MCAR design therefore rests on the propensity branch, trivially correct because $\pi_i = \pi_L$ is known, which is why AIPW tracks PPI++ closely in the violated-surrogate DGPs: both rely on the same design-based argument. Under MCAR, AIPW is a close relative of PPI with $\lambda = 1$; its distinct value appears in DGP~5's MAR variant, where the propensity is estimated and AIPW's relative bias stays between $-1.6\%$ and $+0.7\%$ across $q$, against $-4.7\%$ to $-8.5\%$ for labeled-only. Cross-PPI \citep{zrnic2024cross} gives nearly identical results to PPI++ under our cross-fitting protocol and is compared in Appendix~\ref{app:cross-ppi}.

\subsection{Structural Properties}
\label{sec:properties}

The following four propositions formalize the key tradeoffs among the methods and the inference that goes with them.

\begin{proposition}[Surrogate index bias under partial mediation]
\label{prop:si-bias}
Suppose the data-generating process is
\begin{align}
  S_i &= \alpha_S + \beta_{SX} X_i + \gamma_S T_i + \varepsilon_{S,i}, \label{eq:dgp-s}\\
  Y_i &= \alpha_Y + \beta_{YS} S_i + \beta_{YX} X_i + \delta \cdot T_i + \varepsilon_{Y,i}, \label{eq:dgp-y}
\end{align}
with $T_i \sim \mathrm{Bernoulli}(p)$ independent of $(X_i, \varepsilon_{S,i}, \varepsilon_{Y,i})$, the errors mutually independent, and $\sigma_{\varepsilon_S}^2 = \Var(\varepsilon_{S,i})$, so that the true ATE is $\tau = \beta_{YS} \gamma_S + \delta$. Train $\hat{f}$ as the paper's procedure does, by pooled OLS of $Y$ on $(1, S, X)$ with treatment omitted. Then the probability limit of the fitted coefficient on $S$ is
\begin{equation}
  \beta_{YS} + \delta\,\frac{\gamma_S\, p(1-p)}{\sigma_{\varepsilon_S}^2 + \gamma_S^2\, p(1-p)},
  \label{eq:pooled-slope}
\end{equation}
and the surrogate index estimator $\hat{\tau}^{(2)}$ satisfies
\begin{equation}
  \plim_{n \to \infty}\, \hat{\tau}^{(2)} = \tau - \delta\,\kappa,
  \qquad
  \kappa \;=\; \frac{\sigma_{\varepsilon_S}^2}{\sigma_{\varepsilon_S}^2 + \gamma_S^2\, p(1-p)},
  \label{eq:si-bias}
\end{equation}
so that $\mathrm{Bias}(\hat{\tau}_{\mathrm{SI}}) = -\delta\,\kappa$. If instead $\hat{f}$ is replaced by the oracle structural prediction function $m(s,x) = \beta_{YS} s + \beta_{YX} x$, the bias is exactly $-\delta$.
\end{proposition}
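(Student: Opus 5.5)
The plan is to compute the population projection of $Y$ on $(1,S,X)$ directly, then push it through the difference in arm means. Because of cross-fitting, each fold coefficient $\hat\beta_{(-k)}$ is an OLS fit on a random subset of labeled units. Under Assumption~\ref{asm:mcar} and standard moment conditions, each such fit converges to the same population linear projection coefficient $b^* = \E[gg']^{-1}\E[gY]$, with $g=(1,S,X)'$. Likewise $\hat d_k \to d_0 = \E[g\mid T=1]-\E[g\mid T=0]$. By the foldwise representation~\eqref{eq:si-fold-average} and Slutsky, $\plim \hat\tau^{(2)} = d_0' b^*$. Since $d_0$ has zero intercept component and zero $X$ component ($X$ is pre-treatment and independent of $T$), and has $S$ component $\gamma_S$, this reduces to $\plim\hat\tau^{(2)} = \gamma_S\, b^*_S$. (Under the squared design $g=(1,s,s^2,x)$ of the protocol, the $s^2$ coefficient of the projection is zero here by joint normality; absent normality, I would state the result for the linear design named in the proposition.)

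To find $b^*_S$, I would substitute~\eqref{eq:dgp-y}. Then $Y = \alpha_Y + \beta_{YS}S + \beta_{YX}X + \delta T + \varepsilon_Y$, and projecting onto $(1,S,X)$ gives $b^* = (\alpha_Y,\beta_{YS},\beta_{YX})' + \delta\cdot\mathrm{proj}(T\mid 1,S,X) + 0$, since $\varepsilon_Y$ is independent of $(S,X)$ with mean zero. The only real work is the projection of $T$ on $(1,S,X)$. By Frisch--Waugh, its $S$ coefficient is $\Cov(T,\tilde S)/\Var(\tilde S)$, where $\tilde S = S - \mathrm{proj}(S\mid 1,X) = \gamma_S(T-p) + \varepsilon_S$. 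This residual form uses the independence of $T$ from $X$, so the linear projection of $S$ on $X$ is $\alpha_S + \gamma_S p + \beta_{SX}X$. Then $\Cov(T,\tilde S) = \gamma_S p(1-p)$ and $\Var(\tilde S) = \gamma_S^2 p(1-p) + \sigma^2_{\varepsilon_S}$, which gives~\eqref{eq:pooled-slope}.

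Finally, $\gamma_S b^*_S = \beta_{YS}\gamma_S + \delta\,\gamma_S^2p(1-p)/(\sigma^2_{\varepsilon_S}+\gamma_S^2p(1-p)) = \tau - \delta\kappa$, since $1-\gamma_S^2p(1-p)/(\cdot) = \kappa$. For the oracle claim, applying $m(s,x)=\beta_{YS}s+\beta_{YX}x$ to all units gives arm-mean difference $\beta_{YS}\hat\tau_S + \beta_{YX}(\bar X_1-\bar X_0) \to \beta_{YS}\gamma_S = \tau-\delta$. This is exact in expectation because $\E[\hat\tau_S]=\gamma_S$ and $\E[\bar X_1-\bar X_0]=0$, conditional on both arms being nonempty.

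The main obstacle is bookkeeping rather than substance. It has two parts. First, I must justify that the fold-wise, labeled-subsample OLS fits share the full-population projection limit; MCAR makes the labeled subsample a random draw, so sample moments converge. Second, I must handle the $s^2$ column of the actual protocol. I would state that the proposition concerns the linear $(1,S,X)$ fit, as written, and remark that with a quadratic term the same argument applies with $\tilde S$ replaced by residualized $(S,S^2)$.
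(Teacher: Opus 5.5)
Your argument is correct and is essentially the paper's proof. You use the same omitted-variable decomposition $b^* = (\alpha_Y,\beta_{YS},\beta_{YX})' + \delta\,\mathrm{proj}(T \mid 1,S,X)$ and take the $S$-coefficient of that projection by Frisch--Waugh where the paper uses Cramer's rule (same value); then you push it through $d_0 = (0,\gamma_S,0)'$, and the cross-fitting and finite-sample oracle remarks are harmless extras.

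One correction to your parenthetical on the quadratic design: $(S,X)$ is not jointly normal, since $T$ is Bernoulli. The $S^2$ coefficient of $\mathrm{proj}(T \mid 1,S,S^2,X)$ is not zero in general; with no covariate it is proportional to $1-2p$. It vanishes at $p=\tfrac12$ with symmetric $X$ and $\varepsilon_S$ only through the sign-flip symmetry $(T-\tfrac12, X, \varepsilon_S) \mapsto -(T-\tfrac12, X, \varepsilon_S)$, not because of normality. So restricting the claim to the $(1,S,X)$ fit, as the statement does, is the right call.
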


\begin{remark}
\label{rem:si-bias-scope}
Three scope conditions. First, the attenuation factor $\kappa$ is a property of the training procedure, not of the structural model: because $S$ is affected by treatment, a pooled regression that omits $T$ loads part of $\delta T$ onto the surrogate coefficient, and $\kappa < 1$ measures how little of $\delta$ that reallocation absorbs. Under the DGP~2 parameters used throughout ($\sigma_{\varepsilon_S}^2 = 4$, $\gamma_S = 0.3$, $p = 0.5$) the reallocation is negligible: $\kappa = 4/4.0225 = 0.9944$, so the pooled-fit bias is $-0.9944\,\delta$ and the simulations cannot distinguish it from $-\delta$. At $\rho = 0.2$, where $\delta = 0.0375$, the predicted bias is $-0.0373$ against the simulated $-0.036$ (Table~\ref{tab:dgp2}). Every statement elsewhere in the paper that the surrogate index misses the direct effect should be read with this factor attached; we write it once here and do not repeat it. Second, the implemented prediction model adds $S^2$; under the linear DGP its population coefficient is negligible and the simulated bias confirms this, but the closed form is stated for the $(1, S, X)$ projection. Third, the closed form relies on the linear specification. Under nonlinear $Y = g(S) + \delta T + \varepsilon$, the qualitative conclusion (the surrogate index misses the direct effect) holds generally, but the quantitative formula is specific to the linear case.
\end{remark}

\begin{proposition}[Unbiasedness of PPI at a fixed predictor under MCAR]
\label{prop:ppi-validity}
Let Assumption~\ref{asm:mcar} (MCAR labeling) hold, fix $\lambda \in \R$, and let $\hat{Y}_i = f(S_i, X_i)$ for a prediction function $f: \R^{1+p} \to \R$ that is fixed, or trained on data independent of the current experiment's outcomes and labels. Conditionally on the treatment vector and on labeled counts $n_{L,t} \geq 1$ in each arm, the PPI estimator
\begin{equation}
  \hat{\tau}^{\mathrm{PPI}(\lambda)} = \hat{\tau}_Y^{\calL} + \lambda \left(\hat{\tau}_{\hat{Y}}^{\mathrm{all}} - \hat{\tau}_{\hat{Y}}^{\calL}\right)
\end{equation}
satisfies $\E[\hat{\tau}^{\mathrm{PPI}(\lambda)}] = \tau$ exactly in finite samples. The same argument applies within one fold of the all-units protocol: conditionally on the fold's own training fit $\hat\beta_{(-k)}$ and on its labeled count $n_{L,tk} \geq 1$ in each arm, the fold's labeled units in arm $t$ are a simple random subset of the fold's units in arm $t$, so the fold's correction term has conditional mean zero.
\end{proposition}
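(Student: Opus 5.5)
The plan is to condition on the treatment vector $\mathbf{T} = (T_1,\ldots,T_n)$ and on the labeled counts $(n_{L,0}, n_{L,1})$, and to split the estimator into its labeled-only piece and its correction term. The labeled-only piece is handled by exchangeability. The correction term should vanish in expectation because, arm by arm, the labeled mean of $\hat{Y}$ and the all-units mean of $\hat{Y}$ have the same conditional expectation.

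The first step is to set up the exchangeability. Under Assumption~\ref{asm:mcar}, membership in $\calL$ is independent of $(T_i,S_i,X_i,Y_i)$, and the units are i.i.d. Hence, conditional on $\mathbf{T}$ and on $n_{L,t}$, the labeled units in arm $t$ form a uniformly random subset of size $n_{L,t}$ of the $n_t$ units with $T_i = t$. This holds under both the Bernoulli and the fixed-size schemes: each scheme assigns equal probability to all label configurations with the given per-arm counts. The i.i.d. draw of $(S_i(t), Y_i(t), X_i)$ is independent of this selection.

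Next, because $f$ is fixed (or independent of the current outcomes and labels), $\hat{Y}_i = f(S_i,X_i)$ is a function of unit $i$'s own data alone. For $i$ in arm $t$ we then have
\[
\E[\bar{\hat{Y}}_{L,t} \mid \mathbf{T}, n_{L,t}] = \E[f(S_i(t),X_i)] = \E[\bar{\hat{Y}}_t \mid \mathbf{T}, n_{L,t}].
\]
One can argue this directly by symmetry of a simple random subset, or by first conditioning on all arm-$t$ data, under which the labeled mean is a simple-random-sample mean and is unbiased for the arm mean $\bar{\hat{Y}}_t$. The requirement $n_{L,t}\ge 1$ makes the labeled mean well defined, and the arm mean is well defined because $n_t \ge n_{L,t} \ge 1$. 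Subtracting across arms gives that the correction $\hat{\tau}_{\hat{Y}}^{\mathrm{all}} - \hat{\tau}_{\hat{Y}}^{\calL}$ has conditional mean zero. Since $\lambda$ is fixed, it factors out of the expectation.

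The same argument applied to $Y$ gives
\[
\E[\bar{Y}_{L,t} \mid \mathbf{T}, n_{L,t}] = \E[Y_i(t)].
\]
This uses SUTVA and the fact that randomization makes $\mathbf{T}$ independent of the potential outcomes. Therefore $\E[\hat{\tau}_Y^{\calL}] = \E[Y(1)] - \E[Y(0)] = \tau$, and linearity of expectation finishes the main claim.

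For the fold statement, I would repeat the argument conditionally on $\hat\beta_{(-k)}$. That fit uses only labeled units outside $\mathcal{I}_k$. The fold partition is random and independent of the data, and the fold's own units are independent of the training units, so conditioning on $\hat\beta_{(-k)}$ leaves the fold's units i.i.d. with their original distribution. Within the fold, MCAR again makes the labeled arm-$t$ units a simple random subset of the fold's arm-$t$ units given $n_{L,tk}\ge 1$. The fold's predictions come from the fixed map $g'\hat\beta_{(-k)}$, so the same equal-expectation step applies.

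The main obstacle is the conditioning bookkeeping. Conditioning on the training fit conditions on labels outside the fold, and with a fixed-size labeled set this induces dependence on the fold's labeled count. I would avoid the issue by additionally conditioning on the fold's per-arm labeled counts $n_{L,tk}$, under which uniformity of the within-fold labeled subset still holds by exchangeability.
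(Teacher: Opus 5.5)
Your proposal is correct and follows essentially the same route as the paper's proof: condition on the treatment vector and the per-arm labeled counts, use that each arm's labeled units are a simple random subset independent of the data, so that the labeled and all-units arm means of $Y$ and $\hat{Y}$ share their expectations, and then repeat the argument within fold $k$ after conditioning on the fold assignment, $\hat\beta_{(-k)}$, and $n_{L,tk}$. You also fill in bookkeeping the paper states more briefly: you check that both labeling schemes give uniform within-arm configurations, and you note that conditioning on $n_{L,tk}$ absorbs the dependence a fixed-size labeled set creates.
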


The fold-level statement does not extend to the pooled correction of the implemented estimator: the labeled and all-units averages weight fold $k$ by $n_{L,tk}/n_{L,t}$ and $n_{tk}/n_t$, and the other folds' fits use fold $k$'s labels, so no single conditioning event fixes every fold's predictions at once. For the implemented OLS procedure, with the learned index and the estimated $\hat\lambda$, we claim first-order validity (Proposition~\ref{prop:joint-si-ppi}) and no exact finite-sample unbiasedness. The fixed-predictor case is the original framing of \citet{angelopoulos2023prediction} and the setting of the external DGP~4 index (Table~\ref{tab:implementation}). Training the prediction model on the labeled outcomes used in the correction without cross-fitting makes $\hat{Y}_i$ for $i \in \calL$ a function of $Y_i$ and voids the argument; cross-fitting or Cross-PPI \citep{zrnic2024cross} avoids that.

Proposition~\ref{prop:ppi-validity} fixes the prediction function. The next result does not: it gives the joint limiting distribution of the two estimators when the index is the learned linear one, which is what the surrogate-index interval, the diagnostic of Section~\ref{sec:surrogacy-test}, and the hybrid of Appendix~\ref{app:hybrid} all require.

\begin{proposition}[Joint distribution of the surrogate index and PPI++ with a learned linear index]
\label{prop:joint-si-ppi}
Let $(X_i, S_i(0), S_i(1), Y_i(0), Y_i(1), T_i)$, $i = 1, \ldots, n$, be i.i.d. Treatment is randomized: $T_i \sim \mathrm{Bernoulli}(p)$ with $p \in (0,1)$, independent of $(X_i, S_i(0), S_i(1), Y_i(0), Y_i(1))$. The observed surrogate and outcome are the potential outcomes under the assigned arm, $S_i = S_i(T_i)$ and $Y_i = Y_i(T_i)$. Labeling is MCAR (Assumption~\ref{asm:mcar}): the labeled set, with indicators $L_i$, is independent of $\{(T_i, X_i, S_i(0), S_i(1), Y_i(0), Y_i(1))\}_{i=1}^n$ and is drawn either by i.i.d.\ $\mathrm{Bernoulli}(\pi_L)$ indicators or as a uniformly random subset of fixed size $\lfloor \pi_L n \rfloor$, with $\pi_L \in (0, 1]$. Let $K$ be fixed and the folds a random partition into $K$ equal blocks of $n/K$ units, and let predictions be formed by the all-units protocol of Equation~\eqref{eq:fold-beta}. Assume $\E\|g\|^4 < \infty$, $\E Y^4 < \infty$, and that $Q = \E[g g' \mid L = 1] = \E[g g']$ is finite and positive definite on the column space of the design, with $Q^{-1}$ denoting its inverse on that space. That restriction is the eigen-truncation of Section~\ref{sec:eightmethods}, which keeps the eigenspaces of $Q$ above the relative floor, and not the deletion of columns; for a binary surrogate, where $s^2 = s$ makes the design exactly rank-deficient, the two give the same fitted values, and the variance computation performs the first. Define the labeled-population pooled OLS coefficient and its residual,
\begin{equation}
  \beta_0 = Q^{-1}\,\E[g Y], \qquad \varepsilon_i = Y_i - g_i'\beta_0,
  \label{eq:beta0-eps}
\end{equation}
where $\varepsilon$ is orthogonal to $g$ but need not have mean zero within a treatment arm; that within-arm imbalance is the mechanism of Proposition~\ref{prop:si-bias}. Let $d_0 = \E[g \mid T = 1] - \E[g \mid T = 0]$, $\tau_{\mathrm{SI}}^{0} = d_0'\beta_0$, and let $\hat\tau_{\mathrm{PPI}}$ be Equation~\eqref{eq:ppi} at a fixed $\lambda$. Write $A_i = T_i/p - (1 - T_i)/(1-p)$. Then
\begin{equation}
  \sqrt{n}\left\{\begin{pmatrix}\hat\tau_{\mathrm{SI}}\\[2pt] \hat\tau_{\mathrm{PPI}}\end{pmatrix} - \begin{pmatrix}\tau_{\mathrm{SI}}^{0}\\[2pt] \tau\end{pmatrix}\right\} \;\xrightarrow{d}\; \mathcal{N}(0, \Omega), \qquad \Omega = \E[\phi_i \phi_i'],
  \label{eq:joint-clt}
\end{equation}
where $\phi_i = (\phi_i^{\mathrm{SI}}, \phi_i^{\mathrm{PPI}})'$ with
\begin{align}
  \phi_i^{\mathrm{SI}} &= A_i\bigl(g_i'\beta_0 - \E[g'\beta_0 \mid T_i]\bigr) \;+\; \frac{L_i}{\pi_L}\, d_0'\,Q^{-1} g_i\,\varepsilon_i,
  \label{eq:phi-si}\\
  \phi_i^{\mathrm{PPI}} &= \frac{L_i}{\pi_L}\,A_i\Bigl\{(Y_i - \lambda\, g_i'\beta_0) - \E[Y - \lambda\, g'\beta_0 \mid T_i]\Bigr\} \;+\; \lambda\,A_i\bigl(g_i'\beta_0 - \E[g'\beta_0 \mid T_i]\bigr).
  \label{eq:phi-ppi}
\end{align}
\end{proposition}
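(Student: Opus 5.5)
We linearize each estimator around the population quantities $(\beta_0, \E[g\mid T=t], \E[Y\mid T=t])$, express the remainder as $o_p(n^{-1/2})$, and apply the multivariate Lindeberg--Lévy CLT to the i.i.d.\ sum $n^{-1/2}\sum_i \phi_i$. Stacking the two scalar expansions gives \eqref{eq:joint-clt} by Cramér--Wold. Under Bernoulli labeling the summands are exactly i.i.d. Under fixed-size sampling we condition on the labeled set and note that $L_i/\pi_L$ enters only through averages over the labeled subset. The sampling-without-replacement correction then changes only terms that multiply a mean-zero (given $T$) quantity, so to first order it is the same as Bernoulli labeling; I would state this via the standard finite-population CLT (Hájek), checking only that $\Omega$ is unchanged.

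\textbf{Step 1 (first stage).} For each fold, $\hat\beta_{(-k)} - \beta_0 = Q^{-1}\, n_{L,(-k)}^{-1}\sum_{j\in\calL\setminus\mathcal{I}_k} g_j\varepsilon_j + O_p(n^{-1})$. This uses a law of large numbers for $\sum g g'$ on the training set (which holds because MCAR gives $\E[gg'\mid L=1]=Q$) and the fourth moments, which make the quadratic remainder negligible. On the column space of $Q$, the eigen-truncation is continuous there, so the same expansion holds with the pseudo-inverse.

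\textbf{Step 2 (surrogate index).} Start from the exact foldwise form \eqref{eq:si-fold-average} and write $\hat d_k'\hat\beta_{(-k)} = \hat d_k'\beta_0 + d_0'(\hat\beta_{(-k)}-\beta_0) + (\hat d_k - d_0)'(\hat\beta_{(-k)}-\beta_0)$. The last term is $O_p(n^{-1})$. Summing the first term with weights $n_k/n$ gives exactly $\bar{\hat{}}$-free $\hat d'\beta_0$, the difference of arm means of $g_i'\beta_0$. The ratio estimator $n_1^{-1}\sum T_i g_i'\beta_0 - \mu_1$ linearizes via the delta method to $n^{-1}\sum (T_i/p)(g_i'\beta_0 - \E[g'\beta_0\mid T=1])$, and similarly for arm $0$; this produces the first term of \eqref{eq:phi-si}. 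For the second term, note that each labeled unit appears in exactly $K-1$ of the $K$ training sets, each of size about $(K-1)n\pi_L/K$. The weighted average $\sum_k (n_k/n) d_0'(\hat\beta_{(-k)}-\beta_0)$ therefore equals $d_0'Q^{-1}(n\pi_L)^{-1}\sum_j L_j g_j\varepsilon_j + o_p(n^{-1/2})$, which is the second term of \eqref{eq:phi-si}. Random fold and label imbalances enter only as $O_p(n^{-1/2})$ multipliers of $O_p(n^{-1/2})$ sums.

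\textbf{Step 3 (PPI at fixed $\lambda$).} Write $\hat Y_i = g_i'\beta_0 + g_i'(\hat\beta_{(-k(i))}-\beta_0)$. Plugging in the $\beta_0$ part, $\hat\tau_{\mathrm{PPI}}$ becomes $\hat\tau^{\calL}_{Y-\lambda g'\beta_0} + \lambda\hat\tau_{g'\beta_0}$. Linearizing the arm-mean ratios over the labeled set (weights $L_iT_i/(\pi_L p)$) and over all units gives exactly \eqref{eq:phi-ppi}. Its mean is $\tau$ because MCAR makes the labeled arm means of $g'\beta_0$ centered at the all-unit ones. The first-stage contribution is $\lambda$ times the difference between the all-units and labeled arm contrasts of $g_i'(\hat\beta_{(-k(i))}-\beta_0)$. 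Within fold $k$ this equals $(\hat d_k - \hat d_k^{\calL})'(\hat\beta_{(-k)}-\beta_0)$, and MCAR gives $\hat d_k - \hat d_k^{\calL} = O_p(n^{-1/2})$, so the product is $O_p(n^{-1})$. This is the key reason no first-stage term appears in $\phi^{\mathrm{PPI}}$, and it is where cross-fitting matters: without it a labeled unit's own $\varepsilon_i$ would sit in its prediction and create an $O_p(n^{-1/2})$ bias.

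\textbf{Main obstacle.} The main obstacle is the bookkeeping in Steps 2--3 that keeps the cross-fold dependence negligible. Fold $k$'s coefficient uses labels from the other folds, so the products cannot be handled by conditioning on a single fit. I would bound them directly by Cauchy--Schwarz, using $\|\hat\beta_{(-k)}-\beta_0\|=O_p(n^{-1/2})$ together with $\|\hat d_k - d_0\|$ and $\|\hat d_k-\hat d_k^{\calL}\|$ both being $O_p(n^{-1/2})$. Since $K$ is fixed, summing over folds keeps each remainder at $O_p(n^{-1})$. Finally, $\Omega=\E[\phi\phi']$ is finite by the fourth-moment assumptions, which completes the CLT.
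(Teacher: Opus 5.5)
Your proposal is correct and follows essentially the paper's own proof. It uses the foldwise decomposition of $\hat\tau_{\mathrm{SI}}$ with an $O_p(1/n)$ cross term, and the cancellation of the fold weights $n_k/n$ against the $K-1$ out-of-fold training-set sizes. It uses the product-of-$O_p(n^{-1/2})$-factors bound that makes PPI first-order insensitive to the first stage, and an i.i.d.\ central limit theorem with a sampling-without-replacement version for the fixed-size scheme.

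Two side remarks are inaccurate, though neither carries weight in the argument. First, the claim that cross-fitting is what removes the first-stage term does not hold for this fixed-dimensional OLS index. Without cross-fitting the bound $(\hat d_{\mathrm{all}} - \hat d_{\calL})'(\hat\beta - \beta_0) = O_p(1/n)$ still holds, because a unit's own $\varepsilon_i$ reaches its prediction only through a leverage of order $1/n$. Cross-fitting is needed for the exact argument of Proposition~\ref{prop:ppi-validity}, not for this first-order step. Second, $g_i\varepsilon_i$ is mean zero only unconditionally, not given $T$; unconditional mean zero is nonetheless all the fixed-size argument requires.
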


\noindent The proof is in Appendix~\ref{app:proofs}. The limit is the same under either labeling scheme: every $L$-weighted summand of the expansion is centered, and the finite-population correction of sampling a fixed-size subset without replacement does not enter at first order. Four points qualify the result.

\emph{The first stage is first order for SI and second order for PPI++.} The two terms of $\phi^{\mathrm{SI}}$ are the two terms of the expansion $\hat\tau_{\mathrm{SI}} - \tau_{\mathrm{SI}}^{0} = \beta_0'(\hat d - d_0) + d_0'(\hat\beta - \beta_0) + o_p(n^{-1/2})$, so the estimation error of the index enters both $\Var(\hat\tau_{\mathrm{SI}})$ and $\Cov(\hat\tau_{\mathrm{SI}}, \hat\tau_{\mathrm{PPI}})$. Cross-fitting does not remove it: the surrogate index is not orthogonal to estimation of its own prediction function. Because $\Omega$ is the second moment of the joint influence function, it also carries the cross moment between those two terms, which the delta-method expression of Equation~\eqref{eq:si-first-stage} drops. PPI++ depends on the first stage only through $\lambda\sum_k (w^{\mathrm{all}}_k - w^{\calL}_k)'(\hat\beta_{(-k)} - \beta_0)$, where $w^{\mathrm{all}}_k$ and $w^{\calL}_k$ are fold $k$'s shares of the all-units and labeled between-arm design differences; each term is a product of two $O_p(n^{-1/2})$ factors under MCAR, so the sum is $O_p(1/n)$; this is why $\phi^{\mathrm{PPI}}$ is evaluated at $\beta_0$.

\emph{The estimand of the surrogate index is $\tau_{\mathrm{SI}}^{0}$, not $\tau$.} Equation~\eqref{eq:joint-clt} centers the first coordinate at $d_0'\beta_0$, which equals $\tau$ under Assumption~\ref{asm:surrogacy} and correct specification and differs from it by $-\delta\kappa$ in the setting of Proposition~\ref{prop:si-bias} otherwise. The proposition is a distributional statement, not an identification statement.

\emph{Estimated tuning.} The result is stated at a fixed $\lambda$. Replacing $\lambda$ by an estimate with $\hat\lambda - \lambda = O_p(n^{-1/2})$, which the exact-variance rule satisfies under the moment conditions above, adds $(\hat\lambda - \lambda)(\hat\tau_{\hat Y}^{\mathrm{all}} - \hat\tau_{\hat Y}^{\calL})$, a product of two $O_p(n^{-1/2})$ factors (consistency of $\hat\lambda$ alone would make it $o_p(n^{-1/2})$, which also suffices), so it does not affect Equation~\eqref{eq:joint-clt}; this is the first-order validity of PPI++ with the estimated coefficient. We make no claim about its finite-sample contribution.

\emph{Estimating $\Omega$.} Let $\hat\beta$ be the OLS coefficient on the full labeled set (the fold average of $\hat\beta_{(-k)}$ is asymptotically equivalent), $\hat\varepsilon_i = Y_i - g_i'\hat\beta$ on labeled units, $\hat Q = n_L^{-1}\sum_{i \in \calL} g_i g_i'$, $\hat d$ the sample analogue of $d_0$, $\hat p = n_1/n$, $\hat\pi_L = n_L/n$, and replace each conditional expectation by the corresponding sample arm mean. The plug-in sandwich is $\hat\Omega = n^{-1}\sum_i \hat\phi_i \hat\phi_i'$. Every surrogate-index interval built on the cross-fitted OLS index uses $\hat\Omega_{11}/n$ (the exceptions, DGP~4 and the gradient-boosted-tree rows, are listed in Table~\ref{tab:implementation}); the diagnostic of Section~\ref{sec:surrogacy-test} uses
\begin{equation}
  \widehat{\Var}(\hat D) = \frac{1}{n}\,(1, -1)\,\hat\Omega\,(1, -1)',
  \label{eq:var-D-sandwich}
\end{equation}
which is invariant to the order of the difference; and the hybrid of Appendix~\ref{app:hybrid} uses $\hat\Sigma = \hat\Omega/n$.

\begin{proposition}[Efficiency ordering]
\label{prop:efficiency}
Fix the prediction function, condition on the arm sizes and labeled counts, and let $V(\lambda)$ be the exact variance of $\hat{\tau}^{\mathrm{PPI}(\lambda)}$ (Proposition~\ref{prop:corrected-variance}). Evaluate PPI++ at the minimizer $\lambda^*$ of $V$ over $\R$ and the clipped ablation at its minimizer $\lambda^*_{c}$ over $[0,1]$. Then
\begin{equation}
  \Var(\hat{\tau}^{\mathrm{PPI++}}) \leq \Var(\hat{\tau}^{\mathrm{PPI++\,(clipped)}}) \leq \Var(\hat{\tau}^{(0)}).
  \label{eq:efficiency-ordering}
\end{equation}
If some arm has $\sigma^2_{\hat{Y},t} > 0$ and $n_{L,t} < n_t$, $V$ is strictly convex, the first inequality is strict whenever $\lambda^* \notin [0, 1]$, and the second is strict whenever $\lambda^* > 0$. At $\pi_L = 1$ the three variances coincide. The ordering concerns these optima at a fixed coefficient; the plug-in rule of Equation~\eqref{eq:lambda-opt} minimizes a different objective and incurs the efficiency loss quantified in Section~\ref{sec:eightmethods}, and with an estimated $\hat\lambda$ the ordering holds at first order only.
\end{proposition}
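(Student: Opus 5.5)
The plan rests on one structural fact: at a fixed predictor and fixed arm and labeled counts, the exact variance $V(\lambda)$ of Proposition~\ref{prop:corrected-variance} is a quadratic in $\lambda$. Since $\hat\tau^{\mathrm{PPI}(\lambda)} = \hat\tau_Y^{\calL} + \lambda(\hat\tau_{\hat Y}^{\mathrm{all}} - \hat\tau_{\hat Y}^{\calL})$, I would write
\[
V(\lambda) = V(0) - 2\lambda\,C + \lambda^2 B ,
\]
where $V(0) = \Var(\hat\tau^{(0)})$ is the labeled-only variance. Here $B = \Var(\hat\tau_{\hat Y}^{\mathrm{all}} - \hat\tau_{\hat Y}^{\calL})$ and $C = -\Cov(\hat\tau_Y^{\calL},\, \hat\tau_{\hat Y}^{\mathrm{all}} - \hat\tau_{\hat Y}^{\calL})$.

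I would take from Proposition~\ref{prop:corrected-variance}, or verify directly under simple random sampling within arms, that $B = \sum_t a_t\sigma^2_{\hat Y,t}$ and $C = \sum_t a_t\gamma_t$, with $a_t = 1/n_{L,t} - 1/n_t \ge 0$. This matches the leading coefficient stated before Equation~\eqref{eq:lambda-exact}. The only computation needed is the overlap identity $\Cov(\bar V_{L,t}, \bar W_t) = \Cov(V,W\mid t)/n_t$ for a labeled mean versus an all-units mean of the same arm, with the two arms independent given counts. It follows that $B \ge 0$, and $B > 0$ exactly when some arm has $\sigma^2_{\hat Y,t} > 0$ and $a_t > 0$, i.e.\ $n_{L,t} < n_t$. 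That gives strict convexity.

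With the quadratic in hand, the ordering is elementary. $\lambda^*$ minimizes $V$ over $\R$, and $\lambda^*_c$ over $[0,1] \subset \R$, so $V(\lambda^*) \le V(\lambda^*_c)$. Since $0 \in [0,1]$, $V(\lambda^*_c) \le V(0) = \Var(\hat\tau^{(0)})$. This proves Equation~\eqref{eq:efficiency-ordering}. If $B = 0$, I would note that $V$ is then affine with $C = 0$ (since $a_t\gamma_t = 0$ when either $a_t = 0$ or $\hat Y$ is constant in arm $t$), so it is constant, the inequalities hold with equality, and the minimizers can be taken as $0$.

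For strictness under $B > 0$, the minimizer is unique, $\lambda^* = C/B$, and for a strictly convex quadratic the constrained minimizer over an interval is the projection $\lambda^*_c = \min\{1, \max\{0, \lambda^*\}\}$. If $\lambda^* \notin [0,1]$ then $\lambda^*_c \ne \lambda^*$, and uniqueness gives $V(\lambda^*) < V(\lambda^*_c)$. If $\lambda^* > 0$ then $\lambda^*_c = \min\{1,\lambda^*\} > 0$, and $V$ is strictly decreasing on $[0, \lambda^*]$, so $V(\lambda^*_c) < V(0)$. At $\pi_L = 1$ every $a_t = 0$, so $B = C = 0$, $V$ is constant, and the convention $\hat\lambda = 0$ makes all three estimators coincide.

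The closing remarks need no new proof. The plug-in rule~\eqref{eq:lambda-opt} is just another point of $\R$, so its loss relative to $V(\lambda^*)$ is the quadratic excess already computed in Section~\ref{sec:eightmethods}. The first-order statement for estimated $\hat\lambda$ follows from the estimated-tuning remark after Proposition~\ref{prop:joint-si-ppi}, since the asymptotic variance is again $V$ evaluated at the probability limit of $\hat\lambda$.

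The main obstacle is the bookkeeping behind $B$ and $C$: getting the overlap covariance between labeled and all-units means right, with the $1/n_t$ terms that an independent-samples plug-in would drop. I would simply invoke Proposition~\ref{prop:corrected-variance} for it, and cross-check the sign of $a_t$ by conditioning on which units are labeled.
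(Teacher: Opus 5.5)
Your proposal is correct and follows essentially the same route as the paper: write $V(\lambda)$ from Proposition~\ref{prop:corrected-variance} as a quadratic with leading coefficient $\sum_t a_t\sigma^2_{\hat Y,t}$, get the weak ordering from the nested feasible sets $\{0\} \subset [0,1] \subset \R$, and get strictness from uniqueness of $\lambda^* = \sum_t a_t\gamma_t/\sum_t a_t\sigma^2_{\hat Y,t}$ and $\lambda^*_c = \clip(\lambda^*,0,1)$, with every $a_t = 0$ at $\pi_L = 1$. Your explicit check that $B = 0$ forces $C = 0$, so that $V$ is constant and a minimizer over $\R$ exists, is a small point the paper's proof leaves implicit.
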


\noindent Proofs for Propositions~\ref{prop:si-bias}--\ref{prop:efficiency} appear in Appendix~\ref{app:proofs}.

Together, Propositions~\ref{prop:si-bias}--\ref{prop:efficiency} delineate the tradeoff. The surrogate index can gain much efficiency but carries a bias of approximately the unmediated effect, exactly $-\delta\kappa$ in the setting of Proposition~\ref{prop:si-bias}. PPI++ rests on the labeling design rather than on surrogacy: exactly unbiased at a fixed predictor and coefficient, first-order valid with the learned index and estimated coefficient, and at the optimal coefficient no more variable than labeled-only; AIPW rests on the same design argument. With an estimated coefficient the gain can be negligible in finite samples and is in any case bounded by prediction quality.

\subsection{Corrected PPI++ Variance Formula}
\label{sec:corrected-variance}

\paragraph{Fixed prediction function.} This subsection states an exact identity for the variance of Equation~\eqref{eq:ppi} at a fixed prediction function and a fixed $\lambda$. The learned-index asymptotics are Proposition~\ref{prop:joint-si-ppi}; the two are kept apart because the first-stage term the proposition carries has no counterpart here.

A naive plug-in variance for the all-units parameterization treats the labeled and all-units averages as if they came from independent samples. Because that parameterization is the convenient one to implement, the mistake is easy to make; it is a hazard of the change of variables, not a defect in \citet{angelopoulos2023ppipp}, whose labeled-unlabeled form assigns the variance derived below. The plug-in drops two covariances that the $n_{L,t}$ shared units create, $\Cov(\bar{Y}_{L,t}, \bar{\hat{Y}}_t) = \gamma_t / n_t$ and $\Cov(\bar{\hat{Y}}_t, \bar{\hat{Y}}_{L,t}) = \sigma_{\hat{Y},t}^2 / n_t$. Restoring both gives
\begin{equation}
  C_t = \frac{2\lambda(\gamma_t - \lambda \sigma_{\hat{Y},t}^2)}{n_t},
  \label{eq:overlap-intuition}
\end{equation}
with total $C = \sum_t C_t$. Write $v_t = \sigma^2_{\hat{Y},t} = \Var(\hat{Y} \mid T = t)$. For $\gamma_t > 0$, $C_t$ is positive when $0 < \lambda < \gamma_t/v_t$, zero at $\lambda = 0$ and at the per-arm optimum $\lambda = \gamma_t/v_t$, and negative beyond that optimum or at a negative coefficient, which an unclipped rule can produce. The common exact optimum of Equation~\eqref{eq:lambda-exact} solves
\begin{equation}
  \sum_t a_t\,(\gamma_t - \lambda v_t) = 0, \qquad a_t = \frac{1}{n_{L,t}} - \frac{1}{n_t},
  \label{eq:lambda-exact-foc}
\end{equation}
whereas $C = 2\lambda\sum_t (\gamma_t - \lambda v_t)/n_t$. The two sums vanish together when $a_t$ is proportional to $1/n_t$, that is, when the armwise labeled fractions $n_{L,t}/n_t$ coincide, or when the per-arm optima $\gamma_t/v_t$ coincide; they do not in general, so $\hat C$ is formed and reported in every case. A coefficient set below the optimum leaves a positive correction. The plug-in rule of Equation~\eqref{eq:lambda-opt} is the clearest instance: in the balanced, arm-homogeneous case it sits below the optimum by the factor $1/(1 + \pi_L)$, so
\begin{equation}
  C_t = \frac{2\lambda_t^{*2}\,\sigma_{\hat{Y},t}^2\, \pi_L}{(1 + \pi_L)^2\, n_t}, \qquad \lambda_t^* = \gamma_t/v_t,
  \label{eq:overlap-piL}
\end{equation}
which increases in $\pi_L$ while the leading term $\sigma_{Y,t}^2 / n_{L,t}$ shrinks; hence the undercoverage worsens as the labeled set fills the sample. Clipping has the same effect whenever the clip binds below $\lambda_t^*$. We call $C_t$ the overlap covariance correction.

\begin{proposition}[Corrected PPI++ variance]
\label{prop:corrected-variance}
Let Assumption~\ref{asm:mcar} hold, fix the prediction function and $\lambda$, and condition on the arm sizes $n_t$ and labeled counts $n_{L,t} \geq 2$. With $\sigma^2_{Y,t} = \Var(Y \mid T = t)$ and $\gamma_t$, $v_t$, $a_t$ as above, the variance of $\hat{\tau}^{\mathrm{PPI}(\lambda)}$ is exactly
\begin{equation}
  V(\lambda) = \sum_t \Bigl[\frac{\sigma^2_{Y,t}}{n_{L,t}} + a_t\bigl(\lambda^2 v_t - 2\lambda\gamma_t\bigr)\Bigr] = V_{\mathrm{plug\text{-}in}}(\lambda) + C(\lambda),
  \label{eq:exact-variance}
\end{equation}
where $V_{\mathrm{plug\text{-}in}}(\lambda) = \sum_t [\sigma^2_{Y,t}/n_{L,t} + \lambda^2 v_t(1/n_t + 1/n_{L,t}) - 2\lambda\gamma_t/n_{L,t}]$ treats the labeled and all-units averages as independent and $C(\lambda) = \sum_t C_t$. Let $\hat{V}_{\mathrm{corrected}} = \hat{V}_{\mathrm{plug\text{-}in}} + \hat{C}$ be formed from the sample moments, with
\begin{equation}
  \hat{C} = \sum_t \frac{2\lambda(\hat{\gamma}_t - \lambda\hat{v}_t)}{n_t}.
\end{equation}
Suppose $n_t/n$ and $n_{L,t}/n$ converge to positive limits, the sample moments are consistent, and $n V(\lambda)$ converges to a positive limit. Then $n\hat{V}_{\mathrm{corrected}} - nV(\lambda) \xrightarrow{p} 0$, and hence $\hat{V}_{\mathrm{corrected}}/V(\lambda) \xrightarrow{p} 1$. The same holds with $\lambda$ replaced by an estimate $\hat\lambda \xrightarrow{p} \lambda$, where $V(\lambda)$ is then the first-order variance of the estimator that uses $\hat\lambda$.
\end{proposition}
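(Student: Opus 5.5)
Fix the prediction function, condition on the treatment vector and on the labeled counts $n_{L,t}$, and work arm by arm. Assumption~\ref{asm:mcar} gives that, within arm $t$, the labeled units are a simple random subset of size $n_{L,t}$ drawn from the $n_t$ units of that arm, and the arms are independent given the counts. It is cleanest to treat the units' $(Y_i,\hat Y_i)$ as i.i.d.\ draws from the arm-$t$ superpopulation, with population moments $\sigma^2_{Y,t}$, $v_t$, $\gamma_t$. Subset selection is independent of the values, so the labeled units are themselves an i.i.d.\ sample of size $n_{L,t}$ nested inside the $n_t$ i.i.d.\ units.

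The arm-$t$ contribution is
\[
Z_t=\bar Y_{L,t}+\lambda\bigl(\bar{\hat Y}_t-\bar{\hat Y}_{L,t}\bigr),
\]
so $\Var(\hat\tau^{\mathrm{PPI}(\lambda)})=\Var(Z_1)+\Var(Z_0)$, the cross-arm covariance being zero.

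The core computation uses three nested-mean identities. For two averages of i.i.d.\ draws over nested sets $A\subseteq B$, their covariance is the unit-level covariance divided by $|B|$. This gives $\Cov(\bar Y_{L,t},\bar{\hat Y}_t)=\gamma_t/n_t$, $\Cov(\bar{\hat Y}_t,\bar{\hat Y}_{L,t})=v_t/n_t$, and $\Var(\bar{\hat Y}_t)=v_t/n_t$. Expanding $\Var(Z_t)$ then yields
\[
\frac{\sigma^2_{Y,t}}{n_{L,t}}
+\lambda^2\Bigl(\frac{v_t}{n_t}+\frac{v_t}{n_{L,t}}-\frac{2v_t}{n_t}\Bigr)
+2\lambda\Bigl(\frac{\gamma_t}{n_t}-\frac{\gamma_t}{n_{L,t}}\Bigr),
\]
which collapses to $\sigma^2_{Y,t}/n_{L,t}+a_t(\lambda^2v_t-2\lambda\gamma_t)$. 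Summing over $t$ gives the first form of Equation~\eqref{eq:exact-variance}.

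Subtracting $V_{\mathrm{plug\text{-}in}}(\lambda)$ term by term leaves $\sum_t(-2\lambda^2v_t+2\lambda\gamma_t)/n_t=C(\lambda)$, which is exactly the two dropped covariances of Equation~\eqref{eq:overlap-intuition}.

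I will remark in passing on the finite-population reading. There the units of arm $t$ are fixed and the subset is drawn without replacement, and one uses finite-population variances with divisor $n_t-1$. The same algebra goes through because the nested-sample identities hold with the population variance $S^2$ in place of $\sigma^2$. I will state the result under the superpopulation reading, matching the paper's notation, and note that $n_{L,t}\ge 2$ is only needed so that the sample moments are defined.

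For consistency, write
\[
n\hat V_{\mathrm{corrected}}-nV(\lambda)=\sum_t\Bigl[\frac{n}{n_{L,t}}(\hat\sigma^2_{Y,t}-\sigma^2_{Y,t})+n a_t\bigl(\lambda^2(\hat v_t-v_t)-2\lambda(\hat\gamma_t-\gamma_t)\bigr)\Bigr],
\]
using that $\hat V_{\mathrm{plug\text{-}in}}+\hat C$ has the same algebraic form as $V$ with hats. The weights $n/n_{L,t}$ and $na_t$ converge to finite limits by the assumed ratio convergence, and each moment error is $o_p(1)$, so the difference is $o_p(1)$. Dividing by $nV(\lambda)$, which has a positive limit, gives the ratio statement.

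For estimated $\hat\lambda\xrightarrow{p}\lambda$, two things are needed. First, the plug-in formula is continuous in $\lambda$, so $n\hat V(\hat\lambda)-n\hat V(\lambda)\to_p 0$. Second, I invoke the decomposition used after Proposition~\ref{prop:joint-si-ppi}: the extra term $(\hat\lambda-\lambda)(\hat\tau^{\mathrm{all}}_{\hat Y}-\hat\tau^{\calL}_{\hat Y})$ is $o_p(n^{-1/2})$, because the second factor is $O_p(n^{-1/2})$ with mean zero under MCAR. So the first-order variance is still $V(\lambda)$.

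The main obstacle is bookkeeping rather than depth. The conditioning must be stated precisely: fixed $n_t$ and $n_{L,t}$ under either labeling scheme, with the labeled set nested inside the arm. The covariance signs must also be tracked carefully so that $a_t$, not $1/n_{L,t}+1/n_t$, multiplies $\lambda^2v_t$. I expect the continuity step for $\hat\lambda$ to need only the remark that $\hat V$ is a quadratic polynomial in $\lambda$ with consistent coefficients.
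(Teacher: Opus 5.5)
Your proposal is correct and takes essentially the paper's route. The paper computes the exact variance arm by arm, conditional on $n_t$ and $n_{L,t}$. It identifies the plug-in's error as the two overlap covariances $\gamma_t/n_t$ and $v_t/n_t$, and proves consistency by continuous mapping on the $n$ scale, using continuity in $\lambda$ for the estimated coefficient.

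The differences are cosmetic. The paper first rewrites the correction as $(n_{U,t}/n_t)(\bar{\hat Y}_{U,t}-\bar{\hat Y}_{L,t})$, uses independence of the disjoint labeled and unlabeled means, and applies the identity $(n_{U,t}/n_t)^2(1/n_{U,t}+1/n_{L,t}) = a_t$; you expand directly with nested-mean covariances instead. Your explicit $o_p(n^{-1/2})$ bound on $(\hat\lambda-\lambda)(\hat\tau^{\mathrm{all}}_{\hat Y}-\hat\tau^{\calL}_{\hat Y})$ spells out the first-order-variance clause, which the paper takes from the remark after Proposition~\ref{prop:joint-si-ppi}.
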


The correction term is $O(1/n)$, the same order as the leading variance term $\sigma_{Y,t}^2/n_{L,t}$, so it does not become negligible as the sample grows. At fixed $\pi_L$ the plug-in variance is inconsistent by a constant ratio, and the undercoverage it produces does not vanish with $n$: the relative bias depends on $\pi_L$, through $n_{L,t}/n_t$, and on $\lambda$, not on $n$. This is why the coverage-correction evidence is reported on a $\pi_L$ grid at a single sample size; all cells in Table~\ref{tab:coverage-correction} are at $n = 10{,}000$.

\begin{remark}[A distinct second-order effect]
\label{rem:lambda-estimated}
Both variance formulas treat $\hat{\lambda}$ as fixed at its estimated value. Its sampling variability is a separate mechanism from the overlap covariance and does not enter the first-order variance; we do not characterize its finite-sample contribution.
\end{remark}

Section~\ref{sec:overlap-correction} runs both tuning rules on the same draws.

\subsection{A Diagnostic Test of Estimator Disagreement Between SI and PPI++}
\label{sec:surrogacy-test}

We frame this section as a test of \emph{estimator disagreement}, not a general test of surrogacy. The test asks whether the gap between $\hat{\tau}_{\mathrm{SI}}$ and $\hat{\tau}_{\mathrm{PPI++}}$ exceeds sampling variation. It is a specification test of the \citet{hausman1978specification} type: an estimator efficient under the maintained assumption (SI under surrogacy) is compared with one consistent without it (PPI++). The classical caveats to such pretests, in particular their limited power against local alternatives, apply in full and are quantified in Section~\ref{sec:detection-damage}. Rejection is evidence that the two estimands diverge in the current dataset; non-rejection means only that we cannot distinguish them at the achievable power. The test is strictly weaker than a test of Assumption~\ref{asm:surrogacy}: violations that do not move the gap (for example, when the prediction model is so coarse that PPI++ leans on labeled data alone) are invisible to it, and prediction-model misspecification can in principle open a gap when surrogacy holds.

Under the joint null $H_0$ of statistical surrogacy (Assumption~\ref{asm:surrogacy}) and correct prediction specification ($\hat{f} \xrightarrow{p} m$), both $\hat{\tau}_{\mathrm{SI}}$ and $\hat{\tau}_{\mathrm{PPI++}}$ are consistent for $\tau$, so $\hat{D} = \hat{\tau}_{\mathrm{PPI++}} - \hat{\tau}_{\mathrm{SI}}$ has mean zero asymptotically. Under departures from $H_0$, the gap reflects the estimand difference. In the specific linear partial-mediation DGP studied below (DGP~2), this gap equals the direct treatment effect attenuated by the factor of Proposition~\ref{prop:si-bias}: $\E[\hat{D}] = \delta\kappa$, with $\kappa = \sigma_{\varepsilon_S}^2/(\sigma_{\varepsilon_S}^2 + \gamma_S^2 p(1-p)) = 0.9944$ at the DGP~2 parameters. In general, the gap is a function of both surrogacy failure and model misspecification, and test rejection should not be read as proof of either failure in isolation.

\begin{proposition}[Asymptotic distribution of the diagnostic test]
\label{prop:surrogacy-test}
Assume the conditions of Proposition~\ref{prop:joint-si-ppi}, $H_0$, and $\sigma_D^2 = (1,-1)\,\Omega\,(1,-1)' = \Omega_{11} + \Omega_{22} - 2\Omega_{12} > 0$. Then $\sqrt{n}\,\hat{D} \xrightarrow{d} \mathcal{N}(0, \sigma_D^2)$, $\sqrt{n}\,\mathrm{SE}(\hat D) \xrightarrow{p} \sigma_D$ with $\mathrm{SE}(\hat D)$ the square root of Equation~\eqref{eq:var-D-sandwich}, and
\begin{equation}
  T_n = \frac{\hat{D}}{\mathrm{SE}(\hat{D})} \xrightarrow{d} \mathcal{N}(0, 1).
\end{equation}
\end{proposition}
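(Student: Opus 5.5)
The plan is to read the result directly off Proposition~\ref{prop:joint-si-ppi} by the continuous mapping theorem, then show the sandwich is consistent and finish with Slutsky.

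\emph{Step 1: centering under the null.} Under $H_0$ (Assumption~\ref{asm:surrogacy} together with correct specification, so that $m(s,x)=g(s,x)'\beta_0$), the residual $\varepsilon$ satisfies $\E[\varepsilon \mid S,X,T]=0$. It therefore has mean zero within each arm, and
\[
\tau_{\mathrm{SI}}^0 = d_0'\beta_0 = \E[Y\mid T=1]-\E[Y\mid T=0] = \tau .
\]
This is the within-arm imbalance of $\varepsilon$ that the remark after Proposition~\ref{prop:joint-si-ppi} identifies as the only source of the gap. With both coordinates of~\eqref{eq:joint-clt} centered at $\tau$, apply the linear map $(1,-1)$; the sign convention does not matter. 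This gives $\sqrt{n}\,\hat D \xrightarrow{d} \mathcal{N}(0,(1,-1)\Omega(1,-1)') = \mathcal{N}(0,\sigma_D^2)$.

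The proposition is stated at a fixed $\lambda$, whereas $\hat D$ uses $\hat\lambda$. I will invoke the ``estimated tuning'' point after Proposition~\ref{prop:joint-si-ppi}: the difference is $(\hat\lambda-\lambda)(\hat\tau^{\mathrm{all}}_{\hat Y}-\hat\tau^{\calL}_{\hat Y}) = o_p(n^{-1/2})$, with $\lambda$ taken to be the probability limit of the exact-variance rule~\eqref{eq:lambda-exact}.

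\emph{Step 2: consistency of $\hat\Omega$.} This is the main obstacle. Write $\hat\phi_i = \phi_i(\hat\theta)$, where
\[
\hat\theta = (\hat\beta,\hat Q,\hat d,\hat p,\hat\pi_L,\hat\lambda,\text{sample arm means})
\]
is the plug-in estimator. The argument has three parts.
\begin{enumerate}[leftmargin=*,itemsep=2pt]
  \item Each component of $\hat\theta$ is consistent: $\hat\beta\to\beta_0$ and $\hat Q\to Q$ by the LLN on the labeled set (under MCAR the labeled units are a random subsample), and the remaining components by the LLN. On the truncated column space, continuity of the restricted inverse gives $\hat Q^{-1}\to Q^{-1}$, provided the eigenvalue floor separates zero from the positive spectrum.
  \item $\phi_i(\theta)$ is polynomial in $(g_i,Y_i)$ with coefficients continuous in $\theta$. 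Hence $\|\phi_i(\hat\theta)-\phi_i(\theta_0)\|^2$ is bounded by $\|\hat\theta-\theta_0\|^2$ times a polynomial envelope in $(\|g_i\|,|Y_i|)$ of degree at most four; the highest-degree term is $\|g\|^2\varepsilon^2 \lesssim \|g\|^2(\|g\|^2+Y^2)$.
  \item The fourth-moment assumptions $\E\|g\|^4,\E Y^4<\infty$ make the envelope's sample mean $O_p(1)$. So $n^{-1}\sum_i\|\hat\phi_i-\phi_i\|^2 = o_p(1)$. Combined with $n^{-1}\sum_i\phi_i\phi_i' \to \Omega$ by the LLN, Cauchy--Schwarz gives $\hat\Omega\xrightarrow{p}\Omega$.
\end{enumerate}
The delicate points are that the labeling scheme may be a fixed-size subset rather than i.i.d. indicators, and that $\hat\beta$ uses the full labeled fit rather than the fold fits. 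For the fixed-size scheme I will use an exchangeability and LLN argument for simple random sampling. For the fold fits, asymptotic equivalence suffices, since only consistency is needed here.

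\emph{Step 3: studentization.} From Step 2, $n\,\widehat{\Var}(\hat D) = (1,-1)\hat\Omega(1,-1)' \xrightarrow{p} \sigma_D^2$. Since $\sigma_D^2>0$, the continuous mapping theorem gives $\sqrt{n}\,\mathrm{SE}(\hat D)\xrightarrow{p}\sigma_D$. Slutsky then gives
\[
T_n = \frac{\sqrt{n}\,\hat D}{\sqrt{n}\,\mathrm{SE}(\hat D)} \xrightarrow{d} \mathcal{N}(0,1).
\]
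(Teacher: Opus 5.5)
Your proposal is correct and follows the paper's proof. It centers both coordinates of Equation~\eqref{eq:joint-clt} at $\tau$ under $H_0$, applies the contrast $(1,-1)$, uses consistency of $\hat\Omega$, and finishes with continuity of the square root at $\sigma_D^2>0$ and Slutsky's theorem. The paper only asserts consistency of $\hat\Omega$ (law of large numbers plus continuous mapping, at the end of the proof of Proposition~\ref{prop:joint-si-ppi}) and leaves the $\hat\lambda$ substitution to the ``estimated tuning'' remark, so your envelope argument and explicit handling of $\hat\lambda$ fill in those steps of the same route.
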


\begin{remark}[Fixed prediction function]
\label{rem:fixed-f-cov}
For a fixed prediction function the covariance has the closed form $\Cov(\hat{\tau}_{\mathrm{SI}}, \hat{\tau}_{\mathrm{PPI++}}) = \sum_t \gamma_t/n_t$, derived in Appendix~\ref{app:hybrid-proofs}. That identity belongs to the fixed-predictor block alongside Propositions~\ref{prop:ppi-validity} and~\ref{prop:corrected-variance}; it understates $\Var(\hat{\tau}_{\mathrm{SI}})$ and misstates the covariance when the index is learned, which is why $\mathrm{SE}(\hat D)$ is built from $\hat\Omega$ instead.
\end{remark}

The two-sided test is the workflow default. On the power grid of Table~\ref{tab:power-curves}, at $n = 10{,}000$ and $\pi_L = 0.20$, its power at $\alpha = 0.05$ is 13\% at $\rho = 0.2$ and 63\% at $\rho = 0.4$. The one-sided test against $\hat{D} > 0$, the direction in which a reinforcing direct effect (as in DGP~2) pushes the gap, reaches 22\% at $\rho = 0.2$, 74\% at $\rho = 0.4$, and 95\% at $\rho = 0.5$; because the sign of a violation is rarely known ex ante, those figures are a best case and are not interchangeable with the two-sided ones. At $n = 100{,}000$ the test detects $\rho = 0.2$ with 79\% power two-sided and 87\% one-sided. All simulation tables report $\alpha = 0.05$; at the workflow's $\alpha = 0.10$ the paired run of Section~\ref{sec:detection-damage} puts two-sided power at 19.2\% at $\rho = 0.2$ and $n = 10{,}000$. Larger samples detect any fixed violation, but the coverage-destroying violation shrinks with $n$ as well (Section~\ref{sec:detection-damage}).

\paragraph{Null calibration of the standard error.} Because $\mathrm{SE}(\hat{D})$ carries the first-stage term, its calibration has to be checked directly rather than inferred from the fixed-predictor algebra. Two runs on the same five cells check it. Against the Monte Carlo variance, an $R = 2{,}000$ run without the bootstrap puts the ratio of sandwich to Monte Carlo (co)variance between $0.948$ and $1.040$ in all twenty entries, each within two Monte Carlo standard errors of one (Table~\ref{tab:joint-cov-R2000}). Against resampling, a joint paired bootstrap that resamples units stratified by arm, keeps every copy of a unit in its unit's fold, and refits both estimators is within $5.8\%$ of the sandwich for $\Var(\hat{\tau}_{\mathrm{SI}})$, $\Var(\hat{\tau}_{\mathrm{PPI++}})$, their covariance, and $\Var(\hat{D})$ on all four bootstrapped cells of the $R = 200$ run (Table~\ref{tab:joint-cov-check}), with the sandwich slightly below it for $\Var(\hat{\tau}_{\mathrm{PPI++}})$ and $\Var(\hat{D})$. The first-stage term is what makes the difference where the index leans on a treatment-responsive predictor: on the Criteo-calibrated rich index the ``SI (plug-in variance)'' ablation is $0.281$ of the Monte Carlo variance of $\hat\tau_{\mathrm{SI}}$ and its interval covers $0.665$, while the sandwich is $0.963$ of it and covers $0.960$; at $R = 2{,}000$ the plug-in interval covers $0.701$ and the sandwich $0.951$. The designs that Proposition~\ref{prop:joint-si-ppi} does not cover (Table~\ref{tab:implementation}) are checked in Appendix~\ref{app:ablations}.

Size is read from the $R = 2{,}000$ run, on the three Gaussian cells where the joint null holds exactly. With the sandwich $\mathrm{SE}(\hat{D})$ the two-sided rejection rate is $0.051$ at $\alpha = 0.05$ and $0.100$ at $\alpha = 0.10$ on DGP~1 at $\pi_L = 0.20$ and $n = 10{,}000$, $0.056$ and $0.107$ on DGP~2 at $\rho = 0$ and $n = 10{,}000$, and $0.054$ and $0.107$ on DGP~2 at $\rho = 0$ and $n = 100{,}000$, with Monte Carlo standard errors of $0.005$ and $0.007$: within Monte Carlo error of nominal at both sample sizes. The Criteo-calibrated rich index at $m = 1$ gives $0.058$ and $0.105$, but it is a near-null stress check rather than a size experiment (Section~\ref{sec:semisynthetic}). Smaller runs at $\alpha = 0.05$ range from $0.030$ (the paired run behind Figure~\ref{fig:detection-damage}, $R = 500$, below its Monte Carlo band) to $0.080$ (the $R = 200$ rows of Table~\ref{tab:joint-cov-check}, inside its band); the $R = 2{,}000$ run is the nominal-size evidence.

Proofs for the results of Sections~\ref{sec:corrected-variance} and~\ref{sec:surrogacy-test} are collected in Appendix~\ref{app:hybrid-proofs}.

\section{Simulation Design}
\label{sec:simulation}

\subsection{Data-Generating Processes}
\label{sec:dgps}

We design eleven DGPs to answer ten questions about surrogate method performance (Table~\ref{tab:dgps}). Each generates $(X_i, T_i, S_i, Y_i)$ for $n$ units with $T_i \iid \text{Bernoulli}(0.5)$; DGP~11 also assigns enrollment times and labels the earliest-enrolled cohort.

Two simplifications relate the DGPs to the motivating examples. The DGPs use continuous Gaussian surrogates and outcomes with linear or mildly nonlinear links, whereas the motivating outcomes are often binary or skewed; every estimator here operates on arm means, so the simulations bear on those outcomes through the normal approximation for the arm means, which the binary Hillstrom and Criteo analyses test and the sparse Criteo subsamples break (Section~\ref{sec:criteo}). And the main analysis uses one surrogate; DGP~7 and the Criteo-calibrated testbed of Section~\ref{sec:semisynthetic} study several.

\begin{table}[!ht]
\centering
\caption{Summary of data-generating processes.}
\label{tab:dgps}
\footnotesize
\setlength{\tabcolsep}{4pt}
\begin{tabular}{@{}cl >{\raggedright\arraybackslash}p{3.0cm} >{\raggedright\arraybackslash}p{3.0cm} c@{}}
\toprule
\textbf{DGP} & \textbf{Name} & \textbf{Key Feature} & \textbf{What It Tests} & \textbf{Valid?} \\
\midrule
1 & Valid surrogate & $Y$ depends on $T$ only via $S$ & Calibration benchmark & Yes \\
2 & Partial mediation & Direct effect $\delta \cdot T$ on $Y$ & Classic surrogacy violation & No \\
3 & Heterogeneous quality & $\beta_{YS}$ varies across groups & Simpson's paradox risk & Partial \\
4 & Stale historical index & Historical $\beta_{YS}$ drifts before the experiment & Transport of a historical index & Yes$^{*}$ \\
5 & Sparse or delayed outcome & $Y_i$ observed for fraction $q$ & Outcome missingness & Yes \\
6 & Portfolio & $K$ small experiments & Cumulative decision quality & Yes \\
7 & Multi-surrogate & $J = 3$ surrogates & Multi-surrogate scaling & Yes \\
8 & Nonlinear & Quadratic $Y$--$S$ & Robustness to nonlinearity & Yes \\
9 & Antagonistic surrogate & $\beta_{YS} < 0$, large $\delta$ & Unclipped vs.\ clipped PPI++ & No \\
10 & Nonlinear + partial med. & Quadratic + direct $\delta T$ & Prop.~\ref{prop:si-bias} under nonlinearity & No \\
11 & Enrollment-time labeling & Earliest cohort labeled, drift & The maturation mechanism itself & Varies \\
\bottomrule
\end{tabular}
\smallskip

{\scriptsize \noindent Notes: ``Valid?'' refers to statistical surrogacy (Assumption~\ref{asm:surrogacy}) within the current experiment. $^{*}$In DGP~4 surrogacy holds within the experiment; what fails is the transport of the historical index and of the composite-proxy weights, both fit before the slope drift.}
\end{table}

\paragraph{DGP 1 (Valid Surrogate).}
The baseline. Treatment affects $Y$ entirely through $S$ (Assumption~\ref{asm:surrogacy} holds):
\begin{equation}
  S_i = 5 + X_i + 0.3 \cdot T_i + \varepsilon_{S,i}, \qquad Y_i = 0.5 \cdot S_i + 0.2 \cdot X_i + \varepsilon_{Y,i},
\end{equation}
with $X_i \sim \mathcal{N}(0,1)$, $\varepsilon_{S,i} \sim \mathcal{N}(0, 4)$, $\varepsilon_{Y,i} \sim \mathcal{N}(0, 1)$. The true ATE is $\tau = 0.5 \times 0.3 = 0.15$.

\paragraph{DGP 2 (Partial Mediation).}
Adds a direct treatment effect on $Y$ not captured by $S$:
\begin{equation}
  Y_i = 0.5 \cdot S_i + 0.2 \cdot X_i + \delta \cdot T_i + \varepsilon_{Y,i}.
\end{equation}
The direct-effect share $\rho = \delta / (\beta_{YS}\gamma_S + \delta)$ is swept over $\{0, 0.2, 0.4, 0.6, 0.8\}$ on the core grid; the diagnostic power curves (Appendix~\ref{app:power-curves}) use the finer grid $\{0, 0.1, 0.2, 0.3, 0.4, 0.5, 0.6, 0.8\}$. Holding the mediated part $\beta_{YS}\gamma_S = 0.15$ fixed and dialing $\rho$ means dialing $\delta = \rho\tau$ with $\tau = 0.15/(1-\rho)$, so the true ATE grows with $\rho$ and $\rho$ is a nonlinear reparameterization of the direct effect: equal steps in $\rho$ are unequal steps in $\delta$. We report on the $\rho$ scale because it is the interpretable one, and convert to $\delta$ wherever a rate is at stake (Section~\ref{sec:detection-damage}).

\paragraph{DGP 3 (Heterogeneous Surrogate Quality).}
A latent group $G_i \in \{1, 2\}$ with $P(G_i = 1) = 0.3$ introduces subgroup variation: $\beta_{YS}^{(1)} = 0.8$ (power users), $\beta_{YS}^{(2)} = \beta_{YS}^{\text{low}}$ (casual users), with $\beta_{YS}^{\text{low}}$ swept over $\{0.0, 0.2, 0.4\}$. The treatment effect on the surrogate also varies by group, $\gamma_S^{(1)} = 0.5$ for power users and $\gamma_S^{(2)} = 0.2$ for casual users, so the true ATE is $\tau = 0.3(0.5)(0.8) + 0.7(0.2)\beta_{YS}^{\text{low}}$. A pooled prediction model averages the two regimes and is therefore biased for $\tau$.

\paragraph{DGP 4 (Stale Historical Index).}
The surrogate index is fit on $n_{\text{cal}} = 50{,}000$ historical observations generated with $\beta_{YS}^{(0)} = 0.5$, and the experiment runs after the slope drifts to $\beta_{YS}^{(1)} = \beta_{YS}^{(0)} + \Delta_\beta$, with $\Delta_\beta \in \{0.0, 0.1, 0.2, 0.4\}$. Within the current experiment treatment reaches $Y$ only through $S$, so statistical surrogacy holds; the drift breaks the transport of the historical index, not surrogacy. DGP~4 therefore compares a stale historical index, and composite-proxy weights fit at the pre-drift slope (Method~4), with methods that use current labels (labeled-only, the naive surrogate, PPI++, and AIPW, the last two through the cross-fitted predictions). Because training data change along with the estimator, it does not isolate the value of the PPI correction under drift, and we make no claim about it. It is also narrower than Regime~B: only the $Y$--$S$ slope moves, so covariate drift, treatment-policy shifts, and changes in $\gamma_S$ are not probed. The surrogate-index interval treats the historical fit as fixed (the fixed-predictor plug-in variance of Proposition~\ref{prop:ppi-validity}'s setting) and does not propagate its training uncertainty; the calibration sample is redrawn in every replication, so the reported coverage is unconditional, and at $\Delta_\beta = 0$ and $\pi_L = 0.20$ it is $0.950$ (Table~\ref{tab:appendix-dgp4}), so the omitted training term is negligible at $n_{\text{cal}} = 50{,}000$.

\paragraph{DGP 5 (Sparse or Delayed Outcome).}
Identical to DGP~1 but with outcome observation rate $q \in \{0.05, 0.10, 0.20, 0.30\}$, under either MCAR or MAR. Under MCAR the labeling indicator $M_i$ is drawn independently of $(T_i, S_i, Y_i)$; under MAR, $P(M_i = 1 |S_i) = \mathrm{logit}^{-1}(\eta_0 + 0.3 S_i)$ with $\eta_0$ calibrated so the marginal observation rate equals $q$. Labeled-only and PPI++ have no design-based guarantee under MAR unless the labeling is modeled, and we report them unadjusted, so the MAR rows are a stress test. AIPW uses a logistic labeling propensity in $(S, S^2)$ estimated from the data, correctly specified for this model. The surrogate-index interval and $\mathrm{SE}(\hat D)$ use the sandwich of Proposition~\ref{prop:joint-si-ppi} with the labeled-design $\hat Q$, the appropriate Gram matrix for the surrogate index here: labeling depends only on $(S, X)$ and the outcome model is correct, so $\E[\varepsilon \mid g, L = 1] = 0$, the labeled OLS limit is still $\beta_0$, and the first-stage term is $(L_i/\pi_L)\,d_0'Q_L^{-1}g_i\varepsilon_i$ with $Q_L = \E[gg' \mid L = 1]$. Nonrandom labeling breaks PPI++'s design-based centering instead, and with it the diagnostic's null (Appendix~\ref{app:ablations}).

\paragraph{DGP 6 (Portfolio).}
A platform runs $K = 200$ independent experiments, each with sample size $n_k \sim \text{Uniform}\{500, \ldots, 5{,}000\}$. Half are null ($\tau_k = 0$); the other half have small effects drawn from $\mathcal{N}(0, 0.01)$. Within each experiment the outcome model is DGP~1's. The rule is to launch if $\hat{\tau}_k > 0$ and $p < 0.05$, and we evaluate \textit{cumulative regret} $V^* - V$, with $V^* = \sum_k \max(\tau_k, 0)$ the oracle value and $V$ the realized payoff. A variant in which 30\% of experiments carry sign-preserving partial mediation ($\rho = 0.3$) gives the same ranking (relative regret 77\% for labeled-only, 35\% for the naive surrogate, 40\% for the surrogate index, and 66\% for PPI++ at $R = 200$); in antagonistic settings (DGP~9) surrogate-based launch rules can flip sign.

\paragraph{DGP 7 (Multi-Surrogate).}
Three surrogates capture different aspects of the treatment: engagement ($S_1$, treatment effect $\gamma_1 = 0.3$), content completions ($S_2$, $\gamma_2 = 0.2$), and social interactions ($S_3$, $\gamma_3 = 0.1$). The primary outcome depends on all three: $Y_i = 0.3 S_{1,i} + 0.4 S_{2,i} + 0.2 S_{3,i} + 0.2 X_i + \varepsilon_{Y,i}$. The true ATE is $\tau = \sum_j \beta_j \gamma_j = 0.3(0.3) + 0.4(0.2) + 0.2(0.1) = 0.19$. The surrogate index uses a multi-dimensional prediction model $\hat{f}(S_1, S_2, S_3, X)$, OLS on $(1, S_1, S_2, S_3, S_1^2, S_2^2, S_3^2, X)$. The composite proxy calibrates its weight vector from $K_{\text{hist}} = 50$ historical experiments.

\paragraph{DGP 8 (Nonlinear).}
The surrogate--outcome relationship is concave: $Y_i = 0.5 S_i - 0.03(S_i - 5)^2 + 0.2 X_i + \varepsilon_{Y,i}$, producing diminishing returns to higher surrogate values. The surrogate equation is identical to DGP~1. The true ATE is $\tau = 0.1473$, slightly below the linear case because the concavity attenuates the treatment effect. The quadratic prediction model $(S, S^2, X)$ can capture this relationship, but a deliberately misspecified linear model (omitting $S^2$) is also tested to evaluate robustness.

\paragraph{DGP 9 (Antagonistic Surrogate).}
The surrogate responds positively to treatment ($\gamma_S = 0.3$) but relates negatively to the outcome ($\beta_{YS} = -0.3$), and a large direct effect ($\delta = 0.5$) bypasses it, a designed instance of the surrogate paradox \citep{vanderweele2013surrogate}. This DGP sets $\beta_{YX} = 0.8$, against $0.2$ elsewhere, so the covariate carries most of the outcome variation and PPI++ remains a useful rectifier though the surrogate misleads. The true ATE is $\tau = (-0.3)(0.3) + 0.5 = 0.41$, while the surrogate index estimates $\beta_{YS}\gamma_S = -0.09$, the wrong sign. Clipping the coefficient to $[0,1]$ binds here, so this is where the primary unclipped PPI++ is set against the ``PPI++ (clipped)'' ablation (Table~\ref{tab:appendix-dgp9}).

\paragraph{DGP 10 (Nonlinear + Partial Mediation).}
Combines DGP~8's quadratic surrogate--outcome relationship with DGP~2's partial mediation, sweeping $\rho \in \{0, 0.2, 0.4\}$, to test whether the bias of Proposition~\ref{prop:si-bias}, derived under linearity, extends qualitatively to nonlinear settings (Section~\ref{sec:results-gbt}).

\paragraph{DGP 11 (Enrollment-Time Labeling Under Drift).}
Every other DGP draws the labeled set as a uniform random subset (DGP~5's MAR variant aside). DGP~11 labels the units the maturation process would label: each unit receives an enrollment time $e_i \sim \text{Uniform}(0,1)$ and the labeled set is the earliest-enrolled fraction $\pi_L$. Two drift channels are dialed separately. Under \textit{mix drift} the covariate distribution shifts over the window, $X_i = d(e_i - 0.5) + \mathcal{N}(0,1)$, and the treatment effect on the surrogate is heterogeneous in $X$ ($\gamma_i = 0.5 + 0.3 X_i$), so the early cohort's average effect differs from the full-sample ATE while $Y \mid S, X$ stays stable and surrogacy remains valid. Under \textit{effect drift} a direct effect $g_e \, e_i \, T_i$ on $Y$ grows over the window through a channel that bypasses the surrogate, which no model of $(S, X)$ can recover. The outcome model is $S_i = 5 + X_i + \gamma_i T_i + \mathcal{N}(0,4)$, $Y_i = 0.3 S_i + 0.2 X_i + g_e e_i T_i + \mathcal{N}(0,1)$. The same effect-drift configuration under a random split isolates the labeling mechanism. As in the DGP~5 MAR rows, the surrogate-index interval applies the MCAR sandwich with the labeled-design $\hat Q$ as a stress test.

\subsection{Evaluation Design}
\label{sec:eval-metrics}

We evaluate each method over $R$ replications on five metrics: the bias $\widehat{\text{Bias}} = R^{-1} \sum_r (\hat{\tau}^{(r)} - \tau)$, also reported as relative bias $\widehat{\text{Bias}}/\tau$ in percent; the RMSE; the coverage of the true $\tau$ by the 95\% interval; the correct decision rate (CDR), the fraction of replications that reject the null when $\tau \neq 0$ or fail to reject it when $\tau = 0$; and the relative efficiency $\text{RE}_m = \text{RMSE}_{\text{LO}} / \text{RMSE}_m$, the RMSE ratio against labeled-only. Its square, $\text{MSE}_{\text{LO}}/\text{MSE}_m$, is the effective-sample-size (ESS) multiplier, reported for the CUPED, multi-surrogate, and Cross-PPI comparisons: a $5\times$ RE is a $25\times$ ESS multiplier. Columns labeled ``RE'' use the RMSE ratio and columns labeled ``ESS mult.'' the MSE ratio. Main-text tables suppress RE for methods with absolute relative bias above 10\%, because tight clustering around a biased estimate is not estimation quality.

The core grid (DGPs 1--5), the hybrid evaluation, and the hard-switch comparison use $R = 2{,}000$ replications per configuration, at which the Monte Carlo standard error of coverage is $\sqrt{0.95 \times 0.05 / 2000} = 0.0049$ and the 95\% simulation band around the nominal level is $[0.940, 0.960]$. Auxiliary components use $R = 1{,}000$ (DGP~11 and the hybrid sensitivity grid; band $[0.936, 0.964]$), $R = 500$ (DGPs~7--10 and the power curves; band $[0.931, 0.969]$), and $R = 200$ (the GBT comparison; band $[0.920, 0.980]$), as each table's note states. The default sample size is $n = 10{,}000$, with $\pi_L \in \{0.05, 0.20, 0.50, 1.00\}$ for DGP~1 and $\pi_L \in \{0.10, 0.20, 0.50, 1.00\}$ for DGPs~2--4; $\pi_L = 0.80$ appears only in Tables~\ref{tab:coverage-correction} and~\ref{tab:bootstrap-sensitivity}. The primary prediction model is OLS on $(S, S^2, X)$ under the protocol of Section~\ref{sec:eightmethods}. The design yields 514 method-by-scenario cells in 93 scenarios, a cell being one primary estimator (Methods 0 to 5) on one (DGP, configuration, $\pi_L$) scenario: 36 cells over 7 scenarios for DGP~1, 132 over 23 for DGP~2, 54 over 9 for DGP~3, 72 over 12 for DGP~4, 52 over 9 for DGP~5, 40 over 7 for DGP~6, 12 over 3 for each of DGPs~7 and~8, 18 over 3 for DGP~9, 54 over 9 for DGP~10, and 32 over 8 for DGP~11. The ablation configurations run on the same scenarios and are not counted.

\paragraph{Implementation scope.} Proposition~\ref{prop:joint-si-ppi} covers the cross-fitted OLS index under MCAR labeling. Table~\ref{tab:implementation} records, for each design family, the training data, learner, labeling, surrogate-index variance, diagnostic covariance, and justification, including the three designs outside the proposition: the external DGP~4 index, whose interval conditions on its fit; the GBT rows, whose surrogate-index interval is heuristic; and the DGP~5 MAR rows and DGP~11, which apply the MCAR sandwich outside MCAR as a stress test.

\begin{table}[!ht]
\centering
\caption{Implementation of the surrogate-index interval, the diagnostic, and the composite proxy by design family.}
\label{tab:implementation}
\scriptsize
\setlength{\tabcolsep}{3pt}
\begin{tabular}{@{}>{\raggedright\arraybackslash}p{2.3cm} >{\raggedright\arraybackslash}p{2.3cm} >{\raggedright\arraybackslash}p{1.9cm} >{\raggedright\arraybackslash}p{1.8cm} >{\raggedright\arraybackslash}p{2.0cm} >{\raggedright\arraybackslash}p{1.8cm} >{\raggedright\arraybackslash}p{2.8cm}@{}}
\toprule
\textbf{Design family} & \textbf{Training data} & \textbf{Learner} & \textbf{Labeling} & \textbf{SI variance} & \textbf{Diagnostic covariance} & \textbf{Justification} \\
\midrule
Core Gaussian DGPs 1--3, 5 (MCAR), 7--10 & Current labels, all-units five-fold cross-fitting & OLS on $(1, S, S^2, X)$, elementwise for $J = 3$ & MCAR random subset & Joint sandwich & Joint sandwich & Proposition~\ref{prop:joint-si-ppi}; validated at $R = 2{,}000$ (Table~\ref{tab:joint-cov-R2000}) \\
DGP 4 & SI: one fit on $n_{\text{cal}} = 50{,}000$ historical units at the pre-drift slope; CP: past experiments at that slope; LO, NS, PPI++, AIPW: current labels & OLS on $(1, S, S^2, X)$ & MCAR random subset & Plug-in, external fit held fixed & Not reported & Surrogacy holds within the experiment. Fixed-predictor setting of Proposition~\ref{prop:ppi-validity}; training uncertainty not propagated; coverage unconditional (calibration sample redrawn) \\
DGP 5 MAR; DGP 11 & Current labels, cross-fitted & OLS on $(1, S, S^2, X)$ & MAR on $S$; earliest-enrolled cohort & MCAR sandwich with labeled-design $\hat Q$ & Same & SI: labeled-design $\hat Q$ correct if labeling acts through $(S, X)$ and the model is correct (variance ratios $1.08$, $0.94$). PPI++ has no design-based centering under nonrandom labels; its exact-variance ratio is $0.967$ (MAR) and $0.873$ (mix drift) \\
GBT rows (Table~\ref{tab:gbt}) & Current labels, cross-fitted & Gradient-boosted trees & MCAR random subset & Plug-in (fixed predictor) & Fixed-predictor covariance (two-cell check only) & None; heuristic interval, checked on two cells (Appendix~\ref{app:ablations}) \\
Semi-synthetic testbed; multi-surrogate & Current labels, cross-fitted & OLS on $(1, S, S^2, X)$; rich index $(1, S_1, S_2,$ $S_1^2, S_2^2, X)$, $S_1^2 = S_1$ & MCAR random subset & Joint sandwich (column-space inverse) & Joint sandwich & Proposition~\ref{prop:joint-si-ppi}; rich index at $m = 1$ is a near-null \\
Hillstrom, Criteo, LaLonde & Masked labels, cross-fitted & OLS on $(1, S, S^2, X)$ & Random masking (MCAR) & Joint sandwich & Joint sandwich & Proposition~\ref{prop:joint-si-ppi}; target is the masking target \\
Composite proxy (every design that runs it) & Estimated $\hat\tau_S^{(k)}$, $\hat\tau_Y^{(k)}$ from $K_{\text{hist}} = 30$ simulated past experiments ($50$ in Table~\ref{tab:dgp7}); no $\hat{Y}$ & No-intercept least squares across past experiments & Current labels not used & CP: $\hat{\mathbf{w}}'\widehat{\Var}(\hat{\boldsymbol{\tau}}_S)\hat{\mathbf{w}}$, weights held fixed & Not used & Past experiments satisfy surrogacy, so calibration is favorable; falls back to SI without past experiments (DGP~6, Hillstrom) \\
\bottomrule
\end{tabular}
\smallskip

{\scriptsize \noindent Notes: ``Plug-in'' is the Neyman variance of the difference in arm means of $\hat{Y}$ with the predictor held fixed. ``Fixed-predictor covariance'' is Remark~\ref{rem:fixed-f-cov}. PPI++ uses the exact variance of Proposition~\ref{prop:corrected-variance} in every row. Every design uses the all-units protocol except two historical calibrations, of the DGP~4 surrogate index and of the composite proxy. DGP~6 uses the core protocol and reports decisions rather than intervals. The two-cell checks are at DGP~5 MAR with $q = 0.10$ and DGP~11 mix drift $d = 2$, and at GBT DGP~1 with $\pi_L = 0.20$ and DGP~2 with $\rho = 0.2$, $\pi_L = 0.20$ (Appendix~\ref{app:ablations}).}
\end{table}

All estimators are sub-millisecond at $n = 10{,}000$ with OLS predictions; the analytic variance correction is approximately $100\times$ cheaper than bootstrap (see Table~\ref{tab:timing} in the Appendix for full timing benchmarks).

\section{Results}
\label{sec:results}

At the featured review, $\pi_L = 0.20$, conversion windows have closed for a fifth of the customers enrolled by week 2.5 while every enrolled customer has funnel signals. Sections~\ref{sec:results-valid}--\ref{sec:results-violations} establish the core tradeoff on DGPs~1--5, Section~\ref{sec:detection-damage} locates the detection-damage gap, and Sections~\ref{sec:results-enrollment}--\ref{sec:results-hybrid} take up enrollment-time labeling, portfolio decisions, the overlap correction, robustness checks, and the hybrid. Per-DGP results beyond the tables printed here are listed in Appendix~\ref{app:full-results}.

\textbf{Bold} coverage entries in the main-text tables fall outside that table's Monte Carlo band around the nominal 95\% (Section~\ref{sec:eval-metrics}), restated in each note.

\subsection{Efficiency Under Valid Surrogates}
\label{sec:results-valid}

Table~\ref{tab:dgp1} reports the DGP~1 results.

\begin{table}[!ht]
\centering
\caption{DGP 1 (Valid Surrogate): method performance across labeled fractions.}
\label{tab:dgp1}
\footnotesize
\setlength{\tabcolsep}{4pt}
\begin{tabular}{@{}llrrrrc@{}}
\toprule
\textbf{Method} & $\boldsymbol{\pi_L}$ & \textbf{Bias} & \textbf{RMSE} & \textbf{Coverage} & \textbf{RE} & \textbf{CDR} \\
\midrule
Labeled-Only     & 0.05 & $-0.004$ & $0.142$ & $0.948$ & $1.00$ & $0.18$ \\
Naive Surrogate  & 0.05 & $+0.012$ & $0.028$ & $\mathbf{0.918}$ & --- & $1.00$ \\
Surrogate Index  & 0.05 & $-0.0001$ & $0.026$ & $0.949$ & $5.54$ & $1.00$ \\
PPI++            & 0.05 & $-0.004$ & $0.093$ & $0.953$ & $1.52$ & $0.35$ \\
Composite Proxy  & 0.05 & $+0.002$ & $0.023$ & $0.951$ & $6.15$ & $1.00$ \\
AIPW             & 0.05 & $-0.004$ & $0.092$ & $0.953$ & $1.53$ & $0.35$ \\
\midrule
Labeled-Only     & 0.20 & $-0.003$ & $0.071$ & $0.949$ & $1.00$ & $0.55$ \\
Naive Surrogate  & 0.20 & $+0.012$ & $0.027$ & $\mathbf{0.928}$ & --- & $1.00$ \\
Surrogate Index  & 0.20 & $-0.0005$ & $0.024$ & $0.956$ & $2.91$ & $1.00$ \\
PPI++            & 0.20 & $-0.002$ & $0.051$ & $0.950$ & $1.40$ & $0.83$ \\
Composite Proxy  & 0.20 & $+0.002$ & $0.022$ & $0.956$ & $3.15$ & $1.00$ \\
AIPW             & 0.20 & $-0.002$ & $0.051$ & $0.950$ & $1.40$ & $0.84$ \\
\midrule
Labeled-Only     & 0.50 & $-0.002$ & $0.045$ & $0.953$ & $1.00$ & $0.92$ \\
Naive Surrogate  & 0.50 & $+0.012$ & $0.027$ & $\mathbf{0.919}$ & --- & $1.00$ \\
Surrogate Index  & 0.50 & $-0.0001$ & $0.025$ & $0.949$ & $1.82$ & $1.00$ \\
PPI++            & 0.50 & $-0.001$ & $0.038$ & $0.953$ & $1.19$ & $0.98$ \\
Composite Proxy  & 0.50 & $+0.002$ & $0.023$ & $0.947$ & $1.96$ & $1.00$ \\
AIPW             & 0.50 & $-0.001$ & $0.038$ & $0.954$ & $1.19$ & $0.98$ \\
\bottomrule
\end{tabular}
\smallskip

{\scriptsize \noindent Notes: $n = 10{,}000$, $\tau = 0.15$, $R = 2{,}000$, all-units five-fold cross-fitting. Every row is a primary configuration: PPI++ with the exact tuning rule of Equation~\eqref{eq:lambda-exact} and the exact variance, the surrogate index with the joint sandwich variance of Proposition~\ref{prop:joint-si-ppi}; the ablations, on the same replications, are in Tables~\ref{tab:ppi-ablation} and~\ref{tab:si-variance-ablation}. \textbf{Bold} coverage entries fall outside the band $[0.940, 0.960]$. RE is $\mathrm{RMSE}_{\mathrm{LO}}/\mathrm{RMSE}_{\mathrm{method}}$. The naive surrogate's RE is suppressed (---) by convention: its $+7.7\%$ to $+7.9\%$ relative bias is the omitted-variable bias of the simple slope, and it is evaluated on decisions.}
\end{table}

When the surrogacy assumption holds, SI and the composite proxy (CP) deliver large efficiency gains: RE of $5.5$--$6.2\times$ at $\pi_L = 0.05$, narrowing to $1.8$--$2.0\times$ at $\pi_L = 0.50$. PPI++ provides more modest improvements ($1.2$--$1.5\times$), and AIPW stays within one hundredth of PPI++ in RE at every labeled fraction.
The plug-in tuning rule costs $0.3\%$, $1.4\%$, and $2.3\%$ of PPI++'s RE at $\pi_L = 0.05$, $0.20$, and $0.50$, with coverage inside the band under both rules (Table~\ref{tab:ppi-ablation}). The gap between the two families is structural: SI fully trusts the prediction model, while PPI++ hedges. Against a CUPED-adjusted baseline (Table~\ref{tab:cuped-full}), SI delivers RE of $1.6$--$4.9\times$ (ESS multiplier $2.5$--$24\times$) and PPI++ $1.08$--$1.35\times$ ($1.2$--$1.8\times$) in the valid-surrogate cells, so surrogate methods add value beyond CUPED.

These relative efficiencies do not translate directly into calendar time, because advancing a review date enrolls more units as well as maturing more outcomes, whereas the simulations sweep $\pi_L$ at a fixed $n$. Table~\ref{tab:calendar} therefore runs the review date directly: uniform enrollment over four weeks to an eventual $n = 10{,}000$, a two-week maturation window, DGP~1's outcome model with no drift, and a prespecified review at $t \in \{2.5, 3.0, 3.5, 4.0, 4.5, 5.0, 6.0\}$, at which $n(t)$ and $\pi_L(t)$ of Equation~\eqref{eq:piL-calendar} move together. Coverage is of the eventual-sample ATE and lies between 0.938 and 0.960 for all three estimators at every review date, so the comparison is one of precision and power at equal validity.

\begin{table}[!ht]
\centering
\caption{Calendar-review experiment: precision, coverage, and power at prespecified review dates.}
\label{tab:calendar}
\footnotesize
\setlength{\tabcolsep}{4pt}
\begin{tabular}{@{}rrrr rrrr rrr@{}}
\toprule
& & & & \multicolumn{4}{c}{\textbf{Mean 95\% CI width}} & \multicolumn{3}{c}{\textbf{Power vs} $\boldsymbol{\tau = 0}$} \\
\cmidrule(lr){5-8}\cmidrule(lr){9-11}
$\boldsymbol{t}$ & $\boldsymbol{n(t)}$ & $\boldsymbol{n_L(t)}$ & $\boldsymbol{\pi_L(t)}$ & LO & SI & PPI++ & plug-in $\lambda$ & LO & SI & PPI++ \\
\midrule
2.5 & 6{,}251  & 1{,}250  & 0.200 & $0.350$ & $0.122$ & $0.254$ & $0.257$ & $0.373$ & $0.997$ & $0.612$ \\
3.0 & 7{,}501  & 2{,}501  & 0.333 & $0.248$ & $0.111$ & $0.192$ & $0.196$ & $0.652$ & $1.000$ & $0.860$ \\
3.5 & 8{,}751  & 3{,}750  & 0.429 & $0.202$ & $0.103$ & $0.164$ & $0.168$ & $0.806$ & $1.000$ & $0.947$ \\
4.0 & 10{,}000 & 5{,}000  & 0.500 & $0.175$ & $0.096$ & $0.147$ & $0.150$ & $0.914$ & $1.000$ & $0.987$ \\
4.5 & 10{,}000 & 6{,}251  & 0.625 & $0.157$ & $0.096$ & $0.138$ & $0.141$ & $0.952$ & $1.000$ & $0.994$ \\
5.0 & 10{,}000 & 7{,}501  & 0.750 & $0.143$ & $0.096$ & $0.132$ & $0.134$ & $0.987$ & $1.000$ & $0.997$ \\
6.0 & 10{,}000 & 10{,}000 & 1.000 & $0.124$ & $0.096$ & $0.124$ & $0.124$ & $0.999$ & $1.000$ & $0.999$ \\
\bottomrule
\end{tabular}
\smallskip

{\scriptsize \noindent Notes: DGP~1 (valid surrogacy, no drift), uniform enrollment over weeks 0--4 to an eventual $n = 10{,}000$, two-week outcome window, $\tau = 0.15$, $R = 1{,}000$, all-units five-fold cross-fitting. Enrollment ends at week 4, which is why $n(t)$ stops growing there while $n_L(t)$ continues. ``PPI++'' is the primary configuration, the exact-variance tuning rule of Equation~\eqref{eq:lambda-exact} with a common unclipped coefficient and the exact variance; ``plug-in $\lambda$'' is the ablation of Equation~\eqref{eq:lambda-opt} on the same replications, and also uses the exact variance. SI uses the joint sandwich variance of Proposition~\ref{prop:joint-si-ppi}. Coverage of the eventual-sample ATE lies in $[0.938, 0.960]$ for every method at every review date, inside the $R = 1{,}000$ band $[0.936, 0.964]$, and is omitted from the table for space. At $t = 6$ the labeled set is the full sample, so labeled-only and both PPI++ rules coincide exactly.}
\end{table}

In this calendar design, PPI++ achieves comparable interval width approximately half a week earlier. At the week-2.5 review its interval has mean width $0.254$, which labeled-only reaches at about week $2.97$; at the week-3 review it is at $0.192$, which labeled-only reaches at about week $3.69$. A fixed-$n$ efficiency calculation overstates the saving because waiting enlarges the sample as well as maturing it. The surrogate index is the opposite case: its width at the week-2.5 review, $0.122$, is already narrower than labeled-only's at full maturation in week six, $0.124$, so a valid surrogate index saves the full three and a half weeks. Neither PPI++ rule reaches labeled-only's week-six width before week six. At the week-2.5 review PPI++ rejects $\tau = 0$ on 61.2\% of replications, labeled-only on 37.3\%, and SI on 99.7\%. All of this assumes the stationarity the design imposes; Section~\ref{sec:results-enrollment} relaxes it.

Figure~\ref{fig:hero} summarizes the pattern on DGPs~1, 2, and~4: RMSE and RE against the labeled fraction under valid surrogacy (Panels~A and~D), and coverage against the direct-effect share $\rho$ and the drift $\Delta_\beta$ (Panels~B and~C).

\begin{figure}[!htbp]
  \centering
  \includegraphics[width=0.95\textwidth]{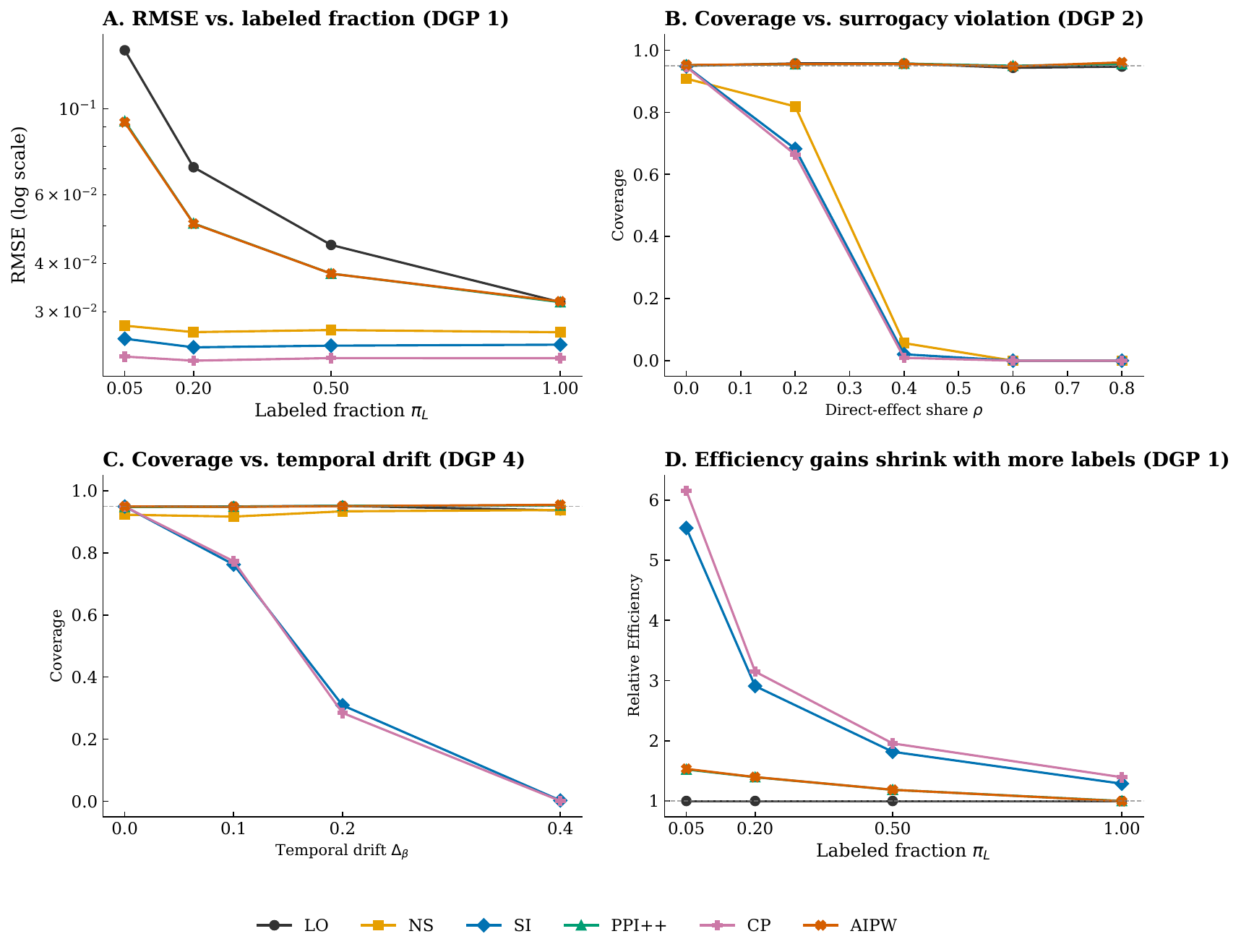}
  \caption{Representative simulation results (DGPs~1, 2, and~4). The hybrid is not plotted.}
  \label{fig:hero}
  \smallskip
  {\scriptsize \noindent Notes: surrogate-based methods (SI, CP) deliver low RMSE (Panel~A, log scale) and large RE (Panel~D) under valid surrogacy but lose coverage under partial mediation (Panel~B) and, fit on a historical cohort, under drift in the $Y$--$S$ slope (Panel~C; at $\Delta_\beta = 0.4$ SI covers $0.003$ and CP $0.001$, labeled-only $0.937$ and PPI++ $0.953$, Table~\ref{tab:appendix-dgp4}). PPI-family methods gain less and stay near nominal throughout. The $\pi_L$ axis has ticks at $0.05$, $0.20$, $0.50$, and $1.00$.}
\end{figure}

\subsection{Coverage Collapse Under Violations}
\label{sec:results-violations}

The robustness picture changes under surrogate violations (Table~\ref{tab:dgp2}).

\begin{table}[!ht]
\centering
\caption{DGP 2 (Partial Mediation): method performance as the direct-effect share $\rho$ increases.}
\label{tab:dgp2}
\footnotesize
\setlength{\tabcolsep}{4pt}
\begin{tabular}{@{}llrrrrc@{}}
\toprule
\textbf{Method} & $\boldsymbol{\rho}$ & $\boldsymbol{\tau}$ & \textbf{Bias} & \textbf{RMSE} & \textbf{Coverage} & \textbf{RE} \\
\midrule
Labeled-Only     & 0.0 & $0.150$ & $+0.001$ & $0.071$ & $0.950$ & $1.00$ \\
Naive Surrogate  & 0.0 & $0.150$ & $+0.012$ & $0.028$ & $\mathbf{0.908}$ & --- \\
Surrogate Index  & 0.0 & $0.150$ & $-0.0003$ & $0.025$ & $0.948$ & $2.77$ \\
PPI++            & 0.0 & $0.150$ & $-0.0001$ & $0.051$ & $0.952$ & $1.38$ \\
Composite Proxy  & 0.0 & $0.150$ & $+0.002$ & $0.023$ & $0.947$ & $3.01$ \\
AIPW             & 0.0 & $0.150$ & $-0.0001$ & $0.051$ & $0.953$ & $1.38$ \\
\midrule
Labeled-Only     & 0.2 & $0.188$ & $+0.003$ & $0.070$ & $0.958$ & $1.00$ \\
Naive Surrogate  & 0.2 & $0.188$ & $-0.024$ & $0.034$ & $\mathbf{0.819}$ & --- \\
Surrogate Index  & 0.2 & $0.188$ & $-0.036$ & $0.044$ & $\mathbf{0.682}$ & --- \\
PPI++            & 0.2 & $0.188$ & $+0.002$ & $0.051$ & $0.955$ & $1.38$ \\
Composite Proxy  & 0.2 & $0.188$ & $-0.034$ & $0.041$ & $\mathbf{0.664}$ & --- \\
AIPW             & 0.2 & $0.188$ & $+0.002$ & $0.051$ & $0.956$ & $1.38$ \\
\midrule
Labeled-Only     & 0.4 & $0.250$ & $-0.001$ & $0.069$ & $0.958$ & $1.00$ \\
Naive Surrogate  & 0.4 & $0.250$ & $-0.087$ & $0.091$ & $\mathbf{0.057}$ & --- \\
Surrogate Index  & 0.4 & $0.250$ & $-0.099$ & $0.102$ & $\mathbf{0.021}$ & --- \\
PPI++            & 0.4 & $0.250$ & $+0.0000$ & $0.050$ & $0.957$ & $1.38$ \\
Composite Proxy  & 0.4 & $0.250$ & $-0.098$ & $0.100$ & $\mathbf{0.009}$ & --- \\
AIPW             & 0.4 & $0.250$ & $-0.0001$ & $0.050$ & $0.956$ & $1.38$ \\
\midrule
Labeled-Only     & 0.6 & $0.375$ & $+0.001$ & $0.072$ & $0.944$ & $1.00$ \\
Naive Surrogate  & 0.6 & $0.375$ & $-0.213$ & $0.214$ & $\mathbf{0.000}$ & --- \\
Surrogate Index  & 0.6 & $0.375$ & $-0.225$ & $0.226$ & $\mathbf{0.000}$ & --- \\
PPI++            & 0.6 & $0.375$ & $+0.0001$ & $0.051$ & $0.950$ & $1.41$ \\
Composite Proxy  & 0.6 & $0.375$ & $-0.223$ & $0.224$ & $\mathbf{0.000}$ & --- \\
AIPW             & 0.6 & $0.375$ & $-0.0000$ & $0.051$ & $0.948$ & $1.41$ \\
\midrule
Labeled-Only     & 0.8 & $0.750$ & $-0.001$ & $0.071$ & $0.948$ & $1.00$ \\
Naive Surrogate  & 0.8 & $0.750$ & $-0.586$ & $0.586$ & $\mathbf{0.000}$ & --- \\
Surrogate Index  & 0.8 & $0.750$ & $-0.597$ & $0.598$ & $\mathbf{0.000}$ & --- \\
PPI++            & 0.8 & $0.750$ & $-0.0001$ & $0.051$ & $0.956$ & $1.39$ \\
Composite Proxy  & 0.8 & $0.750$ & $-0.598$ & $0.598$ & $\mathbf{0.000}$ & --- \\
AIPW             & 0.8 & $0.750$ & $-0.001$ & $0.051$ & $\mathbf{0.961}$ & $1.39$ \\
\bottomrule
\end{tabular}
\smallskip

{\scriptsize \noindent Notes: $\pi_L = 0.20$, $n = 10{,}000$, $R = 2{,}000$. Conventions as in Table~\ref{tab:dgp1}; RE is also suppressed (---) for methods with absolute relative bias above 10\%.}
\end{table}

SI coverage collapses under partial mediation: 68.2\% at $\rho = 0.2$ and exactly 0\% at $\rho \geq 0.6$. PPI++ and labeled-only stay inside the Monte Carlo band at every $\rho$ (PPI++ at 95.0\%--95.7\%); among the safe methods only AIPW at $\rho = 0.8$, at 96.1\%, falls outside it, one thousandth above the upper edge. The composite proxy and the naive surrogate, whose calibrated estimate $\hat\beta_{YS}\hat\tau_S$ also omits the direct effect, track the surrogate index into collapse; the naive surrogate falls from $0.908$ at $\rho = 0$ to $0.057$ at $\rho = 0.4$. In DGP~4 the historical surrogate index covers 30.9\% at $\Delta_\beta = 0.2$ while labeled-only, PPI++, and AIPW hold their bands (Table~\ref{tab:appendix-dgp4}). What fails there is the transport of the historical fit, not surrogacy, and because the two sides differ in training data as well as estimator, the comparison says nothing about the PPI correction alone under drift. At $\rho = 0.2$ the plug-in-$\lambda$ ablation has RE $1.360$ against the primary $1.378$, with coverage inside the band (Table~\ref{tab:ppi-ablation}), so neither result turns on the tuning rule.

Figure~\ref{fig:coverage} shows the pattern across representative configurations, including the two enrollment-drift columns of DGP~11. SI and CP fall deep below nominal under DGPs~2--4 and~9. In the selected DGP~1--10 configurations displayed in the heatmap, all labeled-only, PPI++, and AIPW cells lie within their Monte Carlo bands. The DGP~11 effect-drift column shows the limit of that robustness: labeled-only, PPI++, and AIPW fall to 0.57--0.68, because no within-experiment estimator can recover an effect that trends with enrollment time (Section~\ref{sec:results-enrollment}).

\begin{figure}[!ht]
  \centering
  \includegraphics[width=0.86\textwidth]{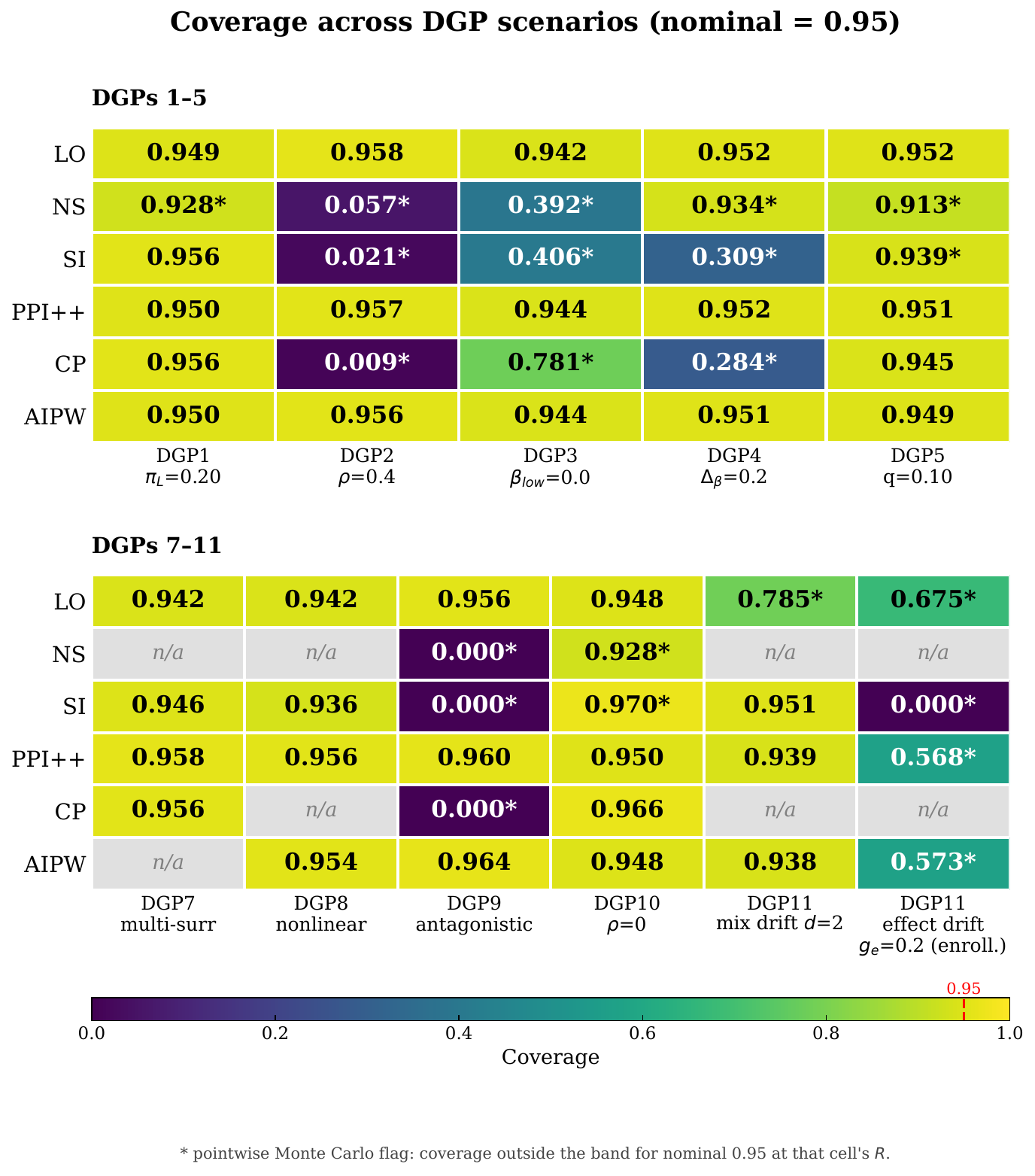}
  \caption{Coverage heatmap across selected DGP configurations (DGPs 1--5, 7--10, and the two DGP~11 drift columns).}
  \label{fig:coverage}
  \smallskip
  {\scriptsize \noindent Notes: eleven configurations in two panels (DGPs~1--5 above, DGPs~7--11 below) with the six primary methods as rows; ``n/a'' marks a method not run in that configuration. An asterisk marks a cell outside its Monte Carlo band: $[0.940, 0.960]$ at $R = 2{,}000$ (DGPs~1--5), $[0.931, 0.969]$ at $R = 500$ (DGPs~7--10), and $[0.936, 0.964]$ at $R = 1{,}000$ (DGP~11). Of the twenty-two flagged cells, eighteen are in the aggressive rows: the naive surrogate in seven columns (down to 0.057 at DGP~2 with $\rho = 0.4$ and 0.000 at DGP~9), the surrogate index in seven (0.021 at DGP~2, 0.406 at DGP~3, 0.309 at DGP~4, 0.939 at DGP~5, 0.000 at DGP~9, 0.970 at DGP~10, and 0.000 under DGP~11 effect drift), and the composite proxy in four (0.009, 0.781, 0.284, and 0.000 at DGPs~2, 3, 4, and~9). The other four are the safe methods under enrollment drift: labeled-only at 0.785 and 0.675, PPI++ at 0.568, and AIPW at 0.573. In the displayed DGP~1--10 configurations, all labeled-only, PPI++, and AIPW cells lie within their bands, the lowest being $0.942$ for labeled-only and $0.944$ for PPI++ and AIPW at DGP~3; elsewhere in DGPs~1--10, ten such cells fall outside their bands, all within $0.007$ of an edge, for example AIPW at $0.961$ (DGP~2, $\rho = 0.8$; Table~\ref{tab:dgp2}) and labeled-only at $0.937$ (DGP~4, $\Delta_\beta = 0.4$; Table~\ref{tab:appendix-dgp4}). The hybrid is not a row; its coverage is in Tables~\ref{tab:hybrid-sensitivity} and~\ref{tab:appendix-dgp9}.}
\end{figure}

We term this the \textit{robustness--efficiency tradeoff}. Figure~\ref{fig:frontier}, a benchmark summary, plots each method's mean RE over the valid-surrogate configurations of DGPs 1 and~5 against its mean coverage over the configurations of DGPs 2, 3, 4, and~9 in which the surrogate-based methods' assumptions fail (surrogacy in DGPs 2, 3, and~9; transport of the historical index in DGP~4). The two axes average different collections of scenarios with equal weights, so the display summarizes where each method landed on this benchmark rather than tracing an attainable frontier. SI and CP occupy the high-efficiency, low-coverage lower right and PPI++ and AIPW the high-coverage, lower-efficiency upper left; no method in the benchmark reaches the upper-right corner.

\begin{figure}[!ht]
  \centering
  \includegraphics[width=0.75\textwidth]{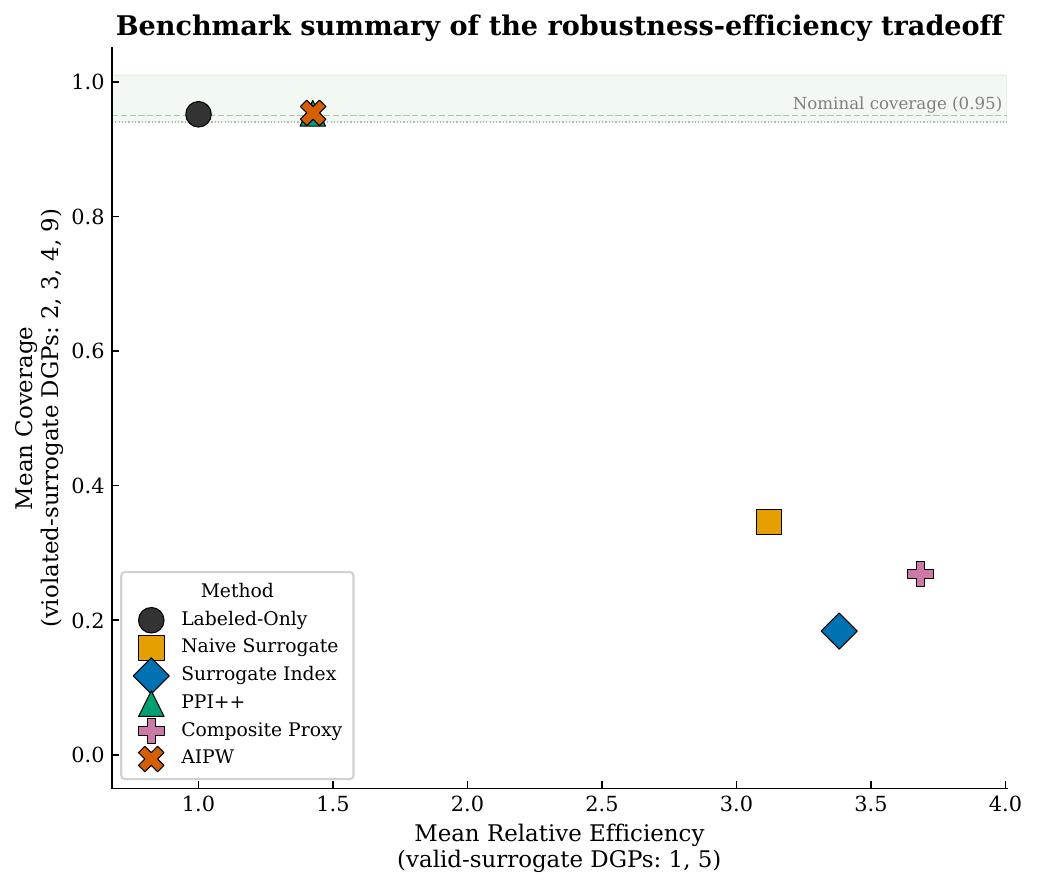}
  \caption{Benchmark summary of the robustness--efficiency tradeoff.}
  \label{fig:frontier}
  \smallskip

  {\scriptsize \noindent Notes: each point is a method-level aggregate, with the axes defined in the text. RE inputs: DGP~1 ($\pi_L = 0.20$) and DGP~5 ($q = 0.10$, MCAR); coverage inputs: DGP~2 ($\rho = 0.4$) and DGP~4 ($\Delta_\beta = 0.2$), both at $\pi_L = 0.20$, DGP~3 ($\beta_{YS}^{\text{low}} = 0$, $\pi_L = 0.10$), and DGP~9 ($\pi_L = 0.20$). Labeled-only sits at $(1.000, 0.952)$, PPI++ at $(1.425, 0.953)$, AIPW at $(1.426, 0.954)$, the naive surrogate at $(3.121, 0.346)$, the surrogate index at $(3.382, 0.184)$, and the composite proxy at $(3.683, 0.269)$. The hybrid is not plotted.}
\end{figure}

\subsection{The Detection-Damage Gap}
\label{sec:detection-damage}

Overlaying the coverage results on the diagnostic's power gives the paper's central practical warning. Figure~\ref{fig:detection-damage} plots, on a common $\rho$ axis, SI coverage (falling), two-sided diagnostic power (rising), and PPI++ coverage (flat at nominal). Between the $\rho$ at which SI coverage first falls below 90\% and the $\rho$ at which power first exceeds 50\% lies a band of violations severe enough to invalidate SI inference yet too small for the diagnostic to flag on most datasets. At $n = 10{,}000$ and $\pi_L = 0.20$ this detection-damage gap spans $\rho \in [0.09, 0.37]$: in the paired $R = 500$ run that locates it, a violation at $\rho = 0.2$ leaves SI coverage at 68.0\% (68.2\% at $R = 2{,}000$, Table~\ref{tab:dgp2}) while the diagnostic flags it on 13.6\% of replications at $\alpha = 0.05$ and 19.2\% at $\alpha = 0.10$.

Increasing the sample size relocates the gap rather than closing it. At $n = 100{,}000$ the band is $\rho \in [0.03, 0.16]$, its width narrowing from $0.277$ to $0.130$: detection improves at any fixed $\rho$, but the coverage-destroying threshold falls in tandem, because damage and detection are both driven by the ratio of bias to a shrinking standard error. The rate is best read in the direct effect $\delta = 0.15\,\rho/(1-\rho)$, since $\rho$ is a nonlinear reparameterization of $\delta$ (Section~\ref{sec:dgps}). With 95\% intervals from a bootstrap over replications, conditional on linear interpolation between grid points (Table~\ref{tab:edge-ci}), the detection edge is $0.089$ $[0.085, 0.092]$ at $n = 10{,}000$ and $0.029$ $[0.027, 0.030]$ at $n = 100{,}000$, a shrink factor of $3.10$ $[2.92, 3.36]$ for the tenfold increase in $n$ ($3.06$ $[2.77, 3.36]$ at $\alpha = 0.10$). On the refined local grid the damage edge is $0.0156$ $[0.0125, 0.0182]$ and $0.0046$ $[0.0043, 0.0057]$, a shrink factor of $3.37$ $[2.26, 4.02]$. Both intervals contain $\sqrt{10} = 3.16$, so both ratios are consistent with root-$n$ scaling over the studied range, the damage-edge ratio more weakly given the location caveat below. On the $\rho$ scale the detection factor is only $2.32$ $[2.22, 2.48]$, as the nonlinear map from $\delta$ to $\rho$ implies; $\rho$-scale factors are not evidence about the rate.
The damage edge is read from a refined local grid around each crossing (Appendix~\ref{app:threshold-sensitivity}). At $n = 100{,}000$ it is $0.030$ under both linear and pchip interpolation, with conditional interval $[0.028, 0.036]$, and across grids, interpolants, and an independent coarser run the estimates span the grid-resolution bracket $[0.018, 0.036]$. At $n = 10{,}000$ it is $0.094$, but SI coverage stays within about 1.5 Monte Carlo standard errors of $0.90$ from $\rho = 0.080$ to $0.105$, so that edge is located only to roughly $[0.08, 0.11]$.

The band is nonempty at both sample sizes under DGP~2, the one design in which we locate both edges. We do not claim it is nonempty in every design: both edges are $O(n^{-1/2})$ in $\delta$ units, but whether the interval between them is empty at a given $n$ depends on constants set by the estimators' variances and by how sharply coverage falls with the standardized bias. The band is the surrogate-metrics instance of a general property of pretest inference: biases of the order of the sampling noise can neither be detected with high power nor absorbed by the intervals \citep{leeb2005model, guggenberger2010hausman}, and uniform validity requires widening the intervals rather than testing \citep{berk2013valid}. Our contribution is its location and width in an interpretable violation unit at realistic sample sizes. Hillstrom and Criteo, whose effect sizes, outcome rates, and variances differ from DGP~2's, show damage with rare detection and damage with detection (Section~\ref{sec:empirical}). We have not mapped the band for other violation types.

The three base curves come from one paired run, on the same Monte Carlo draws and fitted prediction model, with $\mathrm{SE}(\hat{D})$ from the sandwich of Proposition~\ref{prop:joint-si-ppi}. That run's size at $\alpha = 0.05$ is $0.030$ (Monte Carlo standard error $0.008$) at $n = 10{,}000$, below its band, and $0.047$ ($0.012$) at $n = 100{,}000$; an undersized test moves the detection edge to the right, so the upper edge at $n = 10{,}000$ is, if anything, overstated, and the nominal-size evidence is the $R = 2{,}000$ run of Section~\ref{sec:surrogacy-test}. Raising the level to $0.10$ moves the detection edge in to $\rho = 0.32$ at $n = 10{,}000$ and $0.13$ at $n = 100{,}000$, narrowing the band without closing it. Across damage thresholds of $0.85$ and $0.90$, detection thresholds of $0.50$ and $0.80$, both sample sizes, and both levels, the band is nonempty with width $0.09$ to $0.36$ in $\rho$ (Table~\ref{tab:detection-damage-sensitivity}), and replacing the joint sandwich by the delta-method or fixed-predictor variance moves the edges by at most $0.003$ in $\rho$.

\begin{figure}[!ht]
  \centering
  \includegraphics[width=0.98\textwidth]{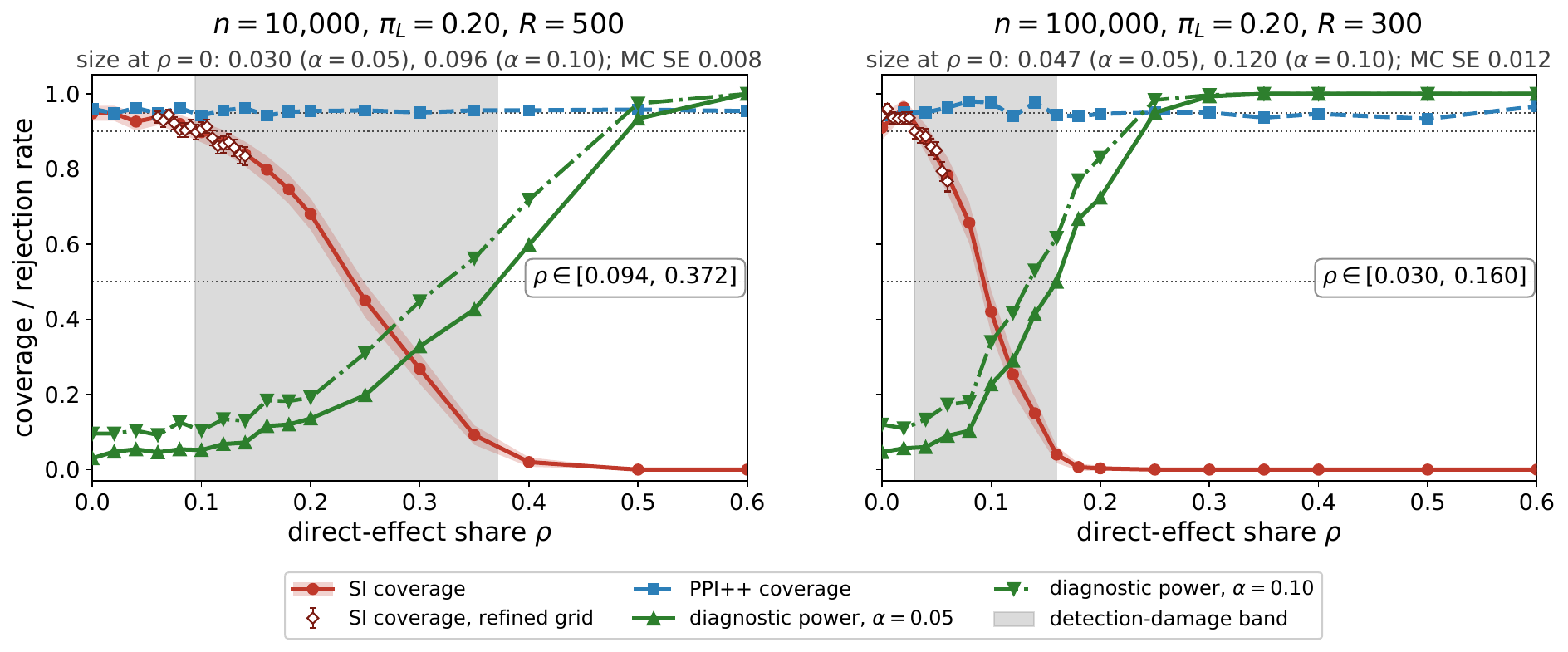}
  \caption{The detection-damage gap under DGP~2 at $\pi_L = 0.20$. Left: $n = 10{,}000$. Right: $n = 100{,}000$.}
  \label{fig:detection-damage}
  \smallskip

  {\scriptsize \noindent Notes: the base curves come from one paired run ($R = 500$ at $n = 10{,}000$, $R = 300$ at $n = 100{,}000$): SI coverage under the joint sandwich variance (red circles, with a ribbon of $\pm 2$ Monte Carlo standard errors), PPI++ coverage with the exact variance (blue squares), and two-sided diagnostic power (green triangles, solid at $\alpha = 0.05$, dot-dashed at $\alpha = 0.10$); each panel's subtitle gives the size at $\rho = 0$. The grey block is the band at $\alpha = 0.05$, from the $\rho$ at which SI coverage falls below 90\% to the $\rho$ at which power exceeds 50\%, with printed edges $[0.094, 0.372]$ and $[0.030, 0.160]$. Open diamonds ($\pm 2$ Monte Carlo standard errors) are SI coverage from a separate refined-grid run ($\rho$ in steps of $0.005$; $R = 1{,}000$, and $R = 2{,}000$ at $\rho = 0$) that extends the base run's draws at shared $\rho$; the printed damage edges come from that run and the detection edges from the base run. The Monte Carlo intervals for the edges, and their bootstrap construction, are in Appendix~\ref{app:threshold-sensitivity}; Table~\ref{tab:detection-damage-sensitivity} varies the thresholds and the level.}
\end{figure}

\subsection{Enrollment-Time Labeling}
\label{sec:results-enrollment}

Every result so far draws the labeled set as a uniform random subset, but the maturation process of Figure~\ref{fig:timeline} labels the earliest-enrolled cohort. DGP~11 tests that mechanism directly, and Table~\ref{tab:enrollment} reports the three regimes.

With no drift, enrollment-time labeling behaves like MCAR: every point estimate behaves as in DGP~1 and every coverage entry sits inside the $R = 1{,}000$ band $[0.936, 0.964]$. Under mix drift the labeling is no longer MCAR, and labeled-only, which estimates the early cohort's effect, is badly biased for the full-sample ATE ($-23\%$ at $d = 1$, $-47\%$ at $d = 2$, where its coverage falls to 79\%). The model-based estimators survive because the conditional outcome model is stable across cohorts: SI stays unbiased ($+0.3\%$ at $d = 2$, coverage $0.951$) and AIPW remains approximately unbiased, while PPI++ exhibits a small bias ($-0.4\%$ at $d = 1$, $-5.0\%$ at $d = 2$) from estimating its rectifier on the shifted cohort. Its exact variance estimator also underestimates the sampling variance here, at $0.873$ of the Monte Carlo variance at $d = 2$ (Appendix~\ref{app:ablations}), so its coverage loss is not bias alone. This is the one design in which the tuning rule matters: the plug-in rule of Equation~\eqref{eq:lambda-opt} shrinks the coefficient toward the biased labeled cohort and carries $-11.9\%$ relative bias at $d = 2$, with coverage $0.923$ (Table~\ref{tab:ppi-ablation}). Labeling by maturity breaks the simplest estimator first and the prediction-based estimators last; this is MAR given $X$, repairable in principle by reweighting the labeled cohort on observables.

Effect drift is the failure the framing cannot repair. When the treatment effect changes over the enrollment window through a channel that bypasses the surrogate, the early cohort understates the full-sample effect and nothing estimated from it recovers the gap: at $g_e = 0.2$, labeled-only, PPI++, and AIPW all carry about $-34\%$ bias with coverage of 68\%, 57\%, and 57\%, and SI is worse ($-40\%$, coverage 0\%). Under a random split the same DGP leaves labeled-only, PPI++, and AIPW at nominal coverage (only SI breaks, since effect drift is also a surrogacy violation), so the damage comes from labeling by enrollment time. Choosing the review date chooses $\pi_L$ only when the treatment effect is stable over the enrollment window or drifts only through observables; otherwise the review date is confounded with the estimand, and a mid-experiment review is an early read on the early cohort, not an estimate of the eventual full-sample effect.

\begin{table}[!ht]
\centering
\caption{DGP 11: enrollment-time labeling under drift ($n = 10{,}000$, $\pi_L = 0.20$, $R = 1{,}000$).}
\label{tab:enrollment}
\small
\setlength{\tabcolsep}{4pt}
\begin{tabular}{@{}llrrrr@{}}
\toprule
\textbf{Config} & \textbf{Method} & \textbf{Bias} & \textbf{Rel.\ bias} & \textbf{Coverage} & \textbf{RE} \\
\midrule
Enrollment, no drift & LO & $+0.0041$ & $+2.8\%$ & $0.953$ & $1.00$ \\
 & SI & $+0.0003$ & $+0.2\%$ & $0.944$ & $3.29$ \\
 & PPI++ & $+0.0039$ & $+2.6\%$ & $0.963$ & $1.26$ \\
 & AIPW & $+0.0039$ & $+2.6\%$ & $0.963$ & $1.26$ \\
\midrule
Mix drift, $d = 2$ & LO & $-0.0701$ & $-46.7\%$ & $\mathbf{0.785}$ & $1.00$ \\
 & SI & $+0.0005$ & $+0.3\%$ & $0.951$ & $4.65$ \\
 & PPI++ & $-0.0075$ & $-5.0\%$ & $0.939$ & $1.81$ \\
 & AIPW & $+0.0013$ & $+0.9\%$ & $0.938$ & $1.68$ \\
\midrule
Effect drift, $g_e = 0.2$, MCAR split & LO & $-0.0056$ & $-2.3\%$ & $0.952$ & $1.00$ \\
 & SI & $-0.0987$ & $-39.5\%$ & $\mathbf{0.000}$ & $0.59$ \\
 & PPI++ & $-0.0048$ & $-1.9\%$ & $0.952$ & $1.24$ \\
 & AIPW & $-0.0048$ & $-1.9\%$ & $0.952$ & $1.24$ \\
\midrule
Effect drift, $g_e = 0.2$, enrollment & LO & $-0.0855$ & $-34.2\%$ & $\mathbf{0.675}$ & $1.00$ \\
 & SI & $-0.1000$ & $-40.0\%$ & $\mathbf{0.000}$ & $1.03$ \\
 & PPI++ & $-0.0841$ & $-33.6\%$ & $\mathbf{0.568}$ & $1.08$ \\
 & AIPW & $-0.0841$ & $-33.7\%$ & $\mathbf{0.573}$ & $1.08$ \\
\bottomrule
\end{tabular}
\smallskip

{\scriptsize \noindent Notes: all-units five-fold cross-fitting. The labeled set is the earliest-enrolled 20\% except in the MCAR-split row. True $\tau = 0.15$ without drift and under mix drift, and $0.25$ under effect drift; bias is printed to four decimals because the smaller entries fall below the third. \textbf{Bold} coverage entries fall outside the band $[0.936, 0.964]$. PPI++ is the primary configuration, SI uses the joint sandwich variance, and AIPW a logistic propensity in $S$; the naive surrogate and the composite proxy are not run. RE is reported for every row, so beside a double-digit relative bias it is a ratio of RMSEs against a biased baseline, not an efficiency gain. The $d = 1$, MCAR-baseline, $g_e = 0.4$, and combined-drift configurations are in the replication package, and the PPI++ tuning-rule ablations on these draws in Table~\ref{tab:ppi-ablation}.}
\end{table}

\subsection{Portfolio Decision-Making}
\label{sec:results-decisions}

DGP~6 shows surrogate methods cutting cumulative regret sharply: against the labeled-only mean regret of $3.321$, the reduction is 57\% for the naive surrogate, 52\% for the surrogate index, and 12\% for PPI++ (Table~\ref{tab:portfolio}). The composite proxy has no past experiments here and falls back to the surrogate index, so its row, labeled ``CP (SI fallback)'', repeats the SI row and is not independent evidence. The $K = 50$ and $K = 100$ portfolios and the $\pi_L = 1.00$ comparison are in the replication package (Appendix~\ref{app:full-results}).

\begin{table}[!ht]
\centering
\caption{DGP 6 (Portfolio): Cumulative regret across $K = 200$ experiments.}
\label{tab:portfolio}
\small
\begin{tabular}{@{}lrrr@{}}
\toprule
\textbf{Method} & \textbf{Mean Regret} & \textbf{MC SE} & \textbf{Relative Regret (\%)} \\
\midrule
Labeled-Only & $3.321$ & $0.025$ & $83.5$ \\
Naive Surrogate & $1.432$ & $0.013$ & $36.0$ \\
Surrogate Index & $1.610$ & $0.014$ & $40.5$ \\
PPI++ & $2.926$ & $0.023$ & $73.6$ \\
CP (SI fallback) & $1.610$ & $0.014$ & $40.5$ \\
AIPW & $2.926$ & $0.022$ & $73.6$ \\
\bottomrule
\end{tabular}
\smallskip

{\scriptsize \noindent Notes: $\pi_L = 0.20$, $R = 500$ portfolios, all-units five-fold cross-fitting, mean oracle gain $V^* = 3.978$ (Monte Carlo standard error $0.029$). PPI++ is the primary configuration. CP (SI fallback): without past experiments the composite proxy returns the surrogate index (Method~4). Relative regret is the mean regret over the mean oracle gain, in percent, with Monte Carlo standard errors of $0.36$ to $0.39$ points. No coverage is reported, so the bolding convention does not apply.}
\end{table}

Regret here is effect-size-weighted: a decision is charged the size of the effect it forgoes or destroys, $V^* = \sum_k \max(\tau_k, 0)$ against the realized payoff, and launching a change whose true effect is zero is free. That favors estimators that launch readily, as the naive surrogate and the surrogate index do. Neither of two departures changes the ranking of the four methods (Table~\ref{tab:portfolio-sensitivity}, an $R = 200$ run whose cost-free column agrees with Table~\ref{tab:portfolio} within Monte Carlo error). A launch cost $c \in \{0, 0.01, 0.02\}$ leaves the ordering intact and, if anything, helps the surrogate methods slightly, because their excess launches are mostly small positives rather than nulls. Seeding the portfolio with 10\% sign-reversing experiments also leaves the ordering intact, but it raises the surrogate index's relative regret from 40.1\% to 54.9\% and the naive surrogate's from 36.0\% to 51.1\%, while PPI++ is unmoved at 72.8\% and 72.4\%; the surrogate methods' harmful launches per portfolio rise roughly fourteenfold, from 0.23 to 3.16 and 3.25, against PPI++'s 0.41 falling to 0.35. Sign-reversing experiments increase the frequency of launches with negative treatment effects.

The naive surrogate attains the lowest regret despite its biased point estimate because a launch decision depends on sign and significance, and its small bias is outweighed by the variance reduction from using all $n$ units. A platform whose objective resembles this one may use a surrogate estimator for go-or-no-go triage while relying on PPI++ for formal inference and effect-size reporting. The advantage requires sign preservation: in DGP~2 the direct effect reinforces the mediated effect, whereas in DGP~9 SI returns the wrong sign in 100\% of replications. The portfolio result is therefore a valid-surrogacy, sign-preserving result, not an endorsement of surrogate-only launch rules. Figure~\ref{fig:decisions} contrasts two decision-oriented criteria.

\begin{figure}[!ht]
  \centering
  \includegraphics[width=0.95\textwidth]{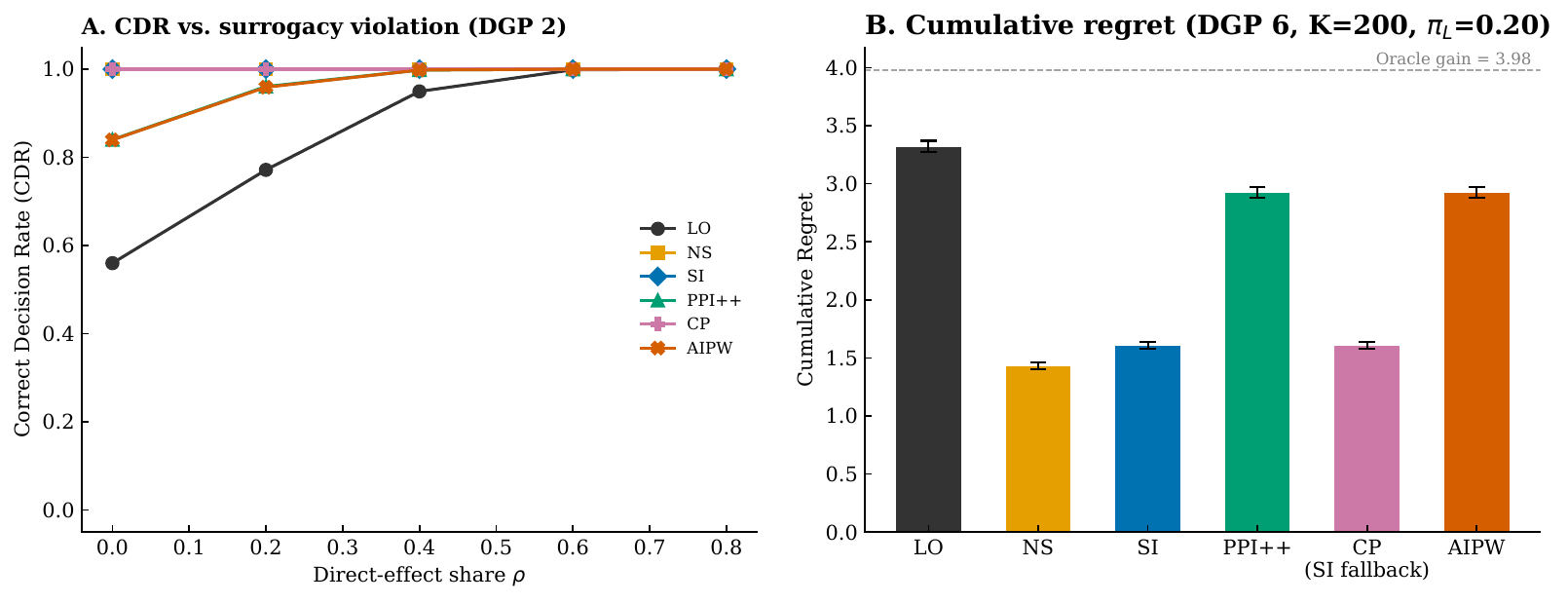}
  \caption{Decision-oriented performance. Panel~A: correct decision rate (CDR) versus direct-effect share $\rho$ in DGP~2 ($\pi_L = 0.20$, $n = 10{,}000$). Panel~B: cumulative regret in DGP~6 (portfolio of $K = 200$ small experiments, $\pi_L = 0.20$, mean oracle gain $3.98$).}
  \label{fig:decisions}
  \smallskip

  {\scriptsize \noindent Notes: error bars in panel~B are 95\% Monte Carlo intervals for the mean cumulative regret, $\pm 1.96$ Monte Carlo standard errors over $R = 500$ portfolios. CP (SI fallback) is the composite proxy, which has no historical experiments in this design and reverts to the surrogate index.}
\end{figure}

\subsection{The All-Units Plug-in Variance and the Overlap Covariance Correction}
\label{sec:overlap-correction}

A plug-in variance that treats the labeled and all-units averages as independent samples is exact only where the overlap correction $C$ vanishes, and whether that matters depends on the tuning rule. Table~\ref{tab:coverage-correction} runs both rules on the same draws. Block~A is the plug-in rule of Equation~\eqref{eq:lambda-opt} with the coefficient clipped to $[0,1]$. It sits below the per-arm optimum by the factor $1/(1 + \pi_L)$, so by Equation~\eqref{eq:overlap-piL} $\hat{C}$ is $0.7\%$, $2.1\%$, $6.6\%$, $22.1\%$, and $35.4\%$ of the plug-in variance at $\pi_L = 0.05$, $0.10$, $0.20$, $0.50$, and $0.80$. The undercoverage follows, because the omitted term grows while the leading term $\sigma_{Y,t}^2/n_{L,t}$ shrinks: the plug-in variance covers $94.3\%$ at $\pi_L = 0.20$, $92.6\%$ at $0.50$, and $89.8\%$ at $0.80$, and the correction restores $94.9\%$, $95.4\%$, and $94.8\%$, inside the band.

Block~B is the primary PPI++. Its optimum condition~\eqref{eq:lambda-exact-foc} and the correction vanish together only when the armwise labeled fractions or the per-arm optima coincide, or when $\lambda^* = 0$. In DGP~1 the arms share $(\gamma_t, \sigma^2_{\hat{Y},t})$ and have nearly equal labeled fractions, so the correction cancels numerically, to within $0.03\%$ of the plug-in variance, and coverage is $95.0\%$, $95.3\%$, and $94.5\%$ at the same three labeled fractions; $\hat{C}$ is nonetheless formed and reported in every case. At $\pi_L = 1.0$ the estimator returns labeled-only (Section~\ref{sec:eightmethods}) and $\hat{C}$ is never formed. The bootstrap is the third route and the most expensive: at $B = 200$ its percentile interval undercovers at four of the six labeled fractions, falling to $0.931$ at $\pi_L = 0.80$, and it needs $B \geq 500$ to match the analytic correction (Table~\ref{tab:bootstrap-sensitivity}) at roughly a hundred times the cost (Table~\ref{tab:timing}).

\begin{table}[!ht]
\centering
\caption{The overlap covariance correction under the two tuning rules (Proposition~\ref{prop:corrected-variance}).}
\label{tab:coverage-correction}
\small
\setlength{\tabcolsep}{5pt}
\begin{tabular}{@{}lcccc@{}}
\toprule
$\boldsymbol{\pi_L}$ & \textbf{Plug-in variance} & \textbf{Exact variance} & \textbf{Bootstrap ($B = 200$)} & $\boldsymbol{\hat{C} / \hat{V}_{\mathrm{plug\text{-}in}}}$ \\
\midrule
\multicolumn{5}{l}{\textit{Block A: the plug-in-rule estimator (plug-in tuning rule, clipped coefficient)}} \\
0.05 & $0.949$ & $0.949$ & $0.941$ & $0.007$ \\
0.10 & $0.940$ & $0.945$ & $\mathbf{0.933}$ & $0.021$ \\
0.20 & $0.943$ & $0.949$ & $\mathbf{0.939}$ & $0.066$ \\
0.50 & $\mathbf{0.926}$ & $0.954$ & $\mathbf{0.939}$ & $0.221$ \\
0.80 & $\mathbf{0.898}$ & $0.948$ & $\mathbf{0.931}$ & $0.354$ \\
1.00 & $0.948$ & $0.948$ & $0.940$ & --- \\
\midrule
\multicolumn{5}{l}{\textit{Block B: the primary PPI++ (exact tuning rule, common unclipped coefficient)}} \\
0.05 & $0.953$ & $0.953$ & --- & $0.000$ \\
0.10 & $0.944$ & $0.944$ & --- & $0.000$ \\
0.20 & $0.950$ & $0.950$ & --- & $0.000$ \\
0.50 & $0.953$ & $0.953$ & --- & $0.000$ \\
0.80 & $0.945$ & $0.945$ & --- & $0.000$ \\
1.00 & $0.948$ & $0.948$ & --- & --- \\
\bottomrule
\end{tabular}
\smallskip

{\scriptsize \noindent Notes: DGP 1, $n = 10{,}000$, $R = 2{,}000$, all-units five-fold cross-fitting. The blocks share draws and fitted prediction model, so they are paired cell by cell. Block~A is the configuration ``PPI++ (plug-in $\lambda$, plug-in variance)'' against ``PPI++ (plug-in $\lambda$, clipped)''; block~B is ``PPI++ (plug-in variance)'' against the primary ``PPI++''. The last column is the mean overlap correction over the mean plug-in variance. \textbf{Bold} coverage entries fall outside the band $[0.940, 0.960]$. Block~B has no bootstrap column because the bootstrap loop re-estimates $\lambda$ under the plug-in rule and clips it. At $\pi_L = 1.00$, $\hat{C}$ is never formed.}
\end{table}

\subsection{Robustness Checks}
\label{sec:results-gbt}

The core findings could depend on the single-surrogate linear DGPs; the checks below vary the number of surrogates, the functional form, the sign of the surrogate effect, and the learner.

With three surrogates (DGP~7) the surrogate index attains RE $5.36\times$ at $\pi_L = 0.05$ with coverage $0.952$, inside the $R = 500$ band $[0.931, 0.969]$, and PPI++ keeps its guarantee at RE $1.2$--$1.4\times$ (Table~\ref{tab:dgp7}). Under best-case historical calibration the composite proxy reaches $6.30\times$ in the same cell, against the surrogate index's $5.36\times$ (ESS multipliers $39.7$ and $28.7$). An independent $R = 500$ run with a separately drawn library of $30$ rather than $50$ past experiments puts the composite proxy at $7.01\times$ and the surrogate index at $5.46\times$ (Table~\ref{tab:appendix-dgp7}). Each run conditions on one realized library and does not average over the estimation uncertainty of that library, so the composite-proxy figures are best cases that depend on the library, not point estimates.

Under the concave link of DGP~8 the surrogate index attains RE $5.5\times$ at $\pi_L = 0.05$ with bias $0.002$ and coverage $0.934$, inside the $R = 500$ band, and PPI++ and AIPW keep valid coverage (Table~\ref{tab:appendix-dgp8}). Dropping $S^2$ from the prediction model under DGP~8, where surrogacy holds by construction, leaves the two-sided diagnostic rejecting at $\alpha = 0.05$ on 5.2\%, 5.4\%, and 5.7\% of replications at $\pi_L = 0.05$, $0.20$, and $0.50$ ($R = 1{,}000$, Monte Carlo standard error about 0.007; replication package): the diagnostic cannot see this misspecification. Its null remains joint, surrogacy together with correct specification, so a rejection is not attributable to surrogacy failure alone, and a non-rejection certifies neither component.

Under the antagonistic surrogate of DGP~9 the surrogate index estimates $-0.09$ against the true $0.41$, the wrong sign, with 0\% coverage; the diagnostic rejects in every replication and the hybrid reverts to PPI++ (Table~\ref{tab:appendix-dgp9}). Clipping the coefficient to $[0,1]$ binds here, but at the simulated effect sizes it moves RMSE by $0.03\%$ at $\pi_L = 0.05$ and leaves RE at $1.20$ (Table~\ref{tab:ppi-ablation}). In DGP~10 the relative surrogate-index bias tracks approximately $-\rho$, at $-0.8\%$, $-19.4\%$, and $-39.7\%$ for $\rho = 0$, $0.2$, and $0.4$ at $\pi_L = 0.20$ (Table~\ref{tab:appendix-dgp10}), so nonlinearity and partial mediation act approximately additively; the relation is approximate even under linearity, where the bias is $-\delta\kappa$ (Proposition~\ref{prop:si-bias}). PPI++ remains unbiased throughout; the hybrid was not run on this DGP.

\paragraph{Bias versus omitted interval uncertainty.} A surrogate index can miss the truth because a surrogacy violation biases its estimand, because the prediction model is misspecified, or because its interval omits the first-stage uncertainty of Equation~\eqref{eq:si-first-stage}. Recomputing every SI interval with the narrower variances, on the same point estimates, isolates the third (Table~\ref{tab:si-variance-ablation}). In the linear Gaussian designs the omitted term is 0.4 to 7.2\% of the joint sandwich variance and moves coverage by at most 0.008; it does not rescue SI under violation (0.678 against 0.682 at $\rho = 0.2$, 0.019 against 0.021 at $\rho = 0.4$), so the collapse of Section~\ref{sec:results-violations} is bias, not a missing variance term. The Criteo-calibrated rich index of Table~\ref{tab:multisurrogate}, which adds a treatment-responsive engagement proxy to the visit indicator, is the exception: the first-stage term is 71\% of the sandwich variance at $m = 1$, where the rich index's relative bias is $-0.4\%$, and the plug-in interval covers $0.665$ against the sandwich's $0.960$, an interval defect; at $m = 0.5$ and $m = 0$ the sandwich still covers only $0.800$ and $0.414$, which is bias. That index is exactly rank-deficient ($S_1^2 = S_1$), and relative eigenvalue floors from $10^{-8}$ to $10^{-16}$ give the same first-stage share ($70.9\%$), variance ratio, coverage, and size to every reported digit, because $d_0$ and every design row are orthogonal to the discarded direction (replication package). Whenever the index leans on a predictor that treatment moves, the first-stage term should be carried; it does not restore coverage where the estimand itself is wrong.

Re-running representative DGPs with gradient-boosted trees leaves the method ordering unchanged (Table~\ref{tab:gbt}): the surrogate index keeps its efficiency under valid surrogacy (RE $5.59$ with either model at DGP~1, $\pi_L = 0.05$) and its collapse under violation (coverage $0.030$ with OLS and $0.010$ with GBT at DGP~2, $\rho = 0.4$, $\pi_L = 0.20$), while PPI++ stays within the band except under OLS at DGP~8 with $\pi_L = 0.50$ ($0.915$; GBT $0.920$, at the band edge). GBT loses a little efficiency rather than gaining any: across the surrogate-index, PPI++, and AIPW cells of the full comparison (replication package) its RE never exceeds the OLS RE by more than $0.02$ and falls as much as 14\% below it, so the efficiency gap between the two families is structural, not model-driven. Trees have no coefficient expansion, so the GBT surrogate-index interval uses the fixed-predictor plug-in variance and is heuristic; on two cells it is $0.95$ and $1.13$ of the Monte Carlo variance, and a tree-refitting bootstrap that keeps every copy of a resampled unit in its unit's fold agrees on the DGP~1 cell (Appendix~\ref{app:ablations}). The comparison's evidential weight is therefore on bias, RMSE, and RE.

\subsection{The Adaptive Hybrid Estimator}
\label{sec:results-hybrid}

A hard-switch rule would choose between SI and PPI++; the exploratory hybrid instead combines them continuously. It reports $w\,\hat{\tau}_{\mathrm{SI}} + (1-w)\,\hat{\tau}_{\mathrm{PPI++}}$ with the Cauchy-kernel weight $w = c/(c + T_n^2)$ at $c = 1.5$, so the weight slides from the surrogate index toward PPI++ as the estimator-disagreement statistic grows, and it pairs that point estimate with an interval calibrated by simulation from the sandwich $\hat{\Sigma}$ of Proposition~\ref{prop:joint-si-ppi} (Appendix~\ref{app:hybrid}).

In the canonical $R = 2{,}000$ hybrid evaluation, under valid surrogacy (DGP~1) the hybrid attains RE $2.0$--$2.2\times$ at $\pi_L = 0.05$--$0.20$ with 94--96\% coverage, against the surrogate index's $2.9$--$5.5\times$ in the same cells (Table~\ref{tab:dgp1}), the gap being the efficiency cost of hedging. At the featured violation cell (DGP~2, $\rho = 0.2$, $\pi_L = 0.20$) it carries $-8.4\%$ relative bias at $0.934$ coverage, and under DGP~9 its weight is near zero (mean $0.012$ at $\pi_L = 0.20$), so it reverts to PPI++. On Hillstrom the diagnostic rejects on 11.4\% of splits, the mean weight stays at $0.64$, and the hybrid inherits 10.7\% relative bias at 93.0\% coverage (Table~\ref{tab:hillstrom-hybrid}). Appendix~\ref{app:hybrid} gives its limit-experiment risk (Proposition~\ref{prop:limit-risk}), the diverging-$c_n$ variant, and the status of the interval, which inherits PPI++ validity under fixed detectable alternatives while we make no consistency claim at fixed $c$ (Remark~\ref{rem:ci-validity}).

\section{Public-Data Illustrations}
\label{sec:empirical}

Three public experiments, spanning 445 to 13{,}979{,}592 units, illustrate the simulation findings without proprietary data. On the Hillstrom email-marketing experiment ($n = 64{,}000$) the surrogate index fails while PPI++ remains valid, and, as in the detection-damage gap of Section~\ref{sec:detection-damage}, the violation damages the surrogate index while the diagnostic rarely flags it. The Criteo uplift experiment ($n = 13.98$ million) shows the same funnel violation at industrial scale, where the diagnostic detects it. Effect sizes, outcome rarity, and variances in both datasets differ from DGP~2's. A testbed calibrated to the Criteo funnel places the observed violation on a dial, and the LaLonde experiment ($n = 445$) is a negative control in which the ``surrogate'' is a pre-treatment covariate.

The experiments are fixed datasets, so we mask outcomes at random to create labeled-unlabeled splits and measure how well each estimator recovers the full-sample difference in means, which we call the masking target. Coverage against it is a recovery diagnostic, not repeated-sampling coverage of the causal ATE: the target is itself an estimate that the labeled subsample overlaps, which makes labeled-only and PPI++ coverage conservative. Every bias and coverage figure in Sections~\ref{sec:hillstrom}, \ref{sec:criteo}, and~\ref{sec:lalonde} is measured against the masking target; the semi-synthetic testbed of Section~\ref{sec:semisynthetic}, whose true $\tau$ is known, supplies known-truth validation. Random masking makes MCAR hold by construction, so these analyses illustrate estimator behavior on real outcome distributions rather than validating the labeling mechanism.

\subsection{Hillstrom email marketing experiment}
\label{sec:hillstrom}

This instantiates the e-commerce example as a conversion funnel. The Hillstrom data \citep{hillstrom2008minethatdata}, accessed via scikit-uplift, cover 64{,}000 customers randomized in equal thirds to a men's merchandise campaign, a women's merchandise campaign, or no email. We pool the two campaigns (treatment $n_1 = 42{,}694$; control $n_0 = 21{,}306$), so the estimand is the effect of receiving either campaign. The surrogate is a binary website visit within two weeks and the outcome is conversion (a purchase); both are recorded over the same two-week window, so the data exercise the funnel structure rather than calendar maturation. Visits are roughly sixteen to nineteen times as common as conversions, and every conversion follows a visit, a fact about how events are recorded rather than identification of a mediation mechanism. Treatment also raises the conversion rate among visitors, from 5.4\% to 6.4\% (Table~\ref{tab:hillstrom-descriptive}), and the within-visitor shift appears in each campaign ($+1.46$ points for the men's and $+0.44$ for the women's). Adjusting for the covariates (recency, history, mens, womens, newbie, and the channel and zip-code dummies) in a logistic regression among the 9{,}394 visitors, with HC1 standard errors, gives a treatment coefficient of $+0.186$ (SE $0.105$, $p = 0.076$, odds ratio $1.20$) and an average marginal effect of $+1.03$ points (SE $0.55$ points, $p = 0.064$). The violation survives conditioning on $X$ with its size intact but is not significant at conventional levels, so Hillstrom suggests conditional-mean noninvariance rather than establishing it; Criteo establishes it (Section~\ref{sec:criteo}). The estimator behavior is not in doubt: a prediction model trained on pooled labeled data estimates a single $\E[Y|S=1]$ and underestimates conversion among treated visitors, an estimator problem that PPI++ repairs, unlike the estimand problem of converter-only metrics (Section~\ref{sec:setup-notation}).

\begin{table}[!ht]
\centering
\caption{Hillstrom descriptive statistics.}
\label{tab:hillstrom-descriptive}
\small
\begin{tabular}{@{}lrr@{}}
\toprule
\textbf{Quantity} & \textbf{Value} & \textbf{Note} \\
\midrule
$P(S=1|T=0)$ (visit rate, control) & 0.1062 & \\
$P(S=1|T=1)$ (visit rate, treated) & 0.1671 & \\
ATE on surrogate ($\tau_S$) & 0.0609 & treatment $\to$ visit \\
$P(Y=1|T=0)$ (conversion, control) & 0.00573 & \\
$P(Y=1|T=1)$ (conversion, treated) & 0.01068 & \\
ATE on primary outcome ($\tau$) & 0.00495 & treatment $\to$ conversion \\
$P(Y=1|S=1, T=0)$ (conversion given visit, control) & 0.0539 & \\
$P(Y=1|S=1, T=1)$ (conversion given visit, treated) & 0.0639 & \\
Within-visitor cond. mean diff. & $+0.0100$ & violates conditional-mean invariance \\
$P(Y=1|S=0, T=t)$ (conversion given no visit) & 0.0000 & no conversions among non-visitors \\
\bottomrule
\end{tabular}
\smallskip

{\scriptsize \noindent Notes: ATEs computed on the full sample as differences in means.}
\end{table}

For each of $R = 500$ random splits we label a fraction $\pi_L$ of the units; Table~\ref{tab:hillstrom} reports $\pi_L = 0.20$.

\begin{table}[!ht]
\centering
\caption{Hillstrom email marketing experiment results.}
\label{tab:hillstrom}
\small
\begin{tabular}{@{}lrrrr@{}}
\toprule
\textbf{Method} & \textbf{Bias (MC SE)} & \textbf{Abs.\ Rel.\ Bias (\%)} & \textbf{Coverage (MC SE)} & \textbf{RE} \\
\midrule
Labeled-Only & $-0.0001$ (0.0001) & 1.4 & \textbf{0.972} (0.007) & 1.00 \\
Naive Surrogate & $-0.0012$ (0.00001) & 24.9 & \textbf{0.002} (0.002) & --- \\
Surrogate Index & $-0.0012$ (0.00001) & 24.9 & \textbf{0.072} (0.012) & --- \\
PPI++ & $-0.0001$ (0.0001) & 1.6 & \textbf{0.972} (0.007) & 1.03 \\
CP (SI fallback) & $-0.0012$ (0.00001) & 24.9 & \textbf{0.072} (0.012) & --- \\
Hybrid ($c = 1.5$) & $-0.00053$ (0.00005) & 10.7 & \textbf{0.930} (0.011) & --- \\
\bottomrule
\end{tabular}
\smallskip

{\scriptsize \noindent Notes: $R = 500$ random splits, all-units five-fold cross-fitting. Masking target $0.004955$. PPI++ is the primary configuration and the surrogate index uses the joint sandwich variance. \textbf{Bold} coverage entries fall outside the band $[0.931, 0.969]$, in either direction. RE is suppressed (---) for methods with absolute relative bias above 10\%. The hybrid uses $c = 1.5$ and the simulation-calibrated interval of Appendix~\ref{app:hybrid}. CP (SI fallback): no past experiments exist for this dataset, so the composite proxy returns the surrogate index (Method~4). The diagnostic (two-sided, $\alpha = 0.05$) rejects on 11.4\% of splits. Labeled-only and PPI++ coverage rises to 1.000 for both at $\pi_L = 0.50$. AIPW is not run in this analysis.}
\end{table}

The surrogate index has 24.9\% relative bias and 7.2\% coverage, so a violation of about one percentage point on the conditional conversion rate is enough to break it; carrying the first-stage term is what separates its 7.2\% from the naive surrogate's 0.2\%, and it repairs neither. PPI++ is effectively unbiased, with conservative 97.2\% coverage and modest gains of 1.01--1.04$\times$ across $\pi_L$ (1.03$\times$ at $\pi_L = 0.20$): it gives up a small amount of efficiency to remain valid under arbitrary surrogate-outcome misspecification, and it is the appropriate default on this dataset.

The diagnostic rejects on only 11.4\% of splits at $\pi_L = 0.20$, because the absolute SI--PPI++ gap (about 0.0012) is small relative to its sampling variability. The hybrid's weight therefore stays in the SI-dominant regime (mean 0.64, 10th to 90th percentiles 0.26 to 0.99), and it carries 10.7\% relative bias at 93.0\% coverage; across labeled fractions its bias runs from 8.4\% at $\pi_L = 0.05$ to 10.7\% at $\pi_L = 0.20$, and replacing the sandwich by the plug-in SI variance, a fifth as large, moves it very little (Table~\ref{tab:hillstrom-hybrid}). This is the local-alternative regime of Proposition~\ref{prop:limit-risk}: when $T_n$ is small relative to $c$, the weight stays near one, the hybrid keeps part of the SI bias, and its interval carries no guarantee (Remark~\ref{rem:ci-validity}). On Hillstrom, PPI++ is the analysis to report, and the hybrid is at most a display of estimator disagreement.

\subsection{Criteo: the same funnel violation at industrial scale}
\label{sec:criteo}

The Criteo uplift data \citep{diemert2018large}, released by Criteo AI Lab, come from a randomized advertising experiment; the v2.1 release has 13{,}979{,}592 users (treatment $n_1 = 11{,}882{,}655$; control $n_0 = 2{,}096{,}937$), twelve anonymized covariates, a visit indicator (the surrogate), and a conversion indicator (the outcome). The funnel replicates Hillstrom's at 218 times the scale: no user converts without visiting, visits are roughly sixteen times as common as conversions, and treatment raises the visit rate from 3.82\% to 4.85\%, the conversion rate from 0.194\% to 0.309\% (masking target $0.00115$), and the conversion rate among visitors from 5.07\% to 6.36\%, an unadjusted within-visitor conditional-mean difference of $+1.29$ points. A logistic regression of conversion on treatment and the covariates $f_0$ through $f_{11}$ among the 656{,}929 visitors, with HC1 standard errors, gives a treatment coefficient of $+0.310$ (SE $0.020$, $p < 10^{-16}$, odds ratio $1.36$) and a covariate-adjusted within-visitor difference of $+1.35$ points (SE $0.08$ points, $p < 10^{-16}$). That is a decisive rejection within one fitted specification, not a model-free verification that conditional-mean invariance fails: another specification of the visitor-level model could in principle absorb the coefficient. Two independent advertising-response funnels thus show the same within-funnel pattern at similar magnitude; other funnel types (signup, activation, trial-to-paid) remain untested.

The two adjusted violations are nearly the same size, $+1.35$ points at Criteo and $+1.03$ at Hillstrom, but only one is statistically visible: $p < 10^{-16}$ at $n = 14$ million against $p = 0.064$ at $n = 64{,}000$. Damage before detection applies to the evidence for the violation as well as to the diagnostic meant to catch it.

\begin{table}[!ht]
\centering
\caption{Criteo: detection versus scale ($\pi_L = 0.20$ within each subsample).}
\label{tab:criteo-scale}
\small
\begin{tabular}{@{}rrrr@{}}
\toprule
$\boldsymbol{n}$ & \textbf{SI coverage} & \textbf{PPI++ coverage} & \textbf{Diagnostic rejection} \\
\midrule
10{,}000 & \textbf{0.847} & \textbf{0.897} & 0.270 \\
30{,}000 & \textbf{0.660} & \textbf{0.850} & 0.197 \\
100{,}000 & \textbf{0.183} & 0.937 & 0.203 \\
300{,}000 & \textbf{0.000} & 0.950 & 0.215 \\
1{,}000{,}000 & \textbf{0.000} & 0.920 & 0.560 \\
3{,}000{,}000 & \textbf{0.000} & 0.967 & 1.000 \\
13{,}979{,}592 & \textbf{0.000} & 1.000 & 1.000 \\
\bottomrule
\end{tabular}
\smallskip

{\scriptsize \noindent Notes: subsamples drawn without replacement, all-units five-fold cross-fitting; $R = 300$ per row for $n \leq 10^5$, then $200$, $100$, $60$, and $40$. PPI++ is the primary configuration and the surrogate index uses the joint sandwich variance. \textbf{Bold} coverage entries fall outside the band at each row's $R$: $[0.925, 0.975]$ at $R = 300$, $[0.920, 0.980]$ at $R = 200$, $[0.907, 0.993]$ at $R = 100$, and wider at $R = 60$ and $40$. Coverage is measured against the masking target.}
\end{table}

\begin{figure}[!ht]
  \centering
  \includegraphics[width=0.9\textwidth]{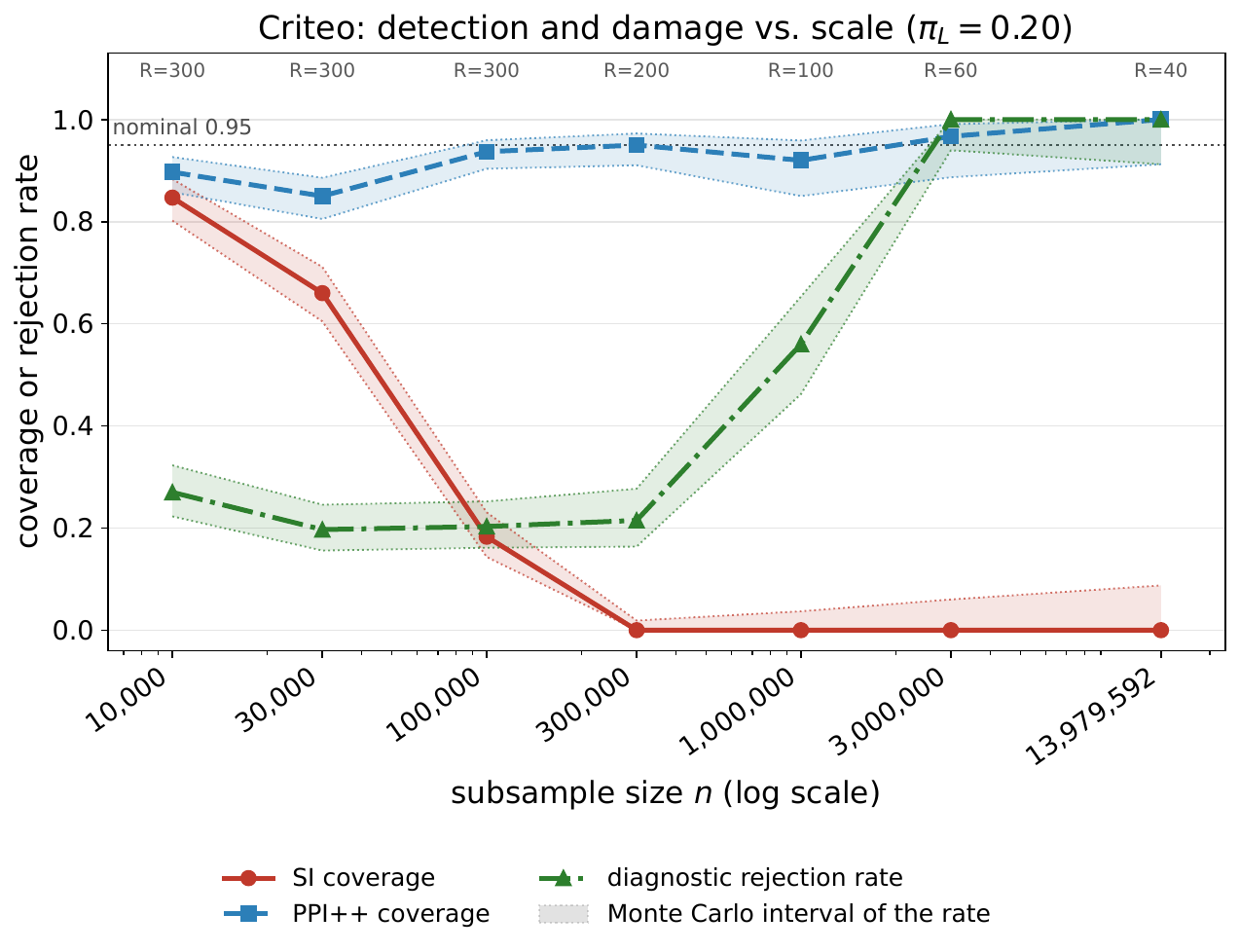}
  \caption{Criteo subsampling: SI coverage, PPI++ coverage, and two-sided diagnostic rejection against $\log n$.}
  \label{fig:criteo-scale}
  \smallskip

  {\scriptsize \noindent Notes: seven subsample sizes from $10{,}000$ to $13{,}979{,}592$ at $\pi_L = 0.20$, with $R = 300$, $300$, $300$, $200$, $100$, $60$, and $40$ in increasing order of $n$. Ribbons are 95\% Wilson score Monte Carlo intervals for each rate, not method confidence intervals. Red circles: SI coverage under the joint sandwich variance; blue squares: PPI++ coverage with the exact variance; green triangles: two-sided diagnostic rejection at $\alpha = 0.05$; dotted line: nominal $0.95$.}
\end{figure}

At full scale, across $R = 100$ random splits at $\pi_L = 0.20$, the surrogate index carries $-48.8\%$ relative bias with zero coverage, PPI++ is unbiased ($0.6\%$ relative bias, RE $1.06$), and the diagnostic rejects on every split (mean $T_n = 8.1$; at $\pi_L = 0.05$ it rejects on 99\% of splits with mean $T_n = 4.0$; Table~\ref{tab:criteo-full}). At this scale the violation is both fatal to the surrogate index and detected.

Subsampling traces the transition (Table~\ref{tab:criteo-scale}). The SI bias is nearly scale-invariant, $-5.1$ to $-5.7 \times 10^{-4}$ ($-44\%$ to $-50\%$ relative), as expected of a structural violation, while SI coverage falls from $0.847$ at $n = 10{,}000$ to $0.660$ at $30{,}000$, $0.183$ at $100{,}000$, and zero from $300{,}000$ on. The diagnostic's rejection rate is flat between $0.197$ and $0.270$ through $n = 300{,}000$, a spread of about two Monte Carlo standard errors, crosses majority detection only around $n \approx 10^6$, and saturates by $n = 3$ million. Between roughly $10^4$ and $10^6$ lies the real-data detection-damage gap: two orders of magnitude of sample size across which SI inference is invalid and the diagnostic is usually silent (Figure~\ref{fig:criteo-scale}). Hillstrom shows the same pattern at $n = 64{,}000$, with 7.2\% SI coverage and an 11\% rejection rate; its effect size, conversion rate, and variances differ from Criteo's.

The same table carries the clearest counterexample to PPI++'s own validity: its exact-variance coverage is $0.850$ at $n = 30{,}000$ and $0.897$ at $n = 10{,}000$, both outside their bands (the ordering of the two is within Monte Carlo error). The mechanism is outcome rarity, not label scarcity. With conversions near two in a thousand and an 85/15 treatment--control split, a subsample's labeled control arm holds a handful of conversions, the arm means are far from normal, and the Wald interval PPI++ inherits from the labeled-only estimator undercovers although the exact variance is in use. DGP~5 varies how many Gaussian outcomes are observed, so the simulations do not cover this regime.

\subsection{A Criteo-calibrated semi-synthetic testbed}
\label{sec:semisynthetic}

The DGPs of Section~\ref{sec:simulation} are stylized, and the Hillstrom and Criteo analyses are single realizations with violations of fixed size. A semi-synthetic testbed bridges the two. We fit logistic models for the two stages of the Criteo funnel (visit given covariates and treatment; conversion given a visit, covariates, and treatment), resample covariate rows from the real data, and keep conversion impossible without a visit. The fitted within-funnel treatment coefficient, $\hat{\kappa} = 0.31$, is placed on a dial in units of the observed violation: dial 0 imposes conditional-mean invariance exactly, dial 1 reproduces Criteo's violation, and dial 2 doubles it, with covariates, funnel rates, and outcome sparsity held at their calibrated values. We run the standard protocol at the Hillstrom scale ($n = 64{,}000$, $\pi_L = 0.20$, OLS prediction model, $R = 500$ per dial), measuring bias and coverage against the known simulated $\tau$ of each dial.

Table~\ref{tab:semisynthetic} reports the sweep. At dial 0 the calibrated funnel behaves like DGP~1: SI is nearly unbiased with large gains (RE $5.7\times$) and mild over-coverage at $0.976$, PPI++ delivers RE $1.06$ at $0.942$ coverage, and the diagnostic rejects at its nominal size (6.2\%). The residual SI relative bias at dial 0 ($+3.5\%$) is within 1.4 Monte Carlo standard errors of zero; to the extent it is real, the OLS model's inability to represent the logistic conversion surface is the natural source, since surrogacy holds by construction. At this tiny $\tau$ the relative biases of the noisier estimators carry Monte Carlo standard errors of roughly $\pm 14$ percentage points, so the single-digit LO and PPI++ entries of Table~\ref{tab:semisynthetic-full} are indistinguishable from zero. At dial 1, the violation the real data exhibit, SI coverage collapses to 8\% at $-68\%$ relative bias while the diagnostic detects the violation on only 10\% of replications, close to the Hillstrom rejection rate (11\%) at the same sample size; the testbed is calibrated to Criteo, not Hillstrom, so the agreement reflects similar violation-to-noise ratios rather than a fitted correspondence. At twice the real violation, detection reaches only 35\% while SI coverage is exactly zero, and PPI++ is unbiased with near-nominal coverage at every dial.

\begin{table}[!ht]
\centering
\caption{Criteo-calibrated semi-synthetic sweep across the violation dial.}
\label{tab:semisynthetic}
\small
\begin{tabular}{@{}rrrrrr@{}}
\toprule
\textbf{Dial} & $\boldsymbol{\tau \times 10^4}$ & \textbf{SI rel.\ bias (\%)} & \textbf{SI coverage} & \textbf{PPI++ coverage} & \textbf{Diagnostic rejection} \\
\midrule
0.0 & 2.65 & $+3.5$ & $\mathbf{0.976}$ & 0.942 & 0.062 \\
0.5 & 5.70 & $-51.6$ & $\mathbf{0.550}$ & 0.944 & 0.052 \\
1.0 & 9.10 & $-67.5$ & $\mathbf{0.084}$ & 0.948 & 0.100 \\
1.5 & 12.90 & $-74.6$ & $\mathbf{0.004}$ & 0.942 & 0.236 \\
2.0 & 17.13 & $-79.2$ & $\mathbf{0.000}$ & 0.940 & 0.354 \\
\bottomrule
\end{tabular}
\smallskip

{\scriptsize \noindent Notes: $n = 64{,}000$, $\pi_L = 0.20$, $R = 500$ per dial, all-units five-fold cross-fitting. PPI++ is the primary configuration and the surrogate index uses the joint sandwich variance. The dial multiplies the fitted within-funnel violation $\hat{\kappa} = 0.31$; dial 1 is the within-visitor difference observed in Criteo. \textbf{Bold} coverage entries fall outside the band $[0.931, 0.969]$, in either direction. SI relative bias grows with the dial because the surrogate index tracks the invariant part of the funnel while the true $\tau$ grows with $\kappa$. Per-method results are in Table~\ref{tab:semisynthetic-full}.}
\end{table}

\paragraph{Does a richer surrogate index absorb the violation?}
The single binary visit indicator makes the violation maximally hidden, and platforms in practice feed surrogate indices many signals. We add a post-treatment engagement proxy and split the fitted conversion-stage violation between it and a residual direct effect. With covariate rows $X$ resampled from the Criteo pool and $T \sim \mathrm{Bernoulli}(0.5)$, the rich funnel is
\begin{align}
  S_1 \mid T, X &\sim \mathrm{Bernoulli}\bigl(\sigma(c_v + X'a_v + b_v T)\bigr), \qquad S_2 = \zeta T + \varepsilon_2, \quad \varepsilon_2 \sim \mathcal{N}(0,1), \label{eq:rich-visit}\\
  Y &= S_1 B, \qquad B \mid T, X, S_2 \sim \mathrm{Bernoulli}\bigl(\sigma(c_c + X'a_c + \theta S_2 + \kappa_{\mathrm{resid}} T)\bigr), \label{eq:rich-conversion}
\end{align}
with $\sigma$ the logistic function, $(c_v, a_v, b_v)$ and $(c_c, a_c)$ the fitted Criteo stage coefficients ($b_v = 0.23$), $\zeta = 0.5$, $\theta = m\hat\kappa/\zeta$, $\kappa_{\mathrm{resid}} = (1-m)\hat\kappa$, and $\hat\kappa = 0.31$, so the total treatment coefficient on the conversion index, $\theta\zeta + \kappa_{\mathrm{resid}}$, equals $\hat\kappa$ at every mediated share $m$. The rich index is OLS on $(1, S_1, S_2, S_1^2, S_2^2, X)$, with $S_1^2 = S_1$; the visit-only index is OLS on $(1, S_1, S_1^2, X)$ and satisfies surrogacy at no $m$. At $m = 1$ surrogacy holds for the rich index, since $\E[Y \mid T, S_1, S_2, X] = S_1\,\sigma(c_c + X'a_c + \theta S_2)$ does not involve $T$, but the OLS basis is misspecified for that logistic mean, so the restriction $d_0'\beta_0 = \tau$ that the diagnostic's null requires is not guaranteed. Computed exactly over the covariate pool, with $S_2$ integrated by quadrature, the population offset $d_0'\beta_0 - \tau$ at $m = 1$ is $+1.07 \times 10^{-5}$, $+1.06\%$ of $\tau$ and $0.013$ Monte Carlo standard deviations of $\hat{D}$ at $n = 64{,}000$. The $m = 1$ cell is therefore a near-null stress check, and the Gaussian cells of Table~\ref{tab:joint-cov-R2000} remain the null validation.

Table~\ref{tab:multisurrogate} shows the outcome. The visit-only SI carries $-65$ to $-66\%$ relative bias at every $m$. The rich index's bias falls with the mediated share, $-63.5\%$ at $m = 0$, $-31.5\%$ at $m = 0.5$, and $-0.4\%$ at $m = 1$, against population offsets of $-62.1\%$, $-30.4\%$, and $+1.1\%$ of $\tau$, each within about one Monte Carlo standard error of the observed bias. The near-proportionality to the unmediated share $1 - m$ is an empirical finding on the probability scale, not an identity, since splitting a coefficient inside a logistic model does not split the effect on conversion probability linearly. A residual direct effect of half the observed size still leaves the rich index with $-31.5\%$ bias and 80\% coverage while the diagnostic rejects on under 6\% of replications, and a signal off the path of the direct effect (the $m = 0$ row) changes little. The question to ask of a proposed index is whether any plausible direct pathway bypasses all of its signals.

\begin{table}[!ht]
\centering
\caption{Multi-dimensional surrogate on the Criteo-calibrated funnel ($n = 64{,}000$, $\pi_L = 0.20$, $R = 500$; total violation fixed at dial 1, mediated share $m$).}
\label{tab:multisurrogate}
\small
\begin{tabular}{@{}crrrrr@{}}
\toprule
& \multicolumn{2}{c}{\textbf{SI rel.\ bias (\%)}} & \multicolumn{2}{c}{\textbf{SI coverage}} & \textbf{Diag.\ rejection} \\
$\boldsymbol{m}$ & visit only & visit $+ S_2$ & visit only & visit $+ S_2$ & visit $+ S_2$ \\
\midrule
0.0 & $-65.9$ & $-63.5$ & $\mathbf{0.112}$ & $\mathbf{0.414}$ & 0.138 \\
0.5 & $-66.0$ & $-31.5$ & $\mathbf{0.080}$ & $\mathbf{0.800}$ & 0.056 \\
1.0 & $-65.3$ & $-0.4$ & $\mathbf{0.080}$ & $0.954$ & 0.050 \\
\bottomrule
\end{tabular}
\smallskip

{\scriptsize \noindent Notes: all-units five-fold cross-fitting. PPI++ is the primary configuration and is unbiased with 0.938--0.962 coverage in every cell. SI coverage uses the joint sandwich variance, which carries the first-stage uncertainty of the learned index (71\% of the sandwich variance on the rich index at $m = 1$). \textbf{Bold} coverage entries fall outside the band $[0.931, 0.969]$.}
\end{table}

\subsection{LaLonde: a negative control with a pre-treatment ``surrogate''}
\label{sec:lalonde}

The LaLonde National Supported Work (NSW) experiment \citep{lalonde1986evaluating} randomized 445 individuals into a job-training program (treatment, $n_1 = 185$) or control ($n_0 = 260$). We treat 1975 earnings (RE75) as the ``surrogate'' and 1978 earnings (RE78) as the outcome; the masking target is $\$1{,}794$, and the results over $R = 500$ random splits are in the replication package. RE75 was realized before the 1976--1977 program began, so treatment cannot have caused it, and it fails the Prentice requirement that the surrogate lie on the causal path. As the simulations predict, the surrogate index is massively biased: at $\pi_L = 0.20$ its bias is $-\$1{,}645$ ($-92\%$) with 5.2\% coverage, and at $\pi_L = 0.50$ its coverage is 0\%. PPI++ keeps near-zero bias and 92.6--98.8\% coverage at $\pi_L \in \{0.10, 0.20, 0.50\}$, reverting to the labeled-only baseline (RE $0.99$--$1.00$); the 92.6\% at $\pi_L = 0.10$ reflects the small labeled sample ($n_L \approx 45$). The diagnostic rejects on 11.8\%, 15.2\%, and 36.8\% of splits at the three labeled fractions, the limited power available at $n = 445$: in small samples, domain knowledge must drive surrogate selection.

\section{Practical Recommendations}
\label{sec:recommendations}

The workflow below separates a prespecified primary inferential analysis from sensitivity analyses and decision exploration. Its thresholds are calibrated to our DGPs and should be adapted to a platform's context.

\paragraph{The principle.} Prespecify the primary inferential analysis before reading the experiment. Under MCAR labeling and adequate labeled outcome counts in each arm, PPI++ with the exact tuning constant of Equation~\eqref{eq:lambda-exact} and the exact variance of Proposition~\ref{prop:corrected-variance} is the default primary analysis when surrogacy is uncertain. The surrogate index and the hybrid are sensitivity analyses or tools for decision exploration. A team that wants the surrogate index as its primary analysis must justify surrogacy independently of the diagnostic, from prior validation studies or a mechanism by which treatment cannot bypass the surrogate, and should record that justification before the data are read. A data-adaptive rule that selects the interval after the diagnostic would need validation of the whole selection procedure, which we do not provide.

\paragraph{Step 0: Establish that the workflow applies.}
Three conditions need separate justification before the steps below. \emph{Labeling:} the PPI++ guarantee is an MCAR guarantee. Under nonrandom labeling unadjusted PPI++ has no design-based centering (under mix drift in DGP~11 it carries $-5.0\%$ relative bias and its exact variance estimator underestimates the sampling variance), and an estimator that models the labeling, such as AIPW with a correct observation model, is needed. \emph{Enrollment-effect drift:} under the calendar interpretation the labeled units are the earliest enrollees. Compare the $(X_i, S_i)$ distributions and the treatment effect on the surrogate across enrollment waves; covariate-mix drift is repairable by reweighting on observables, and the model-based estimators tolerate it (Section~\ref{sec:results-enrollment}). Covariate checks cannot establish that the treatment effect on the outcome is stable, however, and drift in that effect is not repairable: every estimator we test carries roughly a third of the effect as bias at $g_e = 0.2$. If effect drift is plausible, report the review as an early read on the early cohort, not as an estimate of the eventual effect. \emph{Sparse outcomes:} count labeled outcome events in each arm, not the total sample. At the $n = 30{,}000$ Criteo subsample PPI++ covers $0.850$ because the labeled control arm holds only a handful of conversions (Section~\ref{sec:criteo}); sparse outcomes need separately justified inference or more labels. A prediction model trained on a separate prior cohort (Regime~B) adds a transport assumption that DGP~4 shows can fail, and its estimates are exploratory until transport is verified.

\paragraph{Step 1: The primary analysis.}
Fit $\hat{f}$ on the labeled units under all-units cross-fitting and compute PPI++ with the common, unclipped coefficient of Equation~\eqref{eq:lambda-exact}; in the e-commerce example $\hat{f}$ maps funnel signals to conversion on the early enrollees whose conversion windows have closed. Form the interval from $\hat{V}_{\mathrm{corrected}} = \hat{V}_{\mathrm{plug\text{-}in}} + \sum_t 2\hat{\lambda}(\hat{\gamma}_t - \hat{\lambda}\hat{\sigma}^2_{\hat{Y},t})/n_t$ at that coefficient, so that the interval and the tuning rule come from one objective. The overlap correction is largest when the coefficient sits below the per-arm optimum, as under a plug-in rule or a binding clip, and matters most at $\pi_L > 0.30$; at the exact coefficient it is small when the arms have similar labeled fractions, and it is computed in every case. The population-optimal coefficient gives a variance no larger than labeled-only's (Proposition~\ref{prop:efficiency}), but an estimated coefficient carries no finite-sample guarantee; in our designs RE is $1.4$--$1.5\times$ at $\pi_L \leq 0.20$ and falls toward $1.0\times$ as $\pi_L$ approaches one.

\paragraph{Step 2: Read the diagnostic as a warning.}
Compute $\hat{\tau}_{\mathrm{SI}}$ and the two-sided estimator-disagreement statistic of Section~\ref{sec:surrogacy-test}. We suggest $\alpha = 0.10$, a heuristic that favors detection because an undetected violation usually costs more than a false alarm. A rejection is evidence that the two estimands diverge, through a surrogacy failure or prediction-model misspecification, and the surrogate index and the hybrid should then not inform the decision. A non-rejection neither establishes surrogacy nor licenses the SI interval: reporting the nominal SI interval after a non-rejection is the hard-switch rule of Table~\ref{tab:hard-switch}, whose coverage in DGP~2 is $0.699$ at $\rho = 0.2$ and $0.706$ at $\rho = 0.4$. Power is also low where it matters: under DGP~2 at $n = 10{,}000$ and $\pi_L = 0.20$ the test at $\alpha = 0.10$ detects $\rho = 0.2$ on 19.2\% of replications (Section~\ref{sec:detection-damage}); at $n = 100{,}000$ power at $\alpha = 0.05$ reaches 79\%, and at $n = 1{,}000$ it is negligible (Table~\ref{tab:power-curves}).

\paragraph{Step 3: Sensitivity analysis and decision exploration.}
Report the surrogate index beside the primary analysis as a sensitivity analysis. Under valid surrogacy it is far more precise, with RE $5.5\times$ at $\pi_L = 0.05$ and $2.9\times$ at $\pi_L = 0.20$ (Table~\ref{tab:dgp1}), the gain available when the surrogacy assumption is independently justified. The hybrid with $c = 1.5$ can be reported as an exploratory point estimate (RE $1.5$--$2.2\times$ under valid surrogacy); its interval has no coverage guarantee at fixed $c$ (Remark~\ref{rem:ci-validity}), and when its mean weight on SI exceeds about $0.5$ while a direct effect is plausible, it carries much of any undetected SI bias, as on Hillstrom (10.7\% relative bias at mean weight $0.64$).

\paragraph{Step 4: Separate formal inference from portfolio launch triage.}
Formal inference rests on the primary analysis of Step~1. For go-or-no-go launch triage, biased surrogate estimators can still achieve high correct-decision rates when the surrogate preserves the sign of the effect, and in the valid-surrogacy portfolio of DGP~6 the naive surrogate and the surrogate index attain the lowest regret (Section~\ref{sec:results-decisions}). That ranking depends on the objective: DGP~6 charges effect-size-weighted loss with no launch cost, so a platform whose launches carry real costs, or whose candidates include harmful changes, should re-derive the triage rule under its own objective.

\section{Discussion and Conclusion}
\label{sec:discussion}

The central finding is a sharp robustness--efficiency tradeoff among surrogate estimators. Surrogate-based methods deliver large efficiency gains when surrogacy holds and lose coverage under moderate violations, while PPI-family methods stay asymptotically valid under MCAR labeling with adequate labeled outcome counts in each arm, at modest gains (Section~\ref{sec:results}). On the calendar, the week-2.5 PPI++ interval matches labeled-only at about week 2.97, so PPI++ shortens the wait by roughly half a week without imposing surrogacy, while a valid surrogate index at the same review is already narrower than labeled-only at full maturation in week six, a saving of three and a half weeks; which of those claims a team will stake a launch on is the robustness--efficiency choice. Diagnostics help but do not validate surrogacy: under partial mediation some coverage-destroying violations are statistically invisible at both sample sizes we examine (Section~\ref{sec:detection-damage}), and the exploratory hybrid mitigates that risk without removing it. Enrollment-time labeling is benign without drift and survivable under covariate-mix drift, but drift in the treatment effect breaks every estimator (Section~\ref{sec:results-enrollment}), so reading the review date as a choice of $\pi_L$ requires effect stability over the enrollment window.

These conclusions have several limitations. We test two prediction model classes (OLS and GBT); neural networks or heavily tuned ensembles could change the efficiency profiles, though the SI bias is structural rather than model-driven. Apart from DGP~3's group heterogeneity and the covariate-heterogeneous surrogate effect behind DGP~11's mix drift, the DGPs have constant treatment effects, and a surrogate valid for the ATE may be invalid for conditional effects, which leaves surrogate-based heterogeneous-effect estimation open. The hybrid's $c = 1.5$ rests on the finite-sample grids and the limit experiment, and the covariance-adaptive rule of Table~\ref{tab:adaptive-c} matches it on the cells tested, but covariance structures far from ours are untested. The diagnostic has limited power at $n < 1{,}000$ and at moderate violations ($\rho \leq 0.2$ at $n = 10{,}000$). The composite proxy's weights come from one simulated library of past experiments per run, drawn from the same DGP family as the current experiment, so its results are optimistic relative to practice and do not average over library estimation uncertainty. Rare outcomes bound the main recommendation: at the sparse Criteo subsamples PPI++ covers $0.850$ and $0.897$ (Table~\ref{tab:criteo-scale}) because the normal approximation for the labeled arm means fails, a regime our simulations do not cover and for which we evaluate no sparse-event method. Finally, the real-data analyses use binary funnel surrogates; the calibrated multi-surrogate experiment (Table~\ref{tab:multisurrogate}) shows that enriching the index removes roughly the share of the violation the added signals mediate, but validation on real multi-metric indices, continuous engagement surrogates, revenue outcomes, and proprietary platform data remains for future work.

These findings connect to recent theory. \citet{ji2025predictions} unify the surrogate index and PPI as endpoints of a bias--variance spectrum, and our simulations characterize that spectrum in finite samples under realistic failure modes. \citet{mozer2026ppi} shows that PPI++ is the classical difference estimator of \citet{cassel1976results}, and survey-sampling experience favors such model-assisted estimators when auxiliary information is unreliable, the conclusion our simulations reach. \citet{kallus2025surrogates} derive the semiparametric efficiency bound for treatment effects with limited outcome data in a missing-data model that does \emph{not} impose statistical surrogacy, and construct estimators that attain it; that bound concerns their model, not the surrogacy-imposing one in which the surrogate index gains its efficiency, and we do not benchmark against it. Our corrected variance (Proposition~\ref{prop:corrected-variance}) writes the exact fixed-predictor variance of PPI++ in the all-units parameterization, the form in which practitioners most often implement it and in which the overlap covariance is easiest to lose; it agrees with the variance the labeled-unlabeled form of \citet{angelopoulos2023ppipp} assigns after the change of tuning variable.

Several extensions are open: a sequential hybrid that updates $c$ as data accumulate, paired with anytime-valid inference \citep{kilian2025anytime} to monitor the weight $w$ during an experiment; a Bayesian formulation with a prior on the violation $\rho$ that encodes domain knowledge about surrogate reliability; and a formal account of combining CUPED with PPI++, which exploit pre- and post-treatment information respectively and are routinely applied together.

For a team deciding at a mid-experiment review whether to call the result, the message is to prespecify PPI++ with the exact variance as the primary analysis when the conditions of Section~\ref{sec:recommendations} hold, to read the diagnostic as a warning, and to report the surrogate index as a sensitivity analysis. The workflow requires no additional data collection and adds negligible computational cost.

\paragraph{Code and data.} All code and the result tables behind every table and figure are in the replication repository at \url{https://github.com/zc-seattle/surrogate-metrics-replication} (release v1.0), with the three public datasets obtained as described there.

\FloatBarrier  
\bibliographystyle{plainnat}
\bibliography{references}

\clearpage
\appendix
\begin{center}
{\LARGE\bfseries Appendix}\\[6pt]
{\large Supplementary Proofs, Extended Results, and Ablations}
\end{center}
\vspace{1em}
\renewcommand{\thetable}{A\arabic{table}}
\setcounter{table}{0}
\renewcommand{\thefigure}{A\arabic{figure}}
\setcounter{figure}{0}

\section{The Adaptive Hybrid Estimator}
\label{app:hybrid}

This appendix gives the hybrid's construction, the risk analysis that fixes its tuning constant, the simulation grids behind Section~\ref{sec:results-hybrid}, and the asymptotics of a diverging-$c_n$ variant.

\subsection{Construction and Interval}
\label{app:hybrid-construction}

A hard-switch rule would choose between SI and PPI++; the exploratory hybrid instead combines them continuously, interpolating between the two according to the evidence against surrogacy. It belongs to the frequentist model-averaging family of \citet{hjort2003frequentist} (see also \citealp{claeskens2008model}), a compromise estimator whose data-dependent weight is a smooth function of a specification statistic, and it inherits that family's risk behavior, including the raw-risk hump at intermediate violations characteristic of pretest and shrinkage estimators \citep{judge1978pretest}, which Proposition~\ref{prop:limit-risk} characterizes exactly for our weight.

\paragraph{Definition.} The adaptive hybrid estimator is
\begin{equation}
  \hat{\tau}_{\mathrm{hybrid}} = w \cdot \hat{\tau}_{\mathrm{SI}} + (1 - w) \cdot \hat{\tau}_{\mathrm{PPI++}},
  \label{eq:hybrid}
\end{equation}
where the Cauchy-kernel weight
\begin{equation}
  w = \frac{c}{c + T_n^2}
  \label{eq:cauchy-weight}
\end{equation}
smoothly transitions from $w \approx 1$ (use SI) when $T_n \approx 0$ to $w \approx 0$ (use PPI++) when $|T_n|$ is large. We recommend $c = 1.5$ based on the minimax analysis of Section~\ref{app:hybrid-tuning}.

Two versions must be kept apart. The \emph{fixed-$c$} hybrid, run everywhere in this paper, holds $c$ constant and is characterized exactly by the limit experiment below. A \emph{diverging-$c_n$} variant has cleaner asymptotic endpoints: with $c_n \to \infty$ it attains the surrogate index's asymptotic variance under $H_0$, and with $c_n = o(\sqrt{n})$ it is root-$n$ equivalent to PPI++ under a fixed violation (Propositions~\ref{prop:hybrid-efficiency} and~\ref{prop:hybrid-safety}, Section~\ref{app:diverging-c}). Under local alternatives $\delta_n = h/\sqrt{n}$, however, the diagnostic statistic stays $O_p(1)$, so $c_n \to \infty$ forces $w \to 1$ and the diverging-$c_n$ hybrid inherits SI's local bias in full; the nondegenerate weight limit that makes the hybrid interesting under local alternatives belongs to the fixed-$c$ version. We therefore claim no uniform oracle bound for either, and justify $c = 1.5$ by the limit experiment and the minimax grids of Section~\ref{app:hybrid-tuning}.

The data-dependent weight $w$ invalidates the standard Wald CI, so we construct a simulation-calibrated CI as follows. The $2 \times 2$ covariance matrix of $(\hat{\tau}_{\mathrm{SI}}, \hat{\tau}_{\mathrm{PPI++}})$ is $\hat{\Sigma} = \hat\Omega/n$, the sandwich of Proposition~\ref{prop:joint-si-ppi}; the same $\hat{\Sigma}$ supplies $\mathrm{SE}(\hat{D})$, so the weight and the interval are built from one covariance estimate. We then draw $M = 10{,}000$ samples from $\mathcal{N}\bigl((\hat{\tau}_{\mathrm{SI}}, \hat{\tau}_{\mathrm{PPI++}})^T, \hat{\Sigma}\bigr)$; on each draw we compute $T_n^{(m)}$, $w^{(m)}$, and $\hat{\tau}_{\mathrm{hybrid}}^{(m)}$, and report the 2.5th and 97.5th percentiles as the 95\% CI. The construction parallels the smoothed post-model-averaging intervals of \citet{efron2014estimation}.

\begin{remark}[The simulation-calibrated interval is a heuristic, not a consistent procedure]
\label{rem:ci-validity}
We make no consistency claim for this interval at fixed $c$. For fixed $c$ the hybrid is a non-regular shrinkage estimator: its weight is a fixed smooth function of $T_n$, and the standardized gap $h/\sigma_D$ that determines the weight's limiting distribution is estimated with error that does not vanish at the $\sqrt{n}$ scale on which the weight operates. The plug-in percentile distribution therefore fails to converge to the sampling distribution of the hybrid. This is the shrinkage-estimator instance of the general failure of resampling and plug-in distributional approximations near a pretest null \citep{leeb2005model}. Under fixed detectable violations, $w \xrightarrow{p} 0$ and the interval inherits PPI++ validity: it collapses onto the exact-variance PPI++ interval, which is valid under Assumption~\ref{asm:mcar} and the regularity conditions of Proposition~\ref{prop:joint-si-ppi}.
We report the interval's coverage empirically rather than claiming it: 94--96\% under valid surrogacy across the tested configurations, 93.1--93.4\% at $\rho \in \{0.2, 0.4\}$, and 93.0\% on Hillstrom at $\pi_L = 0.20$ (Sections~\ref{app:hybrid-sims} and~\ref{sec:empirical}). The hybrid is an exploratory estimator, and its interval should be read as a display of estimator disagreement rather than as formal inference.
\end{remark}

\subsection{Limit-Experiment Risk at Fixed \texorpdfstring{$c$}{c}}
\label{app:hybrid-limit}

The fixed-$c$ implementation itself admits an exact local-asymptotic characterization.

\begin{proposition}[Limit-experiment risk of the fixed-$c$ hybrid]
\label{prop:limit-risk}
Assume the conditions of Proposition~\ref{prop:surrogacy-test}, a local sequence of violations along which the estimator gap satisfies $\sqrt{n}\,\E[\hat{D}] \to h$, and the fixed-$c$ weight $w(T_n) = c/(c + T_n^2)$ with $c$ constant. The local parameter $h$ is the limiting gap in the diagnostic's own units; under DGP~2 with $\delta_n = d/\sqrt{n}$ it equals $\kappa d$ with $\kappa$ the attenuation factor of Proposition~\ref{prop:si-bias}, so $h$ and the direct effect differ by that factor. Let $\sigma_S^2$, $\sigma_P^2$, and $\sigma_{SP}$ denote the probability limits of $n\Var(\hat{\tau}_{\mathrm{SI}})$, $n\Var(\hat{\tau}_{\mathrm{PPI++}})$, and $n\Cov(\hat{\tau}_{\mathrm{SI}}, \hat{\tau}_{\mathrm{PPI++}})$, and set $\sigma_D^2 = \sigma_P^2 + \sigma_S^2 - 2\sigma_{SP}$, $\varrho = (\sigma_P^2 - \sigma_{SP})/\sigma_D^2$, and $\theta = h/\sigma_D$ for the standardized local violation. Then
\begin{equation}
  \sqrt{n}\,(\hat{\tau}_{\mathrm{hybrid}} - \tau) \xrightarrow{d} H(h; c) = G_P - w(V/\sigma_D)\,V,
  \label{eq:limit-H}
\end{equation}
where $(G_P, V)$ is bivariate normal with $G_P \sim \mathcal{N}(0, \sigma_P^2)$, $V \sim \mathcal{N}(h, \sigma_D^2)$, and $\Cov(G_P, V) = \varrho\,\sigma_D^2$. The asymptotic risk reduces to the one-dimensional Gaussian integral
\begin{equation}
  r(h; c) = \E[H(h; c)^2] = \sigma_P^2 - \varrho^2\sigma_D^2 + \E_V\!\left[\bigl(\varrho(V - h) - w(V/\sigma_D)\,V\bigr)^2\right],
  \label{eq:limit-risk}
\end{equation}
which is continuous in $(h, c)$ and satisfies $r(h; c) \to \sigma_P^2$ as $|h| \to \infty$ for fixed $c$.
\end{proposition}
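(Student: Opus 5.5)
Under the local sequence, the plan is to take the joint central limit theorem of Proposition~\ref{prop:joint-si-ppi} as the starting point and then apply the continuous mapping theorem to the hybrid's weight. The first step is to write the hybrid as
\[
\hat\tau_{\mathrm{hybrid}} - \tau = (\hat\tau_{\mathrm{PPI++}} - \tau) - w(T_n)\,\hat D,
\]
with $\hat D = \hat\tau_{\mathrm{PPI++}} - \hat\tau_{\mathrm{SI}}$. This holds because $w\hat\tau_{\mathrm{SI}} + (1-w)\hat\tau_{\mathrm{PPI++}} = \hat\tau_{\mathrm{PPI++}} - w(\hat\tau_{\mathrm{PPI++}} - \hat\tau_{\mathrm{SI}})$. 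Set $G_{P,n} = \sqrt n(\hat\tau_{\mathrm{PPI++}} - \tau)$ and $V_n = \sqrt n\,\hat D$. Along the local sequence the influence functions of Proposition~\ref{prop:joint-si-ppi} converge to their null versions, so the joint CLT (applied in triangular-array, Lindeberg form under the moment bounds already assumed) gives
\[
(G_{P,n}, V_n) \xrightarrow{d} (G_P, V),
\]
a bivariate normal with mean $(0, h)$. PPI++ stays centered at $\tau$ because its centering rests only on MCAR and not on surrogacy. The covariance is read off $\Omega$ through the linear map $(\hat\tau_{\mathrm{SI}}, \hat\tau_{\mathrm{PPI++}}) \mapsto (\hat\tau_{\mathrm{PPI++}}, \hat\tau_{\mathrm{PPI++}} - \hat\tau_{\mathrm{SI}})$. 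This gives $\Var(V) = \sigma_D^2$ and $\Cov(G_P, V) = \sigma_P^2 - \sigma_{SP} = \varrho\sigma_D^2$. The estimated $\hat\lambda$ is handled exactly as in the ``Estimated tuning'' remark after Proposition~\ref{prop:joint-si-ppi}.

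Next, Proposition~\ref{prop:surrogacy-test} gives $\sqrt n\,\mathrm{SE}(\hat D) \xrightarrow{p} \sigma_D$. Consistency of the sandwich carries over to the local sequence by the same contiguity argument. It follows that $T_n = V_n/\{\sqrt n\,\mathrm{SE}(\hat D)\}$ and $V_n/\sigma_D$ differ by $o_p(1)$. Since $w(t) = c/(c+t^2)$ is continuous and bounded, the map $(g, v) \mapsto g - w(v/\sigma_D)\,v$ is continuous. Slutsky's lemma together with the continuous mapping theorem then yields Equation~\eqref{eq:limit-H}.

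For the risk formula, decompose $G_P$ into its regression on $V$ plus an independent remainder:
\[
G_P = \varrho(V - h) + Z,
\]
where $Z$ is independent of $V$, has mean zero, and has variance $\sigma_P^2 - \varrho^2\sigma_D^2$. Substituting, $H = Z + \{\varrho(V-h) - w(V/\sigma_D)V\}$. Squaring and taking expectations, the cross term vanishes because $Z$ is independent of $V$ with mean zero, and this gives Equation~\eqref{eq:limit-risk}. Continuity of $r$ in $(h, c)$ follows by dominated convergence. Write $V = h + \sigma_D\xi$ with $\xi$ standard normal. The integrand is continuous in $(h, c, \xi)$, and it is dominated locally uniformly because $|w(t)\,t| \le \sqrt c/2$ for all $t$, so the integrand is bounded by a quadratic in $\xi$ plus constants. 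On $c>0$ this bound is uniform in $h$.

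For the limit $|h| \to \infty$, expand the bracket as
\[
\varrho^2\sigma_D^2\xi^2 - 2\varrho\sigma_D\xi\,w(V/\sigma_D)V + \{w(V/\sigma_D)V\}^2.
\]
The first term has expectation $\varrho^2\sigma_D^2$, which cancels the $-\varrho^2\sigma_D^2$ in Equation~\eqref{eq:limit-risk}. For the last two terms, $w(t)\,t = ct/(c+t^2) \to 0$ as $|t|\to\infty$, and $|V/\sigma_D| \to \infty$ pointwise in $\xi$ as $|h|\to\infty$. The bound $|w(t)\,t| \le \sqrt c/2$ makes both terms dominated by an integrable function of $\xi$, so dominated convergence sends them to zero, leaving $r \to \sigma_P^2$.

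The main obstacle is the first step: rigorously transferring the joint CLT and the consistency of $\mathrm{SE}(\hat D)$ from the fixed-model setting of Proposition~\ref{prop:joint-si-ppi} to a drifting sequence of data-generating processes. The plan is to verify Lindeberg conditions uniformly along the DGP~2-type sequence, using the fourth-moment bounds, and to check that $\beta_0$, $Q$, and the arm means converge. An alternative is to invoke Le Cam's third lemma, taking the local violation as a smooth parametric perturbation of the null. Once that step is in place, everything else is continuous mapping and Gaussian algebra.
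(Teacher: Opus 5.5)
Your proposal is correct and follows the paper's proof essentially step for step. The shared steps are the rewrite $\hat\tau_{\mathrm{hybrid}} - \tau = (\hat\tau_{\mathrm{PPI++}} - \tau) - w(T_n)\hat D$, joint convergence of $(\sqrt n(\hat\tau_{\mathrm{PPI++}}-\tau), \sqrt n\hat D)$ to a Gaussian with mean $(0,h)$, Slutsky for $T_n$ followed by continuous mapping, the decomposition $G_P = \varrho(V-h) + Z$ with $Z$ independent of $V$, and dominated convergence via $|w(t)\,t| \le \sqrt c/2$. Your version is slightly more careful than the paper's in three places: it names the triangular-array CLT step and the estimated $\hat\lambda$, both of which the paper leaves implicit when it asserts joint convergence along the local sequence, and it derives the $|h|\to\infty$ limit from $w(t)\,t \to 0$ rather than from $w \xrightarrow{p} 0$ alone.
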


Proposition~\ref{prop:limit-risk} makes the transition region computable. Calibrating $(\sigma_S, \sigma_P, \sigma_{SP})$ to the featured cell ($n = 10{,}000$, $\pi_L = 0.20$: $\sigma_S = 2.53$, $\sigma_P = 5.21$, correlation $0.52$), the risk curves of Figure~\ref{fig:limit-risk} show the characteristic pretest shape \citep{judge1978pretest}: near-SI risk at $h = 0$, a raw-risk hump at intermediate violations (peaking at $\theta$ between $2.8$ and $2.9$, just above the detection-damage band, which spans $\theta \approx 0.4$--$2.0$ at this calibration), and reversion to PPI++ as the violation becomes plainly detectable. For $c = 1.5$ the worst-case excess over the oracle envelope occurs at $h = 0$, where the hybrid concedes most to an oracle that would pick SI. Because $h$ is the limiting estimator gap and the calibration is taken from the simulated replications, the attenuation factor $\kappa$ of Proposition~\ref{prop:si-bias} is already absorbed into every quantity plotted and tabulated here.

\subsection{Tuning the Constant \texorpdfstring{$c$}{c}}
\label{app:hybrid-tuning}

Define the excess risk $R_{\mathrm{excess}}(c, \rho) = \mathrm{MSE}(\hat{\tau}_{\mathrm{hybrid}}; c, \rho) - \min(\mathrm{MSE}(\hat{\tau}_{\mathrm{SI}}),\, \mathrm{MSE}(\hat{\tau}_{\mathrm{PPI++}}))$ and choose $c^* = \arg\min_c \max_\rho R_{\mathrm{excess}}(c, \rho)$. On DGP~1 parameters, sweeping 50 values of $c$ in $[0.1, 5.0]$ against 17 values of $\rho$ in $[0, 0.8]$, the finite-sample minimax $c^*$ is $1.7$ under default prediction quality ($R^2 = 0.56$) and $1.4$ under low prediction quality ($R^2 = 0.25$) (Figure~\ref{fig:minimax}). The objective is flat near its optimum: over $\pm 0.2$ around $c^*$ the worst-case excess varies by 11.5\% and 13.4\%, and the five best values of $c$ are $\{1.4, \ldots, 1.8\}$ and $\{1.1, \ldots, 1.5\}$, so $c = 1.5$ is in both windows. It lies between the two optima, and the smaller value shades the weight toward the PPI++ endpoint, the safer direction in the undetectable-violation regime. At $c = 1.5$ and default prediction quality the maximum excess occurs at $\rho = 0$ and is about $1.3$ times the smaller component MSE. The Cauchy kernel has the functional form of a Focused Information Criterion weight \citep{hjort2003frequentist}. The weight is smooth in $T_n$, satisfies $0 < w \leq 1$, and equals 1 only at $T_n = 0$, avoiding the discontinuities of hard selection.

\begin{figure}[!htbp]
\centering
\includegraphics[width=0.95\textwidth]{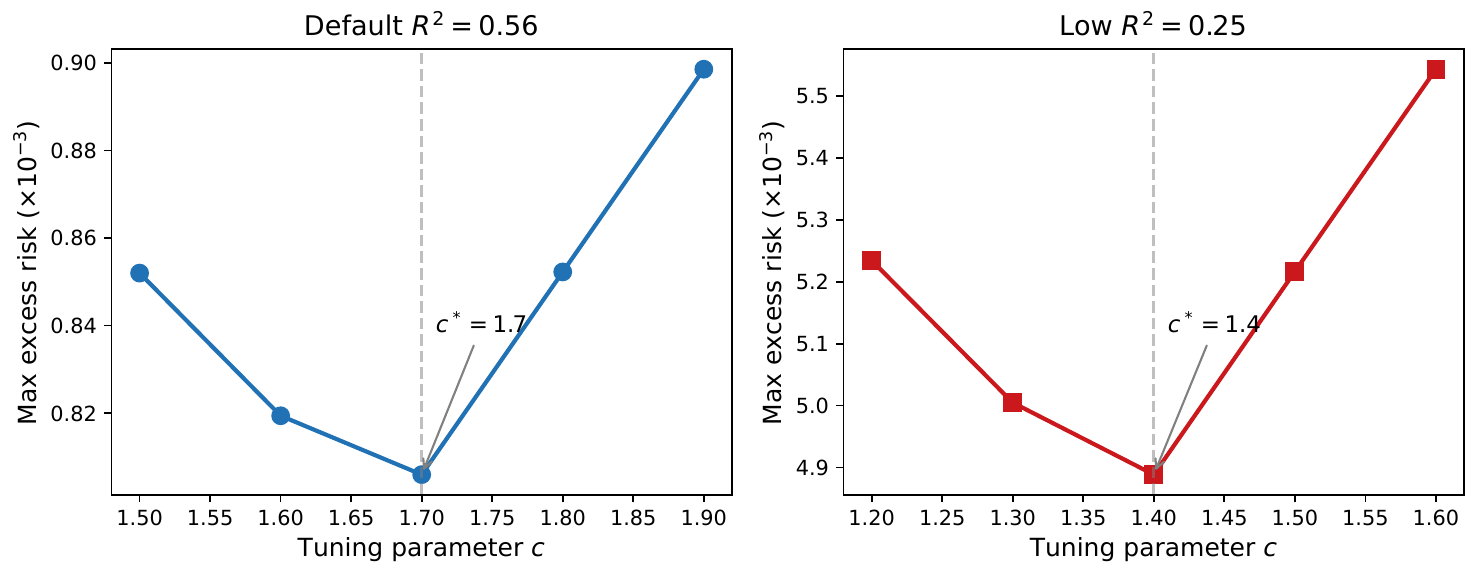}
\caption{Minimax analysis for the hybrid tuning parameter $c$.}
\label{fig:minimax}
\smallskip

{\scriptsize \noindent Notes: $n = 10{,}000$, $\pi_L = 0.20$, $R = 500$. Each panel plots $\max_\rho R_{\mathrm{excess}}(c, \rho)$ over the $\pm 0.2$ neighborhood of the sweep's minimizer; the sweep itself covers 50 values of $c$ in $[0.1, 5.0]$. Left panel: default prediction quality ($R^2 = 0.56$), minimax-optimal $c^* = 1.7$, shown on $\{1.5, 1.6, 1.7, 1.8, 1.9\}$. Right panel: low prediction quality ($R^2 = 0.25$), $c^* = 1.4$, shown on $\{1.2, 1.3, 1.4, 1.5, 1.6\}$. The fixed default $c = 1.5$ lies between the two optima and is conservative in the default-quality regime.}
\end{figure}

The limit experiment complements the grids. Figure~\ref{fig:limit-risk} plots $r(h; c)$ against the standardized violation $\theta$ for $c \in \{0.5, 1.5, 1.6, 3.0\}$, with the SI risk $\sigma_S^2 + h^2$, the PPI++ risk $\sigma_P^2$, the oracle envelope $\min(\sigma_P^2, \sigma_S^2 + h^2)$, and finite-sample checks at $n = 10{,}000$, which fall below the limit formula by at most 8.4\% (Appendix~\ref{app:hybrid-proofs}). Larger $c$ reduces risk under the null but increases peak risk at intermediate violations. The limit-minimax optimum over $c \in [0.3, 4]$ is $c^* = 1.6$, and $c = 1.5$ attains a worst-case excess of $7.880$ against $7.663$, a ratio of $1.028$ (Table~\ref{tab:limit-minimax-c}); these limit-experiment excesses are in units of $n\,\mathrm{MSE}(\hat{\tau})$ over the local parameter $h$ and are not comparable in magnitude with the finite-sample grid, although both place the worst case at the null. The recommended $c = 1.5$ thus sits between the finite-sample optima $1.4$ and $1.7$ and within 2.8\% of the limit-minimax $1.6$.

\begin{table}[!htbp]
\centering
\caption{Worst-case excess risk of the fixed-$c$ hybrid in the limit experiment.}
\label{tab:limit-minimax-c}
\footnotesize
\begin{tabular}{@{}rrr@{}}
\toprule
$\boldsymbol{c}$ & \textbf{Worst-case excess} & \textbf{Ratio to} $\boldsymbol{c^*}$ \\
\midrule
0.5 & $13.120$ & $1.712$ \\
1.0 & $9.866$ & $1.288$ \\
1.5 & $7.880$ & $1.028$ \\
1.6 ($c^*$) & $7.663$ & $1.000$ \\
1.7 & $8.098$ & $1.057$ \\
2.0 & $9.398$ & $1.226$ \\
3.0 & $13.825$ & $1.804$ \\
\bottomrule
\end{tabular}
\smallskip

{\scriptsize \noindent Notes: excess risk is $\max_h [r(h;c) - \min(\sigma_P^2, \sigma_S^2 + h^2)]$, the largest gap between the Cauchy-kernel hybrid's asymptotic risk of Proposition~\ref{prop:limit-risk} and the oracle envelope that switches between PPI++ and SI. Calibrated at DGP~2, $n = 10{,}000$, $\pi_L = 0.20$ ($\sigma_S = 2.531$, $\sigma_P = 5.214$, $\sigma_D = 4.465$, from $R = 600$ replications of that cell), on an 81-point $h$ grid spanning $[0, 35.72]$. Where the worst case falls depends on $c$: for $c \leq 1.5$ it is at $h = 0$, where the hybrid concedes most to an oracle that would pick SI; from $c = 1.6$ upward it moves onto the raw-risk hump, at $\theta = h/\sigma_D \approx 2.9$ for $c = 1.6$ and $1.7$, $3.0$ for $c = 2.0$, and $3.1$ for $c = 3.0$. The limit-minimax $c^* = 1.6$ is the crossover: its local maximum at $h = 0$ is $7.569$ and its hump maximum is $7.663$, while at $c = 1.5$ the ordering is reversed, $7.880$ at $h = 0$ against $7.228$ at the hump. Units are those of $n\,\mathrm{MSE}(\hat{\tau})$ and are not comparable with the finite-sample excess-risk grids earlier in this appendix.}
\end{table}

Because $c^*$ depends on $(\sigma_S, \sigma_P, \sigma_{SP})$, the rule can be made covariance-adaptive: estimate them from the SI and PPI++ variance estimates and their covariance, and set $c$ to the limit-minimax value at those estimates, a grid search over one-dimensional integrals. Across six DGP cells (Table~\ref{tab:adaptive-c}) the estimated $c^*$ concentrates tightly (per-cell means $1.52$--$1.55$), and the adaptive hybrid reproduces the fixed-$c$ hybrid to two decimals in bias, RMSE, and RE, so the fixed default already sits where the data-driven choice lands in the settings tested.

\begin{table}[!htbp]
\centering
\caption{Covariance-adaptive $c$ versus fixed $c = 1.5$ ($n = 10{,}000$, $R = 500$ per cell).}
\label{tab:adaptive-c}
\footnotesize
\setlength{\tabcolsep}{4pt}
\begin{tabular}{@{}llrrrrr@{}}
\toprule
\textbf{Cell} & \textbf{Rule} & \textbf{Mean $c$} & \textbf{Bias} & \textbf{RMSE} & \textbf{Coverage} & \textbf{RE} \\
\midrule
DGP 1, $\pi_L = 0.05$ & fixed & 1.50 & $+0.005$ & $0.057$ & $0.958$ & $2.47$ \\
 & adaptive & 1.53 & $+0.005$ & $0.057$ & $0.958$ & $2.48$ \\
DGP 1, $\pi_L = 0.20$ & fixed & 1.50 & $+0.001$ & $0.036$ & $0.962$ & $1.91$ \\
 & adaptive & 1.53 & $+0.001$ & $0.036$ & $0.962$ & $1.91$ \\
DGP 1, $\pi_L = 0.50$ & fixed & 1.50 & $+0.001$ & $0.029$ & $0.946$ & $1.49$ \\
 & adaptive & 1.53 & $+0.001$ & $0.029$ & $0.946$ & $1.49$ \\
DGP 2, $\rho = 0.2$ & fixed & 1.50 & $-0.012$ & $0.042$ & $0.944$ & $1.64$ \\
 & adaptive & 1.52 & $-0.012$ & $0.042$ & $0.942$ & $1.64$ \\
DGP 2, $\rho = 0.4$ & fixed & 1.50 & $-0.019$ & $0.055$ & $0.940$ & $1.25$ \\
 & adaptive & 1.52 & $-0.019$ & $0.055$ & $0.940$ & $1.25$ \\
DGP 9 & fixed & 1.50 & $-0.004$ & $0.049$ & $0.938$ & $1.21$ \\
 & adaptive & 1.55 & $-0.004$ & $0.049$ & $0.940$ & $1.21$ \\
\bottomrule
\end{tabular}
\smallskip

{\scriptsize \noindent Notes: DGP 2 cells at $\pi_L = 0.20$; $R = 500$ per cell, all-units five-fold cross-fitting. Adaptive $c$ minimizes the worst-case excess of the Proposition~\ref{prop:limit-risk} risk over the oracle envelope, on a grid $c \in [0.3, 4]$, at the replication's estimated covariance. RE is the RMSE ratio versus labeled-only. The Monte Carlo band at $R = 500$ is $[0.931, 0.969]$ and every entry is inside it. The PPI++ input uses the exact variance. The run is independent of the $R = 2{,}000$ hybrid evaluation of Section~\ref{sec:results-hybrid}, which gives RE $2.22$ at DGP~1, $\pi_L = 0.05$, against $2.47$ here.}
\end{table}

\begin{figure}[!ht]
\centering
\includegraphics[width=0.82\textwidth]{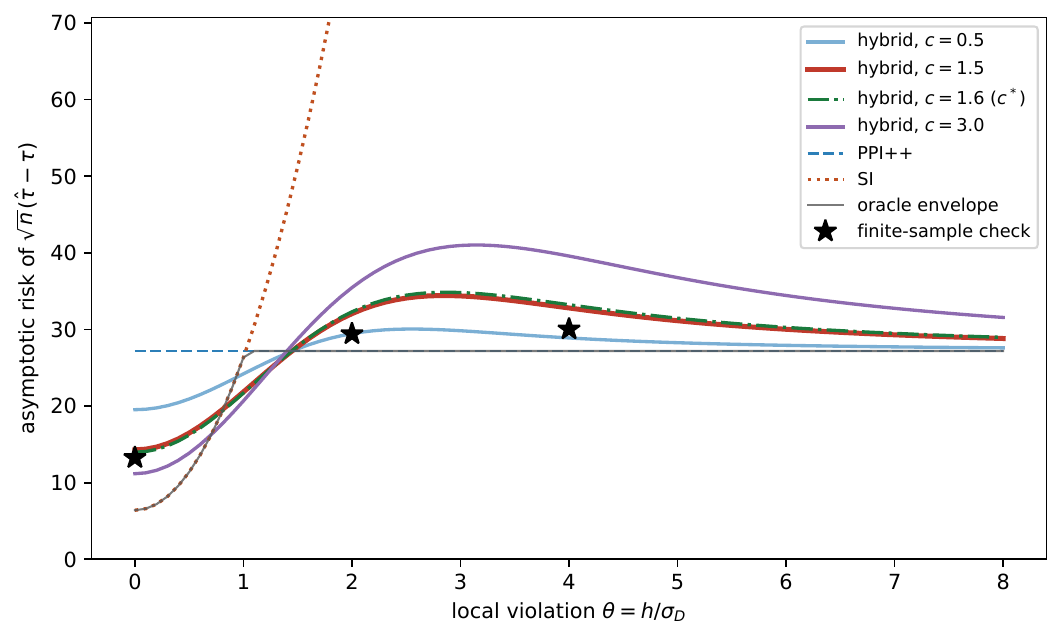}
\caption{Limit-experiment risk of the fixed-$c$ hybrid (Proposition~\ref{prop:limit-risk}), calibrated to the featured simulation cell.}
\label{fig:limit-risk}
\smallskip

{\scriptsize \noindent Notes: calibration $\sigma_S = 2.53$, $\sigma_P = 5.21$, $\mathrm{corr}_{SP} = 0.52$. Risk curves are plotted for $c \in \{0.5, 1.5, 1.6, 3.0\}$; the $c = 1.6$ curve is the limit-minimax optimum of Table~\ref{tab:limit-minimax-c} and lies essentially on top of the recommended $c = 1.5$ curve, marginally below it at $h = 0$ and marginally above it at the hump (worst-case excess ratio $1.028$). Stars are finite-sample $n \times \mathrm{MSE}$ checks at $n = 10{,}000$ with $c = 1.5$.}
\end{figure}

\subsection{Hybrid Simulation Results}
\label{app:hybrid-sims}

The canonical hybrid run is the $R = 2{,}000$ evaluation on DGPs~1, 2, and~9 (replication package). Under valid surrogacy (DGP~1) the hybrid delivers RE $2.0$--$2.2\times$ over labeled-only at $\pi_L = 0.05$--$0.20$, falling to $1.5\times$ at $\pi_L = 0.50$, with 94--96\% coverage; DGP~2 at $\rho = 0$, the same model on independent draws, gives $1.9$--$2.3\times$. The average weight is about $0.7$. Adaptivity has a finite-sample cost: the hybrid does not reach the surrogate index's own efficiency ($5.54\times$ at $\pi_L = 0.05$ and $2.91\times$ at $\pi_L = 0.20$, Table~\ref{tab:dgp1}), because it hedges against a violation.

Under moderate violation (DGP~2 at $\pi_L = 0.20$) the average weight falls from 0.63 at $\rho = 0.2$ to 0.30 at $\rho = 0.4$ as $T_n$ grows. The hybrid retains part of the SI bias (relative bias $-8.4\%$ and $-9.0\%$) with coverage of 93.1--93.4\%, far above SI's but below nominal. At $\rho = 0.2$ it still has the lowest RMSE of the three estimators; by $\rho = 0.4$ PPI++ dominates it (RMSE 0.050 against 0.056), the raw-risk hump of Proposition~\ref{prop:limit-risk} appearing in finite samples. Under antagonistic surrogacy (DGP~9) the weight is near zero in every replication (mean $0.012$ at $\pi_L = 0.20$), so the hybrid reverts to PPI++ and avoids SI's wrong-sign estimate. These results describe the detectable-violation regime; Hillstrom illustrates the undetectable one (Section~\ref{sec:hillstrom}). The coverage figures are reported, not guaranteed (Remark~\ref{rem:ci-validity}).

Against a nonparametric alternative, the simulation-calibrated interval does at least as well at a fraction of the cost. On DGP~1 at $\pi_L = 0.20$ it covers 96.2\% of replications and a percentile bootstrap with $B = 200$ covers 95.0\%, with mean widths $0.163$ and $0.160$; on DGP~2 at $\rho = 0.2$ the two cover 93.2\% and 92.4\%, with widths $0.170$ and $0.167$ ($R = 500$, $M = 10{,}000$, $c = 1.5$). Both under-cover in the violation cell, and the comparison does not isolate the cause.

The featured violation cell (DGP~2, $\rho = 0.2$, $\pi_L = 0.20$) appears in several runs. Hybrid coverage is $0.934$ in the canonical evaluation and in the hard-switch comparison (Table~\ref{tab:hard-switch}), $0.952$ on the $R = 1{,}000$ sensitivity grid (Table~\ref{tab:hybrid-sensitivity}), and $0.932$ in the $R = 500$ interval validation. SI coverage is $0.682$ in Table~\ref{tab:dgp2}, $0.680$ in the paired run of Section~\ref{sec:detection-damage}, $0.654$ in Table~\ref{tab:hard-switch}, and $0.638$ in Table~\ref{tab:cuped-full}. Within each estimator, no two values differ by more than about two Monte Carlo standard errors of the difference, and the text quotes the canonical runs.

\subsection{Sensitivity in \texorpdfstring{$c$}{c} and the Hard-Switch Comparison}
\label{app:hybrid-sensitivity}

The sensitivity analysis (Table~\ref{tab:hybrid-sensitivity}) confirms that $c = 1.5$ balances efficiency under valid surrogacy ($1.95\times$ RE) against coverage under violations ($95.2\%$ at $\rho = 0.2$ and $92.9\%$ at $\rho = 0.4$ on the sensitivity grid).

A comparison with the naive hard-switch strategy (Table~\ref{tab:hard-switch}) shows the switch suffers from pre-test bias: its coverage drops to 69.9\% at $\rho = 0.2$ and 70.6\% at $\rho = 0.4$, while the hybrid holds 93.4\% and 93.1\% in the same cells. At $\rho = 0.2$ the switch has higher RMSE than either of the estimators it selects between, $0.0515$ against SI's $0.0452$ and PPI++'s $0.0505$, and its coverage of 69.9\% sits barely above SI's 65.4\% and far below PPI++'s 95.9\%: conditioning on the pre-test mixes the two sampling distributions without inheriting the good properties of either.

\begin{table}[!ht]
\centering
\caption{Hybrid estimator sensitivity to $c$.}
\label{tab:hybrid-sensitivity}
\footnotesize
\begin{tabular}{@{}llrrr@{}}
\toprule
\textbf{DGP} & $\boldsymbol{c}$ & \textbf{Coverage} & \textbf{RE vs LO} & \textbf{Mean $w$} \\
\midrule
DGP 1 ($\rho=0$) & 0.5 & 0.959 & $1.66\times$ & 0.55 \\
DGP 1 ($\rho=0$) & 1.0 & 0.953 & $1.83\times$ & 0.66 \\
DGP 1 ($\rho=0$) & 1.5 & 0.952 & $1.95\times$ & 0.72 \\
DGP 1 ($\rho=0$) & 2.0 & 0.954 & $2.06\times$ & 0.76 \\
DGP 1 ($\rho=0$) & 3.0 & 0.954 & $2.21\times$ & 0.81 \\
\midrule
DGP 2 ($\rho=0.2$) & 0.5 & 0.961 & $1.57\times$ & 0.45 \\
DGP 2 ($\rho=0.2$) & 1.0 & 0.957 & $1.64\times$ & 0.56 \\
DGP 2 ($\rho=0.2$) & 1.5 & 0.952 & $1.68\times$ & 0.62 \\
DGP 2 ($\rho=0.2$) & 2.0 & 0.946 & $1.72\times$ & 0.67 \\
DGP 2 ($\rho=0.2$) & 3.0 & $\mathbf{0.933}$ & $1.75\times$ & 0.73 \\
\midrule
DGP 2 ($\rho=0.4$) & 0.5 & 0.939 & $1.31\times$ & 0.16 \\
DGP 2 ($\rho=0.4$) & 1.0 & $\mathbf{0.931}$ & $1.27\times$ & 0.25 \\
DGP 2 ($\rho=0.4$) & 1.5 & $\mathbf{0.929}$ & $1.24\times$ & 0.31 \\
DGP 2 ($\rho=0.4$) & 2.0 & $\mathbf{0.918}$ & $1.21\times$ & 0.36 \\
DGP 2 ($\rho=0.4$) & 3.0 & $\mathbf{0.893}$ & $1.16\times$ & 0.44 \\
\bottomrule
\end{tabular}
\smallskip

{\scriptsize \noindent Notes: $n = 10{,}000$, $\pi_L = 0.20$, $R = 1{,}000$, all-units five-fold cross-fitting. The SI--PPI++ estimator-disagreement diagnostic uses $\alpha = 0.05$, two-sided, with $\mathrm{SE}(\hat{D})$ from the joint sandwich; the PPI++ input is the primary configuration. \textbf{Bold} coverage entries fall outside the $R = 1{,}000$ Monte Carlo band $[0.936, 0.964]$. RE vs LO is the RMSE ratio, consistent with the convention of Section~\ref{sec:eval-metrics}. RE is reported for every row so that the high-$c$ rows remain comparable with the rest; an RE entry alongside a double-digit relative bias is a ratio of RMSEs against a biased estimator, not an efficiency gain. Three rows are in that category: DGP~2 at $\rho = 0.2$ with $c = 3.0$ ($-10.4\%$ relative bias), and DGP~2 at $\rho = 0.4$ with $c = 2.0$ ($-10.8\%$) and $c = 3.0$ ($-13.8\%$).}
\end{table}

\begin{table}[!ht]
\centering
\caption{Hard-switch vs.\ hybrid estimator.}
\label{tab:hard-switch}
\footnotesize
\setlength{\tabcolsep}{4pt}
\begin{tabular}{@{}llrrrr@{}}
\toprule
\textbf{Setting} & \textbf{Estimator} & \textbf{Bias} & \textbf{RMSE} & \textbf{Coverage} & \textbf{Reject} \\
\midrule
DGP 1, $\pi_L = 0.05$ & SI & $-0.0002$ & $0.0254$ & $0.950$ & \multirow{4}{*}{$0.106$} \\
 & PPI++ & $+0.0008$ & $0.0965$ & $0.941$ & \\
 & Hard Switch & $+0.0011$ & $0.0684$ & $\mathbf{0.898}$ & \\
 & Hybrid ($c = 1.5$) & $+0.0004$ & $0.0641$ & $0.941$ & \\
\midrule
DGP 1, $\pi_L = 0.20$ & SI & $+0.0004$ & $0.0246$ & $0.949$ & \multirow{4}{*}{$0.091$} \\
 & PPI++ & $+0.0003$ & $0.0501$ & $0.955$ & \\
 & Hard Switch & $+0.0005$ & $0.0377$ & $\mathbf{0.921}$ & \\
 & Hybrid ($c = 1.5$) & $+0.0002$ & $0.0363$ & $0.960$ & \\
\midrule
DGP 2, $\rho = 0.0$ & SI & $+0.0006$ & $0.0254$ & $0.942$ & \multirow{4}{*}{$0.096$} \\
 & PPI++ & $+0.0009$ & $0.0519$ & $0.944$ & \\
 & Hard Switch & $+0.0004$ & $0.0392$ & $\mathbf{0.908}$ & \\
 & Hybrid ($c = 1.5$) & $+0.0007$ & $0.0379$ & $0.951$ & \\
\midrule
DGP 2, $\rho = 0.2$ & SI & $-0.0380$ & $0.0452$ & $\mathbf{0.654}$ & \multirow{4}{*}{$0.203$} \\
 & PPI++ & $-0.0022$ & $0.0505$ & $0.959$ & \\
 & Hard Switch & $-0.0188$ & $0.0515$ & $\mathbf{0.699}$ & \\
 & Hybrid ($c = 1.5$) & $-0.0157$ & $0.0435$ & $\mathbf{0.934}$ & \\
\midrule
DGP 2, $\rho = 0.4$ & SI & $-0.0996$ & $0.1027$ & $\mathbf{0.022}$ & \multirow{4}{*}{$0.716$} \\
 & PPI++ & $-0.0009$ & $0.0501$ & $\mathbf{0.962}$ & \\
 & Hard Switch & $-0.0143$ & $0.0661$ & $\mathbf{0.706}$ & \\
 & Hybrid ($c = 1.5$) & $-0.0224$ & $0.0561$ & $\mathbf{0.931}$ & \\
\midrule
DGP 9 & SI & $-0.4973$ & $0.4975$ & $\mathbf{0.000}$ & \multirow{4}{*}{$1.000$} \\
 & PPI++ & $+0.0026$ & $0.0477$ & $0.952$ & \\
 & Hard Switch & $+0.0026$ & $0.0477$ & $0.952$ & \\
 & Hybrid ($c = 1.5$) & $-0.0033$ & $0.0482$ & $0.949$ & \\
\bottomrule
\end{tabular}
\smallskip

{\scriptsize \noindent Notes: $n = 10{,}000$, $R = 2{,}000$, all-units five-fold cross-fitting. The Monte Carlo band around the nominal 95\% is $[0.940, 0.960]$ and \textbf{bold} coverage entries fall outside it. All rows use $\pi_L = 0.20$ except the first block. PPI++ is the primary configuration and SI uses the joint sandwich variance. The hard switch uses SI when the two-sided diagnostic fails to reject at $\alpha = 0.10$ and PPI++ otherwise; the ``Reject'' column is the diagnostic's rejection rate in that setting, which equals the fraction of replications on which the switch selects PPI++. The hybrid uses $c = 1.5$. No labeled-only arm is run in this comparison, so no RE column is reported. True $\tau$ is $0.15$ for DGP~1 and DGP~2 at $\rho = 0$, $0.1875$ at $\rho = 0.2$, $0.25$ at $\rho = 0.4$, and $0.41$ for DGP~9. This run and Table~\ref{tab:dgp2} use separate seeds; the PPI++ bias at DGP~2, $\rho = 0.2$, is $-0.0022$ here and $+0.002$ there, each within two Monte Carlo standard errors of zero and about $2.5$ standard errors of the difference apart.}
\end{table}

\subsection{The Diverging-\texorpdfstring{$c_n$}{cn} Variant}
\label{app:diverging-c}

This subsection states and proves the two asymptotic properties of the diverging-$c_n$ hybrid summarized in Section~\ref{app:hybrid-construction}. Neither applies to the fixed-$c$ implementation used in the paper's results; they are recorded because they delimit what an adaptive weight can and cannot achieve. Throughout, $w_n = c_n/(c_n + T_n^2)$ with $c_n \to \infty$.

\begin{proposition}[Efficiency under $H_0$, diverging $c_n$]
\label{prop:hybrid-efficiency}
Under the conditions of Proposition~\ref{prop:surrogacy-test} and $H_0$, if $c_n \to \infty$ then $w_n \xrightarrow{p} 1$ and $\sqrt{n}(\hat{\tau}_{\mathrm{hybrid}} - \hat{\tau}_{\mathrm{SI}}) \xrightarrow{p} 0$, so the two estimators share a limiting distribution; under a uniform-integrability condition on $n(\hat{\tau}_{\mathrm{hybrid}} - \tau)^2$ the variance ratio $\Var(\hat{\tau}_{\mathrm{hybrid}})/\Var(\hat{\tau}_{\mathrm{SI}}) \to 1$ as well. The choice $c_n = (\log n)^2$ is one admissible sequence. The same conclusion holds under local alternatives $\delta_n = h/\sqrt{n}$: since $T_n = O_p(1)$ there too, $w_n \xrightarrow{p} 1$ and the diverging-$c_n$ hybrid converges to the surrogate index and inherits SI's local bias in full. A nondegenerate weight limit under local alternatives is a property of the fixed-$c$ hybrid (Proposition~\ref{prop:limit-risk}), not of this variant.
\end{proposition}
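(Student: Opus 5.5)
The plan is to work entirely from the algebraic identity
\[
\hat\tau_{\mathrm{hybrid}} - \hat\tau_{\mathrm{SI}} = (1-w_n)\,(\hat\tau_{\mathrm{PPI++}} - \hat\tau_{\mathrm{SI}}) = (1-w_n)\,\hat D,
\]
so that $\sqrt{n}(\hat\tau_{\mathrm{hybrid}} - \hat\tau_{\mathrm{SI}}) = (1-w_n)\sqrt{n}\hat D$. Under $H_0$, Proposition~\ref{prop:surrogacy-test} gives $\sqrt{n}\hat D = O_p(1)$ and $T_n \xrightarrow{d} \mathcal{N}(0,1)$, hence $T_n = O_p(1)$. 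Since
\[
1 - w_n = \frac{T_n^2}{c_n + T_n^2} \le \frac{T_n^2}{c_n},
\]
and $c_n \to \infty$, we get $1 - w_n = O_p(1/c_n) = o_p(1)$, that is, $w_n \xrightarrow{p} 1$. Multiplying this by the $O_p(1)$ factor $\sqrt{n}\hat D$ gives the $o_p(1)$ difference.

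Slutsky's lemma together with Proposition~\ref{prop:joint-si-ppi} then transfers the limit $\mathcal{N}(0,\Omega_{11})$ of $\sqrt{n}(\hat\tau_{\mathrm{SI}} - \tau)$ to the hybrid. Under $H_0$ the centering is $\tau_{\mathrm{SI}}^0 = \tau$. The choice $c_n = (\log n)^2$ diverges, so it is admissible.

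For the variance ratio, I would invoke the assumed uniform integrability of $n(\hat\tau_{\mathrm{hybrid}}-\tau)^2$. Convergence in distribution plus uniform integrability gives convergence of second moments, so $n\,\E(\hat\tau_{\mathrm{hybrid}}-\tau)^2 \to \Omega_{11}$. I need the analogous statement for $\hat\tau_{\mathrm{SI}}$, and this is the main obstacle: the statement only imposes uniform integrability on the hybrid. I would try to supply it for SI as well, either by adding it as a companion condition or by using $|\hat\tau_{\mathrm{SI}} - \hat\tau_{\mathrm{hybrid}}| \le |\hat D|$ together with uniform integrability of $n\hat D^2$. The latter follows from the linear-index structure and the fourth-moment conditions of Proposition~\ref{prop:joint-si-ppi}, apart from the OLS inverse, which may require trimming. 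Finally, the squared bias is $o(1/n)$ under $H_0$, because the limits are centered, so variances and MSEs agree at order $1/n$ and the ratio tends to $\Omega_{11}/\Omega_{11} = 1$, provided $\Omega_{11} > 0$.

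For local alternatives $\delta_n = h/\sqrt{n}$, I would argue that the expansion of Proposition~\ref{prop:joint-si-ppi} still holds along the sequence, since the influence functions depend continuously on $\delta$. This gives $\sqrt{n}\hat D \xrightarrow{d} \mathcal{N}(h, \sigma_D^2)$, as already used in Proposition~\ref{prop:limit-risk}, so $T_n = O_p(1)$ still holds. The same bound then yields $w_n \xrightarrow{p} 1$ and $\sqrt{n}(\hat\tau_{\mathrm{hybrid}} - \hat\tau_{\mathrm{SI}}) \xrightarrow{p} 0$. Consequently the hybrid's limit is SI's, namely $\mathcal{N}(-h,\Omega_{11})$ relative to $\tau$, with $h = \kappa d$ under DGP~2, and the hybrid inherits SI's local bias in full.
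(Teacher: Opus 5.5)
Your proposal is correct and follows essentially the same route as the paper's proof. Both use the identity $\hat\tau_{\mathrm{hybrid}}-\hat\tau_{\mathrm{SI}}=(1-w_n)\hat D$, the bound $1-w_n=O_p(1/c_n)$ from $T_n=O_p(1)$ (under $H_0$ and along local alternatives alike), and uniform integrability to pass to second moments.

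Your extra care at the last step is warranted. The paper's proof asserts that UI of the hybrid alone delivers the variance ratio, but the ratio also needs $n\Var(\hat\tau_{\mathrm{SI}})\to\Omega_{11}>0$. The stated hypothesis does not supply that. A companion condition such as UI of $n\hat D^2$ does, combined with your bound $|\hat\tau_{\mathrm{SI}}-\hat\tau_{\mathrm{hybrid}}|=(1-w_n)|\hat D|\le|\hat D|$.
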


\begin{proof}
Under $H_0$, $T_n \xrightarrow{d} \mathcal{N}(0,1)$, so $T_n^2 = O_p(1)$ and $w_n = c_n/(c_n + T_n^2) = 1 - O_p(1/c_n) \xrightarrow{p} 1$. Since $\hat{\tau}_{\mathrm{hybrid}} - \hat{\tau}_{\mathrm{SI}} = (1 - w_n)(\hat{\tau}_{\mathrm{PPI++}} - \hat{\tau}_{\mathrm{SI}}) = O_p(c_n^{-1} n^{-1/2})$, the two are equivalent at the $\sqrt{n}$ scale, and the stated uniform-integrability condition upgrades that distributional equivalence to convergence of the second moments, hence of the variance ratio. Under $\delta_n = h/\sqrt{n}$, Proposition~\ref{prop:limit-risk}'s argument gives $T_n \xrightarrow{d} \mathcal{N}(h/\sigma_D, 1)$, still $O_p(1)$, so the same computation yields $w_n \xrightarrow{p} 1$.
\end{proof}

\begin{proposition}[Root-$n$ safety under a fixed alternative, diverging $c_n$]
\label{prop:hybrid-safety}
Fix a violation and let $\delta = \plim \hat{D} \neq 0$ be the \emph{estimator gap} it induces, which in the setting of Proposition~\ref{prop:si-bias} is $\kappa\,\delta_{\mathrm{direct}}$, the direct effect times the attenuation factor, rather than the direct effect itself. Suppose $\mathrm{SE}(\hat{D}) = \sigma_D n^{-1/2}(1 + o_p(1))$ with $\sigma_D > 0$, so that $T_n^2 = n\delta^2/\sigma_D^2\,(1 + o_p(1))$. Then, with $\hat{D} = \hat{\tau}_{\mathrm{PPI++}} - \hat{\tau}_{\mathrm{SI}}$ as defined in Section~\ref{sec:surrogacy-test}:
\begin{enumerate}[leftmargin=*,itemsep=2pt]
  \item $w_n \xrightarrow{p} 0$ if and only if $c_n/n \to 0$. Detectability alone, in the sense $|T_n| \to \infty$, is not sufficient: it leaves $w_n$ bounded away from zero whenever $c_n$ grows at rate $n$ or faster.
  \item $\sqrt{n}(\hat{\tau}_{\mathrm{hybrid}} - \hat{\tau}_{\mathrm{PPI++}}) = -\,w_n \sqrt{n}\,\hat{D} \xrightarrow{p} 0$, so the hybrid is root-$n$ equivalent to PPI++ and shares its limiting distribution, if and only if $c_n = o(\sqrt{n})$.
\end{enumerate}
Part~2 is the condition under which the hybrid inherits PPI++'s inference, and it is strictly stronger than part~1.
\end{proposition}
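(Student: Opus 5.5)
The plan is to reduce both parts to the behaviour of the deterministic ratio $c_n/(n\delta^2/\sigma_D^2)$. Write $T_n^2 = n\delta^2/\sigma_D^2\,(1+\eta_n)$ with $\eta_n = o_p(1)$, which the hypothesis on $\mathrm{SE}(\hat D)$ and $\hat D \xrightarrow{p} \delta$ deliver. Then
\[
  w_n = \frac{c_n}{c_n + T_n^2} = \frac{r_n}{r_n + (1+\eta_n)}, \qquad r_n = \frac{c_n \sigma_D^2}{n\delta^2}.
\]

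\emph{Part 1.} The function $x \mapsto x/(x+1)$ is continuous and strictly increasing on $[0,\infty)$, with value $0$ only at $x = 0$. If $c_n/n \to 0$, then $r_n \to 0$ and, since $1+\eta_n \xrightarrow{p} 1$, the continuous mapping theorem gives $w_n \xrightarrow{p} 0$. For the converse, suppose $c_n/n \not\to 0$. Then there is a subsequence on which $r_n \geq \epsilon > 0$. On the event $\{\eta_n \leq 1\}$, whose probability tends to one, $w_n \geq \epsilon/(\epsilon+2)$ along that subsequence, so $w_n \not\xrightarrow{p} 0$. The same bound supplies the remark in the statement: $|T_n| \to \infty$ in probability for every fixed $\delta \neq 0$, yet $w_n$ stays bounded away from zero whenever $\liminf c_n/n > 0$.

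\emph{Part 2.} The identity $\hat\tau_{\mathrm{hybrid}} - \hat\tau_{\mathrm{PPI++}} = w_n(\hat\tau_{\mathrm{SI}} - \hat\tau_{\mathrm{PPI++}}) = -w_n \hat D$ follows directly from Equation~\eqref{eq:hybrid}. Hence
\[
  \sqrt{n}\,(\hat\tau_{\mathrm{hybrid}} - \hat\tau_{\mathrm{PPI++}}) = -\sqrt{n}\,w_n\hat D,
\]
and $\hat D = \delta + o_p(1)$ with $\delta \neq 0$. So the claim is equivalent to $\sqrt{n}\,w_n \xrightarrow{p} 0$. Multiplying the display for $w_n$ by $\sqrt{n}$ gives
\[
  \sqrt{n}\,w_n = \frac{\sqrt{n}\,r_n}{r_n + 1 + \eta_n}.
\]
Here $\sqrt{n}\,r_n = c_n\sigma_D^2/(\sqrt{n}\,\delta^2)$, so $c_n = o(\sqrt{n})$ is equivalent to $\sqrt{n}\,r_n \to 0$.

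\emph{Sufficiency in part 2.} If $c_n = o(\sqrt{n})$, then $r_n \to 0$, so the denominator tends to $1$ in probability while the numerator tends to $0$. Hence $\sqrt{n}\,w_n \xrightarrow{p} 0$, and Slutsky's lemma transfers PPI++'s limiting distribution from Proposition~\ref{prop:joint-si-ppi} to the hybrid.

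\emph{Necessity in part 2.} This is the step needing care, because $c_n$ is only assumed to diverge and may grow faster than $n$. If $c_n = o(\sqrt{n})$ fails, take a subsequence with $c_n/\sqrt{n} \geq \epsilon$, so $\sqrt{n}\,r_n \geq \epsilon'$ for some $\epsilon' > 0$. I would split that subsequence into two cases.
\begin{itemize}
  \item If $r_n$ stays bounded by some $B$ on a further subsequence, then on $\{\eta_n \leq 1\}$ we have $\sqrt{n}\,w_n \geq \epsilon'/(B+2)$.
  \item If $r_n \to \infty$, then $w_n \to 1$ in probability and $\sqrt{n}\,w_n \to \infty$.
\end{itemize}
Either way $\sqrt{n}\,w_n$ does not vanish. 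Because $|\hat D| \geq |\delta|/2$ with probability tending to one, $\sqrt{n}\,w_n|\hat D|$ does not vanish either. Strict strength of part 2 over part 1 then follows from any $c_n$ with $\sqrt{n} \lesssim c_n = o(n)$, for example $c_n = n^{3/4}$: it satisfies part 1 but not part 2.
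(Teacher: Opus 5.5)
Your proof is correct and follows the paper's route: both rewrite $w_n$ through the ratio of $c_n$ to $n\delta^2/\sigma_D^2$, then multiply by $\sqrt{n}\hat D = \sqrt{n}\delta + O_p(1)$ and read off when the product vanishes. Your case split in the necessity half of part~2 is a real tightening. The paper's approximation $w_n \approx c_n\sigma_D^2/(n\delta^2)$ holds only when $c_n = o(n)$, and the paper leaves the regime where $c_n$ grows at rate $n$ or faster implicit, although part~1 already settles the claim there.
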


\begin{proof}
Write $w_n = c_n/(c_n + T_n^2)$ with $T_n^2 = n\delta^2/\sigma_D^2(1 + o_p(1))$. Then $w_n = (1 + n\delta^2/(c_n\sigma_D^2))^{-1}(1 + o_p(1))$, which tends to zero exactly when $n/c_n \to \infty$, proving part~1. For part~2, the hybrid satisfies $\hat{\tau}_{\mathrm{hybrid}} - \hat{\tau}_{\mathrm{PPI++}} = w_n(\hat{\tau}_{\mathrm{SI}} - \hat{\tau}_{\mathrm{PPI++}}) = -w_n\hat{D}$, and $\sqrt{n}\hat{D} = \sqrt{n}\delta + O_p(1)$, so
\[
  \bigl|w_n\sqrt{n}\hat{D}\bigr| = \frac{c_n\sigma_D^2}{n\delta^2}\bigl|\sqrt{n}\delta + O_p(1)\bigr|(1 + o_p(1)) = \frac{c_n \sigma_D^2}{|\delta|\sqrt{n}}(1 + o_p(1)),
\]
which tends to zero if and only if $c_n/\sqrt{n} \to 0$. Combining with Proposition~\ref{prop:hybrid-efficiency}, a sequence satisfying both requirements, for instance $c_n = (\log n)^2$, gives SI efficiency under $H_0$ and root-$n$ PPI++ equivalence under a fixed violation, while conceding the local-alternative regime entirely.
\end{proof}

\section{Proofs of the Structural Properties}
\label{app:proofs}

\subsection{Proof of Proposition~\ref{prop:si-bias} (Surrogate Index Bias under Partial Mediation)}

\begin{proof}
Write the surrogate index estimator as
$$\hat{\tau}^{(2)} = \frac{1}{n_1}\sum_{i: T_i=1} \hat{f}(S_i, X_i) - \frac{1}{n_0}\sum_{i: T_i=0} \hat{f}(S_i, X_i),$$
where $\hat{f}$ converges in probability to the population linear projection of $Y$ on $(1, S, X)$, which we write $f^*(S, X) = a_0 + a_S S + a_X X$. The projection omits $T$, and since $S$ responds to $T$ the omitted term $\delta T$ is reallocated to $(a_S, a_X)$; the first step is to compute that reallocation.

\emph{Step 1: the population OLS coefficients.} By the omitted-variable formula applied to the population projection,
$$(a_S, a_X) = (\beta_{YS}, \beta_{YX}) + \delta \cdot (a, b),$$
where $(a, b)$ are the coefficients in the population linear projection of $T$ on $(1, S, X)$. Write $q = p(1-p) = \Var(T_i)$, $\sigma_X^2 = \Var(X_i)$, and center all variables. The surrogate equation gives $\Cov(T, S) = \gamma_S q$, $\Var(S) = \beta_{SX}^2\sigma_X^2 + \gamma_S^2 q + \sigma_{\varepsilon_S}^2$, and $\Cov(S, X) = \beta_{SX}\sigma_X^2$, while randomization gives $\Cov(T, X) = 0$. The normal equations
$$\begin{pmatrix} \Var(S) & \Cov(S,X) \\ \Cov(S,X) & \sigma_X^2 \end{pmatrix}\begin{pmatrix} a \\ b\end{pmatrix} = \begin{pmatrix} \gamma_S q \\ 0 \end{pmatrix}$$
have determinant $\sigma_X^2(\gamma_S^2 q + \sigma_{\varepsilon_S}^2)$, so by Cramer's rule
$$a = \frac{\gamma_S q\,\sigma_X^2}{\sigma_X^2(\sigma_{\varepsilon_S}^2 + \gamma_S^2 q)} = \frac{\gamma_S\, p(1-p)}{\sigma_{\varepsilon_S}^2 + \gamma_S^2 p(1-p)},
\qquad b = -\beta_{SX}\,a.$$
The covariate drops out of $a$: because $\Cov(T, X) = 0$, adding $X$ to the projection changes neither the numerator nor the determinant's dependence on $\beta_{SX}$, so the partial coefficient of $T$ on $S$ is the same as in the projection of $T$ on $S$ alone after partialling out $X$. This gives $a_S = \beta_{YS} + \delta a$, which is \eqref{eq:pooled-slope}, and $a_X = \beta_{YX} - \beta_{SX}\delta a$.

\emph{Step 2: the surrogate index limit.} By the law of large numbers and the continuous mapping theorem,
$$\hat{\tau}^{(2)} \xrightarrow{p} a_S\bigl(\E[S_i | T_i = 1] - \E[S_i | T_i = 0]\bigr) + a_X\bigl(\E[X_i | T_i = 1] - \E[X_i | T_i = 0]\bigr).$$
By randomization the second difference is zero, and the surrogate equation gives $\E[S_i | T_i = 1] - \E[S_i | T_i = 0] = \gamma_S$. Therefore
$$\hat{\tau}^{(2)} \xrightarrow{p} \gamma_S\bigl(\beta_{YS} + \delta a\bigr) = \beta_{YS}\gamma_S + \delta\,\frac{\gamma_S^2 q}{\sigma_{\varepsilon_S}^2 + \gamma_S^2 q} = \tau - \delta\left(1 - \frac{\gamma_S^2 q}{\sigma_{\varepsilon_S}^2 + \gamma_S^2 q}\right) = \tau - \delta\kappa$$
with $\kappa = \sigma_{\varepsilon_S}^2/(\sigma_{\varepsilon_S}^2 + \gamma_S^2 q)$ as claimed.

\emph{Step 3: the oracle prediction function.} If $\hat{f}$ is replaced by $m(s,x) = \beta_{YS}s + \beta_{YX}x$, Step 1 is bypassed: $a_S = \beta_{YS}$ and the same argument gives $\plim \hat{\tau}^{(2)} = \beta_{YS}\gamma_S = \tau - \delta$, so the bias is exactly $-\delta$. The gap between the two statements is entirely the reallocation of $\delta T$ onto the pooled surrogate coefficient, and it vanishes as $\gamma_S^2 p(1-p)/\sigma_{\varepsilon_S}^2 \to 0$.
\end{proof}

\subsection{Proof of Proposition~\ref{prop:ppi-validity} (Unbiasedness of PPI at a Fixed Predictor)}

\begin{proof}
Condition on the treatment vector and on the labeled counts $n_{L,t} \geq 1$. Because $T$ is independent of the potential outcomes and the units are i.i.d., the units of arm $t$ are an i.i.d.\ sample from the distribution of $(X, S(t), Y(t))$, and $\hat{Y}_i = f(S_i, X_i)$ is a fixed function of each unit's own data. Under Assumption~\ref{asm:mcar} the labeled units of arm $t$ are a simple random subset of that arm of size $n_{L,t}$, drawn independently of the units' data. A simple random subset has the same expected mean as the set it is drawn from, so
$$\E\bigl[\bar{Y}_{L,t}\bigr] = \E[Y(t)], \qquad \E\bigl[\bar{\hat{Y}}_{L,t}\bigr] = \E\bigl[\bar{\hat{Y}}_t\bigr].$$
The first identity gives $\E[\hat{\tau}_Y^{\calL}] = \E[Y(1)] - \E[Y(0)] = \tau$, and the second, differenced across arms, gives $\E[\hat{\tau}_{\hat{Y}}^{\mathrm{all}} - \hat{\tau}_{\hat{Y}}^{\calL}] = 0$. Hence $\E[\hat{\tau}^{\mathrm{PPI}(\lambda)}] = \tau + \lambda \cdot 0 = \tau$ at every fixed $\lambda$, and the unconditional statement follows by the tower property.

For the fold-level statement, fix fold $k$ and condition on the fold assignment, the treatment vector, $\hat\beta_{(-k)}$, and the labeled count $n_{L,tk}$ of fold $k$ in each arm. The fit $\hat\beta_{(-k)}$ is a function of the data and labels of units outside fold $k$ only. The units of fold $k$ are independent of those, and under Assumption~\ref{asm:mcar} their labels, given $n_{L,tk}$, form a simple random subset of the fold's units in arm $t$ whatever the labels outside the fold. Every unit of fold $k$ is scored by the coefficient $\hat\beta_{(-k)}$, which is fixed under this conditioning, so the argument above applies within the fold and the fold's labeled mean of $\hat{Y}$ in arm $t$ has the conditional expectation of its all-units mean.
\end{proof}

\subsection{Proof of Proposition~\ref{prop:joint-si-ppi} (Joint Distribution with a Learned Linear Index)}

\begin{proof}
Throughout, $\hat{Q}_{(-k)}$ and $\hat\beta_{(-k)}$ are the out-of-fold Gram matrix and OLS coefficient of Equation~\eqref{eq:fold-beta}, $n_L = \sum_i L_i$, and $\mathcal{L}$ is the labeled set. All limits are as $n \to \infty$ with $K$, $p$, and $\pi_L$ fixed.

\emph{Step (i): decompose the surrogate index.} By the identity~\eqref{eq:si-fold-average},
$$\hat{\tau}_{\mathrm{SI}} = \sum_{k} \frac{n_k}{n}\,\hat{d}_k'\hat\beta_{(-k)} = \sum_k \frac{n_k}{n}\Bigl[\hat{d}_k'\beta_0 + d_0'(\hat\beta_{(-k)} - \beta_0) + (\hat{d}_k - d_0)'(\hat\beta_{(-k)} - \beta_0)\Bigr].$$
The first term is $\beta_0'\hat d$ with $\hat d = \sum_k (n_k/n)\hat d_k = \bar g_1 - \bar g_0$, an average of i.i.d.\ terms. The second is $d_0'$ times the fold average of the first-stage errors. The third is a sum of products of two $O_p(n^{-1/2})$ factors, because $\hat d_k - d_0 = O_p(n^{-1/2})$ (each fold holds $n/K$ units) and $\hat\beta_{(-k)} - \beta_0 = O_p(n^{-1/2})$ under the assumed moments and positive definiteness of $Q$ on the column space; it is therefore $O_p(1/n)$ and does not enter the limit.

\emph{Step (ii): the fold average of the first-stage errors.} A standard expansion of the out-of-fold OLS coefficient gives
$$\hat\beta_{(-k)} - \beta_0 = Q^{-1}\,\frac{1}{n_{L,(-k)}}\sum_{j \in \calL \setminus \mathcal{I}_k} g_j \varepsilon_j + O_p(1/n),$$
where $n_{L,(-k)} = |\calL \setminus \mathcal{I}_k|$ and the remainder collects $(\hat{Q}_{(-k)}^{-1} - Q^{-1})$ times a mean-zero average. Averaging over folds with weights $n_k/n$, each labeled unit $j$ appears in the $K - 1$ out-of-fold training sets that exclude its own fold, and it enters each with weight $1/n_{L,(-k)}$ against a fold weight $n_k/n$. Since the folds are equal blocks, $n_k/n = 1/K$ exactly, and $n_{L,(-k)} = n_L(K-1)/K + O_p(n^{1/2})$, so the total weight on $g_j\varepsilon_j$ is
$$\sum_{k \neq k(j)} \frac{n_k}{n}\,\frac{1}{n_{L,(-k)}} = (K-1)\cdot\frac{1}{K}\cdot\frac{K}{(K-1)n_L}\,\bigl(1 + O_p(n^{-1/2})\bigr) = \frac{1}{n_L}\bigl(1 + O_p(n^{-1/2})\bigr),$$
so the fold weights and the training-set sizes cancel exactly at leading order and
$$\sum_k \frac{n_k}{n}\bigl(\hat\beta_{(-k)} - \beta_0\bigr) = Q^{-1}\,\frac{1}{n_L}\sum_{j \in \calL} g_j\varepsilon_j + O_p(1/n).$$
Because $n_L/n \to \pi_L$ and $L$ is independent of $(T, X, S(0), S(1), Y(0), Y(1))$, hence of $(T, g, \varepsilon)$, $n_L^{-1}\sum_{j \in \calL} g_j \varepsilon_j = n^{-1}\sum_i (L_i/\pi_L) g_i\varepsilon_i + o_p(n^{-1/2})$. Combining with Step~(i),
$$\hat{\tau}_{\mathrm{SI}} - \tau_{\mathrm{SI}}^{0} = \frac{1}{n}\sum_i \left\{A_i\bigl(g_i'\beta_0 - \E[g'\beta_0\mid T_i]\bigr) + \frac{L_i}{\pi_L}\,d_0'Q^{-1}g_i\varepsilon_i\right\} + o_p(n^{-1/2}),$$
where the first summand is the influence function of the all-units difference in means of $g'\beta_0$ with arm shares estimated, which is why the arm-conditional mean is subtracted.

\emph{Step (iii): PPI++ is first-order insensitive to the first stage.} PPI++ is $\hat{\tau}_Y^{\calL} + \lambda(\hat{\tau}_{\hat{Y}}^{\mathrm{all}} - \hat{\tau}_{\hat{Y}}^{\calL})$ with one coefficient $\lambda$ common to both arms, and each unit of fold $k$ is scored by $\hat\beta_{(-k)}$. Write $w^{\mathrm{all}}_k$ for fold $k$'s contribution to the between-arm difference of mean design vectors over all units, $\sum_{i \in \mathcal{I}_k} g_i\{T_i/n_1 - (1 - T_i)/n_0\}$, and $w^{\calL}_k$ for the same sum over the labeled units of fold $k$ with $n_{L,1}$ and $n_{L,0}$ in place of $n_1$ and $n_0$, so that $\sum_k w^{\mathrm{all}}_k = \hat{d}_{\mathrm{all}}$ and $\sum_k w^{\calL}_k = \hat{d}_{\calL}$. Then, exactly,
$$\hat{\tau}_{\hat{Y}}^{\mathrm{all}} - \hat{\tau}_{\hat{Y}}^{\calL} = \sum_k (w^{\mathrm{all}}_k - w^{\calL}_k)'\hat\beta_{(-k)} = (\hat{d}_{\mathrm{all}} - \hat{d}_{\calL})'\beta_0 + \sum_k (w^{\mathrm{all}}_k - w^{\calL}_k)'(\hat\beta_{(-k)} - \beta_0).$$
Under MCAR the labeled units are a simple random subset, so within each fold the labeled and all-units weighted design sums differ by $O_p(n^{-1/2})$, that is, $w^{\mathrm{all}}_k - w^{\calL}_k = O_p(n^{-1/2})$; with $\hat\beta_{(-k)} - \beta_0 = O_p(n^{-1/2})$ and $K$ fixed, the last sum is $O_p(1/n)$. Hence $\hat\tau_{\mathrm{PPI}}$ has the same first-order expansion as the estimator that uses the population index $g'\beta_0$, which is the statement that PPI++ is orthogonal to the first stage at first order. Writing $Z_i = Y_i - \lambda g_i'\beta_0$, that estimator is the labeled difference in means of $Z$ plus $\lambda$ times the all-units difference in means of $g'\beta_0$, whose influence function is Equation~\eqref{eq:phi-ppi}. Its probability limit is $\tau$: randomization gives $\E[Y \mid T = 1] - \E[Y \mid T = 0] = \E[Y(1)] - \E[Y(0)]$, and under MCAR the labeled and all-units arm means of $g'\beta_0$ share their limit, so the $\lambda$ terms cancel. This step uses the independence of $T$ from the baseline covariates and potential outcomes, not from the observed post-treatment $S$ and $Y$.

\emph{Step (iv): the central limit theorem.} Stack the moment vector
$$\theta = \bigl(\beta,\; \mu_{Y,\calL,1},\; \mu_{Y,\calL,0},\; m_{g,\mathrm{all},1},\; m_{g,\mathrm{all},0},\; m_{g,\calL,1},\; m_{g,\calL,0}\bigr),$$
whose components are respectively the labeled-population OLS coefficient, the labeled arm means of $Y$, the all-units arm means of $g$, and the labeled arm means of $g$. Each is an i.i.d.\ average or a smooth function of one, and
\begin{align*}
  \tau_{\mathrm{SI}} &= (m_{g,\mathrm{all},1} - m_{g,\mathrm{all},0})'\beta, \\
  \tau_{\mathrm{PPI}} &= \mu_{Y,\calL,1} - \mu_{Y,\calL,0} + \lambda\bigl[(m_{g,\mathrm{all},1} - m_{g,\mathrm{all},0}) - (m_{g,\calL,1} - m_{g,\calL,0})\bigr]'\beta
\end{align*}
are smooth functions of $\theta$. Steps (i) to (iii) show that the cross-fitting leaves the delta-method expansion of these two functions unchanged up to $o_p(n^{-1/2})$, so $\sqrt{n}$ times the pair of centered estimators equals $n^{-1/2}\sum_i \phi_i + o_p(1)$ with $\phi_i$ given by Equations~\eqref{eq:phi-si} and~\eqref{eq:phi-ppi}. The summands are i.i.d.\ with mean zero and finite second moments under $\E\|g\|^4 < \infty$, $\E Y^4 < \infty$, and positive definiteness of $Q$ on the column space, so the Lindeberg--L\'evy theorem gives Equation~\eqref{eq:joint-clt} with $\Omega = \E[\phi_i\phi_i']$. When $\calL$ is instead a uniformly random subset of fixed size, the summands are exchangeable rather than independent; conditionally on the data the labeled averages are means under sampling without replacement, whose central limit theorem gives the same limit, because for a centered $h$ with variance $\sigma_h^2$ the difference between its labeled and all-units means has variance $(1 - \pi_L)\sigma_h^2/(n\pi_L)$ under either scheme. Consistency of $\hat\Omega$ follows from the law of large numbers applied to the plug-ins together with the continuous mapping theorem.
\end{proof}

\subsection{Proof of Proposition~\ref{prop:efficiency} (Efficiency Ordering)}

\begin{proof}
We work with the exact variance $V(\lambda)$ of Equation~\eqref{eq:exact-variance}, a quadratic in $\lambda$ with leading coefficient $\sum_t a_t v_t$, $a_t = 1/n_{L,t} - 1/n_t \geq 0$ and $v_t = \sigma^2_{\hat{Y},t}$. The labeled-only estimator corresponds to $\lambda = 0$, and $V(0) = \sum_t \sigma_{Y,t}^2/n_{L,t} = \Var(\hat{\tau}^{(0)})$. Since $0 \in [0,1] \subset \R$, the minimum over $[0,1]$ is at most $V(0)$ and the minimum over $\R$ is at most the minimum over $[0,1]$, which gives Equation~\eqref{eq:efficiency-ordering}. If some arm has $v_t > 0$ and $n_{L,t} < n_t$, the leading coefficient is positive and $V$ is strictly convex with unique minimizer
$$\lambda^* = \frac{\sum_t a_t\,\gamma_t}{\sum_t a_t\,v_t},$$
which is Equation~\eqref{eq:lambda-exact} at population moments; the clipped minimizer is $\lambda^*_c = \clip(\lambda^*, 0, 1)$. Strict convexity makes the first inequality strict when $\lambda^* \notin [0,1]$, and the second strict when $\lambda^*_c > 0$, that is, when $\lambda^* > 0$. At $\pi_L = 1$ every $a_t$ is zero, $V$ is constant, and the three variances coincide. When the arms share $(\gamma_t, v_t)$, $\lambda^* = \gamma/v$ does not depend on $\pi_L$, whereas the plug-in rule of Equation~\eqref{eq:lambda-opt} does; the ordering concerns only the minimizers of $V$ over the two feasible sets.
\end{proof}

\section{Proofs for the Variance Correction, the Diagnostic, and the Fixed-\texorpdfstring{$c$}{c} Hybrid}
\label{app:hybrid-proofs}

\subsection{Proof of Proposition~\ref{prop:corrected-variance} (Overlap-Corrected PPI++ Variance)}

\begin{proof}
The PPI++ estimator decomposes within each treatment arm $t$ as $\hat{\tau}_t = \bar{Y}_{L,t} + \lambda(\bar{\hat{Y}}_t - \bar{\hat{Y}}_{L,t})$.
The all-data mean decomposes as $\bar{\hat{Y}}_t = (n_{L,t}/n_t)\bar{\hat{Y}}_{L,t} + (n_{U,t}/n_t)\bar{\hat{Y}}_{U,t}$, so
$\bar{\hat{Y}}_t - \bar{\hat{Y}}_{L,t} = (n_{U,t}/n_t)(\bar{\hat{Y}}_{U,t} - \bar{\hat{Y}}_{L,t})$ and
$\hat{\tau}_t = \bar{Y}_{L,t} + \lambda (n_{U,t}/n_t)(\bar{\hat{Y}}_{U,t} - \bar{\hat{Y}}_{L,t})$.

Condition on $n_t$ and $n_{L,t}$ and hold the prediction function and $\lambda$ fixed. Since labeled and unlabeled units are disjoint and i.i.d.\ within the arm, $\bar{Y}_{L,t}$ and $\bar{\hat{Y}}_{U,t}$ are independent, while $\bar{Y}_{L,t}$ and $\bar{\hat{Y}}_{L,t}$ are correlated. The exact variance is
$$\Var(\hat{\tau}_t) = \frac{\sigma_{Y,t}^2}{n_{L,t}} + \lambda^2 \frac{n_{U,t}^2}{n_t^2}\sigma_{\hat{Y},t}^2\!\left(\frac{1}{n_{U,t}} + \frac{1}{n_{L,t}}\right) - \frac{2\lambda n_{U,t}}{n_t}\frac{\gamma_t}{n_{L,t}}.$$

The plug-in variance formula treats the all-data and labeled-data terms as independent, yielding:
$$V_{\mathrm{plug},t} = \frac{\sigma_{Y,t}^2}{n_{L,t}} + \lambda^2 \sigma_{\hat{Y},t}^2\!\left(\frac{1}{n_t} + \frac{1}{n_{L,t}}\right) - \frac{2\lambda \gamma_t}{n_{L,t}}.$$

The two covariances that this independence assumption discards are exactly the ones the $n_{L,t}$ shared units create, $\Cov(\bar{Y}_{L,t}, \bar{\hat{Y}}_t) = \gamma_t/n_t$ and $\Cov(\bar{\hat{Y}}_t, \bar{\hat{Y}}_{L,t}) = \sigma_{\hat{Y},t}^2/n_t$, entering the variance with weights $2\lambda$ and $-2\lambda^2$ respectively. The difference $C_t = \Var(\hat{\tau}_t) - V_{\mathrm{plug},t}$ therefore simplifies to $C_t = 2\lambda(\gamma_t - \lambda\sigma_{\hat{Y},t}^2)/n_t$
after algebraic cancellation using $n_{U,t} = n_t - n_{L,t}$, which is Equation~\eqref{eq:overlap-intuition}; substituting the plug-in rule $\lambda = \lambda_t^*/(1 + \pi_L)$ with $\lambda_t^* = \gamma_t/\sigma_{\hat{Y},t}^2$ gives $\gamma_t - \lambda\sigma_{\hat{Y},t}^2 = \gamma_t\pi_L/(1+\pi_L)$ and hence Equation~\eqref{eq:overlap-piL}. Summing over arms, and using $(n_{U,t}/n_t)^2(1/n_{U,t} + 1/n_{L,t}) = (n_{U,t}/n_t)/n_{L,t} = a_t$, gives Equation~\eqref{eq:exact-variance}.

For the limit, both $V(\lambda)$ and $\hat{V}_{\mathrm{corrected}}$ tend to zero, so we work on the $n$ scale. Write $nV(\lambda) = \sum_t [\sigma^2_{Y,t}\,n/n_{L,t} + n a_t(\lambda^2 v_t - 2\lambda\gamma_t)]$, and $n\hat{V}_{\mathrm{corrected}}$ as the same expression with $(\hat{\sigma}^2_{Y,t}, \hat{v}_t, \hat{\gamma}_t)$ in place of the population moments. The coefficients $n/n_{L,t}$ and $n a_t$ converge to finite limits, and the sample moments are consistent, so $n\hat{V}_{\mathrm{corrected}} - nV(\lambda) \xrightarrow{p} 0$ by the continuous mapping theorem; with an estimate $\hat\lambda \xrightarrow{p} \lambda$ the same holds because the expression is continuous in $\lambda$. Since $nV(\lambda)$ converges to a positive limit, the ratio $\hat{V}_{\mathrm{corrected}}/V(\lambda)$ tends to one in probability. Combined with asymptotic normality of $\sqrt{n}(\hat{\tau}^{\mathrm{PPI}(\lambda)} - \tau)$, Slutsky's theorem gives asymptotically exact Wald coverage.
\end{proof}

\subsection{Proof of Proposition~\ref{prop:surrogacy-test} (Diagnostic Test)}

\begin{proof}
Under $H_0$, $\tau_{\mathrm{SI}}^{0} = d_0'\beta_0 = \tau$, so both coordinates of Equation~\eqref{eq:joint-clt} are centered at $\tau$. Since $\hat{D} = \hat{\tau}_{\mathrm{PPI++}} - \hat{\tau}_{\mathrm{SI}}$ is the contrast $(-1, 1)$ of the pair, Proposition~\ref{prop:joint-si-ppi} gives $\sqrt{n}\,\hat{D} \xrightarrow{d} \mathcal{N}(0, \sigma_D^2)$ with $\sigma_D^2 = (1,-1)\Omega(1,-1)' = \Omega_{11} + \Omega_{22} - 2\Omega_{12}$. Consistency of $\hat\Omega$ gives $n\widehat{\Var}(\hat D) = (1,-1)\hat\Omega(1,-1)' \xrightarrow{p} \sigma_D^2$, and because $\sigma_D^2 > 0$ the square root is continuous there, so $\sqrt{n}\,\mathrm{SE}(\hat D) \xrightarrow{p} \sigma_D$ and Slutsky's theorem gives $T_n = \sqrt{n}\hat{D}/\{\sqrt{n}\,\mathrm{SE}(\hat{D})\} \xrightarrow{d} \mathcal{N}(0,1)$.
\end{proof}

\paragraph{The covariance under a fixed prediction function.} For comparison, and because it is the form the surrogate-index literature uses, we record the closed form of $\Cov(\hat{\tau}_{\mathrm{SI}}, \hat{\tau}_{\mathrm{PPI++}})$ when $f$ is fixed rather than learned (Remark~\ref{rem:fixed-f-cov}). There $\hat{\tau}_{\mathrm{SI}} = \hat{\tau}_{\hat{Y}}^{\mathrm{all}}$ and $\hat{\tau}_{\mathrm{PPI++}} = \hat{\tau}_Y^{\mathcal{L}} + \lambda^*(\hat{\tau}_{\hat{Y}}^{\mathrm{all}} - \hat{\tau}_{\hat{Y}}^{\mathcal{L}})$, so by linearity $\Cov(\hat{\tau}_{\mathrm{SI}}, \hat{\tau}_{\mathrm{PPI++}}) = \Cov(\hat{\tau}_{\hat{Y}}^{\mathrm{all}}, \hat{\tau}_Y^{\mathcal{L}}) + \lambda^* \Var(\hat{\tau}_{\hat{Y}}^{\mathrm{all}}) - \lambda^* \Cov(\hat{\tau}_{\hat{Y}}^{\mathrm{all}}, \hat{\tau}_{\hat{Y}}^{\mathcal{L}})$. Within each arm $t$, $\Cov(\bar{\hat{Y}}_t, \bar{Y}_{L,t}) = \gamma_t/n_t$ and $\Cov(\bar{\hat{Y}}_t, \bar{\hat{Y}}_{L,t}) = \sigma_{\hat{Y},t}^2/n_t$ from the $n_{L,t}$ shared units, the $\lambda^*$-dependent terms cancel, and $\Cov(\hat{\tau}_{\mathrm{SI}}, \hat{\tau}_{\mathrm{PPI++}}) = \sum_t \gamma_t/n_t$. This identity does not extend to the learned index: it omits the first-stage term of Equation~\eqref{eq:phi-si}, which enters both $\Var(\hat{\tau}_{\mathrm{SI}})$ and the covariance.

\subsection{Hybrid Risk by Regime}

We claim neither a uniform-oracle MSE bound for the hybrid nor consistency of its plug-in interval (Remark~\ref{rem:ci-validity}). Proposition~\ref{prop:limit-risk} is the operative result; the two paragraphs below record the endpoints it interpolates between.

\emph{Fixed detectable alternatives.} If $|\delta|/\mathrm{SE}(\hat{D}) \to \infty$ then $T_n^2 \to \infty$ in probability and $w \xrightarrow{p} 0$, so the hybrid converges to PPI++ and the interval collapses onto the exact-variance PPI++ interval. The interval inherits PPI validity in this regime, under Assumption~\ref{asm:mcar} and the regularity conditions of Proposition~\ref{prop:joint-si-ppi}.

\emph{Local alternatives.} Along $\delta_n$ with $\sqrt{n}\,\E[\hat{D}] \to h$, the statistic $T_n$ converges in distribution to $\mathcal{N}(h/\sigma_D, 1)$, so $w$ converges in distribution to a non-degenerate random variable in $(0, 1)$ and the hybrid is a random convex combination of SI and PPI++. The expected MSE involves the distribution of $w$ times the SI bias, which is non-zero and does not vanish; Proposition~\ref{prop:limit-risk} characterizes this regime exactly. This is the regime in which the Hillstrom hybrid carries 10.7\% relative bias,
and it is also the regime in which the plug-in interval has no coverage guarantee: the standardized gap $h/\sigma_D$ is estimated with error that stays $O_p(1)$ at exactly the scale on which the weight varies, so the simulated percentile distribution does not converge to the hybrid's sampling distribution \citep{leeb2005model}. We therefore present the fixed-$c$ hybrid as an exploratory shrinkage estimator with an exact limit-experiment risk characterization and an empirically reported interval, not as a theorem-backed inferential procedure. The fixed $c = 1.5$ is justified by the limit-experiment and finite-sample minimax analyses of Appendix~\ref{app:hybrid}, not by an MSE oracle bound.

\subsection{Proof of Proposition~\ref{prop:limit-risk} (Limit-Experiment Risk)}

\begin{proof}
Under the stated local sequence and the asymptotic linearity maintained in Proposition~\ref{prop:surrogacy-test}, the pair $\bigl(\sqrt{n}(\hat{\tau}_{\mathrm{SI}} - \tau),\, \sqrt{n}(\hat{\tau}_{\mathrm{PPI++}} - \tau)\bigr)$ converges jointly to a bivariate normal vector $(G_S, G_P)$ with means $(-h, 0)$, variances $(\sigma_S^2, \sigma_P^2)$, and covariance $\sigma_{SP}$: the local shift moves the SI mean by $-h$, while PPI++ stays centered at $\tau$ at first order under MCAR (Proposition~\ref{prop:joint-si-ppi}). Here $h$ is by definition the limit of $\sqrt{n}\,\E[\hat{D}]$, so no separate accounting for the attenuation factor of Proposition~\ref{prop:si-bias} is needed: under DGP~2 with $\delta_n = d/\sqrt{n}$, Proposition~\ref{prop:si-bias} gives $h = \kappa d$, and every quantity in the risk formula is expressed in terms of $h$ rather than $d$. Define $V = G_P - G_S \sim \mathcal{N}(h, \sigma_D^2)$; then $\sqrt{n}\hat{D} \xrightarrow{d} V$ jointly, and since $\sqrt{n}\,\mathrm{SE}(\hat{D}) \xrightarrow{p} \sigma_D$, Slutsky gives $T_n \xrightarrow{d} V/\sigma_D$ jointly with $(G_S, G_P)$. Writing
\[
  \sqrt{n}(\hat{\tau}_{\mathrm{hybrid}} - \tau)
  = \sqrt{n}(\hat{\tau}_{\mathrm{PPI++}} - \tau) - w(T_n)\,\sqrt{n}\hat{D},
\]
the map $(g, v) \mapsto g - w(v/\sigma_D)v$ is continuous and $w$ is bounded, so the continuous mapping theorem yields \eqref{eq:limit-H}.

For the risk, $\Cov(G_P, V) = \Cov(G_P, G_P - G_S) = \sigma_P^2 - \sigma_{SP} = \varrho\,\sigma_D^2$. Decompose $G_P = \varrho(V - h) + U$ with $U \perp V$ and $\Var(U) = \sigma_P^2 - \varrho^2\sigma_D^2$. Then $H = U + \varrho(V - h) - w(V/\sigma_D)V$, and since $U$ is independent of $V$,
\[
  \E[H^2] = \Var(U) + \E_V\!\left[\bigl(\varrho(V - h) - w(V/\sigma_D)V\bigr)^2\right],
\]
which is \eqref{eq:limit-risk}. The function $v \mapsto w(v/\sigma_D)v = c\,v/(c + v^2/\sigma_D^2)$ is bounded (by $\sigma_D\sqrt{c}/2$ in absolute value), so dominated convergence gives continuity in $(h, c)$ and, since $w(V/\sigma_D) \xrightarrow{p} 0$ as $|h| \to \infty$ while $\E[\varrho^2(V-h)^2] = \varrho^2\sigma_D^2$, the limit $r(h; c) \to \sigma_P^2 - \varrho^2\sigma_D^2 + \varrho^2\sigma_D^2 = \sigma_P^2$.
\end{proof}

Numerically, at the calibrated covariance ($\sigma_S = 2.53$, $\sigma_P = 5.21$, $\mathrm{corr}_{SP} = 0.52$; estimated from $R = 600$ replications of the featured cell), the finite-sample check at $n = 10{,}000$ gives $n \times \mathrm{MSE}$ of $13.3$, $29.4$, and $30.0$ at $\theta \in \{0, 2, 4\}$ against limit-formula values of $14.3$, $32.1$, and $32.8$: the finite-sample values fall $7.2\%$, $8.3\%$, and $8.4\%$ below the limit values.

\section{Extended Power Curves for the SI--PPI++ Estimator-Disagreement Diagnostic}
\label{app:power-curves}

Table~\ref{tab:power-curves} reports the diagnostic's power across $n \in \{1{,}000,\; 10{,}000,\; 100{,}000\}$ and $\pi_L \in \{0.05, 0.20, 0.50\}$, one- and two-sided at $\alpha = 0.05$. At $\rho = 0$ the size is near nominal at every labeled fraction and sample size, 3.4--7.6\% one-sided and 3.6--9.0\% two-sided; with $\mathrm{SE}(\hat{D})$ carrying the first-stage term, the $\pi_L = 0.50$ sizes run 4.0--5.6\% one-sided. Power grows with $n$, $\pi_L$, and $\rho$, and the one-sided test dominates the two-sided one throughout. Table~\ref{tab:frontier-coverage} reports SI and PPI++ coverage from an independent run on a $\rho$ grid in steps of $0.05$; its damage-edge readings enter the grid-resolution bracket of Section~\ref{sec:detection-damage}, while the band edges themselves come from the paired run behind Figure~\ref{fig:detection-damage}.

\begin{table}[!ht]
\centering
\begin{minipage}[t]{0.52\textwidth}
\centering
\caption{Power of the SI--PPI++ estimator-disagreement diagnostic.}
\label{tab:power-curves}
\scriptsize
\setlength{\tabcolsep}{3pt}
\begin{tabular}{@{}rrlrrr@{}}
\toprule
$\boldsymbol{n}$ & $\boldsymbol{\pi_L}$ & $\boldsymbol{\rho}$ & $\boldsymbol{\delta}$ & \textbf{1-sided} & \textbf{2-sided} \\
\midrule
1,000 & 0.05 & 0.0 & 0.000 & 0.076 & 0.090 \\
      &      & 0.4 & 0.100 & 0.110 & 0.088 \\
      &      & 0.8 & 0.600 & 0.612 & 0.520 \\
\midrule
1,000 & 0.20 & 0.0 & 0.000 & 0.034 & 0.040 \\
      &      & 0.4 & 0.100 & 0.126 & 0.084 \\
      &      & 0.8 & 0.600 & 0.994 & 0.982 \\
\midrule
1,000 & 0.50 & 0.0 & 0.000 & 0.056 & 0.052 \\
      &      & 0.4 & 0.100 & 0.320 & 0.194 \\
      &      & 0.8 & 0.600 & 1.000 & 1.000 \\
\midrule
10,000 & 0.05 & 0.0 & 0.000 & 0.042 & 0.036 \\
       &      & 0.4 & 0.100 & 0.296 & 0.202 \\
       &      & 0.8 & 0.600 & 1.000 & 1.000 \\
\midrule
10,000 & 0.20 & 0.0 & 0.000 & 0.034 & 0.040 \\
       &      & 0.1 & 0.017 & 0.084 & 0.062 \\
       &      & 0.2 & 0.038 & 0.220 & 0.132 \\
       &      & 0.3 & 0.064 & 0.382 & 0.294 \\
       &      & 0.4 & 0.100 & 0.738 & 0.630 \\
       &      & 0.5 & 0.150 & 0.946 & 0.924 \\
       &      & 0.8 & 0.600 & 1.000 & 1.000 \\
\midrule
10,000 & 0.50 & 0.0 & 0.000 & 0.052 & 0.048 \\
       &      & 0.4 & 0.100 & 0.982 & 0.946 \\
       &      & 0.8 & 0.600 & 1.000 & 1.000 \\
\midrule
100,000 & 0.05 & 0.0 & 0.000 & 0.056 & 0.054 \\
        &      & 0.4 & 0.100 & 0.980 & 0.952 \\
        &      & 0.8 & 0.600 & 1.000 & 1.000 \\
\midrule
100,000 & 0.20 & 0.0 & 0.000 & 0.050 & 0.048 \\
        &      & 0.1 & 0.017 & 0.322 & 0.224 \\
        &      & 0.2 & 0.038 & 0.866 & 0.786 \\
        &      & 0.3 & 0.064 & 1.000 & 1.000 \\
        &      & 0.4 & 0.100 & 1.000 & 1.000 \\
\midrule
100,000 & 0.50 & 0.0 & 0.000 & 0.040 & 0.046 \\
        &      & 0.2 & 0.038 & 1.000 & 1.000 \\
        &      & 0.4 & 0.100 & 1.000 & 1.000 \\
\bottomrule
\end{tabular}
\smallskip

{\scriptsize \noindent Notes: DGP~2, $R = 500$ per cell, all-units five-fold cross-fitting, $\alpha = 0.05$, $\mathrm{SE}(\hat{D})$ from the joint sandwich. The grid is independent of the paired run of Section~\ref{sec:detection-damage}, which puts the $\rho = 0.2$, $n = 10{,}000$ cell at $0.136$ two-sided against $0.132$ here.}
\end{minipage}\hfill
\begin{minipage}[t]{0.44\textwidth}
\centering
\caption{Coverage on the $\rho$ grid behind the detection-damage band.}
\label{tab:frontier-coverage}
\scriptsize
\setlength{\tabcolsep}{4pt}
\begin{tabular}{@{}rrrrr@{}}
\toprule
& \multicolumn{2}{c}{$\boldsymbol{n = 10{,}000}$} & \multicolumn{2}{c}{$\boldsymbol{n = 100{,}000}$} \\
\cmidrule(lr){2-3}\cmidrule(lr){4-5}
$\boldsymbol{\rho}$ & \textbf{SI} & \textbf{PPI++} & \textbf{SI} & \textbf{PPI++} \\
\midrule
0.00 & $0.950$ & $0.946$ & $0.952$ & $0.960$ \\
0.05 & $\mathbf{0.924}$ & $0.954$ & $\mathbf{0.804}$ & $0.952$ \\
0.10 & $\mathbf{0.884}$ & $0.942$ & $\mathbf{0.454}$ & $0.958$ \\
0.15 & $\mathbf{0.806}$ & $\mathbf{0.928}$ & $\mathbf{0.088}$ & $0.952$ \\
0.20 & $\mathbf{0.662}$ & $0.956$ & $\mathbf{0.008}$ & $0.932$ \\
0.25 & $\mathbf{0.514}$ & $0.950$ & $\mathbf{0.000}$ & $0.960$ \\
0.30 & $\mathbf{0.308}$ & $0.944$ & $\mathbf{0.000}$ & $0.946$ \\
0.40 & $\mathbf{0.036}$ & $0.952$ & $\mathbf{0.000}$ & $0.946$ \\
0.50 & $\mathbf{0.000}$ & $0.936$ & $\mathbf{0.000}$ & $0.944$ \\
0.60 & $\mathbf{0.000}$ & $\mathbf{0.970}$ & $\mathbf{0.000}$ & $\mathbf{0.970}$ \\
\bottomrule
\end{tabular}
\smallskip

{\scriptsize \noindent Notes: DGP~2, $\pi_L = 0.20$, $R = 500$ per cell, all-units five-fold cross-fitting. \textbf{Bold} coverage entries fall outside the band $[0.931, 0.969]$, in either direction. PPI++ is the primary configuration and the surrogate index uses the joint sandwich variance. The band edges of Section~\ref{sec:detection-damage} come instead from the paired run and its refined local grid (Table~\ref{tab:edge-ci}).}
\end{minipage}
\end{table}

\section{Cross-PPI Comparison}
\label{app:cross-ppi}

Cross-PPI \citep{zrnic2024cross} applies cross-fitting to both the prediction model and the PPI correction. Across DGPs 1 and 2 (Table~\ref{tab:cross-ppi}), Cross-PPI and PPI++ under our all-units $K = 5$ protocol produce nearly identical results. RMSE values differ by at most 1.2\%, and the near-equivalence is expected: with every unit already scored out of fold, the additional cross-fitting in Cross-PPI provides only marginal improvement.

\begin{table}[!htbp]
\centering
\caption{Cross-PPI versus PPI++.}
\label{tab:cross-ppi}
\footnotesize
\setlength{\tabcolsep}{3pt}
\begin{tabular}{@{}llrrrrr@{}}
\toprule
\textbf{DGP} & \textbf{Method} & \textbf{Bias} & \textbf{RMSE} & \textbf{Coverage} & \textbf{RE vs LO} & \textbf{ESS mult.} \\
\midrule
DGP 1 & LO & $-0.0002$ & $0.066$ & $0.960$ & $1.000$ & $1.00$ \\
DGP 1 & PPI++ & $-0.003$ & $0.050$ & $\mathbf{0.970}$ & $1.326$ & $1.76$ \\
DGP 1 & Cross-PPI & $-0.003$ & $0.050$ & $0.948$ & $1.330$ & $1.77$ \\
\midrule
DGP 2 ($\rho{=}0$) & LO & $+0.004$ & $0.072$ & $0.948$ & $1.000$ & $1.00$ \\
DGP 2 ($\rho{=}0$) & PPI++ & $+0.002$ & $0.050$ & $0.952$ & $1.443$ & $2.08$ \\
DGP 2 ($\rho{=}0$) & Cross-PPI & $+0.002$ & $0.051$ & $\mathbf{0.926}$ & $1.426$ & $2.03$ \\
\midrule
DGP 2 ($\rho{=}0.2$) & LO & $+0.002$ & $0.069$ & $0.950$ & $1.000$ & $1.00$ \\
DGP 2 ($\rho{=}0.2$) & PPI++ & $+0.002$ & $0.046$ & $\mathbf{0.972}$ & $1.481$ & $2.19$ \\
DGP 2 ($\rho{=}0.2$) & Cross-PPI & $+0.002$ & $0.047$ & $0.960$ & $1.468$ & $2.16$ \\
\midrule
DGP 2 ($\rho{=}0.4$) & LO & $-0.003$ & $0.068$ & $0.950$ & $1.000$ & $1.00$ \\
DGP 2 ($\rho{=}0.4$) & PPI++ & $-0.004$ & $0.049$ & $0.960$ & $1.372$ & $1.88$ \\
DGP 2 ($\rho{=}0.4$) & Cross-PPI & $-0.004$ & $0.050$ & $0.942$ & $1.367$ & $1.87$ \\
\bottomrule
\end{tabular}
\smallskip

{\scriptsize \noindent Notes: $n = 10{,}000$, $\pi_L = 0.20$, $R = 500$, all-units five-fold cross-fitting. The Monte Carlo band around the nominal 95\% is $[0.931, 0.969]$ and \textbf{bold} coverage entries fall outside it, in either direction. RE is the RMSE ratio $\mathrm{RMSE}_{\mathrm{LO}}/\mathrm{RMSE}_{\mathrm{method}}$ and the ESS-multiplier column is its square. PPI++ is the primary configuration; Cross-PPI uses its own cross-fitted correction, so the two rows differ in both the fitting scheme and the variance construction.}
\end{table}

\section{Additional Supplementary Results}
\label{app:additional}

\subsection{Timing, Robustness Checks, and Full Comparison Tables}

\paragraph{Timing benchmarks.}

Table~\ref{tab:timing} reports the median wall-clock cost of each component at two sample sizes.

\begin{table}[!ht]
\centering
\caption{Computational timing for key components.}
\label{tab:timing}
\footnotesize
\begin{tabular}{@{}lrr@{}}
\toprule
\textbf{Component} & \textbf{$n = 10{,}000$} & \textbf{$n = 100{,}000$} \\
\midrule
SI estimation & 1.5ms & 15.0ms \\
PPI++ estimation & 749$\mu$s & 6.3ms \\
Hybrid CI ($M = 10{,}000$ draws) & 566$\mu$s & 835$\mu$s \\
PPI++ Bootstrap ($B = 200$) & 77ms & 604ms \\
OLS prediction & 650$\mu$s & 5.7ms \\
GBT prediction & 698ms & 9.2s \\
\bottomrule
\end{tabular}
\smallskip

{\scriptsize \noindent Notes: median wall-clock time per operation over $R = 20$ replications on DGP~1 at $\pi_L = 0.20$, on the reference machine (8 cores, Apple silicon). The prediction rows are the full five-fold cross-fitted fit-and-predict step. SI estimation includes the joint sandwich of Proposition~\ref{prop:joint-si-ppi}, which is why it costs about twice PPI++ rather than a third of it. The hybrid CI cost is nearly independent of $n$ because it operates on pre-computed summary statistics (the $2 \times 2$ covariance matrix) via Monte Carlo sampling.}
\end{table}

\paragraph{Multi-surrogate DGP~7 results.}

Table~\ref{tab:dgp7} reports every method on the three-surrogate design across labeled fractions.

\begin{table}[!ht]
\centering
\caption{DGP 7 (Multi-Surrogate, $J = 3$): method performance across labeled fractions.}
\label{tab:dgp7}
\footnotesize
\begin{tabular}{@{}llrrrrr@{}}
\toprule
$\boldsymbol{\pi_L}$ & \textbf{Method} & \textbf{Bias} & \textbf{RMSE} & \textbf{Coverage} & \textbf{RE} & \textbf{ESS mult.} \\
\midrule
\multirow{4}{*}{0.05} & Labeled-Only & $-0.001$ & 0.136 & 0.950 & 1.00 & 1.00 \\
& Surrogate Index & $+0.001$ & 0.025 & 0.952 & 5.36 & 28.7 \\
& PPI++ & $-0.001$ & 0.097 & 0.944 & 1.40 & 1.96 \\
& Composite Proxy & $+0.003$ & 0.022 & 0.936 & 6.30 & 39.7 \\
\midrule
\multirow{4}{*}{0.20} & Labeled-Only & $-0.003$ & 0.069 & 0.940 & 1.00 & 1.00 \\
& Surrogate Index & $+0.001$ & 0.023 & 0.956 & 3.04 & 9.22 \\
& PPI++ & $+0.001$ & 0.052 & 0.940 & 1.34 & 1.79 \\
& Composite Proxy & $+0.004$ & 0.021 & 0.954 & 3.28 & 10.7 \\
\midrule
\multirow{4}{*}{0.50} & Labeled-Only & $-0.0002$ & 0.041 & 0.968 & 1.00 & 1.00 \\
& Surrogate Index & $+0.0001$ & 0.023 & 0.960 & 1.82 & 3.31 \\
& PPI++ & $+0.0003$ & 0.033 & 0.964 & 1.23 & 1.50 \\
& Composite Proxy & $+0.003$ & 0.021 & 0.958 & 1.97 & 3.89 \\
\bottomrule
\end{tabular}
\smallskip

{\scriptsize \noindent Notes: $n = 10{,}000$, $K_{\text{hist}} = 50$, $R = 500$, all-units five-fold cross-fitting. The Monte Carlo band around the nominal 95\% is $[0.931, 0.969]$ and every entry is inside it. PPI++ is the primary configuration and the surrogate index uses the joint sandwich variance. RE is the RMSE ratio $\mathrm{RMSE}_{\mathrm{LO}}/\mathrm{RMSE}_{\mathrm{method}}$ and the ESS-multiplier column is its square. RE values should be read alongside coverage.}
\end{table}

\paragraph{GBT against OLS.}

Table~\ref{tab:gbt} sets the gradient-boosted-tree prediction model against OLS on the representative cells.

\begin{table}[!ht]
\centering
\caption{GBT versus OLS prediction models across representative DGPs.}
\label{tab:gbt}
\footnotesize
\begin{tabular}{@{}llllrrrr@{}}
\toprule
\textbf{DGP} & $\boldsymbol{\pi_L}$ & \textbf{Model} & \textbf{Method} & \textbf{Bias} & \textbf{RMSE} & \textbf{Coverage} & \textbf{RE} \\
\midrule
1 & 0.05 & OLS & Surrogate Index & $-0.001$ & 0.027 & 0.950 & 5.59 \\
1 & 0.05 & GBT & Surrogate Index & $-0.003$ & 0.027 & 0.950 & 5.59 \\
1 & 0.05 & OLS & PPI++ & $+0.007$ & 0.093 & 0.955 & 1.60 \\
1 & 0.05 & GBT & PPI++ & $+0.004$ & 0.098 & 0.950 & 1.51 \\
\midrule
2 ($\rho{=}0.4$) & 0.20 & OLS & Surrogate Index & $-0.099$ & 0.102 & \textbf{0.030} & --- \\
2 ($\rho{=}0.4$) & 0.20 & GBT & Surrogate Index & $-0.099$ & 0.103 & \textbf{0.010} & --- \\
2 ($\rho{=}0.4$) & 0.20 & OLS & PPI++ & $-0.003$ & 0.048 & 0.980 & 1.44 \\
2 ($\rho{=}0.4$) & 0.20 & GBT & PPI++ & $-0.002$ & 0.050 & 0.980 & 1.36 \\
\midrule
4 ($\Delta{=}0.2$) & 0.20 & OLS & PPI++ & $-0.001$ & 0.055 & 0.935 & 1.60 \\
4 ($\Delta{=}0.2$) & 0.20 & GBT & PPI++ & $-0.0004$ & 0.056 & 0.955 & 1.57 \\
\midrule
8 & 0.05 & OLS & Surrogate Index & $+0.0004$ & 0.025 & 0.950 & 5.47 \\
8 & 0.05 & GBT & Surrogate Index & $-0.002$ & 0.026 & 0.945 & 5.18 \\
\bottomrule
\end{tabular}
\smallskip

{\scriptsize \noindent Notes: $n = 10{,}000$, $R = 200$ replications per cell, all-units five-fold cross-fitting. The Monte Carlo band around the nominal 95\% is $[0.920, 0.980]$ and \textbf{bold} coverage entries fall outside it. PPI++ is the primary configuration. The surrogate index uses the joint sandwich variance in the OLS rows and the fixed-predictor plug-in variance in the GBT rows, which ignores first-stage uncertainty, so GBT surrogate-index coverage is heuristic; on two cells the plug-in variance is within Monte Carlo error of the Monte Carlo variance (Appendix~\ref{app:ablations}), and the comparison's evidential weight is on bias, RMSE, and RE. RE is the RMSE ratio $\mathrm{RMSE}_{\mathrm{LO}}/\mathrm{RMSE}_{\mathrm{method}}$ within the same cell and prediction model, and is suppressed (---) for methods with absolute relative bias above 10\%.}
\end{table}

\paragraph{Criteo-calibrated semi-synthetic testbed.}

Table~\ref{tab:semisynthetic-full} gives the full semi-synthetic sweep behind the main-text summary.

\begin{table}[!ht]
\centering
\caption{Full per-method results for the Criteo-calibrated semi-synthetic sweep ($n = 64{,}000$, $\pi_L = 0.20$, $R = 500$ per dial). Bias in units of $10^{-4}$.}
\label{tab:semisynthetic-full}
\footnotesize
\setlength{\tabcolsep}{4pt}
\begin{tabular}{@{}rlrrrr@{}}
\toprule
\textbf{Dial} & \textbf{Method} & \textbf{Bias} & \textbf{Rel.\ bias (\%)} & \textbf{Coverage} & \textbf{RE} \\
\midrule
\multirow{3}{*}{0.0} & Labeled-Only & $+0.00$ & $+0.1$ & 0.958 & 1.00 \\
 & Surrogate Index & $+0.09$ & $+3.5$ & \textbf{0.976} & 5.65 \\
 & PPI++ & $+0.16$ & $+6.1$ & 0.942 & 1.06 \\
\midrule
\multirow{3}{*}{0.5} & Labeled-Only & $-0.11$ & $-1.9$ & 0.958 & 1.00 \\
 & Surrogate Index & $-2.94$ & $-51.6$ & \textbf{0.550} & --- \\
 & PPI++ & $+0.07$ & $+1.2$ & 0.944 & 1.07 \\
\midrule
\multirow{3}{*}{1.0} & Labeled-Only & $-0.57$ & $-6.3$ & 0.952 & 1.00 \\
 & Surrogate Index & $-6.14$ & $-67.5$ & \textbf{0.084} & --- \\
 & PPI++ & $-0.37$ & $-4.1$ & 0.948 & 1.07 \\
\midrule
\multirow{3}{*}{1.5} & Labeled-Only & $+0.36$ & $+2.8$ & 0.948 & 1.00 \\
 & Surrogate Index & $-9.62$ & $-74.6$ & \textbf{0.004} & --- \\
 & PPI++ & $+0.35$ & $+2.7$ & 0.942 & 1.06 \\
\midrule
\multirow{3}{*}{2.0} & Labeled-Only & $-0.43$ & $-2.5$ & 0.942 & 1.00 \\
 & Surrogate Index & $-13.56$ & $-79.2$ & \textbf{0.000} & --- \\
 & PPI++ & $-0.35$ & $-2.1$ & 0.940 & 1.07 \\
\bottomrule
\end{tabular}
\smallskip

{\scriptsize \noindent Notes: all-units five-fold cross-fitting. PPI++ is the primary configuration and the surrogate index uses the joint sandwich variance. \textbf{Bold} coverage entries fall outside the $R = 500$ Monte Carlo band $[0.931, 0.969]$, in either direction. RE is suppressed (---) for methods with absolute relative bias above 10\%. SI RE at dial 0 is $5.65$; PPI++ RE is $1.06$--$1.07$ at every dial. The dial-to-$\tau$ mapping and the diagnostic rejection rates are in Table~\ref{tab:semisynthetic} and are not repeated as columns here; the mean $T_n$ across dials is 0.02, 0.39, 0.71, 1.17, and 1.51. Calibration details: logistic visit and conversion-given-visit models fit on the real Criteo data (visit-stage treatment coefficient 0.23, conversion-stage $\hat{\kappa} = 0.31$, fit on 656{,}929 visitors); covariates resampled from a pool of 500{,}000 real rows.}
\end{table}

\paragraph{CUPED comparison.}

Table~\ref{tab:cuped-full} compares every method against the CUPED-adjusted baseline by labeled fraction.

\begin{table}[!ht]
\centering
\caption{CUPED-adjusted baseline comparison.}
\label{tab:cuped-full}
\footnotesize
\setlength{\tabcolsep}{4pt}
\begin{tabular}{@{}llrrrrrr@{}}
\toprule
& & \multicolumn{3}{c}{\textbf{Coverage}} & \multicolumn{3}{c}{\textbf{RE vs CUPED}} \\
\cmidrule(lr){3-5}\cmidrule(lr){6-8}
\textbf{DGP} & \textbf{Method} & $\pi_L = 0.05$ & $0.20$ & $0.50$ & $0.05$ & $0.20$ & $0.50$ \\
\midrule
DGP 1 & Surrogate Index & $0.936$ & $\mathbf{0.930}$ & $0.960$ & $4.77$ & $2.55$ & $1.59$ \\
 & PPI++ & $0.950$ & $0.958$ & $0.968$ & $1.35$ & $1.30$ & $1.10$ \\
 & CUPED & $0.954$ & $0.954$ & $0.956$ & $1.00$ & $1.00$ & $1.00$ \\
\midrule
DGP 2 ($\rho{=}0.0$) & Surrogate Index & $\mathbf{0.930}$ & $\mathbf{0.970}$ & $0.950$ & $4.76$ & $2.87$ & $1.70$ \\
 & PPI++ & $0.952$ & $0.936$ & $\mathbf{0.972}$ & $1.31$ & $1.23$ & $1.10$ \\
 & CUPED & $0.952$ & $\mathbf{0.924}$ & $0.944$ & $1.00$ & $1.00$ & $1.00$ \\
\midrule
DGP 2 ($\rho{=}0.2$) & Surrogate Index & $\mathbf{0.678}$ & $\mathbf{0.638}$ & $\mathbf{0.672}$ & --- & --- & --- \\
 & PPI++ & $0.960$ & $0.942$ & $0.950$ & $1.41$ & $1.23$ & $1.08$ \\
 & CUPED & $0.950$ & $0.952$ & $0.950$ & $1.00$ & $1.00$ & $1.00$ \\
\midrule
DGP 2 ($\rho{=}0.4$) & Surrogate Index & $\mathbf{0.032}$ & $\mathbf{0.020}$ & $\mathbf{0.008}$ & --- & --- & --- \\
 & PPI++ & $0.954$ & $0.958$ & $0.960$ & $1.37$ & $1.24$ & $1.06$ \\
 & CUPED & $0.962$ & $0.960$ & $0.950$ & $1.00$ & $1.00$ & $1.00$ \\
\midrule
DGP 8 & Surrogate Index & $0.934$ & $0.946$ & $0.946$ & $4.88$ & $2.53$ & $1.62$ \\
 & PPI++ & $0.946$ & $0.958$ & $0.950$ & $1.34$ & $1.25$ & $1.08$ \\
 & CUPED & $0.958$ & $0.946$ & $0.950$ & $1.00$ & $1.00$ & $1.00$ \\
\bottomrule
\end{tabular}
\smallskip

{\scriptsize \noindent Notes: $n = 10{,}000$, $R = 500$, all-units five-fold cross-fitting. \textbf{Bold} coverage entries fall outside the band $[0.931, 0.969]$, in either direction. PPI++ is the primary configuration and the surrogate index uses the joint sandwich variance. RE is $\mathrm{RMSE}_{\mathrm{CUPED}}/\mathrm{RMSE}_{\mathrm{method}}$, computed from the run's mean-squared-error ratios, and its square is the ESS multiplier; it is suppressed (---) where the surrogate index carries more than 10\% absolute relative bias. Bias, RMSE, and the omitted labeled-only rows, which run $0.88$--$0.94$ times the CUPED baseline in every cell, are in the replication package.}
\end{table}

\subsection{Interval, Portfolio, and Threshold Sensitivity}
\label{app:threshold-sensitivity}

\paragraph{Portfolio loss-function sensitivity.}

Table~\ref{tab:portfolio-sensitivity} varies the two features of the DGP~6 objective flagged in Section~\ref{sec:results-decisions}. A launched experiment pays $\tau_k - c$, so the oracle launches when $\tau_k > c$ and regret is $\sum_k [\max(\tau_k - c, 0) - d_k(\tau_k - c)]$, the paper's objective at $c = 0$. The antagonistic fraction $f_{\mathrm{ant}}$ is the share of experiments in which the surrogate responds positively to treatment while the outcome loads negatively on it, as in DGP~9. The decision rule stays cost-blind: reject $H_0: \tau_k = 0$ and launch if the estimate is positive.

\begin{table}[!htbp]
\centering
\caption{Portfolio regret under launch costs and sign-reversing experiments.}
\label{tab:portfolio-sensitivity}
\footnotesize
\setlength{\tabcolsep}{5pt}
\begin{tabular}{@{}l rrr rrr r@{}}
\toprule
& \multicolumn{3}{c}{$\boldsymbol{f_{\mathrm{ant}} = 0}$} & \multicolumn{3}{c}{$\boldsymbol{f_{\mathrm{ant}} = 0.10}$} & \textbf{Harmful launches} \\
\cmidrule(lr){2-4}\cmidrule(lr){5-7}
\textbf{Method} & $c = 0$ & $c = 0.01$ & $c = 0.02$ & $c = 0$ & $c = 0.01$ & $c = 0.02$ & $f_{\mathrm{ant}}: 0 \to 0.10$ \\
\midrule
Labeled-Only    & $82.6$ & $82.6$ & $82.9$ & $81.8$ & $81.9$ & $82.1$ & $0.53 \to 0.45$ \\
Naive Surrogate & $36.0$ & $33.9$ & $32.0$ & $51.1$ & $51.3$ & $52.2$ & $0.23 \to 3.16$ \\
Surrogate Index & $40.1$ & $38.1$ & $36.2$ & $54.9$ & $55.3$ & $56.4$ & $0.23 \to 3.25$ \\
PPI++           & $72.8$ & $72.2$ & $71.8$ & $72.4$ & $71.9$ & $71.5$ & $0.41 \to 0.35$ \\
\bottomrule
\end{tabular}
\smallskip

{\scriptsize \noindent Notes: $K = 200$ experiments per portfolio, $\pi_L = 0.20$, $R = 200$ portfolios, launch test at $\alpha = 0.05$ under the cost-blind rule, all-units five-fold cross-fitting. Entries in the first six columns are relative regret, mean regret as a percentage of the oracle gain, which is $4.010$ at $c = 0$ with $f_{\mathrm{ant}} = 0$ and falls to $3.528$ and $3.085$ as the launch cost rises, and $3.992$, $3.510$, $3.065$ at $f_{\mathrm{ant}} = 0.10$. The last column is the mean number of launched experiments with a negative true effect, per portfolio, at $c = 0$. PPI++ is the primary configuration; the plug-in ablation reproduces it to the reported precision in every cell. Both tables compute relative regret as the ratio of the mean regret to the mean oracle gain over their portfolios. At $c = 0$, $f_{\mathrm{ant}} = 0$ the entries are $82.6$, $36.0$, $40.1$, and $72.8$ here ($R = 200$) and $83.5$, $36.0$, $40.5$, and $73.6$ in Table~\ref{tab:portfolio} ($R = 500$); the Monte Carlo standard errors in this table are about $1$ point. This objective reports no coverage, so the bolding convention does not apply.}
\end{table}

\paragraph{Detection-damage band under alternative thresholds.}

Table~\ref{tab:detection-damage-sensitivity} locates the two edges under alternative thresholds, at both sample sizes and both levels, from the paired run behind Figure~\ref{fig:detection-damage}. The damage edge is the first $\rho$ at which SI coverage falls below the coverage threshold and the detection edge the first $\rho$ at which two-sided power exceeds the power threshold, both by linear interpolation on the grid $\rho \in \{0, 0.02, \ldots, 0.20, 0.25, 0.30, 0.35, 0.40, 0.50, 0.60\}$. The band is nonempty in every row with a coverage threshold of $0.85$ or $0.90$. At $0.95$ the damage edge is undefined rather than the band empty, because SI coverage at $\rho = 0$ is already $0.948$ under the joint sandwich. Raising the level from $0.05$ to $0.10$ pulls the detection edge in by $0.02$ to $0.05$ without closing the band in any row, and replacing the sandwich by the delta-method or fixed-predictor construction shifts either edge by at most $0.002$ at the featured thresholds and $0.003$ across the grid.

\begin{table}[!htbp]
\centering
\caption{Detection-damage band edges under alternative thresholds and test levels.}
\label{tab:detection-damage-sensitivity}
\footnotesize
\setlength{\tabcolsep}{5pt}
\begin{tabular}{@{}rrrrrrr@{}}
\toprule
$\boldsymbol{n}$ & \textbf{Cov.\ thr.} & \textbf{Power thr.} & $\boldsymbol{\alpha}$ & \textbf{Damage edge} & \textbf{Detection edge} & \textbf{Width} \\
\midrule
10{,}000 & 0.85 & 0.50 & 0.05 & $0.134$ & $0.372$ & $0.237$ \\
10{,}000 & 0.85 & 0.80 & 0.05 & $0.134$ & $0.460$ & $0.326$ \\
10{,}000 & 0.85 & 0.50 & 0.10 & $0.134$ & $0.323$ & $0.189$ \\
10{,}000 & 0.85 & 0.80 & 0.10 & $0.134$ & $0.432$ & $0.298$ \\
10{,}000 & 0.90 & 0.50 & 0.05 & $0.098$ & $0.372$ & $0.273$ \\
10{,}000 & 0.90 & 0.80 & 0.05 & $0.098$ & $0.460$ & $0.362$ \\
10{,}000 & 0.90 & 0.50 & 0.10 & $0.098$ & $0.323$ & $0.225$ \\
10{,}000 & 0.90 & 0.80 & 0.10 & $0.098$ & $0.432$ & $0.334$ \\
10{,}000 & 0.95 & 0.50 & 0.05 & undefined & $0.372$ & --- \\
\midrule
100{,}000 & 0.85 & 0.50 & 0.05 & $0.046$ & $0.160$ & $0.114$ \\
100{,}000 & 0.85 & 0.80 & 0.05 & $0.046$ & $0.217$ & $0.171$ \\
100{,}000 & 0.85 & 0.50 & 0.10 & $0.046$ & $0.135$ & $0.088$ \\
100{,}000 & 0.85 & 0.80 & 0.10 & $0.046$ & $0.190$ & $0.144$ \\
100{,}000 & 0.90 & 0.50 & 0.05 & $0.035$ & $0.160$ & $0.125$ \\
100{,}000 & 0.90 & 0.80 & 0.05 & $0.035$ & $0.217$ & $0.182$ \\
100{,}000 & 0.90 & 0.50 & 0.10 & $0.035$ & $0.135$ & $0.100$ \\
100{,}000 & 0.90 & 0.80 & 0.10 & $0.035$ & $0.190$ & $0.155$ \\
100{,}000 & 0.95 & 0.50 & 0.05 & undefined & $0.160$ & --- \\
\bottomrule
\end{tabular}
\smallskip

{\scriptsize \noindent Notes: DGP~2, $\pi_L = 0.20$, from the paired run behind Figure~\ref{fig:detection-damage}, $R = 500$ per grid point at $n = 10{,}000$ and $R = 300$ at $n = 100{,}000$; edges are on the $\rho$ scale and are located by linear interpolation. Rows use the joint sandwich of Proposition~\ref{prop:joint-si-ppi} for both the SI coverage curve and $\mathrm{SE}(\hat{D})$, so each row is internally consistent. ``Undefined'' marks a coverage threshold the SI curve never crosses from above on the grid because SI coverage at $\rho = 0$ is already below it; the $0.95$ rows at the $0.80$ power threshold and at $\alpha = 0.10$ are undefined for the same reason and are omitted. Damage edges and widths here are on the coarse paired grid; at the featured thresholds the refined local grid moves the damage edges to $0.094$ and $0.030$ and the widths from $0.273$ and $0.125$ to $0.277$ and $0.130$ (Table~\ref{tab:edge-ci}).}
\end{table}

\paragraph{Monte Carlo intervals for the band edges.}

Table~\ref{tab:edge-ci} attaches 95\% percentile intervals to the edges at the featured thresholds, coverage $0.90$ and power $0.50$. Each $(n, \rho)$ cell of the paired run is resampled over its replications $B = 1{,}000$ times, both curves are recomputed, and both edges are interpolated again; $\delta$-scale edges, widths, and shrink factors are formed draw by draw. The intervals are conditional on linear interpolation between the fixed grid points and omit grid-resolution error. The frontier-grid rows come from a run that stores only per-cell coverage proportions, so their intervals come from a parametric binomial bootstrap and are not term-by-term comparable with the others.

The refined run adds points around each crossing ($\rho$ in steps of $0.005$, $R = 1{,}000$ per point and $R = 2{,}000$ at the null); at a $\rho$ shared with the paired grid its cell contains the paired run's draws as its first replications, the fine curve is the refined points plus the paired points outside the refined range, and each cell is resampled to its own $R$. Grid resolution matters for the damage edge. On the coarse frontier grid of Table~\ref{tab:frontier-coverage}, SI coverage at $n = 100{,}000$ is $0.952$ at $\rho = 0$ and $0.804$ at $\rho = 0.05$, so linear interpolation places the $0.90$ crossing at $0.05\,(0.952 - 0.900)/(0.952 - 0.804) = 0.018$ whatever the curve does between those points, and a monotone cubic (pchip) interpolant at $0.024$. Part of that grid's disagreement with the paired run is Monte Carlo variation: at $\rho = 0.05$ the refined grid gives $0.849$ ($R = 1{,}000$), about two combined standard errors from $0.804$. At $n = 100{,}000$ the refined null cell gives SI coverage $0.942$ (Monte Carlo standard error $0.005$) and diagnostic size $0.046$ ($0.005$); SI coverage is $0.900$ at $\rho = 0.030$ and $0.889$ at $0.035$, and the edge is $0.030$ under both interpolants, against $0.035$ and $0.036$ on the coarse paired grid. The four paired-run estimates lie in $[0.030, 0.036]$, and adding the frontier grid's $0.018$ and $0.024$ gives the grid-resolution bracket $[0.018, 0.036]$, or $[0.0027, 0.0056]$ in $\delta$. At $n = 10{,}000$ the fine grid gives $0.094$ under both interpolants, with conditional interval $[0.077, 0.108]$, paired-run range $[0.094, 0.098]$, and bracket $[0.080, 0.098]$ with the frontier grid's $0.080$ and $0.083$; the curve is flat there, with SI coverage $0.904$, $0.903$, $0.914$, $0.898$, $0.906$, and $0.913$ from $\rho = 0.080$ to $0.105$ and a first clear drop at $\rho = 0.110$ ($0.883$), and the bootstrap median is $0.083$. The detection edges, $0.372$ and $0.160$ at $\alpha = 0.05$, are unchanged by the refinement.

\begin{table}[!htbp]
\centering
\caption{Monte Carlo intervals for the detection-damage band edges.}
\label{tab:edge-ci}
\footnotesize
\setlength{\tabcolsep}{4pt}
\begin{tabular}{@{}llrlll@{}}
\toprule
\textbf{Scale} & \textbf{Quantity} & $\boldsymbol{\alpha}$ & $\boldsymbol{n = 10{,}000}$ & $\boldsymbol{n = 100{,}000}$ & \textbf{Shrink factor} \\
\midrule
\multicolumn{6}{l}{\textit{Paired run}} \\
$\rho$ & Damage edge & --- & $0.098$ $[0.076, 0.118]$ & $0.035$ $[0.000, 0.042]$ & $2.79$ $[2.08, \infty)$ \\
$\rho$ & Detection edge & 0.05 & $0.372$ $[0.361, 0.380]$ & $0.160$ $[0.150, 0.165]$ & $2.32$ $[2.22, 2.48]$ \\
$\rho$ & Detection edge & 0.10 & $0.323$ $[0.307, 0.339]$ & $0.135$ $[0.128, 0.144]$ & $2.40$ $[2.21, 2.57]$ \\
$\rho$ & Band width & 0.05 & $0.273$ $[0.252, 0.297]$ & $0.125$ $[0.113, 0.163]$ & $2.19$ $[1.63, 2.52]$ \\
$\rho$ & Band width & 0.10 & $0.225$ $[0.200, 0.253]$ & $0.100$ $[0.089, 0.141]$ & $2.26$ $[1.56, 2.66]$ \\
$\delta$ & Damage edge & --- & $0.0163$ $[0.0124, 0.0200]$ & $0.0055$ $[0.0000, 0.0066]$ & $2.98$ $[2.18, \infty)$ \\
$\delta$ & Detection edge & 0.05 & $0.0887$ $[0.0849, 0.0919]$ & $0.0286$ $[0.0265, 0.0297]$ & $3.10$ $[2.92, 3.36]$ \\
$\delta$ & Detection edge & 0.10 & $0.0715$ $[0.0664, 0.0771]$ & $0.0234$ $[0.0220, 0.0252]$ & $3.06$ $[2.77, 3.36]$ \\
$\delta$ & Band width & 0.05 & $0.0723$ $[0.0672, 0.0775]$ & $0.0231$ $[0.0208, 0.0293]$ & $3.13$ $[2.41, 3.57]$ \\
$\delta$ & Band width & 0.10 & $0.0552$ $[0.0492, 0.0628]$ & $0.0179$ $[0.0160, 0.0245]$ & $3.09$ $[2.16, 3.63]$ \\
\midrule
\multicolumn{6}{l}{\textit{Paired run, refined local grid for the damage edge (linear interpolation)}} \\
$\rho$ & Damage edge & --- & $0.094$ $[0.077, 0.108]$ & $0.030$ $[0.028, 0.036]$ & $3.15$ $[2.15, 3.70]$ \\
$\rho$ & Band width & 0.05 & $0.277$ $[0.260, 0.301]$ & $0.130$ $[0.118, 0.135]$ & $2.13$ $[1.98, 2.49]$ \\
$\delta$ & Damage edge & --- & $0.0156$ $[0.0125, 0.0182]$ & $0.0046$ $[0.0043, 0.0057]$ & $3.37$ $[2.26, 4.02]$ \\
$\delta$ & Band width & 0.05 & $0.0730$ $[0.0687, 0.0791]$ & $0.0239$ $[0.0217, 0.0250]$ & $3.05$ $[2.82, 3.51]$ \\
\midrule
\multicolumn{6}{l}{\textit{Frontier-coverage grid (parametric binomial bootstrap)}} \\
$\rho$ & Damage edge & --- & $0.080$ $[0.050, 0.105]$ & $0.018$ $[0.012, 0.024]$ & $4.55$ $[2.65, 7.39]$ \\
$\rho$ & Damage edge (pchip) & --- & $0.083$ $[0.050, 0.111]$ & $0.024$ $[0.017, 0.032]$ & --- \\
$\delta$ & Damage edge & --- & $0.0130$ $[0.0079, 0.0177]$ & $0.0027$ $[0.0019, 0.0037]$ & $4.86$ $[2.74, 8.10]$ \\
\bottomrule
\end{tabular}
\smallskip

{\scriptsize \noindent Notes: DGP~2, $\pi_L = 0.20$; paired run at $R = 500$ ($n = 10{,}000$) and $R = 300$ ($n = 100{,}000$), frontier grid at $R = 500$ on a coarser $\rho$ grid. Each entry is the point estimate read off the observed curve followed by the 95\% percentile interval over $B = 1{,}000$ bootstrap draws, conditional on linear interpolation between grid points. The damage edge is where SI coverage under the joint sandwich falls below $0.90$ and does not depend on $\alpha$; the detection edge is where two-sided power exceeds $0.50$; $\delta = 0.15\,\rho/(1-\rho)$. The shrink factor is the $n = 10{,}000$ value over the $n = 100{,}000$ value. On the coarse paired grid at $n = 100{,}000$, 22\% of the resamples put the damage edge at $\rho = 0$, which makes that shrink factor unbounded in those draws and leaves its interval without a finite upper limit; the refined local grid ($\rho$ in steps of $0.005$, $R = 1{,}000$ per point and $2{,}000$ at the null) removes that. Refined-grid band widths combine the refined damage edge with the coarse-grid detection edge. The pchip row uses a monotone piecewise-cubic interpolant; under pchip the paired-run damage edges are $0.098$ and $0.036$ on the coarse grid and $0.094$ and $0.030$ on the refined grid, and the text above reconciles the grid-resolution brackets.}
\end{table}

\subsection{Ablations and Variance Checks}
\label{app:ablations}

Every ablation configuration runs inside the same replication as the primary one, on the same draws and fitted predictions, so the tables below are paired cell by cell with the main-text tables and their row differences carry much less Monte Carlo error than a comparison of independent runs, though pairing does not remove it. The PPI++ ablations use the plug-in tuning rule, the plug-in variance, a clipped coefficient, a per-arm coefficient, or the plug-in rule and plug-in variance together (the plug-in-rule estimator end to end); the surrogate-index ablations use the plug-in and the delta-method first-stage variances. This subsection also holds the variance checks behind Section~\ref{sec:surrogacy-test}.

\paragraph{The PPI++ tuning and variance rules.}

Table~\ref{tab:ppi-ablation} varies one axis at a time around the primary configuration. In the linear designs only the plug-in tuning rule moves anything material: it sits below the per-arm optimum by the factor $1/(1 + \pi_L)$, so RE falls from $1.524$ to $1.519$ at $\pi_L = 0.05$, from $1.395$ to $1.376$ at $\pi_L = 0.20$, and from $1.185$ to $1.158$ at $\pi_L = 0.50$, a loss of $0.3\%$, $1.4\%$, and $2.3\%$, with intervals $0.1\%$ to $2.3\%$ wider. At the primary coefficient the overlap correction cancels numerically in these designs, so the plug-in and exact variances agree to four decimals in width and exactly in coverage, and clipping and a per-arm coefficient are immaterial where the coefficient lies in $[0,1]$ and the arms share $(\gamma_t, \sigma^2_{\hat{Y},t})$. The rules separate under mix drift, where the labeled cohort is the biased one and the plug-in rule shrinks toward it: at $d = 2$ the primary configuration carries $-5.0\%$ relative bias at $0.939$ coverage and RE $1.81$, the plug-in rule $-11.9\%$ at $0.923$ and RE $1.71$, and the plug-in rule with the plug-in variance covers $0.918$. At $d = 1$ the same ordering holds at a tenth of the size ($-0.4\%$ against $-4.1\%$). Under effect drift, which biases every estimator, the rules differ by at most $0.1$ percentage points of relative bias. The table has no bias column; the DGP~11 relative biases in this paragraph are reported in the replication package.

\begin{table}[!htbp]
\centering
\caption{PPI++ tuning and variance rules: the primary configuration against its one-axis ablations.}
\label{tab:ppi-ablation}
\footnotesize
\setlength{\tabcolsep}{5pt}
\begin{tabular}{@{}llrrrr@{}}
\toprule
\textbf{Cell} & \textbf{Configuration} & \textbf{RMSE} & \textbf{Coverage} & \textbf{RE} & \textbf{Mean width} \\
\midrule
DGP 1, $\pi_L = 0.05$ & PPI++ (primary) & $0.0929$ & $0.953$ & $1.524$ & $0.3667$ \\
 & PPI++ (plug-in $\lambda$) & $0.0932$ & $0.949$ & $1.519$ & $0.3671$ \\
 & PPI++ (plug-in variance) & $0.0929$ & $0.953$ & $1.524$ & $0.3667$ \\
 & PPI++ (clipped) & $0.0927$ & $0.950$ & $1.526$ & $0.3661$ \\
 & PPI++ (per-arm $\lambda$) & $0.0934$ & $0.952$ & $1.515$ & $0.3674$ \\
\midrule
DGP 1, $\pi_L = 0.20$ & PPI++ (primary) & $0.0507$ & $0.950$ & $1.395$ & $0.1999$ \\
 & PPI++ (plug-in $\lambda$) & $0.0514$ & $0.949$ & $1.376$ & $0.2024$ \\
 & PPI++ (plug-in variance) & $0.0507$ & $0.950$ & $1.395$ & $0.1999$ \\
 & PPI++ (clipped) & $0.0507$ & $0.950$ & $1.396$ & $0.1999$ \\
 & PPI++ (per-arm $\lambda$) & $0.0507$ & $0.951$ & $1.394$ & $0.2000$ \\
\midrule
DGP 1, $\pi_L = 0.50$ & PPI++ (primary) & $0.0376$ & $0.953$ & $1.185$ & $0.1465$ \\
 & PPI++ (plug-in $\lambda$) & $0.0385$ & $0.954$ & $1.158$ & $0.1499$ \\
 & PPI++ (plug-in variance) & $0.0376$ & $0.953$ & $1.185$ & $0.1465$ \\
 & PPI++ (clipped) & $0.0376$ & $0.953$ & $1.185$ & $0.1464$ \\
 & PPI++ (per-arm $\lambda$) & $0.0376$ & $0.953$ & $1.185$ & $0.1465$ \\
\midrule
DGP 2, $\rho = 0.2$ & PPI++ (primary) & $0.0508$ & $0.955$ & $1.378$ & $0.2001$ \\
 & PPI++ (plug-in $\lambda$) & $0.0515$ & $0.948$ & $1.360$ & $0.2025$ \\
 & PPI++ (plug-in variance) & $0.0508$ & $0.955$ & $1.378$ & $0.2001$ \\
 & PPI++ (clipped) & $0.0508$ & $0.955$ & $1.378$ & $0.2000$ \\
 & PPI++ (per-arm $\lambda$) & $0.0509$ & $0.954$ & $1.376$ & $0.2002$ \\
\midrule
DGP 9, $\pi_L = 0.05$ & PPI++ (primary) & $0.0947$ & $0.948$ & $1.195$ & $0.3585$ \\
 & PPI++ (clipped) & $0.0947$ & $0.948$ & $1.195$ & $0.3582$ \\
\midrule
DGP 11, mix drift $d = 1$ & PPI++ (primary) & $0.0497$ & $0.937$ & $1.385$ & --- \\
 & PPI++ (plug-in $\lambda$) & $0.0505$ & $0.942$ & $1.364$ & --- \\
 & PPI++ (plug-in $\lambda$, plug-in var.) & $0.0505$ & $0.939$ & $1.364$ & --- \\
\midrule
DGP 11, mix drift $d = 2$ & PPI++ (primary) & $0.0499$ & $0.939$ & $1.805$ & --- \\
 & PPI++ (plug-in $\lambda$) & $0.0525$ & $\mathbf{0.923}$ & $1.715$ & --- \\
 & PPI++ (plug-in $\lambda$, plug-in var.) & $0.0525$ & $\mathbf{0.918}$ & $1.715$ & --- \\
\midrule
DGP 11, effect drift $g_e = 0.2$ & PPI++ (primary) & $0.0973$ & $\mathbf{0.568}$ & $1.078$ & --- \\
 & PPI++ (plug-in $\lambda$) & $0.0978$ & $\mathbf{0.587}$ & $1.073$ & --- \\
 & PPI++ (plug-in $\lambda$, plug-in var.) & $0.0978$ & $\mathbf{0.571}$ & $1.073$ & --- \\
\bottomrule
\end{tabular}
\smallskip

{\scriptsize \noindent Notes: $n = 10{,}000$, all-units five-fold cross-fitting; within a cell every configuration runs on the same draws and the same fitted prediction model. DGP~1 and DGP~2 cells are at $R = 2{,}000$ with the band $[0.940, 0.960]$, and the DGP~2 cell is at $\pi_L = 0.20$; DGP~9 is at $R = 500$ with the band $[0.931, 0.969]$; the DGP~11 cells are at $\pi_L = 0.20$ and $R = 1{,}000$ with the band $[0.936, 0.964]$. \textbf{Bold} coverage entries fall outside the band for their row. ``PPI++ (plug-in $\lambda$, plug-in var.)'' is the plug-in-rule estimator end to end: the plug-in tuning rule of Equation~\eqref{eq:lambda-opt}, the coefficient clipped to $[0,1]$, and the plug-in variance. RMSE and mean width are printed to four decimals because the configurations differ in the third or fourth, and RE to three for the same reason; RE is the RMSE ratio $\mathrm{RMSE}_{\mathrm{LO}}/\mathrm{RMSE}_{\mathrm{method}}$. The DGP~11 run records no mean interval width, so that column reads --- there, and each block prints only the configurations run in it.}
\end{table}

\paragraph{The surrogate-index variance.}

Table~\ref{tab:si-variance-ablation} holds the surrogate-index point estimate fixed and varies only the interval: ``plug-in'' treats the fitted index as fixed, ``delta-method'' adds $d'V_\beta d$, and the sandwich adds the cross term as well. Compared with Section~\ref{sec:results-gbt}, this ablation additionally shows that on the rich index, where the plug-in interval is $54\%$ of the sandwich width, the delta-method construction recovers almost all of the gap, so the omitted cross term is small on that design, not small in general.

\begin{table}[!htbp]
\centering
\caption{Surrogate-index coverage under the three interval constructions.}
\label{tab:si-variance-ablation}
\footnotesize
\setlength{\tabcolsep}{5pt}
\begin{tabular}{@{}lrrrrr@{}}
\toprule
& \multicolumn{3}{c}{\textbf{Coverage}} & & \\
\cmidrule(lr){2-4}
\textbf{Cell} & plug-in & delta & sandwich & \textbf{Width ratio} & \textbf{1st-stage share} \\
\midrule
DGP 1, $\pi_L = 0.05$ & $0.941$ & $0.948$ & $0.949$ & $1.038$ & $0.072$ \\
DGP 1, $\pi_L = 0.20$ & $0.952$ & $0.956$ & $0.956$ & $1.009$ & $0.019$ \\
DGP 1, $\pi_L = 0.50$ & $0.947$ & $0.950$ & $0.949$ & $1.004$ & $0.007$ \\
DGP 1, $\pi_L = 1.00$ & $0.946$ & $0.946$ & $0.946$ & $1.002$ & $0.004$ \\
DGP 2, $\rho = 0.0$ & $0.946$ & $0.948$ & $0.948$ & $1.009$ & $0.019$ \\
DGP 2, $\rho = 0.2$ & $\mathbf{0.678}$ & $\mathbf{0.681}$ & $\mathbf{0.682}$ & $1.011$ & $0.021$ \\
DGP 2, $\rho = 0.4$ & $\mathbf{0.019}$ & $\mathbf{0.021}$ & $\mathbf{0.021}$ & $1.012$ & $0.023$ \\
\midrule
Multi-surr.\ (rich), $m = 1$ & $\mathbf{0.665}$ & $0.960$ & $0.960$ & $1.852$ & $0.709$ \\
\bottomrule
\end{tabular}
\smallskip

{\scriptsize \noindent Notes: the point estimate is identical across the three coverage columns, which differ only in the variance. Width ratio is the sandwich width over the plug-in width; ``1st-stage share'' is $1$ minus the squared inverse of that ratio, the fraction of the sandwich variance the plug-in construction omits. DGP~1 and DGP~2 rows: $n = 10{,}000$, $R = 2{,}000$, all-units five-fold cross-fitting, band $[0.940, 0.960]$; DGP~2 rows at $\pi_L = 0.20$. The multi-surrogate row is the Criteo-calibrated rich index at $n = 64{,}000$, $\pi_L = 0.20$, $R = 200$, from Table~\ref{tab:joint-cov-check}, where the band is $[0.920, 0.980]$; its width ratio and share are computed from the mean variances rather than the mean widths. \textbf{Bold} coverage entries fall outside the relevant band.}
\end{table}

\paragraph{Variance checks outside Proposition~\ref{prop:joint-si-ppi}.}
Two of the three designs the proposition does not cover (Table~\ref{tab:implementation}) are checked on two cells each. For the gradient-boosted-tree rows, which use the fixed-predictor plug-in variance and covariance, the check runs DGP~1 with $\pi_L = 0.20$ and DGP~2 with $\rho = 0.2$ on the GBT comparison's own draws ($R = 200$, with a paired bootstrap of $B = 100$ draws that refit the trees). The plug-in variance is $0.95$ (Monte Carlo standard error $0.11$) and $1.13$ ($0.11$) of the Monte Carlo variance of $\hat\tau_{\mathrm{SI}}$, within Monte Carlo error of one on these two cells and silent about other designs; surrogate-index coverage is $0.940$ at DGP~1 and $0.665$ at DGP~2, where the shortfall is the $-0.039$ bias. The fixed-predictor $\Var(\hat D)$ is $1.07$ ($0.11$) and $0.94$ ($0.10$) of the Monte Carlo variance, and the diagnostic's size at DGP~1 is $0.065$ ($0.017$) at $\alpha = 0.05$ and $0.105$ ($0.022$) at $\alpha = 0.10$. A bootstrap that assigns folds to resampled rows lets copies of one unit fall on both sides of a fold, where trees partly memorize them, and understates $\Var(\hat\tau_{\mathrm{PPI++}})$ at $0.74$ and $0.79$ and $\Var(\hat D)$ at $0.67$ and $0.60$ of the Monte Carlo variance. Keeping every copy in its unit's fold removes the shortfall: on the DGP~1 cell, with $B = 50$, the clustered-fold bootstrap is $1.045$ ($0.116$), $1.026$ ($0.109$), and $1.168$ ($0.124$) of the Monte Carlo variances of $\hat\tau_{\mathrm{SI}}$, $\hat\tau_{\mathrm{PPI++}}$, and $\hat D$, its intervals cover $0.955$ for SI and $0.945$ for PPI++, and the diagnostic's size is $0.045$ ($0.015$) at $\alpha = 0.05$. With clustered folds the tree bootstrap is a validation reference on this one cell, and it agrees with the direct check.
For the MAR and enrollment-drift rows, which apply the MCAR sandwich with the labeled-design $\hat Q$, the check runs DGP~5 MAR with $q = 0.10$ and DGP~11 mix drift with $d = 2$ at $R = 500$, with a paired bootstrap of $B = 200$ draws that resamples units within arm with their labels. Surrogacy holds and the index is correctly specified in both cells. The sandwich is $1.075$ (Monte Carlo standard error $0.074$) and $0.936$ ($0.064$) of the Monte Carlo variance of $\hat\tau_{\mathrm{SI}}$, against $1.083$ and $0.937$ for the bootstrap, and $0.971$ and $0.969$ of that of $\hat{D}$; surrogate-index coverage is $0.946$ and $0.944$. The diagnostic rejects on $0.046$ and $0.042$ of replications at $\alpha = 0.05$, but these are rejection rates, not sizes: unadjusted PPI++ can be miscentered, as under mix drift, so they do not validate the diagnostic under nonrandom labeling. PPI++'s exact variance is $0.967$ of the Monte Carlo variance under MAR but $0.873$ ($0.051$) under mix drift, where the bootstrap also underestimates it, at $0.900$, and coverage is $0.932$, so part of the PPI++ undercoverage in Table~\ref{tab:enrollment} is a variance shortfall. DGP~4 is not checked separately: its interval treats the external fit as fixed, and its coverage, $0.950$ at $\Delta_\beta = 0$ and $\pi_L = 0.20$ over redrawn calibration samples, is already unconditional.

\paragraph{Bootstrap sensitivity.}

Table~\ref{tab:bootstrap-sensitivity} runs the bootstrap variant of the PPI++ variance under both tuning rules on the same draws. Under the plug-in rule the pattern of Section~\ref{sec:overlap-correction} reappears at $R = 500$: at $\pi_L = 0.80$ the plug-in variance covers $90.0\%$, the exact variance $94.6\%$, and the bootstrap $90.6\%$, $93.6\%$, $94.2\%$, and $94.8\%$ at $B = 50$, $200$, $500$, and $1{,}000$. Under the primary rule the two analytic columns coincide and are inside the band at every labeled fraction. A practitioner who bootstraps should use at least $B = 500$.

\begin{table}[!htbp]
\centering
\caption{Bootstrap sensitivity analysis for the PPI++ variance estimator.}
\label{tab:bootstrap-sensitivity}
\footnotesize
\setlength{\tabcolsep}{3pt}
\begin{tabular}{@{}lrrrrrrr@{}}
\toprule
$\boldsymbol{\pi_L}$ & \textbf{Plug-in} & \textbf{Exact} & $\boldsymbol{B=50}$ & $\boldsymbol{B=100}$ & $\boldsymbol{B=200}$ & $\boldsymbol{B=500}$ & $\boldsymbol{B=1000}$ \\
\midrule
\multicolumn{8}{l}{\textit{Block A: the plug-in-rule estimator (plug-in tuning rule, clipped coefficient)}} \\
0.05 & 0.956 (0.365) & 0.956 (0.367) & \textbf{0.920} (0.337) & 0.942 (0.351) & 0.942 (0.358) & 0.950 (0.364) & 0.954 (0.366) \\
0.10 & 0.948 (0.265) & 0.956 (0.268) & \textbf{0.918} (0.248) & 0.934 (0.257) & 0.948 (0.263) & 0.946 (0.266) & 0.948 (0.267) \\
0.20 & 0.934 (0.196) & 0.940 (0.202) & \textbf{0.900} (0.186) & \textbf{0.922} (0.194) & \textbf{0.924} (0.197) & 0.942 (0.200) & 0.940 (0.201) \\
0.50 & 0.942 (0.136) & 0.964 (0.150) & 0.940 (0.139) & 0.958 (0.144) & 0.954 (0.147) & 0.956 (0.149) & 0.962 (0.149) \\
0.80 & \textbf{0.900} (0.113) & 0.946 (0.131) & \textbf{0.906} (0.122) & \textbf{0.928} (0.126) & 0.936 (0.128) & 0.942 (0.130) & 0.948 (0.131) \\
1.00 & 0.946 (0.124) & 0.946 (0.124) & \textbf{0.924} (0.114) & \textbf{0.930} (0.119) & 0.938 (0.121) & 0.940 (0.123) & 0.944 (0.123) \\
\midrule
\multicolumn{8}{l}{\textit{Block B: the primary PPI++ (exact tuning rule, common unclipped coefficient)}} \\
0.05 & 0.956 (0.366) & 0.956 (0.366) & --- & --- & --- & --- & --- \\
0.10 & 0.950 (0.267) & 0.950 (0.267) & --- & --- & --- & --- & --- \\
0.20 & 0.944 (0.200) & 0.944 (0.200) & --- & --- & --- & --- & --- \\
0.50 & 0.966 (0.147) & 0.966 (0.147) & --- & --- & --- & --- & --- \\
0.80 & 0.948 (0.130) & 0.948 (0.130) & --- & --- & --- & --- & --- \\
1.00 & 0.946 (0.124) & 0.946 (0.124) & --- & --- & --- & --- & --- \\
\bottomrule
\end{tabular}
\smallskip

{\scriptsize \noindent Notes: DGP 1, $n = 10{,}000$, $R = 500$, all-units five-fold cross-fitting. Each cell shows coverage with the mean CI width in parentheses. The two blocks run on the same draws. \textbf{Bold} coverage entries fall outside the $R = 500$ Monte Carlo band $[0.931, 0.969]$. ``Plug-in'' and ``Exact'' are the two analytic variances of Proposition~\ref{prop:corrected-variance}; in block~B the tuning rule is the minimizer of the exact variance, at which the correction cancels numerically in this design, and the two columns coincide at every labeled fraction (Section~\ref{sec:overlap-correction}). Block~B has no bootstrap columns because the bootstrap loop re-estimates $\lambda$ under the plug-in rule and clips it, so its interval belongs to block~A. An independent $R = 2{,}000$ run of the same comparison is in Table~\ref{tab:coverage-correction}.}
\end{table}

\paragraph{The joint covariance.} Tables~\ref{tab:joint-cov-check} and~\ref{tab:joint-cov-R2000} are the calibration checks summarized in Section~\ref{sec:surrogacy-test}. The joint paired bootstrap resamples units with replacement stratified by arm, keeps every copy of a resampled unit in its unit's fold, refits the index on the resampled labeled units under the all-units protocol, and recomputes both estimators, so it carries every source of first-stage uncertainty. Against it the sandwich lies slightly below for $\Var(\hat\tau_{\mathrm{PPI++}})$ ($0.985$, $0.987$, $0.983$, $0.954$) and $\Var(\hat D)$ ($0.979$, $0.976$, $0.983$, $0.942$) in every bootstrapped cell. Against the Monte Carlo variance the ratios run from $0.79$ to $1.31$ at $R = 200$, a spread the roughly $10\%$ Monte Carlo standard error of a variance ratio at that $R$ allows, and from $0.948$ to $1.040$ at $R = 2{,}000$. On the single-surrogate cells the ``SI (plug-in variance)'' ablation is within about $10\%$ of the Monte Carlo variance (ratios $0.923$, $1.012$, and $0.906$ against the sandwich's $0.941$, $1.032$, and $0.926$), and the two intervals cover within $0.010$ of each other.

\paragraph{One cell across runs.} Surrogate-index coverage at DGP~2, $\rho = 0$, $n = 10^5$ is $0.930$ in Table~\ref{tab:joint-cov-check} ($R = 200$), $0.910$ in the paired run of Section~\ref{sec:detection-damage} ($R = 300$), $0.952$ on the frontier-coverage grid ($R = 500$), $0.958$ at $R = 2{,}000$ (Table~\ref{tab:joint-cov-R2000}, variance ratio $1.022$), and $0.942$ at $R = 2{,}000$ on the refined local grid. PPI++ covers $0.915$ in Table~\ref{tab:joint-cov-check}, $1.8$ Monte Carlo standard errors below nominal, and $0.945$ at $R = 2{,}000$. The three $R = 2{,}000$ values lie inside the band $[0.940, 0.960]$.

\begin{table}[!htbp]
\centering
\caption{Joint paired-bootstrap validation of the learned-index sandwich.}
\label{tab:joint-cov-check}
\footnotesize
\setlength{\tabcolsep}{3pt}
\begin{tabular}{@{}llrrrrrrrr@{}}
\toprule
& & \multicolumn{4}{c}{\textbf{Sandwich / bootstrap}} & \multicolumn{2}{c}{\textbf{Coverage}} & \multicolumn{2}{c}{\textbf{Rejection}} \\
\cmidrule(lr){3-6}\cmidrule(lr){7-8}\cmidrule(lr){9-10}
\textbf{Design} & \textbf{Cell} & $\Var_{\mathrm{SI}}$ & $\Var_{\mathrm{PPI}}$ & $\Cov$ & $\Var(\hat{D})$ & SI & PPI++ & $\alpha = .05$ & $\alpha = .10$ \\
\midrule
DGP 1 & $\pi_L = 0.20$, $n = 10^4$ & $0.998$ & $0.985$ & $1.001$ & $0.979$ & $0.930$ & $0.925$ & $0.080$ & $0.145$ \\
DGP 2 & $\rho = 0$, $n = 10^4$ & $1.008$ & $0.987$ & $1.016$ & $0.976$ & $0.960$ & $0.960$ & $0.060$ & $0.100$ \\
DGP 2 & $\rho = 0.2$, $n = 10^4$ & $0.998$ & $0.983$ & $0.990$ & $0.983$ & $\mathbf{0.665}$ & $0.975$ & $0.115$ & $0.210$ \\
Multi-surr.\ (rich) & $m = 1$, $n = 64{,}000$ & $0.988$ & $0.954$ & $1.018$ & $0.942$ & $0.960$ & $0.920$ & $0.075$ & $0.130$ \\
DGP 2 & $\rho = 0$, $n = 10^5$ & --- & --- & --- & --- & $0.930$ & $\mathbf{0.915}$ & $0.055$ & $0.100$ \\
\bottomrule
\end{tabular}
\smallskip

{\scriptsize \noindent Notes: $R = 200$ replications per row, with $B = 200$ bootstrap draws each except in the $n = 10^5$ row, which is run without the bootstrap and whose ratio columns therefore read ---. The Monte Carlo band around the nominal 95\% at $R = 200$ is $[0.920, 0.980]$ and \textbf{bold} coverage entries fall outside it; Monte Carlo standard errors on the coverage columns run from $0.011$ to $0.033$ and on the rejection columns from $0.016$ to $0.025$. The ratio columns are the mean sandwich variance over the mean paired-bootstrap variance, with every copy of a resampled unit kept in its unit's fold, so $1.000$ means the two agree exactly at that cell; a bootstrap that assigns folds to resampled rows, letting copies of a unit fall on both sides of a fold, gives PPI++ and $\hat D$ variances 1 to 2\% lower (4 to 5\% on the rich index) and ratios within $2.1\%$ of one. The rejection columns are two-sided. They are a \emph{size} in the DGP~1 and DGP~2 $\rho = 0$ rows, where the joint null holds exactly; the rich index at $m = 1$ satisfies surrogacy by construction but not correct specification of the OLS basis, so its row is a near-null stress check (Section~\ref{sec:semisynthetic}); the $\rho = 0.2$ row is a \emph{power}. The bootstrap $\mathrm{SE}(\hat{D})$ gives $0.080$ and $0.160$ on DGP~1, $0.050$ and $0.100$ on DGP~2 at $\rho = 0$, and $0.065$ and $0.120$ on the rich index, matching the sandwich within Monte Carlo error. The two $n = 10^5$ coverage entries are compared with the other runs of that cell in Appendix~\ref{app:ablations}.
The replication package also reports the plug-in, delta-method, and sandwich surrogate-index variances cell by cell.}
\end{table}

\begin{table}[!htbp]
\centering
\caption{The learned-index sandwich against the Monte Carlo variance at $R = 2{,}000$.}
\label{tab:joint-cov-R2000}
\footnotesize
\setlength{\tabcolsep}{3pt}
\begin{tabular}{@{}llrrrrrrrr@{}}
\toprule
& & \multicolumn{4}{c}{\textbf{Sandwich / Monte Carlo}} & \multicolumn{2}{c}{\textbf{Coverage}} & \multicolumn{2}{c}{\textbf{Rejection}} \\
\cmidrule(lr){3-6}\cmidrule(lr){7-8}\cmidrule(lr){9-10}
\textbf{Design} & \textbf{Cell} & $\Var_{\mathrm{SI}}$ & $\Var_{\mathrm{PPI}}$ & $\Cov$ & $\Var(\hat{D})$ & SI & PPI++ & $\alpha = .05$ & $\alpha = .10$ \\
\midrule
DGP 1 & $\pi_L = 0.20$, $n = 10^4$ & $1.007$ & $0.995$ & $0.990$ & $1.002$ & $0.954$ & $0.944$ & $0.051$ & $0.100$ \\
DGP 2 & $\rho = 0$, $n = 10^4$ & $1.003$ & $0.977$ & $0.995$ & $0.973$ & $0.955$ & $0.946$ & $0.056$ & $0.107$ \\
DGP 2 & $\rho = 0.2$, $n = 10^4$ & $1.033$ & $1.011$ & $1.008$ & $1.019$ & $\mathbf{0.676}$ & $0.954$ & $0.135$ & $0.215$ \\
Multi-surr.\ (rich) & $m = 1$, $n = 64{,}000$ & $1.018$ & $0.960$ & $1.035$ & $0.948$ & $0.951$ & $0.944$ & $0.058$ & $0.105$ \\
DGP 2 & $\rho = 0$, $n = 10^5$ & $1.022$ & $1.018$ & $1.040$ & $1.006$ & $0.958$ & $0.945$ & $0.054$ & $0.107$ \\
\bottomrule
\end{tabular}
\smallskip

{\scriptsize \noindent Notes: $R = 2{,}000$ replications per row, without the bootstrap; same cells, seeds, and code path as Table~\ref{tab:joint-cov-check}, whose 200 replications are the first 200 here. The ratio columns are the mean sandwich (co)variance over the Monte Carlo (co)variance of the estimators across replications, so $1.000$ means the sandwich is exact at that cell; their delta-method Monte Carlo standard errors run from $0.030$ to $0.069$, and every ratio is within two of them of one. The Monte Carlo band around the nominal 95\% at $R = 2{,}000$ is $[0.940, 0.960]$ and \textbf{bold} coverage entries fall outside it; coverage standard errors are $0.005$ ($0.010$ for SI at $\rho = 0.2$). The rejection columns are two-sided, with standard errors of $0.005$ and $0.007$ where the null holds; they are a \emph{size} in the three exact-null rows, a near-null stress result on the rich index, and a \emph{power} at $\rho = 0.2$. On the rich index the ``SI (plug-in variance)'' interval covers $0.701$ against the sandwich's $0.951$.}
\end{table}

\subsection{Full Per-DGP and Real-Data Results}
\label{app:full-results}

The replication package (Code and data, Section~\ref{sec:discussion}) reports the complete per-DGP results for every method, configuration, and labeled fraction, including DGP~1 at $\pi_L = 1.00$, DGP~2 at every $\rho$ and labeled fraction, the DGP~5 MAR rows, the DGP~6 portfolios at $K \in \{50, 100, 200\}$ and $\pi_L \in \{0.20, 1.00\}$, DGP~10 at every labeled fraction, and the eight DGP~11 configurations. The tables below are the ones the text cites for numbers reported nowhere else: the Criteo estimators at full scale, the Hillstrom hybrid across labeled fractions, the $\pi_L = 0.20$ slices of DGPs~4 and~10, DGPs~7 and~8, and DGP~9 with its diagnostic and hybrid blocks.

Every simulation table in this subsection runs at $n = 10{,}000$ under the all-units five-fold cross-fitting protocol of Section~\ref{sec:eightmethods}, with two exceptions: the DGP~4 surrogate index, fit once on external historical units, and the composite proxy, whose weights are calibrated on estimated effects from a library of past experiments rather than on unit-level predictions (Method~4 and Table~\ref{tab:implementation}). Every row is a primary configuration except ``PPI++ (clipped)'' in DGP~9, where the clip binds. Unlike the main-text tables, these report the raw RMSE ratio for every method, so an RE entry beside a large bias or poor coverage is not an efficiency gain. \textbf{Bold} coverage entries fall outside the Monte Carlo band at the table's $R$.

\begin{table}[!ht]
\centering
\caption{Criteo full estimator results at the complete sample size.}
\label{tab:criteo-full}
\footnotesize
\begin{tabular}{@{}llrrrrr@{}}
\toprule
$\boldsymbol{\pi_L}$ & \textbf{Method} & \textbf{Bias} & \textbf{Rel.\ bias (\%)} & \textbf{RMSE} & \textbf{Coverage} & \textbf{RE} \\
\midrule
\multirow{3}{*}{0.05} & Labeled-Only & $+0.01$ & $0.1$ & $1.52$ & $0.970$ & $1.00$ \\
 & Surrogate Index & $-5.63$ & $-48.9$ & $5.63$ & \textbf{0.000} & --- \\
 & PPI++ & $+0.02$ & $0.2$ & $1.45$ & $0.980$ & $1.05$ \\
\midrule
\multirow{3}{*}{0.20} & Labeled-Only & $+0.09$ & $0.8$ & $0.66$ & $0.990$ & $1.00$ \\
 & Surrogate Index & $-5.62$ & $-48.8$ & $5.62$ & \textbf{0.000} & --- \\
 & PPI++ & $+0.07$ & $0.6$ & $0.62$ & \textbf{1.000} & $1.06$ \\
\bottomrule
\end{tabular}
\smallskip

{\scriptsize \noindent Notes: $R = 100$ random splits at the full $n = 13{,}979{,}592$, all-units five-fold cross-fitting. Bias and RMSE are in units of $10^{-4}$ and the masking target, the full-sample difference in means, is $11.5 \times 10^{-4}$. PPI++ is the primary configuration and the surrogate index uses the joint sandwich variance; the naive surrogate, composite proxy, and AIPW are not run at this scale. RE is the RMSE ratio $\mathrm{RMSE}_{\mathrm{LO}}/\mathrm{RMSE}_{\mathrm{method}}$, suppressed (---) for methods with absolute relative bias above 10\%. At $R = 100$ the Monte Carlo band around the nominal 95\% is $[0.907, 0.993]$; \textbf{bold} coverage entries fall outside it, in either direction. The SI--PPI++ estimator-disagreement diagnostic (two-sided, $\alpha = 0.05$) rejects on 99\% of splits at $\pi_L = 0.05$ (mean $T_n = 4.0$) and 100\% at $\pi_L = 0.20$ (mean $T_n = 8.1$). Coverage is measured against the full-sample difference in means, which the labeled subsample partially overlaps, making LO and PPI++ coverage conservative.}
\end{table}

\begin{table}[!htbp]
\centering
\caption{Hillstrom hybrid estimator performance across labeled fractions.}
\label{tab:hillstrom-hybrid}
\footnotesize
\setlength{\tabcolsep}{4pt}
\begin{tabular}{@{}lcccccc@{}}
\toprule
$\pi_L$ & \makecell[c]{\textbf{Hybrid bias}\\\textbf{(MC SE)}} & \makecell[c]{\textbf{Abs.\ Rel.}\\\textbf{Bias (\%)}} & \textbf{RMSE} & \makecell[c]{\textbf{Coverage}\\\textbf{(MC SE)}} & \makecell[c]{\textbf{Mean}\\\textbf{weight}} & \makecell[c]{\textbf{Diagnostic}\\\textbf{reject}} \\
\midrule
$0.05$ & $-0.00041$ (0.00009) & $8.4$  & $0.00201$ & $0.940$ (0.011) & $0.69$ & $0.094$ \\
$0.10$ & $-0.00045$ (0.00007) & $9.1$  & $0.00155$ & $0.938$ (0.011) & $0.66$ & $0.110$ \\
$0.20$ & $-0.00053$ (0.00005) & $10.7$ & $0.00115$ & $\mathbf{0.930}$ (0.011) & $0.64$ & $0.114$ \\
$0.50$ & $-0.00051$ (0.00003) & $10.3$ & $0.00079$ & $\mathbf{0.984}$ (0.006) & $0.53$ & $0.168$ \\
\bottomrule
\end{tabular}
\smallskip

{\scriptsize \noindent Notes: $R = 500$ random splits, $c = 1.5$, all-units five-fold cross-fitting. Masking target $0.004955$. Mean weight is the expected Cauchy-kernel weight on SI. Absolute relative bias is $|\text{Bias}/\tau| \times 100\%$, with Monte Carlo standard errors in parentheses. \textbf{Bold} coverage entries fall outside the band $[0.931, 0.969]$. Because the diagnostic rejects on only 9--17\% of splits, the mean weight stays above $0.5$ at every $\pi_L$ although SI covers $0.072$ at $\pi_L = 0.20$ (Table~\ref{tab:hillstrom}); the 98\% coverage at $\pi_L = 0.50$ reflects the simulation-calibrated interval widening, not nominal coverage of an unbiased estimator. Every row builds $\hat{\Sigma}$ and $\mathrm{SE}(\hat{D})$ from the joint sandwich. Re-running the $\pi_L = 0.20$ row with the plug-in SI variance, on identical seeds, masks, and point estimates, gives mean weight $0.629$, absolute relative bias $10.4\%$, coverage $0.924$, and rejection $0.128$, against $0.638$, $10.7\%$, $0.930$, and $0.114$ here, although the mean $\Var(\hat{\tau}_{\mathrm{SI}})$ differs by a factor of $4.8$ ($3.01 \times 10^{-8}$ against $1.45 \times 10^{-7}$); the delta-method first-stage variance reproduces the sandwich row to three decimals.}
\end{table}
\FloatBarrier  

\begin{table}[!htbp]
\centering
\begin{minipage}[t]{0.49\textwidth}
\centering
\caption{DGP 4 (Stale Historical Index) at $\pi_L = 0.20$.}
\label{tab:appendix-dgp4}
\scriptsize
\setlength{\tabcolsep}{2pt}
\begin{tabular}{@{}llrrrr@{}}
\toprule
\textbf{Config} & \textbf{Method} & \textbf{Bias} & \textbf{RMSE} & \textbf{Cov.} & \textbf{RE} \\
\midrule
$\Delta_\beta{=}0.0$ & LO & $-0.001$ & $0.070$ & $0.950$ & $1.00$ \\
$\tau{=}0.15$ & NS & $+0.012$ & $0.027$ & $\mathbf{0.923}$ & $2.63$ \\
 & SI & $+0.0000$ & $0.024$ & $0.950$ & $2.90$ \\
 & PPI++ & $-0.0002$ & $0.051$ & $0.948$ & $1.38$ \\
 & CP & $+0.002$ & $0.023$ & $0.949$ & $3.11$ \\
 & AIPW & $-0.0002$ & $0.051$ & $0.950$ & $1.39$ \\
\midrule
$\Delta_\beta{=}0.1$ & LO & $-0.001$ & $0.078$ & $0.950$ & $1.00$ \\
$\tau{=}0.18$ & NS & $+0.013$ & $0.032$ & $\mathbf{0.917}$ & $2.41$ \\
 & SI & $-0.029$ & $0.039$ & $\mathbf{0.762}$ & $2.02$ \\
 & PPI++ & $-0.0002$ & $0.054$ & $0.949$ & $1.45$ \\
 & CP & $-0.027$ & $0.036$ & $\mathbf{0.773}$ & $2.18$ \\
 & AIPW & $-0.0003$ & $0.054$ & $0.949$ & $1.45$ \\
\midrule
$\Delta_\beta{=}0.2$ & LO & $+0.001$ & $0.087$ & $0.952$ & $1.00$ \\
$\tau{=}0.21$ & NS & $+0.012$ & $0.035$ & $\mathbf{0.934}$ & $2.45$ \\
 & SI & $-0.060$ & $0.065$ & $\mathbf{0.309}$ & $1.34$ \\
 & PPI++ & $-0.001$ & $0.055$ & $0.952$ & $1.56$ \\
 & CP & $-0.058$ & $0.062$ & $\mathbf{0.284}$ & $1.40$ \\
 & AIPW & $-0.001$ & $0.055$ & $0.951$ & $1.56$ \\
\midrule
$\Delta_\beta{=}0.4$ & LO & $+0.003$ & $0.106$ & $\mathbf{0.937}$ & $1.00$ \\
$\tau{=}0.27$ & NS & $+0.013$ & $0.045$ & $\mathbf{0.938}$ & $2.38$ \\
 & SI & $-0.119$ & $0.122$ & $\mathbf{0.003}$ & $0.87$ \\
 & PPI++ & $+0.003$ & $0.061$ & $0.953$ & $1.73$ \\
 & CP & $-0.117$ & $0.119$ & $\mathbf{0.001}$ & $0.89$ \\
 & AIPW & $+0.003$ & $0.061$ & $0.955$ & $1.73$ \\
\bottomrule
\end{tabular}
\smallskip

{\scriptsize \noindent Notes: $R = 2{,}000$, band $[0.940, 0.960]$. The surrogate index is the historical fit with its plug-in variance (Table~\ref{tab:implementation}). Labeled-only at $\Delta_\beta = 0.4$ is the one labeled-only, PPI++, or AIPW entry outside the band; that configuration is not in Figure~\ref{fig:coverage}.}
\end{minipage}\hfill
\begin{minipage}[t]{0.49\textwidth}
\centering
\caption{DGP 10 (Nonlinear + Partial Mediation) at $\pi_L = 0.20$.}
\label{tab:appendix-dgp10}
\scriptsize
\setlength{\tabcolsep}{2pt}
\begin{tabular}{@{}llrrrr@{}}
\toprule
\textbf{Config} & \textbf{Method} & \textbf{Bias} & \textbf{RMSE} & \textbf{Cov.} & \textbf{RE} \\
\midrule
$\rho{=}0.0$ & LO & $+0.002$ & $0.071$ & $0.948$ & $1.00$ \\
$\tau{=}0.1473$ & NS & $+0.011$ & $0.026$ & $\mathbf{0.928}$ & $2.77$ \\
 & SI & $-0.001$ & $0.024$ & $\mathbf{0.970}$ & $2.98$ \\
 & PPI++ & $+0.002$ & $0.051$ & $0.950$ & $1.40$ \\
 & CP & $-0.001$ & $0.021$ & $0.966$ & $3.35$ \\
 & AIPW & $+0.002$ & $0.051$ & $0.948$ & $1.40$ \\
\midrule
$\rho{=}0.2$ & LO & $+0.003$ & $0.064$ & $\mathbf{0.976}$ & $1.00$ \\
$\tau{=}0.1841$ & NS & $-0.024$ & $0.034$ & $\mathbf{0.816}$ & $1.88$ \\
 & SI & $-0.036$ & $0.044$ & $\mathbf{0.674}$ & $1.48$ \\
 & PPI++ & $+0.001$ & $0.047$ & $0.968$ & $1.38$ \\
 & CP & $-0.036$ & $0.042$ & $\mathbf{0.614}$ & $1.53$ \\
 & AIPW & $+0.001$ & $0.047$ & $0.968$ & $1.38$ \\
\midrule
$\rho{=}0.4$ & LO & $+0.003$ & $0.072$ & $0.948$ & $1.00$ \\
$\tau{=}0.2455$ & NS & $-0.086$ & $0.089$ & $\mathbf{0.070}$ & $0.81$ \\
 & SI & $-0.097$ & $0.101$ & $\mathbf{0.036}$ & $0.72$ \\
 & PPI++ & $+0.002$ & $0.050$ & $0.944$ & $1.45$ \\
 & CP & $-0.098$ & $0.100$ & $\mathbf{0.010}$ & $0.72$ \\
 & AIPW & $+0.002$ & $0.050$ & $0.942$ & $1.45$ \\
\bottomrule
\end{tabular}
\smallskip

{\scriptsize \noindent Notes: $R = 500$, band $[0.931, 0.969]$. SI relative bias is $-0.8\%$, $-19.4\%$, and $-39.7\%$ at $\rho = 0$, $0.2$, and $0.4$.}
\end{minipage}
\end{table}

\begin{table}[!htbp]
\centering
\begin{minipage}[t]{0.48\textwidth}
\centering
\caption{Full results for DGP 7 (Multiple Surrogates, 3 Valid).}
\label{tab:appendix-dgp7}
\scriptsize
\setlength{\tabcolsep}{2pt}
\begin{tabular}{@{}llrrrr@{}}
\toprule
\textbf{Config} & \textbf{Method} & \textbf{Bias} & \textbf{RMSE} & \textbf{Cov.} & \textbf{RE} \\
\midrule
$\pi_L{=}0.05$ & LO & $+0.0003$ & $0.145$ & $\mathbf{0.930}$ & $1.00$ \\
$\pi_L{=}0.05$ & SI & $+0.001$ & $0.027$ & $0.944$ & $5.46$ \\
$\pi_L{=}0.05$ & PPI++ & $+0.0002$ & $0.092$ & $0.946$ & $1.58$ \\
$\pi_L{=}0.05$ & CP & $-0.0003$ & $0.021$ & $0.956$ & $7.01$ \\
\midrule
$\pi_L{=}0.2$ & LO & $+0.006$ & $0.070$ & $0.942$ & $1.00$ \\
$\pi_L{=}0.2$ & SI & $+0.001$ & $0.024$ & $0.946$ & $2.92$ \\
$\pi_L{=}0.2$ & PPI++ & $+0.004$ & $0.050$ & $0.958$ & $1.39$ \\
$\pi_L{=}0.2$ & CP & $-0.0003$ & $0.021$ & $0.956$ & $3.36$ \\
\midrule
$\pi_L{=}0.5$ & LO & $+0.005$ & $0.043$ & $0.940$ & $1.00$ \\
$\pi_L{=}0.5$ & SI & $+0.001$ & $0.023$ & $0.946$ & $1.85$ \\
$\pi_L{=}0.5$ & PPI++ & $+0.002$ & $0.037$ & $0.940$ & $1.16$ \\
$\pi_L{=}0.5$ & CP & $-0.0003$ & $0.021$ & $0.956$ & $2.07$ \\
\bottomrule
\end{tabular}
\smallskip

{\scriptsize \noindent Notes: $n = 10{,}000$, $R = 500$, all-units five-fold cross-fitting, Monte Carlo band $[0.931, 0.969]$. This is an independent run of the DGP~7 design reported in Table~\ref{tab:dgp7}, with the composite proxy calibrated on a separately drawn library of $K_{\text{hist}} = 30$ rather than $50$ past experiments; the composite-proxy rows differ by library as well as by Monte Carlo variation, the other rows by Monte Carlo variation alone. AIPW is not run in this design.}
\end{minipage}\hfill
\begin{minipage}[t]{0.48\textwidth}
\centering
\caption{Full results for DGP 8 (Nonlinear Outcome).}
\label{tab:appendix-dgp8}
\scriptsize
\setlength{\tabcolsep}{2pt}
\begin{tabular}{@{}llrrrr@{}}
\toprule
\textbf{Config} & \textbf{Method} & \textbf{Bias} & \textbf{RMSE} & \textbf{Cov.} & \textbf{RE} \\
\midrule
$\pi_L{=}0.05$ & LO & $+0.007$ & $0.145$ & $0.958$ & $1.00$ \\
$\pi_L{=}0.05$ & SI & $+0.002$ & $0.026$ & $0.934$ & $5.49$ \\
$\pi_L{=}0.05$ & PPI++ & $+0.007$ & $0.096$ & $0.946$ & $1.51$ \\
$\pi_L{=}0.05$ & AIPW & $+0.007$ & $0.095$ & $0.946$ & $1.52$ \\
\midrule
$\pi_L{=}0.2$ & LO & $+0.001$ & $0.073$ & $0.942$ & $1.00$ \\
$\pi_L{=}0.2$ & SI & $+0.002$ & $0.026$ & $0.936$ & $2.84$ \\
$\pi_L{=}0.2$ & PPI++ & $+0.001$ & $0.051$ & $0.956$ & $1.42$ \\
$\pi_L{=}0.2$ & AIPW & $+0.001$ & $0.051$ & $0.954$ & $1.42$ \\
\midrule
$\pi_L{=}0.5$ & LO & $+0.003$ & $0.046$ & $0.942$ & $1.00$ \\
$\pi_L{=}0.5$ & SI & $+0.002$ & $0.025$ & $0.942$ & $1.80$ \\
$\pi_L{=}0.5$ & PPI++ & $+0.003$ & $0.038$ & $0.952$ & $1.21$ \\
$\pi_L{=}0.5$ & AIPW & $+0.003$ & $0.038$ & $0.952$ & $1.21$ \\
\bottomrule
\end{tabular}
\smallskip

{\scriptsize \noindent Notes: $n = 10{,}000$, $R = 500$, all-units five-fold cross-fitting, so the Monte Carlo band around the nominal 95\% is $[0.931, 0.969]$; every entry is inside it. RE is the RMSE ratio $\mathrm{RMSE}_{\mathrm{LO}}/\mathrm{RMSE}_{\mathrm{method}}$. The naive surrogate and the composite proxy are not run in this design.}
\end{minipage}
\end{table}

\begin{table}[!htbp]
\centering
\caption{Full results for DGP 9 (Antagonistic Surrogate).}
\label{tab:appendix-dgp9}
\scriptsize
\setlength{\tabcolsep}{2pt}
\begin{tabular}{@{}llrrrr@{}}
\toprule
\textbf{$\pi_L$} & \textbf{Method} & \textbf{Bias} & \textbf{RMSE} & \textbf{Coverage} & \textbf{RE} \\
\midrule
\multicolumn{6}{c}{\textit{Estimation Results}} \\
\midrule
0.05 & LO & $-0.003$ & $0.113$ & $0.938$ & $1.00$ \\
0.05 & NS & $-0.450$ & $0.450$ & $\mathbf{0.000}$ & $0.25$ \\
0.05 & SI & $-0.497$ & $0.497$ & $\mathbf{0.000}$ & $0.23$ \\
0.05 & PPI++ & $-0.004$ & $0.095$ & $0.948$ & $1.20$ \\
0.05 & CP & $-0.261$ & $0.262$ & $\mathbf{0.000}$ & $0.43$ \\
0.05 & AIPW & $-0.006$ & $0.095$ & $0.952$ & $1.20$ \\
0.05 & PPI++ (clipped) & $-0.004$ & $0.095$ & $0.948$ & $1.20$ \\
\midrule
0.2 & LO & $-0.001$ & $0.053$ & $0.956$ & $1.00$ \\
0.2 & NS & $-0.451$ & $0.451$ & $\mathbf{0.000}$ & $0.12$ \\
0.2 & SI & $-0.498$ & $0.498$ & $\mathbf{0.000}$ & $0.11$ \\
0.2 & PPI++ & $-0.0004$ & $0.045$ & $0.960$ & $1.18$ \\
0.2 & CP & $-0.260$ & $0.261$ & $\mathbf{0.000}$ & $0.20$ \\
0.2 & AIPW & $-0.001$ & $0.045$ & $0.964$ & $1.18$ \\
0.2 & PPI++ (clipped) & $-0.001$ & $0.045$ & $0.960$ & $1.18$ \\
\midrule
0.5 & LO & $-0.0001$ & $0.035$ & $0.962$ & $1.00$ \\
0.5 & NS & $-0.450$ & $0.450$ & $\mathbf{0.000}$ & $0.08$ \\
0.5 & SI & $-0.497$ & $0.497$ & $\mathbf{0.000}$ & $0.07$ \\
0.5 & PPI++ & $+0.0003$ & $0.032$ & $0.958$ & $1.10$ \\
0.5 & CP & $-0.261$ & $0.262$ & $\mathbf{0.000}$ & $0.13$ \\
0.5 & AIPW & $+0.0002$ & $0.032$ & $0.958$ & $1.10$ \\
0.5 & PPI++ (clipped) & $+0.0003$ & $0.032$ & $0.958$ & $1.10$ \\
\midrule
\multicolumn{6}{c}{\textit{SI--PPI++ Disagreement Diagnostic}} \\
\midrule
$\pi_L$ & Mean $\hat{D}$ & Mean $\widehat{\mathrm{SE}}(\hat{D})$ & \multicolumn{3}{l}{Rejection rate ($\alpha = 0.05$)} \\
\midrule
0.05 & $0.493$ & $0.090$ & \multicolumn{3}{l}{$1.000$} \\
0.2 & $0.498$ & $0.045$ & \multicolumn{3}{l}{$1.000$} \\
0.5 & $0.498$ & $0.028$ & \multicolumn{3}{l}{$1.000$} \\
\midrule
\multicolumn{6}{c}{\textit{Hybrid Estimator Results ($c = 1.5$)}} \\
\midrule
$\pi_L$ & Mean $\hat{\tau}_{\mathrm{hyb}}$ & Mean $w$ & Bias & RMSE & Coverage \\
\midrule
0.05 & $0.382$ & $0.052$ & $-0.028$ & $0.102$ & $\mathbf{0.918}$ \\
0.2 & $0.404$ & $0.012$ & $-0.006$ & $0.046$ & $0.960$ \\
0.5 & $0.408$ & $0.005$ & $-0.002$ & $0.032$ & $0.956$ \\
\bottomrule
\end{tabular}
\smallskip

{\scriptsize \noindent Notes: $n = 10{,}000$, $R = 500$, all-units five-fold cross-fitting, so the Monte Carlo band around the nominal 95\% is $[0.931, 0.969]$. ``PPI++ (clipped)'' is the binding-clip ablation, with the coefficient clipped to $[0, 1]$; every other estimation row is a primary configuration. The hybrid uses $c = 1.5$.}
\end{table}


\end{document}